\newif\ifproc
\proctrue 
\procfalse

\ifproc
  \documentclass{llncs}
  \pagestyle{plain}
  
\let\llncssubparagraph\subparagraph
\let\subparagraph\paragraph
\usepackage[compact]{titlesec}
\let\subparagraph\llncssubparagraph  
  
  \usepackage{environ}
  \NewEnviron{killcontents}{}

\else
  \documentclass[letterpaper,11pt]{article}
  \usepackage{fullpage}

  \usepackage{typearea}
  \paperwidth 8.5in \paperheight 11in
  \typearea{14}
  \usepackage[compact]{titlesec}
  \usepackage{amsthm}
	\usepackage{algorithm}
	\usepackage{algpseudocode}
	\usepackage{float}

  \makeatletter
   \setlength{\parindent}{0pt}
   \addtolength{\partopsep}{-2mm}
   \setlength{\parskip}{5pt plus 1pt}
   \addtolength{\abovedisplayskip}{-3mm}
   \addtolength{\textheight}{35pt}
   \addtolength{\footskip}{-20pt}
  \makeatother

\fi


\newcommand{\full}[1]{\ifproc\else#1\fi}
\newcommand{\short}[1]{\ifproc#1\fi}


\usepackage{amsmath, amssymb}
\usepackage{color}
\usepackage{mathrsfs,bbm,bm}
\usepackage[pdftitle={Tile},
            pdfauthor={Author},
            pdfkeywords={Keywords}]{hyperref}
\usepackage{xspace}

\ifproc
  \newtheorem{thm}{Theorem}
\else
  \newtheorem{thm}{Theorem}[section]
\fi
\newtheorem{ithm}[thm]{Informal Theorem}
\newtheorem{cor}[thm]{Corollary}

\newtheorem{fact}[thm]{Fact}

\ifproc
\else
  \newtheorem{lemma}[thm]{Lemma}
  
  \newtheorem{claim}[thm]{Claim}
  \newtheorem{definition}[thm]{Definition}

\fi

\newcommand{\E}{\mathbb{E}}
\newcommand{\R}{\mathbb{R}}

\newcommand{\A}{\mathbf{A}}
\newcommand{\C}{\mathbf{C}}
\newcommand{\p}{\mathbf{p}}
\newcommand{\w}{\mathbf{w}}
\renewcommand{\o}{\mathbf{o}}
\renewcommand{\v}{\mathbf{v}}
\newcommand{\ho}{{\o^*}}
\newcommand{\ip}[2]{\left\langle #1, #2 \right\rangle}
\newcommand{\bip}[2]{\big\langle #1, #2 \big\rangle}
\newcommand{\eo}{\tilde{o}}

\newcommand{\bpi}{\boldsymbol{\pi}}
\newcommand{\ba}{\mathbf{a}}
\newcommand{\btau}{\boldsymbol{\tau}}
\newcommand{\tx}{\widetilde{x}}
\newcommand{\hx}{\widehat{x}}
\newcommand{\bx}{\mathbf{x}}
\newcommand{\by}{\mathbf{y}}
\newcommand{\bH}{\mathbf{H}}
\newcommand{\cE}{\mathcal{E}}

\newcommand{\bI}{\mathbf{I}}
\newcommand{\bJ}{\mathbf{J}}
\newcommand{\bK}{\mathbf{K}}
\newcommand{\bT}{\mathbf{T}}
\newcommand{\D}{\mathcal{D}}
\newcommand{\bY}{\mathbf{Y}} 

\newcommand{\bL}{\boldsymbol \L}
\newcommand{\bD}{\boldsymbol \D}

\newcommand{\balpha}{\boldsymbol \alpha}
\newcommand{\bbeta}{\boldsymbol \beta}

\newcommand{\fullsimplex}{\blacktriangle}
\newcommand{\simplex}{\bigtriangleup}

\newcommand{\smid}{\,|\,}

\let\originalleft\left
\let\originalright\right
\renewcommand{\left}{\mathopen{}\mathclose\bgroup\originalleft}
\renewcommand{\right}{\aftergroup\egroup\originalright}

\DeclareMathOperator{\Var}{Var}

\newcounter{mynote}[section]

\newcommand{\new}[1]{#1}

\newcommand{\remove}[1]{}

\renewcommand{\L}{\mathcal{L}}
\newcommand{\e}{\varepsilon}
\newcommand{\ones}{\mathbbm{1}}
\newcommand{\RHS}{\mathsf{rhs}}
\newcommand{\OPT}{\ensuremath{\mathsf{opt}}\xspace}

\newcommand{\loadBal}{\textsf{Load-balance}\xspace}
\newcommand{\PCMC}{\textsf{PCMC}\xspace}
\newcommand{\PLP}{\textsf{PLP}\xspace}
\newcommand{\LPtoLB}{\textbf{LPviaLB}\xspace}
\newcommand{\expertLB}{\textbf{expertLB}\xspace}
\newcommand{\MW}{\textbf{MWU}\xspace}
\newcommand{\tsty}{\textstyle}

\newcommand{\cG}{\mathcal{G}}
\newcommand{\OTL}{\textbf{OTL}\xspace}
\newcommand{\DLA}{\textbf{DLA}\xspace}
\newcommand{\mOTL}{\textbf{mOTL}\xspace}
\newcommand{\mDLA}{\textbf{mDLA}\xspace}

\newcommand{\regA}{\alpha}

\newcommand{\optp}{{p^*}}
\newcommand{\boptp}{{\p^*}}
\newcommand{\bopto}{{\o^*}}
\newcommand{\Et}{{\mathbb{E}}}

\newcommand{\initOneLiners}{%
    \setlength{\itemsep}{0pt}
    \setlength{\parsep }{0pt}
    \setlength{\topsep }{0pt}
}
\newenvironment{OneLiners}[1][\ensuremath{\bullet}]
    {\begin{list}
        {#1}
        {\initOneLiners}}
    {\end{list}}


\title{How the Experts Algorithm Can Help Solve LPs Online\footnote{A preliminary version of this work appeared in ESA 2014.}}
\date{\today}

\ifproc

 \author{Anupam Gupta\inst{1}\thanks{Research partly supported by
    NSF awards CCF-1016799 and CCF-1319811, and by a grant from the
    CMU-Microsoft Center for Computational Thinking.} \and Marco Molinaro\inst{2}}

 \institute{Computer Science Department, Carnegie Mellon
  University, Pittsburgh, PA. \and ISyE, Georgia Tech, Atlanta, GA.}

\else
 \author{Anupam Gupta\thanks{Computer Science Department, Carnegie Mellon
  University, Pittsburgh, PA. Research partly supported by
    NSF awards CCF-1016799 and CCF-1319811, and by a grant from the
    CMU-Microsoft Center for Computational Thinking.} \and Marco Molinaro\thanks{EWI, TU Delft.}}

\fi

\begin{document}
\maketitle

\begin{abstract}
  We consider the problem of solving packing/covering LPs online, when
  the columns of the constraint matrix are presented in random
  order. This problem has received much attention and the main focus is to figure out how large the right-hand sides of the LPs
  have to be (compared to the entries on the left-hand side of the
  constraints) to allow $(1+\e)$-approximations online. It is known that
  the right-hand sides have to be $\Omega(\e^{-2} \log m)$ times the left-hand sides,
  where $m$ is the number of constraints. 

  In this paper we give a primal-dual algorithm that achieve this bound for 
  mixed packing/covering LPs. Our
  algorithms construct dual solutions using a
  regret-minimizing online learning algorithm in a black-box fashion, and use them to construct primal solutions. The adversarial guarantee that holds for the constructed duals helps us to take care of most of the correlations that arise in the algorithm; the remaining correlations are handled via martingale concentration and maximal inequalities. These ideas lead to conceptually simple and modular algorithms, which
  we hope will be useful in other contexts.
\end{abstract}

\section{Introduction}

	Sequential decision-making is one of the most important topics in Operations Research. While several sequential decision models are available, the common feature is that items/requests come over time and at a given time only partial information about the future items/requests is known. Decisions need to be made over time only with the information available at that point. Moreover, typically decisions are irrevocable.
	
While such problems have been extensively studied in the worst-case competitive analysis \cite{buchbinderBook,packingCovering13}, in order to avoid the pessimistic achievable guarantees a lot of recent research has focused on the \emph{random-order} model, which can be broadly described as follows: while the underlying instance is arbitrary (i.e. chosen adversarially), the items/requests come in a \textbf{random order} (i.e. the adversary cannot control the order in which the instance is presented). Several sequential decision problems have been considered in this model, starting from a classic maximization version of the secretary problem \cite{Dynkin}, to single-knapsack problems \cite{kleinberg,babaioff}, the AdWords problem \cite{MSVV,GoelMehta08,DevanurHayes09}, matroid secretary problems~\cite{BIK07, BabaioffSurvey}, and packing LPs \cite{feldman,AWY14,MR12,KRTV13,AgrawalD15,UW}. 
	
	In this paper we consider a broad class of such problems, namely that of solving online packing/covering multiple-choice LPs; this class includes most of the examples above. 

	
	\subsection{Model and prior results} \label{sec:introModel}
	
	The offline version of the \emph{packing/covering multiple-choice LPs} we consider have the form
	\begin{align*}
		\max ~\tsty \sum_{t = 1}^n \pi^t x^t & \tag{\PCMC} \\
		  st ~\tsty \sum_{t = 1}^n A^t x^t &\le b \\
		     \tsty \sum_{t = 1}^n C^t x^t &\ge d \\
		     x^t &\in \fullsimplex^k \ \ \ \forall t \in [n]
	\end{align*}
	where all the data $\pi^t, A^t, C^t, b, d$ is \textbf{non-negative} and $\fullsimplex^k$ denotes the ``full simplex'' $\{x \in [0,1]^k : \sum_j x_j \le 1\}$. Note that the variables $\{x^t_1, x^t_2, \ldots, x^t_k\}$ in each block $t$ are bound together by $\sum_j x^t_j \leq 1$ --- this is the ``multiple-choice'' setting, which is useful in many contexts. In the case $k=1$, the multiple-choice model degenerates to the
standard online packing-covering LP model where a single variable
$x^t$ is revealed at each time, and must be assigned a fractional value in $[0,1]$.
	
	In the \emph{random-order model}, the number $n$ of columns  and the right-hand sides $b$ and $d$ are known a priori but the columns of the LP are revealed one by one in random order, that is, their arrival order is chosen uniformly at random out of all possible permutations. When a column-block $(\pi^t, A^t, C^t)$ is revealed, the corresponding $x^t$ variables need to be irrevocably set before the next column-blocks are revealed. The goal is to obtain a feasible solution for the LP that tries to obtain as much value (in expectation or with high probability) as the optimal \textbf{offline} solution.	This is exactly the model we consider here.
	
	Packing/covering multiple-choice LPs form a broad class of problems. Special cases of this model with \textbf{packing-only} constraints (i.e. without the constraints $\sum_{t = 1}^n C^t x^t \ge d$) have been extensively studied \cite{Dynkin,kleinberg,babaioff,MSVV,GoelMehta08,DevanurHayes09,feldman,AWY14,MR12,KRTV13,AgrawalD15,UW}. These models have a vast range of applications, like online advertisement, online routing, and airline revenue management. 
	
	Most of these results show that one can obtain online an $(1-\e)$-approximation for the LP as long as the right-hand side of the constraints are sufficiently (depending on $\e$) larger than any entry in the left-hand side matrix. 
\new{This direction started with the work of Kleinberg~\cite{kleinberg}, who considered the case without multiple choices (i.e. $k = 1$) and with a single cardinality constraint $\sum_t x^t \le b$. 
        He showed a $(1-\e)$-approximation in expectation as long as $\frac{b}{\max_t a^t} \ge \Omega(\epsilon^{-2})$, which is best possible.} 
        Subsequently, Devanur and Hayes considered the AdWords problem~\cite{DevanurHayes09}, which was then extended to general packing multiple-choice LPs \cite{feldman,AWY14,MR12}. These results provide an $(1-\e)$-approximation as long as the right-hand sides/item sizes ratio satisfies $\frac{\min_i b_i}{\max_{t,i,j} A^t_{ij}} \ge \Omega(\e^{-2})\, \min\{m^2 \log (m/\e), m \log (n/\e)\}$, where $m$ is the number of packing constraints and $n$ the number of column-blocks. Agrawal et al.~\cite{AWY14} also showed that $\Omega(\e^{-2} \log m)$ ratio  is required for $(1-\e)$-approximations, which was believed to be the right answer.

	\new{However, despite the increasing interest of solving LPs online, to the best of our knowledge no guarantee is known for online optimization of LPs with \textbf{covering} constraints, even in the weaker i.i.d. model (see \cite{MR12} for discussion of random order vs. i.i.d.)\footnote{\cite{Devanur11} presents an online for the \textbf{feasibility} version of packing-covering multi-choice LPs in the i.i.d. model.}} 

\subsection{Our results}
\label{sec:our-results}

\short{Let us now discuss our results in more detail. For brevity,
several proofs are omitted from the paper and are deferred to its full version.
}

\paragraph{Packing/covering multiple-choice LPs.} 


\new{Our main contribution is the first set of algorithms with provable guarantees for packing/covering multiple-choice LPs in the random-order model. Our algorithm handles general packing/covering multi-choice LPs and achieves the optimal guarantee even when only packing constraints are present.}
%
For example, in case an estimate of the optimal value is present we have the following guarantees.

\begin{ithm}[Packing/covering LPs]
  \label{ithm:main-I}
  Consider a feasible packing/covering multiple-choice LP in the random-order model. Given $\e, \delta$, assume that each right-hand side in the LP is at least $\Omega(\frac{\log (m/\delta)}{\e^2})$ times the maximum coefficient of the left-hand side in its row, and that the optimal value is at least $\Omega(\frac{\log (m/\delta)}{\e^2})$ times each individual item's profit. Then given an (under)estimate $O^*$ for the optimal value of the offline LP, the algorithm \LPtoLB computes an $\e$-feasible solution online that achieves a value of $(1-\e)\,O^*$ with probability $1 - \delta$.
\end{ithm}

(For the formal version, see Theorem~\ref{thm:LPAssEstimates}.)
Moreover, the algorithm is very efficient and does not need to solve any (offline) Linear Programs. Also, as is common in this context, our algorithm produces \emph{integer} solutions that compare favorably to the optimal offline fractional solution. 

The assumption of the availability of an estimate for the optimal value is often a reasonable one (e.g.\ when enough historic knowledge of the problem directly provides such an estimate). Nonetheless, we show how to construct such an estimate as the columns arrive online. However, to obtain good estimates we need to consider \emph{stable} LPs, which informally means that allowing solutions to violate the covering constraints by a small factor does not increase the optimal value by much (see Definition \ref{def:stability}, and also Lemma~\ref{lemma:sufStable} for a simple sufficient condition for stability). We note that a packing-only LP is vacuously stable. 

\begin{ithm}[Packing/covering LPs II]
  \label{ithm:main-II}
  Consider a feasible, stable, packing/covering multiple-choice LP in the random-order setting. Given $\e, \delta$, assume that each right-hand side in the LP is at least $\Omega(\frac{\log (m/\delta)}{\e^2})$ times the maximum coefficient of the left-hand side in its row, and that the optimal value is at least $\Omega(\frac{\log (m/\delta)}{\e^2})$ times each individual item's profit. Then with probability at least $1-\delta$ the algorithm \DLA computes an $\e$-feasible solution online that has value at least $(1 - \e)$ times the optimal value of the offline LP.
\end{ithm}

(The formal version appears as Theorem~\ref{thm:multi-phasePC}.) The high-level idea to obtain an estimate of the optimum is reasonably standard at this point (\cite{kleinberg,Devanur11,AWY14,MR12,UW}: one solves an LP with the columns seen so far to obtain an estimate, which is updated at doubling times. However, the presence of covering constraints introduces additional technical difficulties, and we need the assumption of stability of the LP to control the variance of estimate produced (this issue is further discussed in \S\ref{sec:est-opt}).



\paragraph{Packing LPs.} \short{\emph{This entire section appears only in the full version of the paper.
}} Informal Theorem~\ref{ithm:main-II} assumes that individual item profits are
small compared to the optimal value. While this is often reasonable, can we remove this assumption? Our final result removes all assumptions about the magnitude of
the profit whenever we are dealing with packing-only LPs (and without
multiple-choice constraints of the form $\sum_j x^t_j \leq 1$). 
Obtaining good estimates of $\OPT$ without any assumptions on the item values in our setting requires significantly new ideas (also discussed further below in \S\ref{sec:est-opt}). (The formal version of the following theorem appears as Theorem~\ref{thm:PLP-main}.)

\begin{ithm}[Ultimate packing LP]
  \label{ithm:packing}
  Consider a packing LP in the random-order setting. Given $\e > 0$,
  suppose the right-hand sides are $\Omega(\frac{\log (m/\e)}{\e^2})$
  times any left-hand side entry. Then the algorithm \mDLA
   computes a solution online with expected value at least $(1
  - \e)$ times the optimal value of the offline LP.
\end{ithm}

	\paragraph{Generalized load balancing.}
	
		The main building block for our algorithms for packing/covering LPs is the ability to solve the following \emph{generalized load-balancing problem}: We have a set of $m$ machines. For each arriving job $t$, there are $k$ different options on how to serve it, and the $j^{th}$ choice
requires some amount of processing on each of the machines, given by the
vector $(a^t_{1j}, a^t_{2j}, \ldots, a^t_{mj})$.  When the job arrives,
we must (fractionally) choose one of these options. The goal is to
minimize the makespan, i.e., the maximum total processing assigned to any
machine.
Notice that when the matrices $A^t$'s are diagonal $m \times m$, the $j^{th}$ choice places $a^t_{jj}$ load only on the $j^{th}$ machine, and hence captures the classic problem of scheduling on unrelated machines. 


\begin{ithm}[Load-balancing]
  \label{ithm:lb}
  Consider the load-balancing problem where the jobs arrive in random
  order. Given $\e > 0$, if the optimal makespan $\lambda^*$ is at least
  $\Omega(\frac{\log (m/\delta)}{\e^2})$ times the maximum amount of
  processing required by any option for any job, the algorithm \expertLB constructs an online solution with makespan at most $(1+\e)\lambda^*$ with probability $1 - \delta$.
\end{ithm}

\new{(The formal version in Theorem~\ref{thm:loadBalHP} proves this with respect to a generic low-regret online learning algorithm for the prediction-with-experts problem. For readability, here we instantiate it with parameters from the Multiplicative Weights Update algorithm.)} The case where the jobs are \textbf{i.i.d.}\ draws from some unknown distribution was considered in~\cite[Section~3]{Devanur11}, where with probability $1- \delta$, a $(1+\e)$-approximation to
the optimum makespan is obtained assuming the optimum is $\Omega(\frac{\log
  (m/\delta)}{\e^2})$ times the maximum processing $a^t_{ij}$. Our theorem above shows that the same guarantee holds in the more general random-order model, answering an open question from \cite{Devanur11}.



	In fact, we can handle processing requirements that can
be both positive or negative, as long as for each machine the jobs are
mostly positive or mostly negative (see Definition \ref{def:well-bounded}). \new{Handling negative processing requirements is crucial for addressing covering constraints when we \textbf{reduce} packing/covering LPs to the generalized load-balancing problem. We note that \cite{Devanur11} also considered this generalized load-balancing problem in connection with packing/covering LPs; however, they do not consider negative processing times and thus do not provide a direct reduction between the problems.}


\subsection{The Road-map}

\new{Our algorithmic development is modular: before elaborating on our technique, let us lay out the plan of attack.
\begin{OneLiners}
\item In \S\ref{sec:lb} we show our results for generalized load-balancing on random-order arrival streams via online learning algorithms.
\item In \S\ref{sec:lp-to-lb} we give a simple reduction that reduces solving packing/covering LPs \emph{with an estimate of the optimal value} to generalized load-balancing, and hence prove Theorem~\ref{ithm:main-I}.
\item In \S\ref{sec:estPC} we show how to estimate the optimal value for packing/covering LPs.
\item In \S\ref{sec:random-pclps}, we combine ideas from \S\ref{sec:lp-to-lb} and \S\ref{sec:estPC} to prove Theorem~\ref{ithm:main-II}, our main result about packing/covering LPs.
\item Finally, in \S\ref{sec:plp-no-limit} we refine the results of \S\ref{sec:random-pclps} for packing-only LPs: we remove the requirement that the individual item profits should be small, and get Theorem~\ref{ithm:packing}.
\end{OneLiners}
}



\subsection{Our Techniques}

	\paragraph{Decoupling primal and dual via online learning.}


	Dual-based method are very important in optimization under uncertainty and have been used extensively in the Operations Management literature \cite{simpson,elWilliamson,Talluri1998}. \new{One advantage of primal-dual methods is that they typically have fast update times, and avoid solving an LP at every step of the process. In addition, the computed duals can be used as a good heuristic to estimate the importance of each constraint and to perform sensitivity analysis.}
	
	Most of the works on packing LPs in the random-order model \cite{DevanurHayes09,feldman,AWY14,MR12} use the following form of the primal-dual approach (the primal-only algorithm of \cite{KRTV13} being a notable exception)\footnote{Primal-dual methods of a slightly different flavor are also extensively used in the worst-case model of online algorithms \cite{buchbinderBook}.}: a good dual LP solution is estimated based on the columns seen so far, then the reduced costs defined by this dual are used to decide how to set the primal variables. While being a natural idea, analyzing it poses the following crucial difficulty: the primal and dual solutions, by construction, are correlated in a non-trivial manner. In order to deal with these correlations, all of the above analyses resort to a massive union bound, which is the root of the extra $\log n$ and $m$ factors over the right ratio bound $\Omega(\epsilon^{-2} \log m)$.

	One of our contributions is an alternative algorithmic construction where the primal and the dual are only loosely coupled. For that, we construct duals using a \new{generic} \emph{regret-minimizing online learning} algorithm (which offers \emph{adversarial} guarantees); in the primal, we use a greedy strategy which ensures that, with respect to these duals, we are doing at least as well as the optimal offline solution. 
	
	The fact that the guarantees for the dual hold in the adversarial setting, together with the greedy primal step, reduces the analysis of the algorithm to that of comparing the dual solution constructed to the \emph{offline optimal} solution (see \S \ref{sec:guar-expect}). Since the latter is a \emph{fixed} solution (modulo the permutation of the columns), it is only loosely correlated with our constructed dual; in fact, correlations only arise because in the random-order model we are sampling \emph{without} replacement (in particular, in the weaker i.i.d. model there are no such correlations, see \S \ref{sec:lb} up to equation \eqref{eq:guaranteeLB}).
	
	
	 As far as we know, this is the first explicit black-box connection between online learning and optimization in the random-order model (see \S \ref{sec:indepWork} for independent and subsequent work); in the offline packing/covering LP setting such a connection was implicit in, e.g., \cite{tardos,young,gargKonemann} and made explicit in, e.g., \cite{K04,hazan,AHK12,CHW}. In the simpler i.i.d.\ setting, \cite{Devanur11} uses multiplicative-weights based dual updates. This is reminiscent of the Multiplicative Weight Update algorithm for online learning. Our analysis abstracts out the reliance on the details of that technique, giving a clean conceptual explanation of what is happening. \new{It also allows us to use \emph{any} experts online learning algorithm in our reductions, where the quality of the LP solution depends on the parameters in the rate of regret.}  Importantly, this isolation between the analysis of the online learning algorithm and the comparison of our dual solution and the offline optimum is instrumental in dealing with the dependencies arising in the random-order model. 
	
	\paragraph{Handling dependencies in random permutations.} Despite the loose coupling of primal and dual solutions, the correlations arising from sampling \emph{without} replacement cannot be immediately handled:
		the known bounds on the statistical distances between sampling with and without replacement (e.g.~\cite{diaconis1980}) do not seem to be strong enough, and a simple approach requires taking a union bound across all time steps, leading again to extra $\log n$ factors. To overcome this, we instead use a \emph{maximal inequality} for sampling without replacement to compare sampling with/without replacement at \emph{every} time step while avoiding a union bound, together with martingale concentration inequalities. The arguments we develop are quite general and may find use in other problems in the random-order model. 
		
	\new{
	\subsubsection{Estimating OPT: Challenges and Ideas} 
        \label{sec:est-opt}

In our approach we need to estimate (with high probability) the optimal value $\OPT$ of the offline LP. \emph{Under the assumption that the individual item profits are small compared to $\OPT$}, such estimate can be obtained by solving the LP comprising the first few columns seen in random order, and scaling its optimal value appropriately (actually for packing/covering LPs even this case is not that easy, and we require stability from the LP to provide guarantees for such estimates). However, without assuming anything on the individual item sizes, as we do for the packing-only case in \S\ref{sec:plp-no-limit}, the high-valued items make this estimate have too high a variance and precludes the high-probability guarantees required. In fact \textbf{all} previous approaches that use explicit estimates of $\OPT$ make the additional assumption that each item profit is at most $O(\frac{\e^2}{\log m})\,\OPT$ \cite{Devanur11,UW,AgrawalD15}\footnote{The requirement in \cite{AgrawalD15} comes from the use of Lemma 14 from \cite{Devanur11}, which uses this assumption.} To remove such assumptions, we show that one can simply pick all items with sufficiently high value and add them to an $(1-\e)$-approximation for the remaining LP (which then does not contain high-valued items) and this only incur a small loss in total profit (see \S \ref{sec:plp-no-limit}, in particular Lemma \ref{lemma:pickTop}). Unfortunately we also need to estimate the \emph{threshold} of what constitutes high enough value; this is however an easier task, since we only need weak guarantees for this estimate. This estimate is obtained by again looking at the LP comprising the first few columns seen in random order but \emph{removing the $K$ highest valued items} in this sample; this provides sufficient control on the remaining item values for the analysis to go through.}

%

\new{
\subsection{Independent and subsequent work} \label{sec:indepWork}

Independent to our work, Kesselheim et al.~\cite{KRTV13} present a primal algorithm that achieves the first optimal guarantee for \textbf{packing-only} multiple-choice LPs. More precisely, if $d$ is the \emph{column-sparsity} of the LP (i.e., maximum number of non-zero entries in each column), they obtain an $(1-\e)$-approximation as long as the right-hand sides/item sizes ratios are at least $\Omega(\epsilon^{-2} \log d)$. The dependence on the column-sparsity $d$ rather than the number of constraints $m$ (and the fact they handle multi-choice constraints) makes their result quantitatively stronger than our Theorem \ref{ithm:packing}, while both results coincide and are optimal for $d = \Omega(m)$~\cite{AWY14}. However, our algorithm is more efficient since it requires us to solve only a small number of offline LPs, while their algorithm solves an LP at each of the $n$ timesteps (in the experiments presented in \cite{UW}, the algorithm from \cite{KRTV13} is 2 to 3 orders of magnitude slower than algorithms that solve roughly as many LPs as ours does).

Subsequent to our work, Eghbali et al.~\cite{UW} provided a primal-dual algorithm for solving more general packing-type \emph{convex} programs. They also propose a primal-dual algorithm that draws inspiration from \cite{Devanur11}; in particular, they also use a sort of exponential update of duals. In their analysis, they also employ martingale methods and a (Doob's) maximal inequality.

Also subsequent to our work, Agrawal and Devanur \cite{AgrawalD15} consider an even broader class of online convex programs (see the paper for proper description). Their primal-dual algorithm extends the one we present in \S \ref{sec:lb} for the load-balancing problem and generalizes our connection between online (convex) learning and i.i.d./random-order convex optimization. A key element for this generalization is the use of \emph{Fenchel duality}.
One drawback is that their general result only satisfies the constraints \emph{in expectation}. However, for the case of packing-only LPs their techniques are able to give bona fide feasible solutions also with optimal guarantees. 





Finally, independently Agrawal and Devanur \cite{agrawalEC14} used a similar technique to devise efficient an algorithm for the Bandits with concave rewards and concave knapsacks problem (Algorithm 3), again using Fenchel duality. They trace back the inspiration for this approach to \cite{hazanBlackwell}, which show how solve the Blackwell approachability problem (which at a very high-level asks for a sequence of iterates that approaches a convex set) using online convex optimization.

}



\section{Preliminaries}


For a vector $v \in \R^k$, define its max-norm $\|v\|_{\max} := \max_j v_j$; notice this does not involve absolute values as in the definition of the $\ell_\infty$ norm. Also, we use $|v| := ( |v_1|, |v_2|, \dots, |v_k|)$ to denote the component-wise absolute value of a vector.
Recall that  $\Delta^k$ denotes the simplex $\{ p \in [0,1]^k : \sum_i p_i =
1\}$, and $\fullsimplex^k$ denotes the ``full simplex'' $\{x \in [0,1]^k
: \sum_j x_j \le 1\}$.

		\subsection{Prediction-with-experts problem}
                \label{sec:experts}
		
		As mentioned before, our primal-dual algorithms will have their dual solutions constructed via an algorithm for maximization version of the classic \emph{prediction-with-experts problem} \cite{CBL06}. In this online problem, \textbf{adversarial} vectors $o^1, o^2, \ldots, o^n \in [-M,M]^m$ are presented one by one to the algorithm. At time $t$, using only the information $o^1, \ldots, o^{t-1}$ up to time $t-1$, the algorithm needs to compute a vector $w^t$ in the simplex $\Delta^m$; it then incurs a reward of $\ip{w^t}{o^t}$. The goal is to maximize the total reward obtained $\sum_{t} \ip{w^t}{o^t}$. (Note we have stated this problem with the domain of the adversarial vectors being the scaled one $[-M, M]^m$ instead of $[-1,1]^m$.)
		
	The quality of the solution constructed by the algorithm is measured using the notion of \emph{regret}, which compares against the best \emph{fixed} $w \in \Delta^m$ that knows all the vectors $o^t$ in hindsight, namely $\max_{w \in \Delta} \sum_t \ip{w}{o^t} = \|\sum_t o^t\|_{\max}$. Here we consider a slightly different parametrization of regret, which is also used in \cite{AHK12,KalaiVempala}.
	
	\begin{definition}[Regret]
		We say that an algorithm for the prediction-with-experts problem has \emph{$(\regA, R)$-regret} if the solution $w^1, \ldots, w^n$ constructed satisfies
		\begin{align*}
			\sum_{t = 1}^n \ip{w^t}{o^t} \ge \left\|\sum_{t=1}^n o^t - \alpha \sum_{t = 1}^n \left|o^t\right| \right\|_{\max} - R.
		\end{align*}
	\end{definition}
		
	This problem
	 has been extensively studied and many algorithms achieving optimal regret are known \cite{Shalev-ShwartzBook}. The traditional $(0, M \sqrt{n \log m})$-regret is attained by several algorithms and is known to be optimal if the first parameter $\alpha$ in the regret is kept at 0. However, the guarantee that works best for our purposes is attained by the well-known \emph{Multiplicative Weights Update} algorithm, which gives an $(\regA, R)$-regret with (typically) much smaller value of $R$ at the expense of a non-zero $\regA$.
	
	\begin{thm}[Theorem 2.5 of \cite{AHK12}]\label{thm:MW}
		For every $\epsilon \in [0,1]$, the Multiplicative Weights Update (\MW) algorithm with learning parameter $\epsilon$ achieves $(\epsilon, \frac{M \log m}{\epsilon})$-regret.
	\end{thm}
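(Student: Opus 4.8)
The plan is to run the standard potential-function analysis of \MW, adapted to the signed ``full-simplex'' notion of regret used here. Recall the algorithm: initialize unnormalized weights $W^1_i = 1$ for all $i \in [m]$; at step $t$ play the normalized vector $w^t := W^t / \|W^t\|_1 \in \Delta^m$ and collect reward $\ip{w^t}{o^t}$; then update multiplicatively by $W^{t+1}_i = W^t_i\,(1+\epsilon)^{o^t_i / M}$ when $o^t_i \ge 0$ and $W^{t+1}_i = W^t_i\,(1-\epsilon)^{-o^t_i/M}$ when $o^t_i < 0$. Since each $o^t_i/M$ lies in $[-1,1]$, all exponents lie in $[0,1]$, exactly the range where the convexity estimates below apply. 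The potential is $\Phi^t := \|W^t\|_1 = \sum_i W^t_i$, so $\Phi^1 = m$.

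The first step is an upper bound on the potential. Since $x \mapsto (1+\epsilon)^x$ and $x \mapsto (1-\epsilon)^x$ are convex, each lies below its chord on $[0,1]$, giving $(1+\epsilon)^x \le 1+\epsilon x$ and $(1-\epsilon)^x \le 1-\epsilon x$ for $x \in [0,1]$. Applying these coordinate-wise (in the $o^t_i<0$ case, $(1-\epsilon)^{-o^t_i/M} \le 1 + \tfrac{\epsilon}{M} o^t_i$) yields $W^{t+1}_i \le W^t_i(1 + \tfrac{\epsilon}{M} o^t_i)$ in both sign cases, hence $\Phi^{t+1} \le \Phi^t(1 + \tfrac{\epsilon}{M}\ip{w^t}{o^t}) \le \Phi^t \exp(\tfrac{\epsilon}{M}\ip{w^t}{o^t})$. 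Telescoping over $t=1,\dots,n$ gives $\Phi^{n+1} \le m\,\exp\!\big(\tfrac{\epsilon}{M}\sum_t \ip{w^t}{o^t}\big)$.

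The second step is a per-coordinate lower bound: $\Phi^{n+1} \ge W^{n+1}_i$ for every $i$. Writing $P_i = \{t: o^t_i \ge 0\}$ and $N_i = \{t: o^t_i < 0\}$, we have $\ln W^{n+1}_i = \tfrac{\ln(1+\epsilon)}{M}\sum_{t\in P_i} o^t_i + \tfrac{\ln(1-\epsilon)}{M}\sum_{t\in N_i}|o^t_i|$. Using $\ln(1+\epsilon) \ge \epsilon-\epsilon^2$ and $\ln(1-\epsilon)\ge -\epsilon-\epsilon^2$, and regrouping via $\sum_{P_i}o^t_i - \sum_{N_i}|o^t_i| = \sum_t o^t_i$ and $\sum_{P_i}o^t_i + \sum_{N_i}|o^t_i| = \sum_t |o^t_i|$, this becomes $\ln W^{n+1}_i \ge \tfrac{\epsilon}{M}\sum_t o^t_i - \tfrac{\epsilon^2}{M}\sum_t |o^t_i|$. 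Combining with the upper bound on $\Phi^{n+1}$, taking logarithms, multiplying through by $M/\epsilon$, and finally maximizing over $i$, one obtains $\sum_t \ip{w^t}{o^t} \ge \max_i\big(\sum_t o^t_i - \epsilon\sum_t |o^t_i|\big) - \tfrac{M\ln m}{\epsilon} = \big\|\sum_t o^t - \epsilon\sum_t |o^t|\big\|_{\max} - \tfrac{M\ln m}{\epsilon}$, which is precisely $(\epsilon, \tfrac{M\log m}{\epsilon})$-regret.

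The only delicate point — and the sole obstacle — is the elementary estimate $\ln(1-\epsilon)\ge -\epsilon-\epsilon^2$, which is comfortable for $\epsilon$ bounded away from $1$ (say $\epsilon \le 1/2$) but degrades as $\epsilon\to 1$; for the entire range $\epsilon\in[0,1]$ one either restricts to the small-$\epsilon$ regime that all applications in this paper actually use, or absorbs a marginally worse absolute constant into the additive term $R$. Everything else is routine, and since the statement is exactly Theorem~2.5 of~\cite{AHK12}, one may simply invoke it, with the argument above as the underlying proof.
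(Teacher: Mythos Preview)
The paper does not prove this statement at all; it simply cites it as Theorem~2.5 of~\cite{AHK12}. Your proposal is a correct reconstruction of the standard potential-function argument from that reference, and your caveat about the inequality $\ln(1-\epsilon)\ge -\epsilon-\epsilon^2$ failing near $\epsilon=1$ is appropriate (and irrelevant for the paper's applications, which all use small~$\epsilon$).
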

	

	\subsection{Concentration inequalities}
        \label{sec:conc-ineq}
	
	Another important tool for our analyses are strong concentration inequalities for non-independent processes. First we need a version of Bernstein's inequality for sampling without replacement. 

\begin{thm}[Theorem 2.14.19 in
  \cite{weakConvergence}] \label{thm:conc-hyp} Let $Y = \{Y_1, \ldots,
  Y_n\}$ be a set of real numbers in the interval $[0,M]$. Let $S$ be a
  random subset of $Y$ of size $s$ and let $Y_S = \sum_{i \in S}
  Y_i$. Setting $\mu = \frac{1}{n} \sum_i Y_i$ and $\sigma^2 =
  \frac{1}{n}\sum_i (Y_i - \mu)^2$, we have that for every $\tau > 0$
  \begin{align*}
    \Pr(|Y_S - s \mu| \ge \tau) \le 2 \exp \left( -\frac{\tau^2}{2 s
        \sigma^2 + \tau M} \right)
  \end{align*}
\end{thm}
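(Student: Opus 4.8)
The plan is to reduce sampling \emph{without} replacement to sampling \emph{with} replacement at the level of moment generating functions, and then run the classical Bernstein/Chernoff argument for sums of independent bounded variables. Concretely, let $Z_1, \dots, Z_s$ be i.i.d.\ uniform draws (with replacement) from the multiset $\{Y_1, \dots, Y_n\}$, and set $W := \sum_{i=1}^s Z_i$, so that each $Z_i \in [0,M]$ has mean $\mu$ and variance $\sigma^2$. The first (and only genuinely non-trivial) ingredient is Hoeffding's comparison: for every convex $f : \R \to \R$,
\begin{align*}
  \E\!\left[ f\!\left( Y_S \right) \right] \;\le\; \E\!\left[ f(W) \right] .
\end{align*}
I would include this as a one-line lemma or simply cite it; its proof writes $Y_S$ as an average, over all ways of grouping a size-$s$ sample into ordered singletons, of sums each distributed like $W$, and then applies Jensen's inequality (equivalently, $Y_S$ is dominated by $W$ in the convex order).

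Applying the comparison with the convex function $f(x) = e^{\lambda(x - s\mu)}$ for a parameter $\lambda > 0$, and then using independence of the $Z_i$, gives
\begin{align*}
  \E\!\left[ e^{\lambda(Y_S - s\mu)} \right] \;\le\; \E\!\left[ e^{\lambda(W - s\mu)} \right] \;=\; \left( \E\!\left[ e^{\lambda(Z_1 - \mu)} \right] \right)^{s} .
\end{align*}
Since $Z_1 \in [0,M]$ with mean $\mu$ and variance $\sigma^2$, the standard Bernstein moment bound yields $\E[e^{\lambda(Z_1-\mu)}] \le \exp\!\big(\sigma^2 (e^{\lambda M} - 1 - \lambda M)/M^2\big)$, hence $\E[e^{\lambda(Y_S - s\mu)}] \le \exp\!\big(s\sigma^2(e^{\lambda M} - 1 - \lambda M)/M^2\big)$. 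Now Markov's inequality gives $\Pr(Y_S - s\mu \ge \tau) \le \exp\!\big(-\lambda\tau + s\sigma^2(e^{\lambda M}-1-\lambda M)/M^2\big)$, and optimizing over $\lambda$ (e.g.\ $\lambda M = \log(1 + \tau M/(s\sigma^2))$, or $\lambda = \tau/(s\sigma^2 + \tau M/3)$ after bounding $e^x - 1 - x \le x^2/(2 - 2x/3)$) collapses this to $\exp\!\big(-\tau^2/(2s\sigma^2 + \tfrac{2}{3}\tau M)\big)$, which is at least as strong as the claimed bound, so I can simply weaken the constant in the denominator. For the lower tail I would rerun the argument with $Y_i$ replaced by $M - Y_i \in [0,M]$ (same variance $\sigma^2$, mean $M - \mu$), obtaining the matching bound on $\Pr(Y_S - s\mu \le -\tau)$; a union bound over the two tails produces the factor $2$ and the absolute value.

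The main obstacle is really just the Hoeffding reduction in the first step: once that comparison is in hand, everything else is the textbook Bernstein derivation for independent summands and presents no further surprises. Matching the exact constant $\tau M$ (rather than $\tfrac{2}{3}\tau M$) in the denominator is a minor bookkeeping point absorbed by loosening the inequality, not a genuine difficulty; and if one wished to avoid invoking Hoeffding at all, one could instead bound the moment generating function of the hypergeometric-type sum directly, but that route is more delicate and the reduction is what keeps the proof clean.
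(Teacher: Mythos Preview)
Your proof is correct and follows the classical route: Hoeffding's reduction from sampling without replacement to sampling with replacement at the level of the moment generating function, followed by the standard Bennett--Bernstein MGF bound for bounded independent summands and the usual Chernoff optimization. The symmetry argument for the lower tail via $Y_i \mapsto M - Y_i$ is also fine.

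There is nothing to compare against, however: the paper does not prove this statement. It is quoted verbatim as an external result (Theorem~2.14.19 of \cite{weakConvergence}) and used as a black box, so there is no ``paper's own proof'' here. Your write-up is exactly the standard derivation one would find in that reference or in Hoeffding's original 1963 paper, and would serve perfectly well if the paper had chosen to be self-contained on this point.
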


	Notice that we can use the fact $Y_i \in [0,M]$ to upper bound the variance $\sigma^2$ as $$\sigma^2 \le
\frac{M}{n} \sum_i |Y_i - \mu| \le \frac{M}{n} \left( \sum_i |Y_i| +
  \sum_i |\mu| \right) = 2 M \mu,$$ which directly gives the following. 
	
	\begin{cor}\label{cor:multiChernoff} Let $Y = \{Y_1, \ldots, Y_n\}$ be
  a set of real numbers in the interval $[0,M]$. Let $S$ be a random
  subset of $Y$ of size $s$ and let $Y_S = \sum_{i \in S} Y_i$. Setting
  $\mu = \frac{1}{n} \sum_i Y_i$, we have that for every $\tau > 0$,
  \begin{align*}
    \Pr(|Y_S - s \mu| \ge \tau) \le 2 \exp \left( -\frac{\tau^2}{M(4 s \mu + \tau)} \right) \le 2 \exp \left(-\min\left\{\frac{\tau^2}{8 M s \mu}, \frac{\tau}{2M}\right\}\right).
  \end{align*}
	\end{cor}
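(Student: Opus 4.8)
The plan is to derive the Corollary directly from Theorem~\ref{thm:conc-hyp} by substituting the variance estimate that precedes its statement. First I would invoke Theorem~\ref{thm:conc-hyp} for the given set $Y$, random subset $S$ of size $s$, and mean $\mu$, which yields
\[
  \Pr(|Y_S - s\mu| \ge \tau) \le 2\exp\left(-\frac{\tau^2}{2s\sigma^2 + \tau M}\right),
\]
where $\sigma^2 = \frac1n \sum_i (Y_i - \mu)^2$. The key observation is that $x \mapsto 2\exp(-\tau^2/x)$ is increasing in $x>0$, so the right-hand side only grows (remaining a valid upper bound) if we replace the denominator $2s\sigma^2 + \tau M$ by any larger quantity.

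Next I would use the variance bound $\sigma^2 \le 2M\mu$ noted just above the Corollary (which follows from $Y_i \in [0,M]$ via $\sigma^2 \le \frac{M}{n}\sum_i |Y_i - \mu| \le \frac{M}{n}\big(\sum_i |Y_i| + \sum_i \mu\big) = 2M\mu$). This gives $2s\sigma^2 + \tau M \le 4Ms\mu + \tau M = M(4s\mu + \tau)$, and plugging this into the Bernstein bound produces the first claimed inequality
\[
  \Pr(|Y_S - s\mu| \ge \tau) \le 2\exp\left(-\frac{\tau^2}{M(4s\mu + \tau)}\right).
\]

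For the second inequality I would apply the elementary estimate $4s\mu + \tau \le 2\max\{4s\mu,\,\tau\}$, so that $M(4s\mu + \tau) \le \max\{8Ms\mu,\, 2M\tau\}$; taking reciprocals and multiplying through by $\tau^2$ gives
\[
  \frac{\tau^2}{M(4s\mu + \tau)} \ge \min\left\{\frac{\tau^2}{8Ms\mu},\ \frac{\tau}{2M}\right\},
\]
and substituting this lower bound into the (negated) exponent completes the proof. There is no real obstacle here — the argument is a short chain of inequalities — so the only things to watch are the monotonicity of $x \mapsto 2\exp(-\tau^2/x)$ used when relaxing the denominator, and the direction of the inequality when passing to reciprocals inside the exponential (a larger denominator yields a larger, i.e.\ weaker but still valid, tail bound).
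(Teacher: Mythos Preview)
The proposal is correct and follows exactly the approach indicated in the paper: plug the variance bound $\sigma^2 \le 2M\mu$ into Theorem~\ref{thm:conc-hyp} to obtain the first inequality, then use $4s\mu + \tau \le 2\max\{4s\mu,\tau\}$ for the second. The paper only says the corollary ``directly'' follows from the variance bound, and your write-up supplies precisely those details.
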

		
	We also need a \emph{maximal} Bernstein's inequality for sampling without replacement. While we could not find such inequality explicitly stated in the literature, it easily follows from Theorem \ref{thm:conc-hyp} and a Levy-type maximal inequality for exchangeable random variables of \cite{pruss}; its proof is presented in Appendix \ref{app:maximalBernstein}.

	\begin{lemma} \label{lemma:maximalBernstein}
		Consider a set of real values $x_1, \ldots, x_n$ in $[0,1]$, and let $X_1, \ldots, X_k$ be sampled without replacement from this collection. Assume $k \le n/2$. Let $S_i = X_1 + \ldots X_i$. Also let $\mu = \frac{1}{n} \sum_i x_i$ and $\sigma^2 = \frac{1}{n} \sum_i (x_i - \mu)^2$. Then for every $\alpha > 0$
		\begin{align*}
			\Pr\left(\max_{i \le k} |S_i - i \mu| \ge \alpha\right) \le 30 \exp\left(\frac{(\alpha/24)^2}{2 k \sigma^2 + (\alpha/24)} \right).
		\end{align*}
	\end{lemma}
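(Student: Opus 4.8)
The plan is to deduce this maximal inequality from the non-maximal Bernstein bound of Theorem~\ref{thm:conc-hyp} by means of a Levy-type maximal inequality for partial sums of finite \emph{exchangeable} sequences. The first step is the structural observation that makes this possible: extend the sampled prefix $X_1,\ldots,X_k$ to a full uniformly random permutation $X_1,\ldots,X_n$ of $x_1,\ldots,x_n$. Then $(X_1,\ldots,X_n)$ is exchangeable, hence so is the centered sequence $(X_i-\mu)_{i\le n}$, which moreover sums to exactly $0$. Writing $T_i := S_i - i\mu = \sum_{j\le i}(X_j-\mu)$, the quantity to be bounded, $\Pr(\max_{i\le k}|T_i|\ge\alpha)$, is the maximal deviation of the partial-sum process of a mean-zero exchangeable sequence of length $n$, taken over the first $k\le n/2$ indices.

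The second step is to invoke the maximal inequality for finite exchangeable sequences from \cite{pruss}. In the regime $k\le n/2$ it allows one to replace the maximum over all prefixes $i\le k$ by the single value at time $k$, at the cost of an absolute multiplicative constant and an absolute constant-factor weakening of the deviation level; concretely it yields a bound of the form $\Pr(\max_{i\le k}|T_i|\ge\alpha)\le 15\,\Pr(|T_k|\ge\alpha/24)$, which is where the constants $30$ and $24$ in the statement come from (the $15$ becoming $30$ after we pay the factor of $2$ from the two-sided tail below).

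The third step is routine: $|T_k|=|S_k-k\mu|$ is exactly the deviation of a size-$k$ sample-without-replacement sum from the ground set $\{x_1,\ldots,x_n\}\subseteq[0,1]$, so Theorem~\ref{thm:conc-hyp} with $M=1$, $s=k$ and $\tau=\alpha/24$ bounds $\Pr(|S_k-k\mu|\ge\alpha/24)$ by $2\exp\!\left(-\frac{(\alpha/24)^2}{2k\sigma^2+(\alpha/24)}\right)$; combining with the previous step gives the claim.

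The step I expect to require the most care (the only non-routine one) is the application of \cite{pruss}: one must check that its maximal inequality applies to a sequence that is exchangeable but not i.i.d.\ and mean-zero only in aggregate, note that the hypothesis $k\le n/2$ is precisely what keeps the negative correlation induced by sampling without replacement from degrading the constants, decide whether to use a two-sided form directly or apply a one-sided version to $T_k$ and to $-T_k$ (absorbing an extra factor $2$), and bookkeep the universal constants so they come out to the stated $30$ and $24$. Everything downstream of that is a direct substitution into Theorem~\ref{thm:conc-hyp}.
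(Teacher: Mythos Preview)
Your proposal is correct and follows exactly the paper's approach: apply the Levy-type maximal inequality for exchangeable sequences from \cite{pruss} (Theorem~1 there with $\gamma=2$, giving the constants $15$ and $24$) to the centered sequence $(X_i-\mu)_i$, and then bound the resulting single-time deviation $\Pr(|S_k-k\mu|\ge\alpha/24)$ via Theorem~\ref{thm:conc-hyp}, picking up the extra factor of~$2$ to reach~$30$. The only refinement is that the Pruss inequality is stated for Banach-space norms, so the two-sided form is automatic and you need not split into $T_k$ and $-T_k$.
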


	Finally, we need the following classic martingle concentration inequality.
	
	\begin{thm}[Freedman's Inequality \cite{freedman}] \label{thm:freedman}
		Let $X_0, X_1, \ldots, X_n$ be a real-valued martingale, with $X_0 = 0$, with respect to a filtration $\mathcal{F}_0 \subseteq \mathcal{F}_1 \subseteq \ldots \subseteq \mathcal{F}_n$. Let $Y_t = X_t - X_{t-1}$ be the martingale differences, let $M$ be such that $|Y_t| \le M$ for all $t$ and let $L = \sum_{i \le n} \Var(Y_t \smid \mathcal{F}_{i-1})$ be the predictable variance. Then for every real numbers $\alpha, \sigma^2 \ge 0$,
		\begin{align*}
			\Pr\left( X_n \ge \alpha \textrm{ and } L \le \sigma^2\right) \le \exp\left(- \frac{\alpha^2}{2 \sigma^2 + 2 M \alpha} \right)
		\end{align*}	
	\end{thm}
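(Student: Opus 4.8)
The plan is to run the standard exponential-moment (Chernoff/Bernstein) argument, but through an \emph{exponential supermartingale} that simultaneously tracks $X_t$ and the running predictable variance $V_t := \sum_{i\le t}\Var(Y_i\mid\mathcal F_{i-1})$, so that the side event $\{L\le\sigma^2\}$ can be absorbed cleanly at the end. The first ingredient is a one-step moment bound: since $\E[Y_t\mid\mathcal F_{t-1}]=0$ and $Y_t\le M$, for every $\theta>0$,
\[
\E\!\left[e^{\theta Y_t}\mid\mathcal F_{t-1}\right]\ \le\ \exp\!\bigl(g(\theta)\,\Var(Y_t\mid\mathcal F_{t-1})\bigr),\qquad g(\theta):=\frac{e^{\theta M}-\theta M-1}{M^2}\ \ge\ 0 .
\]
I would prove this from the pointwise inequality $e^{\theta y}\le 1+\theta y+g(\theta)y^2$, valid for all $y\le M$ because $y\mapsto(e^{\theta y}-1-\theta y)/y^2$ has a power series with nonnegative coefficients and hence is maximized over $(-\infty,M]$ at $y=M$; then take conditional expectations (the linear term drops out) and use $1+x\le e^x$.

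Next, set $Z_t:=\exp(\theta X_t-g(\theta)V_t)$, so that $Z_0=1$. Since $Z_{t-1}$ and $V_t-V_{t-1}=\Var(Y_t\mid\mathcal F_{t-1})$ are both $\mathcal F_{t-1}$-measurable, the moment bound above gives $\E[Z_t\mid\mathcal F_{t-1}]\le Z_{t-1}$, so $(Z_t)$ is a nonnegative supermartingale with $\E[Z_n]\le 1$. On the event $E:=\{X_n\ge\alpha\}\cap\{L\le\sigma^2\}$ we have $V_n=L\le\sigma^2$, and since $g(\theta)\ge0$ this forces $Z_n\ge\exp(\theta\alpha-g(\theta)\sigma^2)$; Markov's inequality then yields, for every $\theta>0$,
\[
\Pr(E)\ \le\ \Pr\!\bigl(Z_n\ge e^{\theta\alpha-g(\theta)\sigma^2}\bigr)\ \le\ e^{g(\theta)\sigma^2-\theta\alpha}\,\E[Z_n]\ \le\ \exp\!\bigl(g(\theta)\sigma^2-\theta\alpha\bigr).
\]

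It then remains to optimize over $\theta>0$. Choosing $\theta M=\log(1+M\alpha/\sigma^2)$ and writing $u:=M\alpha/\sigma^2$, the exponent becomes $-\tfrac{\sigma^2}{M^2}\bigl((1+u)\log(1+u)-u\bigr)$, and the elementary inequality $(1+u)\log(1+u)-u\ge\tfrac{u^2}{2(1+u)}$ for $u\ge0$ converts this into $-\tfrac{\alpha^2}{2\sigma^2+2M\alpha}$, which is the claimed bound (the degenerate case $\sigma^2=0$ follows by letting $\theta\to\infty$). I expect the only genuinely delicate point to be the conjunction with $\{L\le\sigma^2\}$: because $L$ is random, $\theta$ cannot be tuned to it, so the argument really depends on carrying the predictable variance inside the exponential martingale and then using $g(\theta)\ge0$ to replace $V_n$ by the deterministic bound $\sigma^2$ on $E$. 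The pointwise moment bound and the final scalar optimization are both routine.
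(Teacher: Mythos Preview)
The paper does not prove this theorem: it is stated as the classical Freedman inequality and cited from \cite{freedman}, with no proof supplied. So there is no ``paper's own proof'' to compare against.

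Your proposal is the standard exponential-supermartingale proof of Freedman's inequality and is correct. The only slightly loose point is the justification of the pointwise bound $e^{\theta y}\le 1+\theta y+g(\theta)y^2$ for all $y\le M$: saying that $y\mapsto(e^{\theta y}-1-\theta y)/y^2$ ``has a power series with nonnegative coefficients and hence is maximized at $y=M$'' is not quite a valid implication in general (a power series with nonnegative coefficients need not be monotone on the negative axis). The conclusion is nevertheless true: writing $u=\theta y$ one needs that $\phi(u)=(e^u-1-u)/u^2$ is nondecreasing on all of $\mathbb R$, which follows e.g.\ from the integral representation $\phi(u)=\int_0^1\!\int_0^1 s\,e^{stu}\,dt\,ds$. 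With that fix, the one-step moment bound, the supermartingale property of $Z_t=\exp(\theta X_t-g(\theta)V_t)$, the Markov step on the event $\{X_n\ge\alpha,\ L\le\sigma^2\}$, and the final optimization (together with the elementary inequality $(1+u)\log(1+u)-u\ge u^2/(2(1+u))$) all go through exactly as you describe and yield the stated bound.
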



	\section{Load-balancing using experts}
        \label{sec:lb}	

        In this section we formally define the generalized
        load-balancing problem and present our online algorithm for it.
        In the next section, we will use this algorithm to solve
        packing/covering LPs given an estimate of the optimal value.


        \subsection{Definitions: offline and online instances}
        \label{sec:def-lb}

	An instance of the \emph{offline} version of the \emph{generalized load-balancing problem} is a set of matrices $\{A^t\}_t$, each in $\R^{m \times
          k}$. The goal is to find vectors $p^1, p^2, \ldots, p^n \in
        \Delta^k$ to minimize $\|\sum_{t =1}^n A^t p^t\|_{\max}$, the
        load of the most-loaded machine.
        We use  $\lambda^*$ throughout to denote the offline optimum value. Notice that, by adding a  column of zeroes, we can get an
  instance where one option is to do nothing, i.e., we can change the problem
  to minimizing $\|\sum_{t =1}^n A^t p^t\|_{\max}$ over $p^t \in
  \fullsimplex^{k}$. 
	
	In the \emph{online} version of this problem, we consider the
        random permutation model. Now the number of time steps $n$ is
        the only information known upfront. Let $\A^1, \A^2, \ldots,
        \A^n$ be matrices sampled from the set $\{A^1, \ldots, A^n\}$
        uniformly \emph{without replacement}. At time step $t$, the
        algorithm has seen matrices $\A^1, \dots, \A^t$ and must decide
        on a (random) vector $\p^t \in \Delta$; the randomness is both due to the random matrices seen so far and the internal coin flips of the
        algorithm (if any).  The goal is to minimize $\|\sum_{t = 1}^n
        \A^t \p^t\|_{\max}$. The vectors $\{\p^t\}_t$ output by the
        algorithm are called the \emph{online solution} for the online
        instance $\{\A^t\}_t$ corresponding to the offline instance $\{A^t\}_t$.

\full{        Ideally we want guarantees that for any given $\e, \delta > 0$, we get $\|\sum_{t = 1}^n \A^t \p^t\|_{\max} \le (1+\e)\lambda^*$ with probability $(1- \delta)$. Due to the online nature of the problem, such a guarantee is only conceivable if $\lambda^*$ is ``large'' compared to the maximum size of the loads. In the rest of this section we show how to get the best possible such guarantee.
}

        \subsection{Well-bounded instances}

        While instances arising from scheduling-type applications usually consist solely of positive loads, our reduction used for packing/covering LPs generates instances where some entries $A^t_{ij}$ are also negative. In order to get good guarantees, we need to control how negative these entries can be. Loosely speaking, while the entries of the load matrices $A^t$ can be in the symmetric interval $[-M, M]$, for each machine we require that either the loads are mostly positive on all time steps (with any negative loads being tiny), or they are mostly negative on all time steps. 

		\begin{definition}[Well-bounded instance] \label{def:well-bounded}
		 For $M, \gamma \geq 0$, an instance $A^1, \ldots, A^n$ of the load-balancing problem is \emph{$(M,\gamma)$-well-bounded} if $A_{ij}^t \in [-M, M]$ for all $i,j,t$, and moreover for each $i \in [m]$ we have: either $A_{ij}^t \ge -\frac{\gamma\lambda^*}{n}$ for all $j$ and $t \in [n]$, or $A_{ij}^t \le \frac{\gamma\lambda^*}{n}$ for all $j$ and $t \in [n]$.
		\end{definition}
		
		In particular, this is satisfied with $\gamma = 0$ if the $A^t$'s are non-negative. The reader can think of $\gamma$ as a small constant, say one. The main motivation behind this definition is that it allows us to control the variation of random processes defined over $\{A^t\}_t$ in a way similar to how we obtained Corollary \ref{cor:multiChernoff}. 
		
		\begin{lemma} \label{lemma:wellBounded}
			Suppose $\{A^t\}_{t = 1}^n$ is an
                        $(M,\gamma)$-well-bounded instance for some $M,
                        \gamma \geq 0$, and consider  $\bar{p}^1, \ldots, \bar{p}^n \in \Delta^k$. Let the sequence $\A^1 \bar{\p}^1, \ldots \A^n \bar{\p}^n$ be sampled without replacement from the set $\{A^t \bar{p}^t\}_t$. Then for every event $\cE$ and for every $i \in [m]$ and $t \in [n]$, 
			$$\E\bigg[ \left|\A^t_i \bar{\p}^t - \E[\A^t_i \bar{\p}^t \smid \cE]\right| \;\bigg\rvert\;  \cE\bigg] \le \frac{2\gamma\lambda^*}{n} + 2 \left|\E\left[\A^t_i \bar{\p}^t \smid \cE \right]\right|.$$
		\end{lemma}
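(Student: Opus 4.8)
The plan is to fix a machine $i\in[m]$ and an index $t\in[n]$, write $Z := \A^t_i\bar{\p}^t$ for the scalar load this term places on machine $i$, and reduce the claim to a one-sided bound on $Z$. Since $\E[Z\smid\cE]$ is a constant, the triangle inequality gives $|Z-\E[Z\smid\cE]|\le |Z|+|\E[Z\smid\cE]|$ pointwise, so after taking conditional expectations,
\[
\E[\,|Z-\E[Z\smid\cE]|\smid\cE\,]\;\le\;\E[\,|Z|\smid\cE\,]+|\E[Z\smid\cE]|,
\]
and it suffices to establish $\E[\,|Z|\smid\cE\,]\le|\E[Z\smid\cE]|+\tfrac{2\gamma\lambda^*}{n}$.

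The next step is to read off from Definition~\ref{def:well-bounded} that, for this fixed $i$, the variable $Z$ is one-sided up to an additive $c:=\tfrac{\gamma\lambda^*}{n}\ge 0$. Indeed, $Z=\A^t_i\bar{\p}^t$ is always one of the scalars $A^s_i\bar{p}^s=\sum_j A^s_{ij}\bar{p}^s_j$ with $s\in[n]$ and $\bar{p}^s\in\Delta^k$. In the first alternative of the definition all $A^s_{ij}\ge -c$, so each such scalar is $\ge -c\sum_j\bar{p}^s_j=-c$, i.e.\ $Z\ge -c$ almost surely; in the second alternative all $A^s_{ij}\le c$ and symmetrically $Z\le c$ almost surely.

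To finish, first suppose $Z\ge -c$ almost surely. Then $Z+c\ge 0$ almost surely, so $\E[\,|Z+c|\smid\cE\,]=\E[Z+c\smid\cE]=\E[Z\smid\cE]+c$; combined with the pointwise bound $|Z|\le|Z+c|+c$ this gives $\E[\,|Z|\smid\cE\,]\le\E[Z\smid\cE]+2c\le|\E[Z\smid\cE]|+\tfrac{2\gamma\lambda^*}{n}$. The case $Z\le c$ is identical after replacing $Z$ by $-Z$. Plugging either case into the first display yields $\E[\,|Z-\E[Z\smid\cE]|\smid\cE\,]\le 2|\E[Z\smid\cE]|+\tfrac{2\gamma\lambda^*}{n}$, as claimed. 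I do not expect a real obstacle here: the only subtlety is that well-boundedness must be used \emph{per machine} to obtain a one-sided bound on $Z$ (the two-sided bound $|Z|\le c$ is simply false in general), and that conditioning on an arbitrary event $\cE$ causes no trouble because $Z\ge -c$ (resp.\ $Z\le c$) holds pointwise rather than merely in expectation.
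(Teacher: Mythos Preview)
Your proof is correct and takes essentially the same approach as the paper: both exploit well-boundedness to shift $Z=\A^t_i\bar{\p}^t$ by $c=\gamma\lambda^*/n$ so that the shifted variable is nonnegative (the paper names it $\bar{\v}^t_i$), then combine $\E[|\text{nonneg}|]=\E[\text{nonneg}]$ with the triangle inequality. The only cosmetic difference is ordering --- you apply the triangle inequality first and then shift to bound $\E[|Z|\smid\cE]$, whereas the paper shifts first and applies the triangle inequality to the shifted variable --- but the arithmetic is identical.
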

		
		\begin{proof}
			Fix $i \in [m]$. Using the well-boundedness condition, define $\bar{\v}^t_i$ as follows: if $A^t_i \bar{p}^t \ge -\frac{\gamma\lambda^*}{n}$ for all $t$, set $\bar{\v}^t_i = \A^t_i \bar{\p}^t + \frac{\gamma\lambda^*}{n}$ for all $t$; else set $\bar{\v}_i^t = \frac{\gamma\lambda^*}{n} - \A^t_i \bar{\p}^t$. By definition for all $t$ we have $\bar{\v}^t_i \ge 0$, and since we did a uniform shift
			\begin{align*}
				\E\Big[ \left|\A^t_i \bar{\p}^t - \E[\A^t_i \bar{\p}^t \smid \cE]\right| \;\Big\rvert\;  \cE\Big] = \E\Big[\left|\bar{\v}^t_i - \E[\bar{\v}^t_i \smid \cE]\right| \;\Big\rvert\; \cE\Big] \le \E\left[|\bar{\v}^t_i| \;\middle|\; \cE\right] + \left|\E\left[\bar{\v}^t_i \smid \cE\right]\right| = 2 \E\left[\bar{\v}^t_i \smid \cE\right].
			\end{align*}
		But it is easy to see that $\E[\bar{\v}^t_i \smid \cE] \le \frac{\gamma\lambda^*}{n} + \left|\E[\A^t_i \bar{\p}^t \smid \cE]\right|$, which concludes the proof.
		\end{proof}
		
		
	\subsection{The \expertLB algorithm and its guarantee}
		
		Given an online instance $\{\A^t\}_t$ to the generalized load-balancing problem and values $n, M$, and $\e$,
the following algorithm \expertLB (for ``expert load-balancing'') runs a primal greedy strategy and for the dual \textbf{any} prediction-with-experts algorithm, restarting at timestep $n/2$. 

\new{
	\begin{algorithm}[H]
		\caption{\expertLB}
		\begin{algorithmic}[0]                 
                  \For{each time $t$}
                  \State let $\A^t$ be the arriving random item 
                  \State \textbf{primal step}: choose $\p^t \in \Delta^k$ to minimize $\ip{\w^t \A^t}{\p^t}$ \Comment{\emph{play best-response}}
                  \State \textbf{dual step}: feed $\A^t \p^t$ to a prediction-with-experts algorithm (whose current state is $\w^t$)
                  \State \phantom{~~~~~~~~~~~~~~~~} let $\w^{t+1}$ be the resulting state of the algorithm. 
                  \If{$t = n/2$} 
                         \State restart the state of the prediction-with-experts algorithm \Comment{\emph{restart algorithm at half-time}}
                  \EndIf
                  \EndFor
                \end{algorithmic}
              \end{algorithm}
}             



Note the simplicity of the algorithm: it is perhaps the ``natural'' algorithm, once we decide to reduce load-balancing to a prediction-with-experts algorithm. 
 \new{The only non-intuitive step is perhaps that we restart the process at time $n/2$. This ensures that at each step, the current state depends on at most $n/2-1$ of the random choices from the $A^1, \ldots, A^n$, and hence the next random item still has ``enough randomness'' for our analysis to go through\footnote{The formal reason why we need this restart is that Lemma \ref{lemma:maximalBernstein} (because of Theorem \ref{thm:pruss}) can only handle up to half of the random sequence.}. Also, another interpretation of the algorithm is that the dual vectors $\w^t \in \Delta^m$ are providing (adaptive) aggregations of the $m$ machines into a single one, in which case the best action is just to minimize the load on this machine (which is exactly what our best-response primal step is doing).}
 
 The following theorem shows if the offline optimal load $\lambda^*$ is not too small, then \expertLB obtains value almost equal to $\lambda^*$ up to the regret of the prediction-with-experts algorithm used.
 


\full{
		\begin{thm}[Load-balancing guarantee] \label{thm:loadBalHP}
			Suppose $\{A^t\}_t$ is $(M,\gamma)$-well-bounded load-balancing instance for $\gamma \geq 1$. Let $\lambda^*$ be its optimal load and suppose $\e \leq \e_1 := \frac{1}{84}$ and $\delta \in (0, \e]$ are such that $\lambda^* \ge \frac{3M \log(m/\delta)}{\e^2}$. \new{If the algorithm \expertLB is run with \textbf{any} prediction-with-experts algorithm with $(\regA, R)$-regret}, then it returns a solution $\{\p^t\}_t$ such that with probability at least $1 - \delta$ $$\bigg\|\sum_t \A^t \p^t - \regA \sum_t \left|\A^t \p^t\right|\bigg\|_{\max} \le \lambda^* (1 + c_1\gamma \e) + 2R,$$ for a universal constant $c_1$. In particular, if the \MW algorithm from Theorem \ref{thm:MW} is used with learning parameter $\e$, the guarantee becomes (for a slightly larger constant $c_1$)
			$$\bigg\|\sum_t \A^t \p^t - \e \sum_t \left|\A^t \p^t\right|\bigg\|_{\max} \le \lambda^* (1 + c_1\gamma \e).$$
		\end{thm}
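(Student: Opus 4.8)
The plan is to turn the $(\regA,R)$-regret guarantee of the dual algorithm into a clean reduction: the online load is controlled by the \emph{reward} $\sum_t\ip{\w^t}{\A^t\p^t}$ that the experts algorithm collects, and this reward is in turn controlled by the offline optimum $\lambda^*$ using the sampling-without-replacement concentration tools of \S\ref{sec:conc-ineq}. Fix one of the two phases created by the restart (say timesteps $1,\dots,n/2$; the two are symmetric). Feeding the vectors $o^t:=\A^t\p^t\in[-M,M]^m$ to the prediction-with-experts subroutine, the $(\regA,R)$-regret bound gives $\big\|\sum_t o^t-\regA\sum_t|o^t|\big\|_{\max}\le\sum_t\ip{\w^t}{o^t}+R$ on that phase; since $\|\cdot\|_{\max}$ is subadditive, adding the two phases yields $\big\|\sum_t\A^t\p^t-\regA\sum_t|\A^t\p^t|\big\|_{\max}\le\sum_t\ip{\w^t}{\A^t\p^t}+2R$. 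The primal step plays best response, so $\ip{\w^t}{\A^t\p^t}=\min_{p\in\Delta^k}\ip{\w^t}{\A^t p}\le\ip{\w^t}{\A^t p^{*,t}}$, where $p^{*,t}\in\Delta^k$ is whatever the offline optimum assigns to the block arriving at time $t$; this is valid because $\w^t$ depends only on $\A^1,\dots,\A^{t-1}$, i.e.\ it is measurable w.r.t.\ $\mathcal{F}_{t-1}:=\sigma(\A^1,\dots,\A^{t-1})$. Thus it suffices to show $\sum_t\ip{\w^t}{v^t}\le\lambda^*(1+c_1\gamma\e)$ with probability $\ge1-\delta$, where $v^t:=\A^t p^{*,t}$.

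Now $(v^t)_t$ is a uniformly random permutation of the fixed multiset $\{A^s p^{*,s}\}_s$, the total $V:=\sum_t v^t$ is the offline optimum's fixed load vector so $\|V\|_{\max}=\lambda^*$, and $\bar v:=V/n$ has $\|\bar v\|_{\max}=\lambda^*/n$. Writing $\mu_t:=\E[v^t\smid\mathcal{F}_{t-1}]$, I would split $\sum_t\ip{\w^t}{v^t}$ as $(\mathrm{I})+(\mathrm{II})+(\mathrm{III})$ with $(\mathrm{I})=\sum_t\ip{\w^t}{\bar v}$, $(\mathrm{II})=\sum_t\ip{\w^t}{\mu_t-\bar v}$, and $(\mathrm{III})=\sum_t\ip{\w^t}{v^t-\mu_t}$. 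Term $(\mathrm{I})$ equals $\ip{\sum_t\w^t}{\bar v}$; since each $\w^t\in\Delta^m$ the coefficients $\sum_t\w^t_i$ are nonnegative and sum to $n$, so — negative coordinates of $\bar v$ only helping — $(\mathrm{I})\le n\|\bar v\|_{\max}=\lambda^*$ \emph{deterministically} (this is where one must argue through $\sum_i\w^t_i=1$ and never machine-by-machine). Term $(\mathrm{III})$ is a bounded martingale, and term $(\mathrm{II})$ is exactly the drift caused by sampling \emph{without} replacement — it would vanish in the i.i.d.\ model — and is where the correlations of the random-order model must be absorbed.

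For $(\mathrm{II})$, a short computation gives $\mu_{t,i}-\bar v_i=-\frac{1}{n-t+1}\big(S^i_{t-1}-\frac{t-1}{n}V_i\big)$ with $S^i_{t-1}=\sum_{s<t}v^s_i$, so $(\mathrm{II})$ is controlled once every partial sum $S^i_{t-1}$ is close to its mean $\frac{t-1}{n}V_i$. I would obtain this from the maximal Bernstein inequality (Lemma~\ref{lemma:maximalBernstein}) applied to each coordinate sequence, shifted by $\gamma\lambda^*/n$ so that well-boundedness makes the entries nonnegative and bounded by $O(\gamma M)$, which activates the sharper variance bound $\sigma^2\le O(\gamma M)\cdot\mu$ as in Corollary~\ref{cor:multiChernoff} — the restart at time $n/2$ is precisely what allows invoking the maximal inequality, which only handles the first half of a random sequence. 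Using $\lambda^*\ge3M\log(m/\delta)/\e^2$ this gives, for a fixed machine and with probability $1-\delta/(2m)$, the deviation bound with error $O(\gamma\e\lambda^*)$; union-bounding over the $m$ machines (this is one source of the $\log m$) and summing $\sum_{t\le n/2}\frac1{n-t+1}=O(1)$ gives $(\mathrm{II})\le O(\gamma\e\lambda^*)$. For $(\mathrm{III})$, $Z_t:=\ip{\w^t}{v^t-\mu_t}$ is a martingale difference with $|Z_t|\le2M$; since $\w^t\in\Delta^m$ and $|v^t_i-\mu_{t,i}|\le2M$ we have $Z_t^2\le2M\sum_i\w^t_i|v^t_i-\mu_{t,i}|$, so Lemma~\ref{lemma:wellBounded} (applied conditionally on $\mathcal{F}_{t-1}$) gives $\E[Z_t^2\smid\mathcal{F}_{t-1}]\le2M\sum_i\w^t_i\big(\frac{2\gamma\lambda^*}{n}+2|\mu_{t,i}|\big)$; summing over $t$ and controlling $\sum_t\sum_i\w^t_i|\mu_{t,i}|=O(\lambda^*)$ by the same decomposition used for $(\mathrm{I})$ and $(\mathrm{II})$ yields predictable variance $L=O(M\lambda^*)$. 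Freedman's inequality (Theorem~\ref{thm:freedman}) with $\sigma^2=O(M\lambda^*)$ and deviation $c\gamma\e\lambda^*$ then has exponent $\gtrsim\e^2\lambda^*/M\ge3\log(m/\delta)$, so $(\mathrm{III})\le O(\gamma\e\lambda^*)$ except with probability $\delta/2$. Combining gives $\sum_t\ip{\w^t}{v^t}\le\lambda^*(1+c_1\gamma\e)$ with probability $\ge1-\delta$; the \MW\ specialization then follows by plugging $(\regA,R)=(\e,M\log m/\e)$ from Theorem~\ref{thm:MW} and absorbing $2R\le\frac23\e\lambda^*$ into the right-hand side (using $\gamma\ge1$ and a slightly larger $c_1$).

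The main obstacle is keeping the total slack down to $O(\gamma\e\lambda^*)$ rather than the naive $O(\gamma\lambda^*)$: the negative (covering-type) entries, the adaptivity of $\w^t$, and the without-replacement drift all interact, and it is only the well-boundedness hypothesis — exploited through Lemma~\ref{lemma:wellBounded} and the $\sigma^2\le O(M)\cdot\mu$ estimate, and always applied through the identity $\sum_i\w^t_i=1$ together with the optimality bound $\|V\|_{\max}=\lambda^*$ rather than machine-by-machine — that prevents the drift term $(\mathrm{II})$ and the predictable variance $L$ from blowing up to something on the order of $M^2n$, which would be useless in the regime $\lambda^*\approx M\log(m/\delta)/\e^2$. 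Getting terms $(\mathrm{II})$ and $(\mathrm{III})$ to genuinely sum to $O(\lambda^*)$ and $O(\gamma\e\lambda^*)$ simultaneously, without spurious factors of $m$ or $n$, is the crux of the argument; the restart at $n/2$ is a minor but necessary device to satisfy the $k\le n/2$ hypothesis of the maximal inequality.
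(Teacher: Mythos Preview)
Your outline matches the paper's proof closely: reduce via the regret bound and best response to controlling $\sum_t\ip{\w^t}{\ho^t}$ with $\ho^t=\A^t\boptp^t$, handle the without-replacement drift by the maximal Bernstein inequality (the paper's Lemma~\ref{lemma:balancedEveryTime}) and the remaining fluctuation by Freedman (Lemmas~\ref{lemma:predVariation}--\ref{lemma:concentration}). Your split $(\mathrm{I})+(\mathrm{II})+(\mathrm{III})$ is the paper's split into $\sum_t\ip{\w^t}{\Et_t[\ho^t]}$ (your $(\mathrm{I})+(\mathrm{II})$) plus the martingale $\sum_t\bY_t$ (your $(\mathrm{III})$).

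There is, however, a genuine gap in bounding $(\mathrm{II})$ \emph{separately}. Your claim that maximal Bernstein gives a deviation $O(\gamma\e\lambda^*)$ amounts to $|S^i_{t-1}-\tfrac{t-1}{n}V_i|\le O(\gamma\e\lambda^*)$ uniformly in $i$. But the argument (as in the paper's Lemma~\ref{lemma:balancedEveryTime}) only yields $O\!\big(\e(|V_i|+\gamma\lambda^*)\big)$, and nothing in the hypotheses forces $|V_i|\le O(\gamma\lambda^*)$: for a ``mostly negative'' row $i$ the optimal load $V_i=\sum_t A^t_i p^{*,t}$ can be as negative as $-nM$, so $|V_i|$ may be arbitrarily large compared to $\gamma\lambda^*$. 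In that regime $(\mathrm{II})$ alone is $\Omega(\e|V_i|)\gg\e\gamma\lambda^*$. The same issue undermines your predictable-variance claim $L=O(M\lambda^*)$ for $(\mathrm{III})$: Lemma~\ref{lemma:wellBounded} gives a bound in terms of $|\mu_{t,i}|\approx|\bar v_i|$, which again need not be $O(\gamma\lambda^*/n)$.

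The paper sidesteps this by \emph{not} separating $(\mathrm{I})$ from $(\mathrm{II})$ and by letting the Freedman deviation scale with the same quantity. It keeps the $|\bar v|$ term explicit: the predictable variance is bounded by $M\alpha$ with $\alpha=4\sum_t\ip{\w^t}{|\bar v|}+4\gamma\lambda^*$, Freedman gives $\sum_t\bY_t\le\e\alpha$, and only at the end does one use
\[
\Et_t[\o^{*t}_i]+4\e|\bar v_i|\;\le\;\max\{(1+84\e)\bar v_i,\,0\}+O(\e\gamma\lambda^*/n)\;\le\;(1+O(\e\gamma))\,\lambda^*/n.
\]
The ``$\max$ with $0$'' is exactly what absorbs a huge negative $\bar v_i$ (since $\e\le\tfrac{1}{84}$ makes $(1-84\e)\bar v_i\le0$). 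Your decomposition loses precisely this cancellation between a very negative contribution to $(\mathrm{I})$ and the correspondingly inflated error in $(\mathrm{II})$; the fix is to postpone the split and bound $\ip{\w^t}{\mu_t+O(\e)|\bar v|}$ coordinatewise at the end, as the paper does.
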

               
	The case when the load matrices are non-negative is of independent interest (generalizing scheduling on unrelated machines); in this case well-boundedness means $A^t_{ij} \in [0,M]$ and we can set $\gamma = 1$ to get:
                \begin{cor}[Positive loads guarantee]
                  \label{cor:loadBalPos}
                  Suppose $\{A^t\}_t$ is a load-balancing instance with $A^t_{ij} \in [0,M]$, and optimal load $\lambda^*$. Suppose $\e \leq \e_1 := \frac{1}{84}$ and $\delta \in (0, \e]$ satisfy $\lambda^* \ge \frac{3 M \log(m/\delta)}{\e^2}$. Given $n$, $M$ and $\e$, the algorithm \expertLB (with the \MW algorithm) finds an online solution $\{\p^t\}_t$ such that $\| \sum_t \A^t \p^t \|_\infty \leq (1+ O(\e))\lambda^*$ with probability at least $1 - \delta$.
                \end{cor}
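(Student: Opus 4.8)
The plan is to let the regret guarantee of the dual algorithm and the best-response primal step do the heavy lifting: together they reduce the whole analysis to a comparison of the adversarially-constructed dual weights $\w^t$ against the \emph{fixed} offline optimal solution, after which the only remaining correlations come from sampling columns without replacement, and I would control those with a martingale plus a maximal inequality rather than a union bound over time.

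First, since \expertLB restarts the dual at half-time, it suffices to prove a bound of the shape $\big\|\sum_{t\in P}\A^t\p^t-\regA\sum_{t\in P}|\A^t\p^t|\big\|_{\max}\le \tfrac{\lambda^*}{2}(1+c\gamma\e)+R$ for each of the two phases $P$ (the first and second half of the stream), each with failure probability $O(\delta)$; the theorem then follows by a union bound and the triangle inequality for $\|\cdot\|_{\max}$. Fix a phase, index it $t=1,\dots,n/2$, and note $\w^t\in\Delta^m$ depends only on $\A^1,\dots,\A^{t-1}$ (the dual state is updated \emph{after} the primal move). Feeding $o^t=\A^t\p^t\in[-M,M]^m$ into the $(\regA,R)$-regret inequality (which holds pathwise, hence applies even though the $o^t$ depend on the algorithm's own $\w^t$) gives $\big\|\sum_{t\in P}\A^t\p^t-\regA\sum_{t\in P}|\A^t\p^t|\big\|_{\max}\le\sum_{t\in P}\ip{\w^t}{\A^t\p^t}+R$; and since the primal step plays the minimizer of $\ip{\w^t\A^t}{\cdot}$ over $\Delta^k$, it beats the choice $\tilde p^t$ that the fixed offline optimum would make on the column arriving at time $t$, so writing $\tilde o^t:=\A^t\tilde p^t$ (whence $\{\tilde o^t\}_t$ is the fixed multiset $\{A^sp^{*s}\}_s$ and $O^*:=\sum_t\tilde o^t$ has $\|O^*\|_{\max}\le\lambda^*$) we get $\sum_{t\in P}\ip{\w^t}{\A^t\p^t}\le\sum_{t\in P}\ip{\w^t}{\tilde o^t}$. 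Everything reduces to showing $\sum_{t\in P}\ip{\w^t}{\tilde o^t}\le\tfrac{\lambda^*}{2}(1+c\gamma\e)$ w.h.p.

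For that I would write $\sum_{t\in P}\ip{\w^t}{\tilde o^t}=\sum_{t\in P}\ip{\w^t}{\E[\tilde o^t\smid\mathcal F_{t-1}]}+D$ with $\mathcal F_{t-1}=\sigma(\A^1,\dots,\A^{t-1})$ and $D$ a martingale. Here $\E[\tilde o^t\smid\mathcal F_{t-1}]=\tfrac{1}{n-t+1}\big(O^*-\sum_{s<t}\tilde o^s\big)$, so I would apply the maximal Bernstein inequality (Lemma~\ref{lemma:maximalBernstein}) \emph{coordinate by coordinate} to the partial sums $\sum_{s\le t}\tilde o^s_i$ — whose step sizes and variances are kept small by well-boundedness (via Lemma~\ref{lemma:wellBounded} and the ``second moment $\le M\times$ mean'' bound behind Corollary~\ref{cor:multiChernoff}) — and union-bound over the $m$ coordinates, to get that with probability $1-O(\delta)$ every partial sum stays within $O(\gamma\e\lambda^*)$ of its mean $\tfrac{t}{n}O^*$ \emph{simultaneously for all $t\le n/2$}. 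As $k=n/2$ here, this is precisely where the restart is needed, and $\lambda^*\ge\Omega(M\e^{-2}\log(m/\delta))$ is exactly what makes $O(\gamma\e\lambda^*)$ an admissible deviation. On that event, $\ip{\w^t}{\E[\tilde o^t\smid\mathcal F_{t-1}]}\le\tfrac1n\ip{\w^t}{O^*}+\tfrac{O(\gamma\e\lambda^*)}{n-t+1}\le\tfrac{\lambda^*}{n}+\tfrac{O(\gamma\e\lambda^*)}{n-t+1}$, using that $\w^t\in\Delta^m$ so $\ip{\w^t}{v}\le\|v\|_{\max}$ for any fixed $v$; summing over $t\le n/2$ with $\sum_{t\le n/2}\tfrac1{n-t+1}\le1$ bounds the first term by $\tfrac{\lambda^*}{2}+O(\gamma\e\lambda^*)$. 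For the martingale $D$, its increments are $O(M)$ and well-boundedness together with the bounds just derived keeps its predictable quadratic variation at $O(M\gamma\lambda^*)$, so Freedman's inequality (Theorem~\ref{thm:freedman}) gives $D=O\!\left(\sqrt{M\gamma\lambda^*\log(1/\delta)}+M\log(1/\delta)\right)=O(\gamma\e\lambda^*)$ with probability $1-O(\delta)$. Adding these and rescaling constants yields the per-phase bound; for the \MW corollary one plugs its $(\e,M\log m/\e)$-regret into the statement and absorbs $2R=O(\e\lambda^*)$ via the hypothesis on $\lambda^*$. Corollary~\ref{cor:loadBalPos} is then just the $\gamma=1$ case.

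The step I expect to be the crux is obtaining concentration at \emph{every} time step without paying a $\log n$ union bound: this is what forces the use of a maximal inequality for sampling without replacement and, through its $k\le n/2$ restriction, the restart built into \expertLB. A secondary but delicate wrinkle is that a machine with (mostly-)negative entries can have $|\tilde o^t_i|$ as large as $M$, which a priori destroys both the predictable variation of $D$ and the variance entering the maximal inequality; this is exactly why the well-bounded condition is imposed and why Lemma~\ref{lemma:wellBounded} is stated as it is — after a harmless uniform shift the loads on each machine are one-signed up to a $\gamma\lambda^*/n$ slack, so the second-moment trick still applies machine-by-machine, and the negative machines' contribution is either absorbed into the $O(\gamma\e\lambda^*)$ error (note $\ip{\w^t}{O^*}\le\lambda^*$ regardless of signs, since $\|O^*\|_{\max}\le\lambda^*$) or, when needed, bounded crudely by $(\sum_t\A^t\p^t)_i\le\gamma\lambda^*$. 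The rest is bookkeeping: tracking the handful of $O(\delta)$-probability bad events and checking that the coordinatewise union bound costs only a $\log m$, which is affordable precisely because $\lambda^*=\Omega(M\e^{-2}\log(m/\delta))$.
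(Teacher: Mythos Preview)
Your proposal is correct and follows essentially the same route as the paper. You re-derive Theorem~\ref{thm:loadBalHP} from scratch --- the reduction via the regret guarantee and primal best-response (the paper's Facts~\ref{fact:loadBal3}--\ref{fact:loadBal4}), the coordinatewise maximal-Bernstein bound on $\E[\tilde o^t\smid\mathcal F_{t-1}]$ (the paper's Lemma~\ref{lemma:balancedEveryTime}), and the Freedman step on the residual martingale (the paper's Lemmas~\ref{lemma:predVariation}--\ref{lemma:concentration}) --- and then specialize to $\gamma=1$ with the \MW regret bound, which is exactly how the paper obtains the corollary from the theorem. The only cosmetic difference is that you sum $\sum_{t\le n/2}\tfrac1{n-t+1}\le\ln 2$ directly, whereas the paper first uses $t\le n/2$ to replace $\tfrac1{n-t+1}$ by $\tfrac2n$ before summing.
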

}

\short{
	\begin{thm}[Load Balancing Guarantee] \label{thm:loadBalHP}
			Suppose $\{A^t\}$ is $(M,\gamma)$-well-bounded load-balancing instance for $\gamma \geq 1$. Let $\lambda^*$ be its optimal load and suppose $\e \leq \e_1 := \frac{1}{84}$ and $\delta \in (0, \e]$ are such that $\lambda^* \ge \frac{3M \log(m/\delta)}{\e^2}$. Given values of $n$, $M$ and $\e$, the algorithm \expertLB finds an online solution $\{\p^t\}_t$ such that with probability at least $1 - \delta$, $\left\|\sum_t \A^t \p^t - \e \sum_t |\A^t \p^t|\right\|_{\max} \le \lambda^* (1 + c_1(1+\gamma) \e)$ for a universal constant $c_1$.
			
			Moreover, if all $A^t$'s are non-negative, we can take $\gamma = 1$ and then have $\| \sum_t \A^t \p^t \|_\infty \leq (1+ O(\e))\lambda^*$ with probability at least $1 - \delta$.
	\end{thm}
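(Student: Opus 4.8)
The plan is to separate an \emph{adversarial} component, which uses only the $(\regA,R)$-regret guarantee together with the greedy primal step, from a \emph{probabilistic} component handling the correlations of sampling without replacement. Since the prediction-with-experts algorithm is restarted at time $n/2$, its regret guarantee applies separately to each half: on the first half $\sum_{t\le n/2}\ip{\w^t}{\A^t\p^t}\ge\bigl\|\sum_{t\le n/2}\A^t\p^t-\regA\sum_{t\le n/2}\lvert\A^t\p^t\rvert\bigr\|_{\max}-R$, and similarly on the second. Adding these and using $\|u+v\|_{\max}\le\|u\|_{\max}+\|v\|_{\max}$ gives $\bigl\|\sum_t\A^t\p^t-\regA\sum_t\lvert\A^t\p^t\rvert\bigr\|_{\max}\le\sum_t\ip{\w^t}{\A^t\p^t}+2R$. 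Since the primal step picks $\p^t$ minimizing $\ip{\w^t\A^t}{\cdot}$ over $\Delta^k$, comparing with the permuted offline optimum $\bar\p^t$ yields $\ip{\w^t}{\A^t\p^t}\le\ip{\w^t}{\A^t\bar\p^t}$. Writing $\B^t:=\A^t\bar\p^t$ --- so that $\sum_t\B^t$ is exactly the optimal offline load vector, hence $\|\sum_t\B^t\|_{\max}\le\lambda^*$, while $\{\B^t\}_t$ inherits the $(M,\gamma)$-well-bounded structure --- the theorem reduces to showing $\sum_t\ip{\w^t}{\B^t}\le\lambda^*(1+c_1\gamma\e)$ with probability $1-\delta$.

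To prove this I would split $\sum_t\ip{\w^t}{\B^t}=\sum_t\ip{\w^t}{\E[\B^t\mid\mathcal{F}_{t-1}]}+\sum_t\bigl(\ip{\w^t}{\B^t}-\ip{\w^t}{\E[\B^t\mid\mathcal{F}_{t-1}]}\bigr)$, where $\mathcal{F}_{t-1}=\sigma(\A^1,\dots,\A^{t-1})$; note $\w^t$ is $\mathcal{F}_{t-1}$-measurable and, because of the restart, depends on at most $n/2-1$ of the random arrivals. The second sum is a bounded martingale: using $\ip{\w^t}{v}^2\le\ip{\w^t}{v^2}\le M\ip{\w^t}{\lvert v\rvert}$ (Jensen plus $\w^t\in\Delta^m$) and Lemma~\ref{lemma:wellBounded} to bound $\E[\ip{\w^t}{\lvert\B^t-\E[\B^t\mid\mathcal{F}_{t-1}]\rvert}\mid\mathcal{F}_{t-1}]$ in terms of $\gamma\lambda^*/n$, its predictable variance is $O(M\gamma\lambda^*)$, so Freedman's inequality (Theorem~\ref{thm:freedman}) bounds it by $O\bigl(\sqrt{M\gamma\lambda^*\log(m/\delta)}\,\bigr)=O(\e\gamma\lambda^*)$ with probability $1-\delta$, using $\lambda^*\ge 3M\log(m/\delta)/\e^2$. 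For the predictable sum I would use $\E[\B^t\mid\mathcal{F}_{t-1}]=\tfrac1{n-t+1}\sum_{s\ \mathrm{unseen}}b^s$ together with $\|\sum_s b^s\|_{\max}\le\lambda^*$ to split it as $\sum_t\ip{\w^t}{\tfrac1n\sum_s b^s}$ --- at most $n\cdot\lambda^*/n=\lambda^*$ since each $\w^t\in\Delta^m$ --- plus a drift term of the form $\sum_t\ip{\w^t}{P_{t-1}-\tfrac{t-1}{n}\sum_s b^s}/(n-t+1)$, where $P_{t-1}=\sum_{\tau<t}\B^\tau$. The key point is that $P_{t-1}-\tfrac{t-1}{n}\sum_s b^s$ is the deviation of a without-replacement partial sum from its mean, so applying the \emph{maximal} Bernstein inequality (Lemma~\ref{lemma:maximalBernstein}) coordinate-by-coordinate, and union-bounding over the $m$ machines (absorbed into $\log(m/\delta)$), controls it uniformly over all $t\le n/2$ without a per-step union bound; the restart at $n/2$ is exactly what makes the ``$k\le n/2$'' hypothesis of that lemma applicable, and one applies it forward on the first half and, by exchangeability, backward on the second half.

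Collecting the pieces and slightly enlarging $c_1$ gives $\sum_t\ip{\w^t}{\B^t}\le\lambda^*(1+c_1\gamma\e)$ with probability $1-\delta$, hence the first displayed inequality of the theorem. For the \MW instantiation one plugs in $\regA=\e$ and $R=M\log m/\e$ (Theorem~\ref{thm:MW}); since $\lambda^*\ge 3M\log(m/\delta)/\e^2\ge 3M\log m/\e^2$ we get $2R\le\tfrac23\e\lambda^*$, absorbed into the $c_1\gamma\e\lambda^*$ error (enlarging $c_1$, using $\gamma\ge1$). When all $A^t$ are non-negative we take $\gamma=1$ and use $\lvert\A^t\p^t\rvert=\A^t\p^t$, so the guarantee becomes $(1-\e)\|\sum_t\A^t\p^t\|_\infty\le\lambda^*(1+O(\e))$, i.e.\ $\|\sum_t\A^t\p^t\|_\infty\le(1+O(\e))\lambda^*$. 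I expect the main obstacle to be the drift/variance control for machines carrying negative entries: well-boundedness caps those entries at $\gamma\lambda^*/n$, which is precisely what keeps both the martingale's predictable variance and the maximal-inequality drift at scale $\gamma\lambda^*$ rather than the useless $M\sqrt n$ one would get from arbitrary negative loads, and making this quantitative --- while also ensuring the maximal inequality is invoked uniformly in $t$ rather than through a union bound over the $n$ time steps --- is the technical heart of the argument.
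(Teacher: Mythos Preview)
Your overall strategy matches the paper's: reduce via the regret guarantee plus the greedy primal step to bounding $\sum_t\ip{\w^t}{\B^t}$ with $\B^t=\A^t\bar\p^t$ the permuted optimal load, then split into a martingale part (Freedman) and a predictable part (the maximal Bernstein inequality), with the restart at $n/2$ ensuring the ``$k\le n/2$'' hypothesis. The paper phrases this as ``analyze the first half, then use that both halves have the same distribution'' (Fact~\ref{fact:loadBal2}), which is equivalent to your ``forward/backward by exchangeability.''

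There is, however, a genuine gap in your variance and drift bounds. You invoke Lemma~\ref{lemma:wellBounded} to bound $\E_t[\lvert\B^t_i-\E_t\B^t_i\rvert]$ ``in terms of $\gamma\lambda^*/n$,'' but that lemma actually gives $\tfrac{2\gamma\lambda^*}{n}+2\lvert\E_t\B^t_i\rvert$, and the second term is \emph{not} $O(\gamma\lambda^*/n)$ in general. For a ``mostly positive'' coordinate ($A^t_{ij}\ge-\gamma\lambda^*/n$) one does get $\lvert\mu_i\rvert\le\gamma\lambda^*/n$ and your bounds go through; but for a ``mostly negative'' coordinate ($A^t_{ij}\le\gamma\lambda^*/n$), the mean load $\mu_i=\E\B^t_i$ can be as negative as $-M$. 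Then both your predictable variance (which picks up $M\sum_t\ip{\w^t}{\lvert\E_t\B^t\rvert}$) and your maximal-inequality threshold (which scales with $n\lvert\mu_i\rvert$) blow up, and neither is $O(M\gamma\lambda^*)$ or $O(\e\gamma\lambda^*)$.

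The paper closes this by \emph{not} bounding the $\lvert\mu_i\rvert$-terms separately. It keeps the $\lvert\E\B^t\rvert$-dependent error inside the Freedman threshold $\alpha$, obtaining $\sum_t\ip{\w^t}{\B^t}\le\sum_t\ip{\w^t}{\E_t\B^t+4\e\lvert\E\B^t\rvert}+4\e\gamma\lambda^*$, and then combines the predictable and error terms coordinatewise via the elementary inequality $\mu_i+c\e\lvert\mu_i\rvert\le\max\{(1+c\e)\mu_i,0\}$, valid precisely when $c\e\le 1$ --- this is where the constant $\e_1=1/84$ comes from. Since only the one-sided bound $\mu_i\le\lambda^*/n$ is available (not $\lvert\mu_i\rvert\le\lambda^*/n$), this ``max trick'' is what makes the mostly-negative coordinates harmless: a very negative $\mu_i$ produces a large $c\e\lvert\mu_i\rvert$ error, but that error is absorbed by $-\mu_i$ itself. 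Without this step your decomposition does not close in the general well-bounded case, though it is fine for the non-negative corollary.
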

}

\full{
In the rest of \S\ref{sec:lb} we prove Theorem~\ref{thm:loadBalHP}. We first show in \S\ref{sec:guar-expect} that the maximum load is close in expectation to $\lambda^*$. We then extend these ideas to prove the high-probability guarantee in \S\ref{sec:guar-whp}.
}


\full{	\subsection{The guarantee in expectation}}
\short{\subsection{Analysis of \expertLB}}

        \label{sec:guar-expect}		

\short{In this section we outline the analysis of algorithm \expertLB.}
Let $\optp^{1}, \ldots, \optp^{n}$ be the optimal solution for the offline instance. We see this as a mapping from matrix $A^t$ to solution $\optp^{t}$ (say $\phi(A^t) = \optp^t$); this way, define $\boptp^t$ as the random solution with respect to the random matrix $\A^t$ (i.e. $\boptp^t = \phi(\A^t)$). 
	
	To simplify the notation, let $\o^t := \A^t \p^t$ denote the load vector incurred at step $t$ by our algorithm. Also, for an integer $\ell$, we use $\A^{\le \ell}$ to denote the sequence $\A^1, \ldots, \A^{\ell}$, and similarly for other objects. 
	
	 Theorem~\ref{thm:loadBalHP} seeks to bound $\left\|\sum_t \o^t - \regA \sum_t |\o^t|\right\|_{\max}$. 
%
%
%
%
	By the exchangeability of sampling without replacement, $(\A^1, \ldots, \A^{n/2})$ has the same distribution as $(\A^{n/2+1}, \ldots, \A^n)$. This implies that our algorithm, with its half-time reset, behaves in the same way in the two halves of the process, namely $(\A^{\le n/2}, \p^{\le n/2})$ has the same distribution as $(\A^{> n/2}, \p^{> n/2})$; thus, it suffices to analyze the first half.
	
	\begin{fact}[Suffices to analyze first half] \label{fact:loadBal2}
	The random variables  $\big\|\sum_{t\le n/2} \o^t - \regA \sum_{t \le n/2} |\o^t| ~\big\|_{\max}$ and $\big\|\sum_{t > n/2} \o^t - \regA \sum_{t > n/2} |\o^t| ~\big\|_{\max}$ have the same distribution. 
	\end{fact}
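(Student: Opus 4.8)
### Proof proposal for Fact~\ref{fact:loadBal2}

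The plan is to argue that the two halves of the process are driven by exchangeable randomness and that the algorithm treats them symmetrically because of the half-time restart. Concretely, I would first record the key distributional fact about sampling without replacement: if $\sigma$ is a uniformly random permutation of $[n]$ and we set $\A^t := A^{\sigma(t)}$, then swapping the first and second halves of the permutation (i.e., composing $\sigma$ with the involution $\rho$ that maps $i \mapsto i + n/2$ for $i \le n/2$ and $i \mapsto i - n/2$ for $i > n/2$) yields another uniformly random permutation. Hence the block of matrices $(\A^1,\dots,\A^{n/2})$ and the block $(\A^{n/2+1},\dots,\A^n)$ are \emph{identically distributed as ordered tuples} (indeed the pair of blocks is exchangeable).

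Next I would track how the algorithm's outputs depend on the input. Because \expertLB restarts the prediction-with-experts algorithm at time $n/2$, the state $\w^t$ and hence the primal choice $\p^t$ for $t > n/2$ is computed from $\A^{n/2+1},\dots,\A^{t}$ (together with the algorithm's internal coin flips, if any) by exactly the same deterministic/randomized map $g_t$ that produces $\p^{t-n/2}$ from $\A^1,\dots,\A^{t-n/2}$ in the first half; this is the whole point of the restart. Writing $\o^t = \A^t \p^t$, it follows that the tuple $(\o^1,\dots,\o^{n/2})$ is obtained from $(\A^1,\dots,\A^{n/2})$ by the same measurable function (of the matrices and fresh independent randomness) as the tuple $(\o^{n/2+1},\dots,\o^{n})$ is obtained from $(\A^{n/2+1},\dots,\A^{n})$. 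Since the input tuples are identically distributed and the independent coin flips used in the two halves can be taken i.i.d., the output tuples $(\o^1,\dots,\o^{n/2})$ and $(\o^{n/2+1},\dots,\o^n)$ have the same distribution.

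Finally, apply the (deterministic, measurable) functional $(v^1,\dots,v^{n/2}) \mapsto \big\|\sum_{s} v^s - \regA \sum_s |v^s|\big\|_{\max}$ to each tuple. Equality in distribution of the tuples is preserved by any fixed measurable map, so the two real-valued random variables in the statement have the same distribution, which is exactly what Fact~\ref{fact:loadBal2} asserts.

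The only subtle point — and the step I would be most careful about — is the claim that the algorithm ``behaves the same way in the two halves.'' This requires checking that the restart genuinely severs all dependence of the second-half computation on $\A^{\le n/2}$: the experts-algorithm state is reset, and the primal best-response step at time $t$ uses only the current state $\w^t$ and the arriving $\A^t$, so there is no residual carryover. If the prediction-with-experts algorithm is deterministic this is immediate; if it is randomized one just notes that its coin flips in the second half can be coupled to be an independent copy of those in the first half. With that bookkeeping in place the equality in distribution is a formality.
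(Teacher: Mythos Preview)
Your proof is correct and follows essentially the same approach as the paper, which justifies the fact in the sentence preceding it: exchangeability of sampling without replacement gives that $(\A^1,\ldots,\A^{n/2})$ and $(\A^{n/2+1},\ldots,\A^n)$ have the same distribution, and the half-time restart ensures the algorithm processes the two halves via the same map, so $(\A^{\le n/2},\p^{\le n/2})$ and $(\A^{>n/2},\p^{>n/2})$ are identically distributed. Your version is simply more explicit about the involution on permutations, the bookkeeping for possible internal randomness of the experts algorithm, and the final application of a fixed measurable functional --- all of which the paper leaves implicit.
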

		
	Now we need that the computed dual solutions $\w^t$ capture our (non-linear) total load, i.e., that $\sum_t \ip{\w^t}{\o^t} \approx \|\sum_t \o^t \|_{\max}$; but this follows directly from the fact we used an $(\regA, R)$-regret prediction-with-experts algorithm.
	
	\begin{fact}[Dual captures load] \label{fact:loadBal3}
		For every scenario we have
		\begin{align*}
			\sum_{t = 1}^{n/2} \ip{\w^t}{\o^t} ~~\ge~~ \bigg\|\sum_{t \le n/2} \o^t - \regA \sum_{t \le n/2} |\o^t|\,\bigg\|_{\max} - R.
		\end{align*}
	\end{fact}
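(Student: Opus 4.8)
The plan is to apply the $(\regA,R)$-regret guarantee of the prediction-with-experts subroutine directly to its first run. Because of the half-time restart, during timesteps $t = 1, \ldots, n/2$ the subroutine is a single fresh instance: it is initialized at $t = 1$, and at each step $t \le n/2$ it hands us the state $\w^t$ --- which, by the protocol, depends only on $\o^1, \ldots, \o^{t-1}$ --- and then, after the primal step, receives $\o^t = \A^t \p^t$ in the dual step and updates to $\w^{t+1}$. Thus $\w^1, \ldots, \w^{n/2}$ are precisely the iterates produced by the prediction-with-experts algorithm on the adversarial sequence $\o^1, \ldots, \o^{n/2}$.

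Two small checks make this legitimate. First, each $\o^t$ is a valid input for the subroutine: $\o^t_i = \sum_j \A^t_{ij}\, \p^t_j$ is a convex combination (since $\p^t \in \Delta^k$) of entries $\A^t_{ij} \in [-M,M]$, so $\o^t \in [-M,M]^m$. Second, the regret bound $R$ assumed for horizon $n$ also controls the regret over the shorter horizon $n/2$ --- this is automatic for \MW, whose bound $\frac{M\log m}{\e}$ is horizon-free, and holds for the standard $(0, M\sqrt{n\log m})$-regret algorithms as well. Then the definition of $(\regA,R)$-regret, instantiated on the length-$(n/2)$ stream $\o^1, \ldots, \o^{n/2}$ with iterates $\w^1, \ldots, \w^{n/2}$, gives exactly
$$\sum_{t=1}^{n/2} \ip{\w^t}{\o^t} \ge \left\|\sum_{t \le n/2} \o^t - \regA \sum_{t \le n/2} \left|\o^t\right|\right\|_{\max} - R,$$
which is the claim.

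There is essentially no obstacle here: the entire content is that the regret inequality is a deterministic, pointwise bound that holds for every input stream, hence in particular "for every scenario" of the random matrices $\A^1, \ldots, \A^{n/2}$, even though that stream is generated adaptively by our own algorithm. The only things to keep straight are the indexing --- that $\w^t$ is committed before $\o^t$ is revealed, matching the experts protocol --- and that the restart is scheduled exactly at $t = n/2$, so the first run observes precisely $\o^1, \ldots, \o^{n/2}$ and nothing more.
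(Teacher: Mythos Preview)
Your proposal is correct and matches the paper's approach: the paper states this as a Fact that ``follows directly from the fact we used an $(\regA, R)$-regret prediction-with-experts algorithm'' and gives no further proof, so your argument is exactly the intended one, only spelled out more carefully. Your extra checks (that $\o^t \in [-M,M]^m$ via $\p^t \in \Delta^k$, and that the horizon-$n/2$ run inherits the same $R$) are valid elaborations the paper leaves implicit.
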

	
	Furthermore, let $\bopto^t := \A^t \boptp^t$ denote the load incurred by the optimal solution $\boptp$ at step $t$. By our primal greedy choice of the $\p^t$'s, we directly have the following.
	
	\begin{fact}[Optimality of algorithm wrt duals] \label{fact:loadBal4}
	\begin{align*}
		\sum_{t = 1}^{n/2} \ip{\w^t}{\o^t} = \sum_{t = 1}^{n/2} \w^t \A^t \p^t \le \sum_{t = 1}^{n/2} \w^t \A^t \boptp^t = \sum_{t = 1}^{n/2} \ip{\w^t}{\bopto^t}.
	\end{align*}
	\end{fact}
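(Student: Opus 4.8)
The plan is to read the claim straight off the definition of the primal update, so this is really a one-line argument. By construction, at each time $t$ the primal step of \expertLB picks $\p^t$ to be a minimizer of the linear functional $\p \mapsto \ip{\w^t \A^t}{\p}$ over the feasible region of block $t$ (the simplex, or the full simplex after the do-nothing reduction --- whichever convention is in force, it is the same region over which the offline optimum is taken). The weight vector $\w^t$ here is the state of the prediction-with-experts algorithm \emph{before} it is fed $\A^t \p^t$, hence a function of $\A^1, \dots, \A^{t-1}$ and the learner's internal coins only; so conditioned on the history up to time $t$, the vector $\w^t \A^t$ is determined and $\p^t$ genuinely attains the minimum of this functional.

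Next I would note that $\boptp^t = \phi(\A^t)$ is a feasible point of exactly this minimization: the offline optimum $\optp^1, \dots, \optp^n$ assigns to each block a vector in the block's feasible region, and $\boptp^t$ is simply the vector attached to the matrix $\A^t$ that happens to arrive at time $t$. Since $\p^t$ minimizes $\ip{\w^t \A^t}{\cdot}$ over that region and $\boptp^t$ lies in it, we obtain $\ip{\w^t \A^t}{\p^t} \le \ip{\w^t \A^t}{\boptp^t}$ \emph{in every scenario} --- i.e.\ for every realization of the random permutation and the learner's coins, not merely in expectation. Summing over $t = 1, \dots, n/2$ gives the middle inequality of the statement.

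The two flanking equalities are purely notational. Writing $\w^t$ as a row vector and using $\o^t = \A^t \p^t$ together with associativity of matrix--vector products, $\ip{\w^t}{\o^t} = \ip{\w^t}{\A^t \p^t} = \w^t \A^t \p^t$; identically $\w^t \A^t \boptp^t = \ip{\w^t}{\A^t \boptp^t} = \ip{\w^t}{\bopto^t}$ since $\bopto^t := \A^t \boptp^t$. Concatenating these with the summed inequality yields the displayed chain.

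There is essentially no obstacle: the fact is a direct consequence of playing best response in the primal. The only points worth a sentence of care are that the inequality is established scenario-by-scenario (this is precisely what lets it be combined later with Facts~\ref{fact:loadBal2} and~\ref{fact:loadBal3} without any probabilistic argument), and that one should check $\boptp^t$ really does lie in the same feasible set that the primal step optimizes over --- which it does, since \expertLB and the offline LP use the identical per-block constraint set.
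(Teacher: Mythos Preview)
Your proposal is correct and matches the paper's own justification, which simply states that the fact follows ``by our primal greedy choice of the $\p^t$'s.'' You have spelled out exactly that observation---$\p^t$ minimizes $\ip{\w^t \A^t}{\cdot}$ over the block's feasible region and $\boptp^t$ lies in that region---plus the notational unpackings; there is nothing more to it.
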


	Together these give $\big\|\sum_{t\le n/2} \o^t - \regA \sum_{t \le n/2} |\o^t| ~\big\|_{\max} \le R + \sum_{t = 1}^{n/2} \ip{\w^t}{\bopto^t}$ and so in order to upper bound our load in expectation it suffices to show that
		\begin{align}
			\E\bigg[\sum_{t \le n/2} \ip{\w^t}{\bopto^t}\bigg] \lesssim \sum_{t \le n/2} \ip{\E[\w^t]}{\E[\bopto^t]} \le \bigg\|\E\bigg[\sum_{t \le n/2} \bopto^t\bigg]\bigg\|_{\max} = \frac{\lambda^*}{2}. \label{eq:guaranteeLB}
		\end{align}
	Notice that these are easy implications if we were sampling \emph{with replacement}, since in that case $\w^t$ and $\ho^t$ are independent. For the random permutation model we need to work harder because they are dependent: \new{$\w^t$ is determined by the history $\A^{<t}$ and the optimal solution's load $\ho^t$ at time $t$ depends on the history $\A^{<t}$ (since items that have occurred in the history cannot occur at time $t$).}

	However, this dependence of $\ho^t$ on the history $\A^{<t}$ should be quite small. In fact this is the main tool for our analysis: we show that $\E[\o^{*t}_i \smid \A^{<t}] \approx \E\, \o^{*t}_i$ happens for most histories, for all $t, i$. More precisely, we show that for each $i$, with high probability $\E[\o^{*t}_i \smid \A^{<t}] = \E\,\o^{*t}_i \pm O(\e) \cdot (|\E\,\o^{*t}_i| + \frac{\gamma \lambda^*}{n})$ \emph{for all $t \le n/2$ simultaneously}. Note that applying Bernstein's inequality to each $\o^{*t}_i$ and taking a union bound over the $t$'s would only give $\E[\o^{*t}_i \smid \A^{<t}] = \E\,\o^{*t}_i \pm O(\e \log n) \cdot (|\E\,\o^{*t}_i| + \frac{\gamma \lambda^*}{n})$, with an extra $\log n$ factor. To avoid this we employ a \emph{maximal} version of Bernstein's inequality. To simplify the notation, we use the shorthand $\Et_t \mathbf{X} := \E[\mathbf{X} \smid \A^{<t}]$ for expectations conditioned on the history before time $t$. 

	\begin{lemma} \label{lemma:balancedEveryTime}
		Consider $i \in [m]$. Then with probability at least $1 - \frac{\delta}{m^2}$ we have \new{$\Et_t[\o^{*t}_i] \in \E\,\o^{*t}_i \pm 80 \e (|\E\,\o^{*t}_i| + \frac{\gamma \lambda^*}{n})$} for all $t \le \frac{n}{2}$ simultaneously. 
	\end{lemma}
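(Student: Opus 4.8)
The plan is to exploit one structural observation: the quantity $\o^{*t}_i=(\A^t\boptp^t)_i=(\A^t\,\phi(\A^t))_i$ depends only on the matrix $\A^t$ drawn at step $t$ --- not on the past draws and not on the algorithm's coins. Hence, writing $y_s:=(A^s\,\phi(A^s))_i$ for the $n$ fixed reals indexed by the offline instance, the sequence $(\o^{*1}_i,\dots,\o^{*n}_i)$ is \emph{exactly} a uniformly random permutation of $(y_1,\dots,y_n)$, i.e.\ a without-replacement sample of these fixed numbers. Fixing $i$, well-boundedness lets me assume (negating everything in the other, symmetric case) that $A^t_{ij}\ge-\gamma\lambda^*/n$ for all $j,t$; since $\phi(A^s)\in\fullsimplex^k$ and $A^s_{ij}\le M$ this forces $y_s\in[-\gamma\lambda^*/n,\,M]$, and since every feasible solution puts load at most $nM$ on each machine we have $\lambda^*\le nM$, so the shifted values $v_s:=y_s+\gamma\lambda^*/n$ all lie in $[0,N]$ with $N:=(1+\gamma)M$.

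Next I would reduce the statement to a maximal-deviation bound on partial sums. Let $\mu:=\E\,\o^{*t}_i=\tfrac1n\sum_s y_s$ and $S_{t-1}:=\sum_{s<t}\o^{*s}_i$. Conditioning on $\A^{<t}$ leaves the $t$-th draw uniform over the $n-t+1$ not-yet-seen matrices, so $\Et_t[\o^{*t}_i]=\frac{n\mu-S_{t-1}}{n-t+1}$, and therefore for $t\le n/2$
\[
\big|\Et_t[\o^{*t}_i]-\mu\big|\;=\;\frac{|S_{t-1}-(t-1)\mu|}{n-t+1}\;\le\;\frac{2}{n}\,\big|S_{t-1}-(t-1)\mu\big|.
\]
Because the shift cancels, $S_{t-1}-(t-1)\mu=\sum_{s<t}(v_{(s)}-\mu_v)$ with $\mu_v:=\mu+\gamma\lambda^*/n\ge 0$, and one checks that $|\mu|+\gamma\lambda^*/n\ge\tfrac12(\mu_v+\gamma\lambda^*/n)=:q/2$. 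Hence it suffices to show
\[
\Pr\!\Big(\max_{t\le n/2}\ \big|{\textstyle\sum_{s<t}}(v_{(s)}-\mu_v)\big|\ \ge\ 20\,\e\,n\,q\Big)\ \le\ \frac{\delta}{m^2},
\]
since on the complementary event $|\Et_t[\o^{*t}_i]-\mu|\le\frac2n\cdot 20\e nq=40\e q\le 80\e(|\mu|+\gamma\lambda^*/n)$ for every $t\le n/2$, which is exactly the claim.

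For the displayed probability I would invoke the maximal Bernstein inequality (Lemma~\ref{lemma:maximalBernstein}) applied to the values $v_{(s)}/N\in[0,1]$, with $k=\lfloor n/2\rfloor\le n/2$, mean $\mu_v/N$, and (automatically, since the values lie in $[0,1]$) variance $\sigma^2\le\mu_v/N$; choosing the scaled threshold $\alpha:=20\e nq/N$ bounds the probability by $30\exp(-E)$ with $E=\tfrac{(\alpha/24)^2}{2k\sigma^2+\alpha/24}$. It then remains to check $E\ge\ln(30m^2/\delta)$, which --- since $\delta\le\e\le\tfrac1{84}$ gives $\delta\le\tfrac1{30}$ --- follows from $E\ge 2\log(m/\delta)$. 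I would split on whether $2k\sigma^2\le\alpha/24$: in the ``linear'' case $E\ge\alpha/48$, and in the ``quadratic'' case $E\ge\tfrac{\alpha^2N}{1152\,n\,\mu_v}$. Substituting $\alpha=20\e nq/N$, $N=(1+\gamma)M$, $\gamma\ge1$, the trivial bound $nq\ge\gamma\lambda^*$, and the AM--GM estimate $q^2/\mu_v=(\mu_v+\gamma\lambda^*/n)^2/\mu_v\ge 4\gamma\lambda^*/n$, both branches reduce to a constant multiple of $\tfrac{\e^2\lambda^*}{M}$ (the linear branch larger by an extra factor $1/\e$), and the hypothesis $\lambda^*\ge 3M\log(m/\delta)/\e^2$ makes this at least $2\log(m/\delta)$. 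The two sign-cases for machine $i$ are mutually exclusive, so nothing is lost there.

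The main obstacle is the constant-chasing in the last step: the factor $80$ is tight enough that one must use the \emph{sharp} variance bound $\sigma^2\le\mu_v/N$ (not the cruder $2\mu_v/N$ coming from Corollary~\ref{cor:multiChernoff}-style reasoning) together with the AM--GM estimate for $q^2/\mu_v$; everything else is routine. The only other point to watch is that Lemma~\ref{lemma:maximalBernstein} (through the exchangeable maximal inequality behind it) controls only samples of size at most $n/2$ --- which is exactly why the claim is stated for $t\le n/2$ and why \expertLB restarts at half-time; the full-horizon guarantee of Theorem~\ref{thm:loadBalHP} is then recovered from this lemma together with Fact~\ref{fact:loadBal2}.
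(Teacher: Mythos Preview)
Your proposal is correct and follows essentially the same approach as the paper: reduce $\Et_t[\o^{*t}_i]-\mu$ to the partial-sum deviation $S_{t-1}-(t-1)\mu$ via the sampling-without-replacement identity, and then control the latter uniformly over $t\le n/2$ with the maximal Bernstein inequality.

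The only noteworthy differences are stylistic. The paper does not shift and rescale explicitly; instead it invokes Lemma~\ref{lemma:wellBounded} once to get the variance bound $V\le 2M\alpha/n$ with $\alpha:=n|\mu|+\gamma\lambda^*$, and then applies the maximal Bernstein inequality in a single stroke with threshold $40\e\alpha$ (so the final factor $80$ comes simply from $n-(t-1)\ge n/2$). This avoids both the explicit shift and your linear/quadratic case split with the AM--GM step. Your route, on the other hand, is a bit more self-contained and is more scrupulous about matching the $[0,1]$-range hypothesis of Lemma~\ref{lemma:maximalBernstein}. Both end at the same constants.
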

	
	\begin{proof}
	Let $\mu = \frac{1}{n} \sum_{t \le n} A^t_i \optp^t \leq \frac{\lambda^*}{n}$ be the expected value of $\o^{*t}_i$, which is independent of $t$. Notice that in every scenario $\sum_{j = 1}^n \o^{*j}_i - \sum_{j < t} \o^{*j}_i$ gives the total $i^{th}$ load from items that have not shown up at times $j < t$. Therefore, the conditional expected $i^{th}$ load at time $t$ is the average over these remaining items:
	\begin{equation}
	\E_t[\o^{*t}_i] = \frac{\sum_{j = 1}^n \o^{*j}_i - \sum_{j < t} \o^{*j}_i}{n - (t-1)} = \frac{n \mu - \sum_{j < t} \o^{*j}_i}{n - (t-1)}. \label{eq:expOhat}
	\end{equation}
 Thus, to control $\E_t[\o^{*t}_i]$ it suffices to control the sum $\sum_{j < t} \o^{*j}_i$, which should be highly concentrated around the mean $(t-1) \mu$.  
 	
 	For that, let $\alpha = n |\mu| + \gamma\lambda^*$; we want to use the maximal Bernstein's inequality for sampling without replacement (Lemma \ref{lemma:maximalBernstein}) to bound the maximum deviation $\max_{t \le n/2} |\sum_{j\le t} \o^{*j}_i - t\mu| \le 40 \e \alpha$ with probability at least $1-\frac{\delta}{m^2}$. 
 	 For that, we need to bound the variance $V := \frac{1}{n} \sum_{t \le n} (A^t_i p^{*t} - \mu)^2 = \E(\o^{*t}_i - \E\,\o^{*t}_i)^2$. Since $\o^{*t}_i \in [-M, M]$, we have that $V \le M \cdot \E\left[\left|\o^{*t}_i - \E\,\o^{*t}_i\right|\right]$. Then employing Lemma \ref{lemma:wellBounded} to bound this absolute value we get $V \le M (2|\mu| + \frac{2\gamma\lambda^*}{n}) = 2M\alpha/n$. Thus, applying the maximal Bernstein's inequality Lemma \ref{lemma:maximalBernstein} we obtain 
 	\begin{gather*}
 	 \Pr\left(\max_{t \le n/2} |{\textstyle \sum_{j\le t}} \o^{*j}_i - t\mu| \le 40 \e \alpha \right) \leq 30 \exp\left(-\left(\frac{40 \e}{24}\right)^2 \frac{\alpha^2}{n V + 40 \e \alpha (M/24)}\right) \\
 	 \le 30 \exp\left(- \left(\frac{40 \e}{24}\right)^2 \frac{\alpha}{(2 + 40 \e/24) M}\right) \leq 30 \exp\left(- \frac{\e^2 \gamma\lambda^*}{M}\right) \le \frac{\delta}{m^2},
 	\end{gather*} 
 	where we used $\delta \le \e \leq \frac{1}{80}$, $\gamma \geq 1$, and $\lambda^* \geq \frac{3 M \log (m/\delta)}{\e^2}$.

	Whenever this event holds equation \eqref{eq:expOhat} gives that for any $t \le n/2$
	\begin{equation*}
	\E_t[\o^{*t}_i] \in \frac{(n - (t-1)) \,\mu \pm 40\e \alpha}{n - (t-1)} = \mu \pm \frac{40\e \alpha}{n - (t-1)} \in \mu \pm \frac{80 \e\alpha}{n},
	\end{equation*}	 
	where the last inequality uses $t \le n/2$. The lemma then follows from the definition of $\alpha$.
	\end{proof}

\remove{
	\begin{lemma} \label{lemma:main}
		For every $t \le n/2$, $\E[\ip{\w^t}{\widehat{\o}^t}] \le \ip{\E[\w^t]}{\E[\widehat{\o}^t]} + \frac{6 \e \lambda^*}{n}.$ 
	\end{lemma}

	\begin{proof}	
	Fix $i \in [m]$. We break the analysis with respect to $\cG_{i,t}$: $\E[\w^t_i \widehat{\o}^t_i] = \E[\w^t_i \widehat{\o}^t_i \mid \cG_{i,t}] \Pr(\cG_{i,t}) + \E[\w^t_i \widehat{\o}^t_i \mid \overline{\cG}_{i,t}] \Pr(\overline{\cG}_{i,t}).$
	
	\begin{claim}
		$\E[\w^t_i \widehat{\o}^t_i \mid \cG_{i,t}] \Pr(\cG_{i,t}) \le \E[\w^t_i] \left(\E[\widehat{\o}^t_i] + \frac{2 \e \lambda^*}{n}\right)$.
	\end{claim}
	
	\begin{proof}
		Since $\w^t_i$ is determined by $\A^{<t}$, we have 
		\begin{align*}
			&\E[\w^t_i \widehat{\o}^t_i \mid \A^{<t}, \cG_{i,t}] = \E[\w^t_i \mid \A^{<t}, \cG_{i,t}]\cdot  \E[\widehat{\o}^t_i \mid \A^{<t}, \cG_{i,t}] \le \E[\w^t_i \mid \A^{<t}, \cG_{i,t}] \left(\E[\widehat{\o}^t_i] + \frac{2\e \lambda^*}{n}\right). 
		\end{align*}
		Taking expectation over $\A^{<t}$ we get
		\begin{align*}
			\E[\w^t_i \widehat{\o}^t_i \mid \cG_{i,t}] \Pr(\cG_{i,t}) \le \E[\w^t_i \mid \cG_{i,t}] \Pr(\cG_{i,t}) \left( \E[\widehat{\o}^t_i] + \frac{2 \e \lambda^*}{n}\right) \le \E[\w^t_i] \left( \E[\widehat{\o}^t_i] + \frac{2 \e \lambda^*}{n}\right),
		\end{align*}
		where the last inequality \textbf{uses the fact that $\w^t_i \ge 0$}, and hence $$\E[\w^t_i] = \E[\w^t_i \mid \cG_{i,t}] \Pr(\cG_{i,t}) + \E[\w^t_i \mid \overline{\cG}_{i,t}] \Pr(\overline{\cG}_{i,t}) \ge \E[\w^t_i \mid \cG_{i,t}] \Pr(\cG_{i,t}).$$ This concludes the proof.
	\end{proof}

	\begin{claim}
		$\E[\w^t_i \widehat{\o}^t_i \mid \overline{\cG}_{i,t}] \Pr(\overline{\cG}_{i,t}) \le \frac{4 \e \lambda^*}{n m}.$
	\end{claim}
	
	\begin{proof}
		\textbf{Since $\w^t_i \le 1$}, $\E[\w^t_i \widehat{\o}^t_i \mid \overline{\cG}_{i,t}] \le \E[\widehat{\o}^t_i \mid \overline{\cG}_{i,t}]$. In addition, from \eqref{eq:expOhat} we get that in every scenario $$\E[\widehat{\o}^t_i \mid \A^{<t}] \le \frac{\sum_{j = 1}^n \widehat{\o}^j_i}{n - (t-1)} \le \frac{\lambda^*}{n - (t-1)} \le \frac{2 \lambda^*}{n},$$ and hence $\E[\widehat{\o}^t_i \mid \overline{\cG}_{i,t}] \le \frac{2 \lambda^*}{n}$. It follows from Lemma \ref{lemma:balancedEveryTime} that $\Pr(\overline{\cG}_{i,t}) \le 2\delta/m \le 2\e/m$, and the claim follows. 		
	\end{proof}
	
		Using the previous two claims and \textbf{the fact that $\sum_i \w^t_i = 1$}, we get
		\begin{align*}
			&\E[\ip{\w^t}{\widehat{\o}^t}] = \sum_i \E[\w^t_i \widehat{\o}^t_i] \le \sum_i \left[  \E[\w^t_i] \left(\E[\widehat{\o}^t_i] + \frac{2 \e \lambda^*}{n}\right)+ \frac{4 \e \lambda^*}{n m} \right] \\
			& = \ip{\E[\w^t]}{\E[\widehat{\o}^t]} + \frac{2 \e \lambda^*}{n} \cdot \E[\sum_i \w^t_i] + \frac{4 \e \lambda^*}{n} = \ip{\E[\w^t]}{\E[\widehat{\o}^t]} + \frac{6 \e \lambda^*}{n},
		\end{align*}
		which concludes the proof of the lemma. 
	\end{proof}

	\begin{lemma}
		$\sum_{t = 1}^{n/2} \ip{\E[\w^t]}{\E[\widehat{\o}^t]}] \le \frac{\lambda^*}{2}.$ 
	\end{lemma}
	
	\begin{proof}
		Notice that the expected vector $\E[\widehat{\o}^t]$ is independent of $t$. Therefore, just using linearity of inner products we have
		\begin{align*}
			\sum_{t = 1}^{n/2} \ip{\E[\w^t]}{\E[\widehat{\o}^t]} = \ip{\left(\sum_{t = 1}^{n/2} \E[\w^t]\right)}{\E[\widehat{\o}^1]} = \ip{\frac{\sum_{t = 1}^{n/2} \E[\w^t]}{n/2}}{\sum_{t = 1}^{n/2} \E[\widehat{\o}^t]}.
		\end{align*}
		Now, using convexity of the simplex, we see that since in each scenario $\w^t$ belongs to $\Delta^m$, then $\E[\w^t]$ also belongs to $\Delta^m$ and hence so does $\frac{\sum_{t = 1}^{n/2} \E[\w^t]}{n/2}$. Therefore, using the last displayed equation we have 
		\begin{align*}
			\sum_{t = 1}^{n/2} \ip{\E[\w^t]}{\E[\widehat{\o}^t]} \le \max_{w \in \Delta^m} \ip{w}{\sum_{t = 1}^{n/2} \E[\widehat{\o}^t]} = \left\|\sum_{t = 1}^{n/2} \E[\widehat{\o}^t]\right\|_{\max}.
		\end{align*}
		Again using the symmetry of the $\widehat{\o}^t$'s we get
		\begin{align*}
			2 \left\|\sum_{t \le n/2} \E[\widehat{\o}^t]\right\|_{\max} = \left\|2 \sum_{t \le n/2} \E[\widehat{\o}^t]\right\|_{\max} = \left\|\sum_{t \le n} \E[\widehat{\o}^t]\right\|_{\max} = \left\|\E\left[\sum_{t \le n} \widehat{\o}^t\right]\right\|_{\max} = \lambda^*.
		\end{align*}
		This concludes the proof.
	\end{proof}
}

\full{
	Using this lemma it is not difficult to show that inequality \eqref{eq:guaranteeLB} approximately holds. We sketch how this can be done in order to illustrate that little of the structure of the problem is actually used (and thus might carry over to other settings). This inequality plus Facts \ref{fact:loadBal2}--\ref{fact:loadBal4} imply that the bound in Theorem \ref{thm:loadBalHP}  holds in expectation. \new{(We remark that this guarantee in expectation does not require Lemma \ref{lemma:balancedEveryTime} to hold simultaneously for all $t \le n/2$; this will only be required below for our guarantee with high probability.)}
	
	\begin{lemma} \label{lemma:optLoadBalExp}
		We have $\E\left[\sum_{t \le n/2} \ip{\w^t}{\ho^t}\right] \le \frac{\lambda^*}{2} + O(\e \gamma  \lambda^*)$.
	\end{lemma}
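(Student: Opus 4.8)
The plan is to decouple the dual iterate $\w^t$ from the offline load $\ho^t=\A^t\boptp^t$ by conditioning on the history $\A^{<t}$, and then use Lemma~\ref{lemma:balancedEveryTime} to argue that conditional loads are essentially their marginals. Since $\w^t$ is a deterministic function of $\A^{<t}$, the tower rule gives $\E\big[\sum_{t\le n/2}\ip{\w^t}{\ho^t}\big]=\sum_{t\le n/2}\E\big[\ip{\w^t}{\Et_t[\ho^t]}\big]$, where $\Et_t[\cdot]=\E[\cdot\smid\A^{<t}]$; and since $\w^t\in\Delta^m$ we have $\ip{\w^t}{\Et_t[\ho^t]}\le\|\Et_t[\ho^t]\|_{\max}=\max_i\Et_t[\ho^t_i]$. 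So it suffices to show, for every $t\le n/2$, that $\E\big[\max_i\Et_t[\ho^t_i]\big]\le\frac{\lambda^*}{n}(1+O(\e\gamma))$; summing this over the $n/2$ steps and combining with Facts~\ref{fact:loadBal2}--\ref{fact:loadBal4} and the half-time symmetry then gives the expectation form of Theorem~\ref{thm:loadBalHP}. (As the remark notes, we only need Lemma~\ref{lemma:balancedEveryTime} at this single $t$, not for all $t$ simultaneously.)

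The first ingredient is a crude \emph{deterministic} bound valid on every history: for every $t\le n/2$ and every $i$, $\Et_t[\ho^t_i]\le 2(1+\gamma)\lambda^*/n$. Indeed, by the computation behind~\eqref{eq:expOhat}, $\Et_t[\ho^t_i]$ is exactly the average of the offline loads $A^s_i\optp^s$ over the items $s$ not yet seen by time $t$. If machine $i$ is of the ``mostly negative'' type of Definition~\ref{def:well-bounded}, each such load is at most $\gamma\lambda^*/n$, hence so is the average. If machine $i$ is ``mostly positive'', the shifts $v^s_i:=A^s_i\optp^s+\gamma\lambda^*/n\ge0$ of Lemma~\ref{lemma:wellBounded} satisfy $\sum_s v^s_i=\sum_s A^s_i\optp^s+\gamma\lambda^*\le\|\sum_s A^s\optp^s\|_{\max}+\gamma\lambda^*=(1+\gamma)\lambda^*$, so the average of $v^s_i$ over the unseen items is at most $\frac{(1+\gamma)\lambda^*}{n-(t-1)}\le\frac{2(1+\gamma)\lambda^*}{n}$ (here $t\le n/2$ — i.e.\ the half-time restart — is used), and undoing the shift proves the claim.

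The second ingredient is the ``good history'' estimate. Write $\mu_i:=\E[\ho^t_i]$ (independent of $t$); the vector $\mu$ satisfies $n\mu=\E[\sum_t\ho^t]=\sum_s A^s\optp^s$, so $\|\mu\|_{\max}=\lambda^*/n$ by optimality of $\boptp$, and in particular $\mu_i\le\lambda^*/n$ for all $i$. Applying Lemma~\ref{lemma:balancedEveryTime} to each coordinate and union-bounding over $i\in[m]$, with probability at least $1-\delta/m$ we have $\Et_t[\ho^t_i]\le\mu_i+80\e(|\mu_i|+\gamma\lambda^*/n)$ for all $i$ at once. Let $i^*$ attain $\max_i\Et_t[\ho^t_i]$ on this event: if $\mu_{i^*}\ge0$ then $\max_i\Et_t[\ho^t_i]\le\mu_{i^*}(1+80\e)+80\e\gamma\frac{\lambda^*}{n}\le\frac{\lambda^*}{n}(1+160\e\gamma)$, while if $\mu_{i^*}<0$ then, since $80\e<1$ (as $\e\le\e_1$), the bound is at most $80\e\gamma\lambda^*/n$; either way $\max_i\Et_t[\ho^t_i]\le\frac{\lambda^*}{n}(1+O(\e\gamma))$ on the good event. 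Combining with the deterministic bound on its complement (probability at most $\delta/m\le\e$), $\E[\max_i\Et_t[\ho^t_i]]\le\frac{\lambda^*}{n}(1+O(\e\gamma))+2(1+\gamma)\frac{\lambda^*}{n}\cdot\frac{\delta}{m}=\frac{\lambda^*}{n}(1+O(\e\gamma))$, which is what was needed.

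The step I expect to be the crux, and the one where both hypotheses of the theorem are genuinely used, is the deterministic bound of the second paragraph: without it the only estimate available for $\Et_t[\ho^t_i]$ on a bad history is the trivial $\le M$, and $\frac{n}{2}\cdot M\cdot\frac{\delta}{m}$ need not be $O(\e\gamma\lambda^*)$ since $nM$ can dwarf $\lambda^*$. Well-boundedness (Definition~\ref{def:well-bounded}) is precisely what lets one either bound machine $i$'s offline loads directly or shift them non-negative without inflating their total by more than an $O(\gamma)$ factor, and the half-time restart is precisely what makes $n-(t-1)\ge n/2$ so that an average over unseen items is $O(1/n)$ times the total. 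Everything else — the tower rule, $\ip{\w^t}{v}\le\|v\|_{\max}$ for $\w^t\in\Delta^m$, the union bound over coordinates, and absorbing lower-order terms via $\delta\le\e$ and $\gamma\ge1$ — is routine bookkeeping.
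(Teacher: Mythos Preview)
Your argument is correct and follows the same overall plan as the paper's sketch: use the tower rule (since $\w^t$ is $\A^{<t}$-measurable) to reduce to controlling $\Et_t[\ho^t]$, invoke Lemma~\ref{lemma:balancedEveryTime} on the good event to show $\Et_t[\ho^t_i]\approx\mu_i\le\lambda^*/n$, and handle the complementary bad event via a crude deterministic bound coming from well-boundedness.

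The one place you organize things differently is the bad-event step. The paper bounds the whole sum $\sum_{t\le n/2}\ip{\w^t}{\ho^t}$ by $O(\gamma)\,m\,\lambda^*$ on every scenario (picking up an $m$ from summing over coordinates), and then uses $\Pr(\bar{\cE})\le\delta/m$ so that the $m$ cancels. You instead prove the sharper per-step, per-coordinate bound $\Et_t[\ho^t_i]\le 2(1+\gamma)\lambda^*/n$ directly (this is the same averaging-over-unseen-items computation, just read off before summing), so no $m$ ever appears and you only need $\Pr(\bar{\cE})\le\e$. This is a modest but genuine simplification; the paper's route trades that for a slightly shorter case analysis. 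Both exploit exactly the same two ingredients you identified: the half-time restart (so $n-(t-1)\ge n/2$) and Definition~\ref{def:well-bounded} (so the shifted loads are nonnegative with total $\le(1+\gamma)\lambda^*$).
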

}
\short{
	Using this lemma, it is an easy exercise to show that \eqref{eq:guaranteeLB} indeed holds (within a factor $(1 \pm \gamma \e)$); this then gives that the bound in Theorem \ref{thm:loadBalHP}  holds in expectation.
	
	Moreover, we can show that \emph{with probability at least $1- \delta$}, $\sum_{t \le n/2} \ip{\w^t}{\ho^t} \le \frac{\lambda^*}{2} + O(\e \gamma) \lambda^*$. For that we need to employ martingale concentration (Freedman's inequality), and use Lemma \ref{lemma:balancedEveryTime} to control the predictable variation of our martingale. This then gives the full statement of Theorem \ref{thm:loadBalHP}.
}
		
	\begin{proof}[Proof sketch]
%

    Let $\cE$ denote the event that the approximation from Lemma \ref{lemma:balancedEveryTime} holds for every $i \in [m]$, which from a union bound then holds with probability at least $1 - \frac{\delta}{m}$. Let us split the expectation in the left-hand side of the lemma based on $\cE$:
		\begin{align}
			\E\bigg[\sum_{t \le n/2} \ip{\w^t}{\bopto^t}\bigg] = \E\bigg[\sum_{t \le n/2} \ip{\w^t}{\bopto^t} \;\bigg\vert\; \cE\bigg] \Pr(\cE) + \E\bigg[\sum_{t \le n/2} \ip{\w^t}{\bopto^t} \;\bigg\vert\; \overline{\cE}\bigg] \Pr(\overline{\cE}). \label{eq:lbExpParts}
		\end{align}
		To upper bound the first term we further split the expectation by conditioning on the history up to time $t-1$ and recalling that $\w^t$ is defined by it:
		\begin{align}
			\E\bigg[\sum_{t \le n/2} \ip{\w^t}{\bopto^t} \;\bigg\vert\; \cE\bigg] = \sum_{t \le n/2} \E\Big[ \E\Big[\ip{\w^t}{\bopto^t} \smid \A^{<t}, \cE \Big] \;\Big\vert\; \cE\Big] = \sum_{t \le n/2} \E\Big[ \ip{\w^t}{\E[\o^{*t} \smid \A^{<t}, \cE]} \;\Big \vert\; \cE\Big]. \label{eq:sketch1}
		\end{align}
		Recalling the guarantee of Lemma \ref{lemma:balancedEveryTime} for $\cE$, we can bounds the innermost expectation by
		\begin{align*}
			\E[\o^{*t} \smid \A^{<t}, \cE] \le \E\,\o^{*t} + 80 \e |\E\,\o^{*t}| + O(\e \gamma) \frac{\lambda^*}{n} \ones \le \max\{(1+80\e) \E\,\o^{*t}, 0\} + O(\e \gamma) \frac{\lambda^*}{n} \ones \le (1 + O(\e \gamma)) \frac{\lambda^*}{n} \ones,
		\end{align*}
		where the second inequality uses that $\e \le \frac{1}{80}$ and the last one uses the optimality $\o^{*t} \le (\lambda^*/n) \ones$. Plugging this bound onto inequality \eqref{eq:sketch1} gives that we can upper bound the first term in equation \eqref{eq:lbExpParts} by $(1 + O(\e \gamma)) \frac{\lambda^*}{2}$.
		
		To upper bound the second term in equation \eqref{eq:lbExpParts}, using the well-boundedness of the instance it is not hard to show that in every scenario the sum of optimal loads satisfies $\sum_{t \le n/2} \o^{*t}_i \le O(\gamma) \lambda^*$, and thus $\sum_{t \le n/2} \ip{\w^t}{\o^{*t}} \le O(\gamma)\, m \lambda^*$. Since $\bar{\cE}$ only holds with probability $\frac{\delta}{m} \le \frac{\e}{m}$, we can upper bound the second term in equation \eqref{eq:lbExpParts} by $O(\e \gamma)\,\lambda^*$. 
		
		Putting these bounds together concludes the proof of the lemma. 
	\end{proof}
		

\ifproc
\else
	
	\subsection{The guarantee with high probability}
        \label{sec:guar-whp}
	We now show that Lemma~\ref{lemma:optLoadBalExp} actually holds with high probability. 
	
	\begin{lemma} \label{lemma:concentration}
		With probability at least $1 - 3 \delta$ we have
		\begin{align*}
			\tsty \sum_{t \le n/2} \ip{\w^t}{\ho^t} \le (1 + 168\e \gamma) \frac{\lambda^*}{2}.
		\end{align*}
	\end{lemma}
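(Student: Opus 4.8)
The plan is to write $\sum_{t\le n/2}\ip{\w^t}{\ho^t}$ as a predictable ``drift'' term plus a martingale, control the drift deterministically on the good event $\cE$ of Lemma~\ref{lemma:balancedEveryTime}, and bound the martingale with Freedman's inequality (Theorem~\ref{thm:freedman}). I work with the filtration $\mathcal{F}_t := \sigma(\A^1,\dots,\A^t)$ and exploit that $\w^t$ is $\mathcal{F}_{t-1}$-measurable (it is a function of $\A^{<t}$ only), while $\ho^t = \A^t\boptp^t$, the offline optimum's load at step $t$, is not. The half-time restart is exactly what keeps this filtration ``small enough'': at each $t\le n/2$ it determines fewer than $n/2$ of the matrices, which is the hypothesis needed for the maximal inequality underlying Lemma~\ref{lemma:balancedEveryTime}.

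First I would set $D_t := \ip{\w^t}{\ho^t} - \ip{\w^t}{\Et_t\ho^t}$; since $\E[\ip{\w^t}{\ho^t}\mid\mathcal{F}_{t-1}] = \ip{\w^t}{\Et_t\ho^t}$, these are martingale differences and $X_s := \sum_{t\le s}D_t$ is a martingale with $X_0=0$ and
\[
\sum_{t\le n/2}\ip{\w^t}{\ho^t} \;=\; X_{n/2} \;+\; \sum_{t\le n/2}\ip{\w^t}{\Et_t\ho^t}.
\]
Next I would handle the drift term. Let $\cE$ be the event that the conclusion of Lemma~\ref{lemma:balancedEveryTime} holds for all $i\in[m]$ simultaneously; a union bound gives $\Pr(\overline{\cE})\le\delta/m$. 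On $\cE$, combining Lemma~\ref{lemma:balancedEveryTime} with the optimality bound $\E\,\ho^t_i = \tfrac1n (\sum_s A^s\optp^s)_i \le \lambda^*/n$ (the same computation as in the sketch of Lemma~\ref{lemma:optLoadBalExp}) gives $\Et_t\ho^t_i \le (1+O(\e\gamma))\tfrac{\lambda^*}{n}$ for \emph{every} coordinate $i$ and every $t\le n/2$; since $\w^t\in\Delta^m$, summing over $t$ yields $\sum_{t\le n/2}\ip{\w^t}{\Et_t\ho^t}\le (1+O(\e\gamma))\tfrac{\lambda^*}{2}$ on $\cE$.

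It then remains to show $X_{n/2}\le O(\e\gamma)\lambda^*$ with probability $1-O(\delta)$, which I would get from Theorem~\ref{thm:freedman}. The increments are bounded, $|D_t|\le 2M$, since $\ip{\w^t}{\ho^t}$ and $\ip{\w^t}{\Et_t\ho^t}$ both lie in $[-M,M]$. For the predictable variance $L := \sum_{t\le n/2}\Var(D_t\mid\mathcal{F}_{t-1})$ I would use $\Var(D_t\mid\mathcal{F}_{t-1})\le M\,\E[\ip{\w^t}{|\ho^t|}\mid\mathcal{F}_{t-1}]$ and then bound $\E[|\ho^t_i|\mid\mathcal{F}_{t-1}]$ coordinatewise via Lemma~\ref{lemma:wellBounded} applied conditionally on $\A^{<t}$, which together with the drift bound shows $L\le\sigma^2 := O(\gamma M\lambda^*)$ on $\cE$. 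Because $\lambda^*\ge 3M\log(m/\delta)/\e^2$, taking the deviation parameter $\alpha := O(\e\gamma)\lambda^*$ makes $\alpha^2/(2\sigma^2+4M\alpha) = \Omega(\gamma\log(m/\delta))\ge\log(1/\delta)$, so Freedman's inequality gives $\Pr(X_{n/2}\ge\alpha\text{ and }L\le\sigma^2)\le\delta$; and since $\{L>\sigma^2\}\subseteq\overline{\cE}$, also $\Pr(X_{n/2}\ge\alpha)\le 2\delta$. Intersecting with $\cE$ and adding the two contributions, $\sum_{t\le n/2}\ip{\w^t}{\ho^t}\le (1+O(\e\gamma))\tfrac{\lambda^*}{2} + O(\e\gamma)\lambda^*$, which is $(1+168\e\gamma)\tfrac{\lambda^*}{2}$ after tracking constants, with failure probability at most $3\delta$.

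The step I expect to be the main obstacle is the predictable-variance bound $L\le O(\gamma M\lambda^*)$. For the ``positive'' coordinates it is routine: the shifted load $\ho^t_i + \gamma\lambda^*/n\in[0,O(\gamma M)]$ has conditional mean $O(\gamma\lambda^*/n)$ on $\cE$, so each such coordinate contributes $O(\gamma M)\cdot O(\gamma\lambda^*/n)$ to the per-step variance and, summing the weights $\w^t_i$ (which total at most one) over $t\le n/2$, at most $O(\gamma M\lambda^*)$ in all. The delicate case is the contribution of the ``negative'' coordinates, where the optimal load can have large magnitude; here I would use that such a coordinate never adds more than $\gamma\lambda^*/n$ per step to $\ip{\w^t}{\ho^t}$ to split $D_t$ (and $\ip{\w^t}{\ho^t}$) along positive and negative coordinates, run Freedman only on the positive part, and, via the shift $\ho^t_i = \gamma\lambda^*/n - \hat\o^t_i$ with $\hat\o^t_i\ge 0$, fold the negative part into the drift, at the cost of the extra $O(\e\gamma)\lambda^*$ slack already present in the statement. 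Pinning the final constant down to $168$ (rather than merely $O(1)$) is bookkeeping.
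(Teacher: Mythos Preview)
Your decomposition into drift plus martingale, and the drift bound via Lemma~\ref{lemma:balancedEveryTime}, are exactly the paper's approach. The gap is in the variance step. Your claim that $L\le O(\gamma M\lambda^*)$ on $\cE$ is false in general: following your own computation (or the paper's Lemma~\ref{lemma:predVariation}), one gets on $\cE$
\[
L \;\le\; O(M)\sum_{t\le n/2}\ip{\w^t}{|\E\,\ho^t|} \;+\; O(M\gamma\lambda^*),
\]
and for a ``negative'' coordinate $i$ the mean $|\E\,\ho^t_i|=|\mu_i|$ can be as large as $M$, not $O(\gamma\lambda^*/n)$. So the first term need not be $O(M\gamma\lambda^*)$. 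Your proposed fix---split off the negative coordinates and bound them deterministically by $\ho^t_i\le\gamma\lambda^*/n$---does not recover the lemma: that bound only yields $\sum_{t\le n/2}\sum_{i\in N}\w^t_i\ho^t_i\le\gamma\lambda^*/2$, which is an additive $\gamma\lambda^*/2$, not $O(\e\gamma)\lambda^*$. For the reduction in \S\ref{sec:LPtoLB} this would give a multiplicative $\gamma$ blowup (there $\gamma=4$), destroying the $(1-\e)$-approximation.

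The paper avoids this by \emph{not} bounding $L$ by a fixed $\sigma^2$. It sets the (random) threshold $\alpha:=4\sum_{t\le n/2}\ip{\w^t}{|\E\,\ho^t|}+4\gamma\lambda^*$, so that $L\le M\alpha$ on the good event, and applies Freedman with deviation $\e\alpha$; since $\alpha\ge 4\gamma\lambda^*$ deterministically, the Freedman tail is at most $\exp(-\e^2\gamma\lambda^*/M)\le\delta/m$. The outcome is
\[
\sum_{t\le n/2}\ip{\w^t}{\ho^t}\;\le\;\sum_{t\le n/2}\bip{\w^t}{\,\Et_t\ho^t+4\e\,|\E\,\ho^t|\,}\;+\;4\e\gamma\lambda^*,
\]
and now the potentially large $|\E\,\ho^t_i|$ is \emph{coupled} to the drift: on $\cE$, $\Et_t\ho^t_i+4\e|\E\,\ho^t_i|\le \E\,\ho^t_i+84\e|\E\,\ho^t_i|+80\e\gamma\lambda^*/n$, and when $\E\,\ho^t_i<0$ this equals $(1-84\e)\E\,\ho^t_i+80\e\gamma\lambda^*/n\le 80\e\gamma\lambda^*/n$ because $\e\le\e_1=1/84$. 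So the negative-coordinate contribution cancels rather than accumulating. This coupling of the Freedman deviation to the variance proxy is the missing idea in your argument; once you see it, the constant $168$ falls out directly (the paper's proof is Lemma~\ref{lemma:predVariation} followed by the short computation after it).
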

	Again, if we were sampling \emph{with replacement}, then it would not be difficult to show that $\left(\ip{\w^t}{\ho^t} - \ip{\w^t}{\E\,\ho^t}\right)_{t}$ is a martingale difference sequence, and consequently martingale concentration inequalities would imply $\sum_{t \le n/2} \ip{\w^t}{\ho^t} \lesssim \sum_{t \le n/2} \ip{\w^t}{\E\,\ho^t} \le \sum_{t \le n/2} \|\E\,\ho^t\|_{\max} \lesssim \frac{\lambda^*}{2}$ with good probability. For our case of sampling without replacement, we  proceed similarly, but using the martingale difference sequence $\left(\ip{\w^t}{\ho^t} - \ip{\w^t}{\Et_t[\ho^t]}\right)_{t}$ (notice the conditioning on the history), and then using Lemma \ref{lemma:balancedEveryTime} to get that with good probability for all $t$ we have $\Et_t[\ho^t] \approx \E[\ho^t]$.
	
	So define the random variable $\bY_t := \ip{\w^t}{\ho^t - \E_t[\ho^t]}$. Since $\w^t$ is determined by $\A^{<t}$, we have that $\Et_t[\bY_t] = 0$ and hence $(\bY_t)_t$ is indeed a martingale difference sequence (adapted to $(\A^t)_t$). In order to obtain concentration for this martingale, we first need to control its conditional variance, which is accomplished in the next lemma. 
	
	\begin{lemma} \label{lemma:predVariation}
			With probability at least $1 - \frac{\delta}{m}$ we have 
      \new{\[ \sum_{t \le n/2} \Var(\bY_t \smid \A^{<t}) \le 4M \cdot \sum_{t \le n/2} \ip{\w^t}{\left|\E\,\ho^t\right|} + 4 M \gamma\lambda^*. \]}
	\end{lemma}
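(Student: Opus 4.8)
The plan is to control the one-step conditional variances $\Var(\bY_t \smid \A^{<t})$ one at a time, sum over $t \le n/2$, and only at the very end invoke Lemma~\ref{lemma:balancedEveryTime} to trade the awkward conditional means $\Et_t[\ho^t]$ for the unconditional means $\E\,\ho^t$ --- which is the one place the stated probability $1 - \frac{\delta}{m}$ enters. Since $\w^t$ is determined by $\A^{<t}$ we have $\Et_t[\bY_t] = 0$, so $\Var(\bY_t \smid \A^{<t}) = \Et_t[\bY_t^2]$. The first reduction removes the inner product: because $\w^t$ is a probability vector in $\Delta^m$, Jensen's inequality for $x \mapsto x^2$ gives $\bY_t^2 \le \sum_i \w^t_i(\ho^t_i - \Et_t[\ho^t_i])^2$, and taking $\Et_t[\cdot]$ (again using that $\w^t$ is $\A^{<t}$-measurable) yields $\Var(\bY_t \smid \A^{<t}) \le \sum_i \w^t_i\, \Var(\ho^t_i \smid \A^{<t})$.

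Next I would bound each coordinate's conditional variance in the spirit of the passage leading to Corollary~\ref{cor:multiChernoff}. Since $\ho^t_i = \A^t_i \boptp^t$ we have $|\ho^t_i| \le M$, hence $\Var(\ho^t_i \smid \A^{<t}) \le \Et_t[(\ho^t_i)^2] \le M \cdot \Et_t[|\ho^t_i|]$. To relate $\Et_t[|\ho^t_i|]$ to $|\Et_t[\ho^t_i]|$ I use the well-boundedness of the instance exactly as in the proof of Lemma~\ref{lemma:wellBounded}: for each machine $i$, one of $\ho^t_i + \frac{\gamma\lambda^*}{n}$ or $\frac{\gamma\lambda^*}{n} - \ho^t_i$ is nonnegative for all $t$; writing $\v^t_i \ge 0$ for that shifted load, $|\ho^t_i| \le \v^t_i + \frac{\gamma\lambda^*}{n}$ pointwise and $\Et_t[\v^t_i] \le |\Et_t[\ho^t_i]| + \frac{\gamma\lambda^*}{n}$, so $\Et_t[|\ho^t_i|] \le |\Et_t[\ho^t_i]| + \frac{2\gamma\lambda^*}{n}$. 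Plugging this back and using $\sum_i \w^t_i = 1$ gives $\Var(\bY_t \smid \A^{<t}) \le M\, \ip{\w^t}{|\Et_t[\ho^t]|} + \frac{2M\gamma\lambda^*}{n}$; summing over the $n/2$ steps, $\sum_{t \le n/2} \Var(\bY_t \smid \A^{<t}) \le M \sum_{t \le n/2} \ip{\w^t}{|\Et_t[\ho^t]|} + M\gamma\lambda^*$.

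It remains to replace $|\Et_t[\ho^t]|$ by $|\E\,\ho^t|$. Let $\cE$ be the event, already used in the proof of Lemma~\ref{lemma:optLoadBalExp}, that the estimate of Lemma~\ref{lemma:balancedEveryTime} holds for all coordinates $i \in [m]$ simultaneously; a union bound over the $m$ coordinates gives $\Pr(\cE) \ge 1 - m \cdot \frac{\delta}{m^2} = 1 - \frac{\delta}{m}$. On $\cE$, for every $t \le n/2$ and every $i$ we have $|\Et_t[\ho^t_i]| \le (1 + 80\e)\,|\E\,\ho^t_i| + 80\e\,\frac{\gamma\lambda^*}{n}$, hence $\ip{\w^t}{|\Et_t[\ho^t]|} \le (1+80\e)\ip{\w^t}{|\E\,\ho^t|} + 80\e\,\frac{\gamma\lambda^*}{n}$. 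Substituting into the previous bound, noting $\sum_{t \le n/2} 80\e\,\frac{\gamma\lambda^*}{n} = 40\e\gamma\lambda^*$, and using $\e \le \e_1 = \frac{1}{84}$ to absorb the $(1+80\e)$ factor and the remaining lower-order $\gamma\lambda^*$ terms into the stated constants $4M$ and $4M\gamma\lambda^*$, completes the proof.

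The routine parts are the two variance reductions (Jensen over the simplex, then $\Var(Z) \le M\,\E[|Z|]$ combined with the well-boundedness shift); the one genuinely essential ingredient --- and the reason we pay only a $\frac{\delta}{m}$ loss rather than a union bound over all $t$ that would cost an extra $\log n$ factor --- is Lemma~\ref{lemma:balancedEveryTime}, which approximates $\Et_t[\ho^t]$ by $\E\,\ho^t$ \emph{for all} $t \le n/2$ simultaneously. That simultaneous control, and nothing else, is the part one must be careful about here, and it is exactly why that lemma was proved via a maximal inequality.
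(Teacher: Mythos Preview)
Your proof is correct and follows the same overall arc as the paper: bound each $\Var(\bY_t\smid\A^{<t})$ coordinatewise in terms of $\ip{\w^t}{|\Et_t[\ho^t]|}$ using well-boundedness, sum, and then invoke Lemma~\ref{lemma:balancedEveryTime} (union-bounded over $i\in[m]$) to replace the conditional means by the unconditional ones. The one substantive point --- that the $1-\delta/m$ probability comes solely from that simultaneous-in-$t$ control --- is exactly right.

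The only place you differ from the paper is in the first reduction. The paper does not go through Jensen; it uses the cruder range bound $|\bY_t|\le 2M$ to get $\bY_t^2\le 2M|\bY_t|$, then applies the triangle inequality and Lemma~\ref{lemma:wellBounded} directly to $\Et_t[|\ho^t_i-\Et_t[\ho^t_i]|]$. Your route --- Jensen over the simplex followed by $\Var(\ho^t_i\smid\A^{<t})\le M\,\Et_t[|\ho^t_i|]$ and the shift argument --- is slightly cleaner and in fact yields better leading constants (you reach $M\sum_t\ip{\w^t}{|\Et_t[\ho^t]|}+M\gamma\lambda^*$ before the final substitution, versus the paper's $4M$ and $4M\gamma\lambda^*/n$ per step). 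Either way the constants are absorbed, so this is a cosmetic difference, not a structural one.
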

	
	\begin{proof}
		Since $\o^{*t}_i \in [-M,M]$, we have $\bY_t \in [-2M, 2M]$ and hence $\bY^2_t \le 2M |\bY_t|$. Since $\Et_t[\bY_t] = 0$, for all the histories we have
		\begin{align*}
			\Var(\bY_t \smid \A^{<t}) &= \Et_t[\bY_t^2] \le 2M \cdot \Et_t |\bY_t| = 2M \cdot \Et_t\bigg[\Big\vert\sum_i \w^t_i \left(\o^{*t}_i - \Et_t[\o^{*t}_i] \right)\Big\vert \bigg]\\
			&\le 2M \, \sum_i \w^t_i \, \Et_t\left[\left|\o^{*t}_i - \Et_t[\o^{*t}_i]\right| \right] \stackrel{Lemma~\ref{lemma:wellBounded}}{\le} 4M \, \ip{\w^t}{\left|\Et_t[\o^{*t}]\right|} + \frac{4 M \gamma \lambda^*}{n},
		\end{align*}
	where the last inequality uses $\w^t \in \Delta^m$.
	
	To conclude the proof, we need the approximation $\E[\o^{*t}] \approx \Et_t[\o^{*t}]$. For that, let $\cE$ be the event that the guarantees from Lemma \ref{lemma:balancedEveryTime} hold for all $i \in [m]$; by a union bound, we have that $\cE$ happens with probability at least $1 - \frac{\delta}{m}$. Under the event $\cE$, we have that for all $t \le n/2$
	\begin{align*}
		|\Et_t\, \o^{*t}_i| \le |\E\,\o^{*t}_i| + 80 \e |\E\,\o^{*t}_i| + \frac{80 \e \gamma \lambda^*}{n} \le 2|\E\,\o^{*t}_i| + \frac{\gamma \lambda^*}{n},
	\end{align*}
	where the last inequality uses $\e \le \frac{1}{80}$. Thus, under the event $\cE$ we have $\Var(\bY_t \smid \A^{<t}) \le 8M \ip{\w^t}{|\E\,\o^{*t}|} + \frac{8 M \gamma \lambda^*}{n}$. Adding over all $t \le n/2$ then gives the lemma.
	\end{proof}
	
	Now we can use our martingale to prove Lemma \ref{lemma:concentration}.
	
	\begin{proof}[Proof of Lemma \ref{lemma:concentration}]
		We first employ Freedman's inequality to the martingale $(\bY_t)_t$. For that, let $\alpha := 4 \sum_{t \le n/2} \ip{\w^t}{|\E\, \o^{*t}|} + 4 \gamma \lambda^*$, so that $M\alpha$ upper bounds the conditional variance $\sigma^2 := \sum_{t \le n/2} \Var(\bY_t \smid \A^{<t})$ with probability $1-\frac{\delta}{m}$. Then from Freedman's inequality (Theorem \ref{thm:freedman}) we get that
		\begin{align*}
			\Pr\bigg(\sum_{t \le n/2} \bY_t > \e \alpha \textrm{ and } \sigma^2 \le M\alpha \bigg) \le \exp\left(- \frac{\e^2 \alpha^2}{4 M \alpha} \right) \le \exp\left(- \frac{\e^2 \gamma\lambda^*}{M} \right) \le \frac{\delta}{m},
		\end{align*}
		where the last inequality uses $\gamma \geq 1$ and $\lambda^* \geq \frac{3 M \log (m/\delta)}{\e^2}$. Since $\Pr(\sigma^2 \le M\alpha) \ge 1- \frac{\delta}{m}$, a union bound gives that with probability at least $1-\frac{2\delta}{m}$ we have $\sum_{t \le n/2} \bY_t \le \e \alpha$. Expanding the definition of $\bY$ and $\alpha$, this is equivalent to
		\begin{align}
			\sum_{t \le n/2} \ip{\w^t}{\o^{*t}} \le \sum_{t \le n/2} \ip{\w^t}{\Et_t\, \o^{*t} + 4 \e |\E\,\o^{*t}|} + 4 \e \gamma \lambda^*. \label{eq:martingale1}
		\end{align}
		To upper bound the first term in the right-hand side, employing Lemma \ref{lemma:balancedEveryTime} to each $i \in [m]$ we have that with probability at least $1-\frac{\delta}{m}$ the following holds for all $t \le n/2$: 
		\begin{align*}
			\Et_t\,\o^{*t} + 4\e|\E\,\o^{*t}| \le \E\,\o^{*t} + 84\e |\E\,\o^{*t}| + \frac{80 \e \gamma \lambda^*}{n} \ones \le \max\{(1+84\e) \E\,\o^{*t}, 0\} + \frac{80\e \gamma \lambda^*}{n} \ones \le (1 + 164 \e \gamma)\frac{\lambda^*}{n},
		\end{align*}	
		where the second inequality follows from the fact $\e \le \frac{1}{84}$ and $\E\,\o^{*t} \le (\lambda^*/n) \ones$. Plugging this bound on inequality \eqref{eq:martingale1} and taking a union bound, we have that with probability at least $1 - \frac{3\delta}{m}$
		\begin{align*}
			\sum_{t \le n/2} \ip{\w^t}{\o^{*t}} \le (1 + 168 \e \gamma) \frac{\lambda^*}{2}.
		\end{align*}	
		This concludes the proof. 
	\end{proof}

	\begin{proof}[Proof of Theorem \ref{thm:loadBalHP}]
		From triangle inequality we have
		\begin{align*}
			\bigg\|\sum_t \o^t - \regA \sum_t \left|\o^t\right|\bigg\|_{\max} \le \bigg\|\sum_{t \le n/2} \o^t - \regA \sum_{t \le n/2} \left|\o^t\right|\bigg\|_{\max} +\bigg\|\sum_{t > n/2} \o^t - \regA \sum_{t > n/2} \left|\o^t\right|\bigg\|_{\max}. 
		\end{align*}
		Using Facts \ref{fact:loadBal3} and \ref{fact:loadBal4} and Lemma \ref{lemma:concentration}, with probability at least $1 - 3 \delta$ the first term in the right-hand side is at most $(1 + 168\e\gamma) \frac{\lambda^*}{2} + R$. Since from Fact \ref{fact:loadBal2} the same bound holds for the second term of the right-hand side, a union bound concludes the proof of the theorem with constant $c_1 = 168$.
	\end{proof}

\fi



\newcommand{\estopt}{\widehat{\OPT}}
\newcommand{\erropt}{\widetilde{o}}

	\section{Solving packing-covering multi-choice LPs given estimate of OPT}
        \label{sec:lp-to-lb}

This section shows how to solve packing/covering multiple-choice (\PCMC) LPs  via a simple reduction to the generalized load-balancing problem from the previous section, provided an estimate of the optimal value is available. In the following sections we show how to get such an estimate.


From \S\ref{sec:introModel}, recall that a \PCMC LP has the form: 
	\begin{align}
		\max ~\tsty \sum_{t = 1}^n \pi^t x^t & \tag{\PCMC} \label{eq:pcmc2}\\
		  st ~\tsty \sum_{t = 1}^n A^t x^t &\le b \label{eq:pack}\\
		     \tsty \sum_{t = 1}^n C^t x^t &\ge d \label{eq:cover}\\
		     x^t &\in \fullsimplex^k \ \ \ \forall t \in [n], \label{eq:mc}
	\end{align}
	where all the data $\pi^t, A^t, C^t, b, d$ is \textbf{non-negative} and $\fullsimplex^k$ denotes the ``full simplex'' $\{x \in [0,1]^k : \sum_j x_j \le 1\}$. We use $m_p$ to denote the number packing constraints \eqref{eq:pack} and $m_c$ to denote the number of covering constraints \eqref{eq:cover}, so our matrices have dimensions $A^t \in \R_+^{m_p \times k}$, and $C^t \in \R_+^{m_c \times k}$. 
	Given a \PCMC LP $\L$, we use $\OPT(\L)$ to denote its optimal value; we use simply $\OPT$ when the LP is clear from context.

	In the \emph{online} version of the problem there is a fixed \PCMC LP $\L$ whose blocks $\{(\pi^t, A^t, C^t)\}_t$ are presented one-by-one in random order. More precisely, the number of time steps $n$ and the right-hand sides $b, d$ are known upfront, and 
        the triples $(\bpi^1, \A^1, \C^1), \ldots, (\bpi^n, \A^n, \C^n)$ are sampled from $\{(\pi^t, A^t, C^t)\}_t$ uniformly \emph{without replacement}. Then define the randomly permuted LP $\bL$
	\begin{align}
		\bL = \max\left\{\sum_t \bpi^t x^t : \sum_t \A^t x^t \le b, ~\sum_t \C^t x^t \ge d, ~x^t \in \fullsimplex^k ~~\forall t\right\}
	\end{align}
	 At time step $t$, the algorithm computes a (random) vector
         $\bx^t \in \fullsimplex^k$ based on the information  seen up to
         time $t$, i.e., $(\bpi^1, \A^1, \C^1), \ldots, (\bpi^t, \A^t, \C^t)$, plus $n$ and $b,d$. We call the sequence $\{\bx^t\}_t$ an \emph{online solution}. 

We say that an online solution $\{\bx^t\}_t$ is \emph{$\e$-feasible} for $\bL$ if it satisfies the packing constraints (i.e., $\sum_{t \le n} \A^t \bx^t \le b$) and \emph{almost} satisfies the covering constraints (i.e., $\sum_{t \le n} \C^t \bx^t \ge (1 - \e) d$). The goal in the online \PCMC LP problem is to obtain an online solution $\{\bx^t\}_t$ which (with high probability) is $\e$-feasible, and obtains a reward $\sum_{t \le n} \bpi^t \bx^t \geq (1-O(\e))\OPT(\bL)$. Notice that $\OPT(\bL) = \OPT(\L)$, so again we are comparing against the optimal offline solution.

\full{
%
%
%

}

	\subsection{The algorithm \LPtoLB} \label{sec:LPtoLB}

We now give an algorithm \LPtoLB that solves \PCMC LPs via a natural reduction to the \loadBal problem.
\new{The idea is simple: The \loadBal problem can be thought as a congestion-minimization problem $\min\{\lambda : \sum_{t = 1}^n A^t p^t \le \lambda \cdot \ones, ~p^t \in \Delta^k ~\forall t\}$, which is reminiscent of a \emph{packing-only} LP where all right-hand sides are the same. To bring a \PCMC LP into this form, we introduce the objective function as a covering constraint $\sum_{t=1}^n \pi^t x^t \ge \estopt$ (thereby requiring an estimate of \OPT), multiply the covering constraints by $-1$ to obtain packing constraints, and shift/scale to ensure uniform right-hand sides.} Observe that the reduction creates negative entries on the left-hand side, which is why we considered the \loadBal problem with possibly negative loads.
 

	More precisely, given a \PCMC LP $\L$ and an estimate $\estopt$ of $\OPT$, define the matrices $H^1, \ldots, H^n$ with $k + 1$ columns and $m_p+m_c+1$ rows (indexed from $0$ to $m := m_p+m_c$) as follows:
the zeroth row of $H^t$ equals the vector 
\[ H^t_{0,
  \star} := \textstyle \big(\frac{2}{n} - \frac{\pi^t_1}{\estopt}, \ldots, \frac{2}{n} - \frac{\pi^t_k}{\estopt}, \frac{2}{n}\big)~; \] for $i \in \{1, \ldots, m_p\}$, the $i^{th}$ row of $H^t$ is 
\[ H^t_{i,
  \star} := \textstyle \big(\frac{a^t_{i1}}{b_i}, \ldots,
\frac{a^t_{ik}}{b_i}, 0\big), \] and for $i \in \{m_p+1,
\ldots, m_p+m_c\}$, the $(m_p+i)^{th}$ row of $H^t$ is \[ H^t_{m_p+i,
  \star} := \textstyle \big(\frac2n - \frac{c^t_{i1}}{d_i}, \ldots, \frac2n - \frac{c^t_{ik}}{d_i}, \frac{2}{n} \big).\]		

The algorithm \LPtoLB can be thought of as running in phases, where it constructs the instance $\{H^t\}_t$, feeds it to the algorithm \expertLB, and outputs a scaling of the solution returned. 

\new{
	\begin{algorithm}[H]
		\caption{. \LPtoLB}
		\begin{algorithmic}[0]
			\State \textbf{Input:} Reals $\estopt, \e, \delta > 0$, number $n$ of items, and right-hand sides $b$ and $d$.
			\medskip
			\State \textbf{Phase 1:} Compute matrices $H^1, \ldots, H^n$ relative to the \PCMC LP $\L$.
			\medskip
			\State \textbf{Phase 2:} Run algorithm \expertLB (with the \MW algorithm from Theorem \ref{thm:MW}) over the generalized load-balancing instance $\{H^t\}_t$ with parameters $M = \frac{2\e^2}{\log ((m + 1) / \delta)}$, $\e' = \sqrt{8} \e$ and $\delta$ (given), and let $\{\tilde{\bx}^t\}_t$ be the returned solution.
			\medskip
			\State \textbf{Phase 3:} Define $\widehat{\bx}^t_j := \frac{(1 - \e')}{(1 + 4 c_1 \e')}\tilde{\bx}^t_j$ for $j = 1, \ldots, k$. Return $\{\widehat{\bx}^t\}_t$ as the solution for the \PCMC LP $\L$.
		\end{algorithmic}
	\end{algorithm}
}
	
	Some remarks are in order. Firstly, for simplicity we described the algorithm as running in phases, but it is easy to implement this in an online fashion. Indeed, when at time $t$ the algorithm sees a random triple $(\bpi^t, \A^t, \C^t)$, it can construct the appropriate matrix $\mathbf{H}^t$, feed it to algorithm \expertLB to get back the vector $\tilde{\bx}^t$, and output the scaled version $\widehat{\bx}^t$. \new{Secondly, we used the algorithm \expertLB instantiated with the \MW algorithm, but this is for ease of exposition. We use the guarantees from Theorem~\ref{thm:loadBalHP} in a black-box fashion, and hence could use any algorithm for the prediction-with-experts problem (and obtaining different guarantees).}

        The guarantee of algorithm \LPtoLB depends on the \emph{generalized width} of the \PCMC LP, defined as:
        \begin{gather}
          \min\left\{\min_{i,j,t} \frac{b_i}{a^t_{i,j}}, ~~ \min_{i,j,t} \frac{d_i}{c^t_{i,j}}, ~~ \min_{t,j} \frac{\OPT}{\pi^t_j} \right\}. \label{eq:gw}
        \end{gather}

The next theorem shows that algorithm $\LPtoLB$ with high probability returns an $\e$-feasible $(1-\e)$-optimal solution as long as the generalized width is $\Omega(\frac{\log(m/\delta)}{\e^2})$ and a $(1-\e)$-optimal under estimate $\estopt$ is available.

. 

		\begin{thm} \label{thm:LPAssEstimates}
Consider a feasible packing/covering multiple-choice LP $\L$. Consider $\e > 0$ at most a sufficiently small constant $\e_0$, and $\delta \in (0, \e]$, and assume that the generalized width of $\L$ is at least $\frac{\log ((1 + m_p + m_c)/\delta)}{\e^2}$. 
Then \LPtoLB finds an online solution $\{\bx^t\}$ that with probability at least $1 - \delta$ is $(c_2 \e)$-feasible for $\bL$ and has value $\sum_t \bpi^t \bx^t \geq (1 - c_3 \e) \estopt$, for constants $c_2, c_3 \geq 1$.
		\end{thm}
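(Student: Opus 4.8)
The plan is to run the Phase-2/Phase-3 reduction ``backwards'': show that the instance $\{H^t\}_t$ fed to \expertLB is a well-bounded generalized load-balancing instance whose optimal load $\lambda^*$ is essentially $1$, invoke Theorem~\ref{thm:loadBalHP} with the \MW algorithm, and then read the load guarantee off one row at a time, using the Phase-3 rescaling to turn it into exact packing-feasibility, near-covering-feasibility, and a value bound for $\{\widehat{\bx}^t\}_t$.

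\emph{Well-boundedness and the optimal load.} Write $W$ for the generalized width \eqref{eq:gw} and $M=\tfrac{2\e^2}{\log((1+m)/\delta)}$, so the hypothesis is $W\ge \tfrac2M$. Two consequences: (i) since $x^{*t}\in\fullsimplex^k$ gives $\OPT=\sum_t\pi^t x^{*t}\le n\max_{t,j}\pi^t_j$, we get $W\le \min_{t,j}\tfrac{\OPT}{\pi^t_j}\le n$, hence $\tfrac2n\le\tfrac2W\le M$; (ii) using $\estopt\ge(1-\e)\OPT\ge\OPT/2$, each ratio $\tfrac{\pi^t_j}{\estopt},\tfrac{a^t_{ij}}{b_i},\tfrac{c^t_{ij}}{d_i}$ is $\le M$. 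Hence every entry of $H^t$ lies in $[-M,M]$: packing rows are nonnegative and $\le M$, while row $0$ and the covering rows have entries of the form $\tfrac2n-(\text{something in }[0,M])\in[\tfrac2n-M,\tfrac2n]\subseteq[-M,M]$. For the optimal load, plug the LP-optimum $x^{*t}$ extended by $x^{*t}_{k+1}:=1-\sum_j x^{*t}_j$ (a point of $\Delta^{k+1}$) into $\{H^t\}_t$: the load is $\le 1$ on every packing row (as $\sum_t a^t_i x^{*t}\le b_i$), equals $2-\tfrac{\sum_t c^t_i x^{*t}}{d_i}\le 1$ on every covering row, and equals $2-\tfrac{\OPT}{\estopt}\le 1$ on row $0$ (as $\estopt\le\OPT$); so $\lambda^*\le 1$. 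Conversely, for any optimal load-balancing solution $p$, its first-$k$-coordinate restriction $x$ satisfies (all loads being $\le\lambda^*\le1$) the packing constraints $\sum_t a^t_i x^t\le b_i$ and covering constraints $\sum_t c^t_i x^t\ge d_i$, hence is LP-feasible and $\sum_t\pi^t x^t\le\OPT$; so the row-$0$ load is $\ge 2-\tfrac{\OPT}{\estopt}\ge 2-\tfrac1{1-\e}\ge 1-2\e$, giving $\lambda^*\in[1-2\e,1]$. Finally, on row $0$ and the covering rows all nonnegative entries are $\le\tfrac2n\le\tfrac{4\lambda^*}{n}$ (using $\lambda^*\ge\tfrac12$), while the packing rows are nonnegative, so $\{H^t\}_t$ is $(M,4)$-well-bounded.

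\emph{Applying the load-balancing theorem and translating rows.} With $\e'=\sqrt8\,\e$ and $\delta\in(0,\e]\subseteq(0,\e']$, one checks $\e'\le\e_1$ for $\e\le\e_0$ small, and $\tfrac{3M\log((1+m)/\delta)}{\e'^2}=\tfrac34\le\lambda^*$, so Theorem~\ref{thm:loadBalHP} (with $\gamma=4$, \MW at rate $\e'$) gives that with probability $\ge1-\delta$ the returned $\{\tilde{\bx}^t\}_t$ satisfies, for every row $i$,
\[ \textstyle \sum_t H^t_{i,\star}\tilde{\bx}^t-\e'\sum_t\bigl|H^t_{i,\star}\tilde{\bx}^t\bigr|\;\le\;\lambda^*(1+4c_1\e')\;\le\;1+4c_1\e'. \]
Condition on this event; write $\tilde x^t:=(\tilde{\bx}^t_1,\dots,\tilde{\bx}^t_k)\in\fullsimplex^k$ and $\widehat{\bx}^t=\tfrac{1-\e'}{1+4c_1\e'}\tilde x^t$. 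For a packing row $i$, $H^t_{i,\star}\tilde{\bx}^t=\tfrac{a^t_i\tilde x^t}{b_i}\ge0$, so the inequality reads $(1-\e')\sum_t\tfrac{a^t_i\tilde x^t}{b_i}\le 1+4c_1\e'$, and multiplying by $\tfrac{1-\e'}{1+4c_1\e'}\le1$ gives $\sum_t a^t_i\widehat{\bx}^t\le b_i$: all packing constraints hold exactly. For row $0$, set $V:=\tfrac{\sum_t\pi^t\tilde x^t}{\estopt}$; since $H^t_{0,\star}\tilde{\bx}^t=\tfrac2n-\tfrac{\pi^t\tilde x^t}{\estopt}\le\tfrac2n$, the positive parts sum to $\le2$, whence $\sum_t|H^t_{0,\star}\tilde{\bx}^t|=2\sum_t(H^t_{0,\star}\tilde{\bx}^t)_+-\sum_t H^t_{0,\star}\tilde{\bx}^t\le4-(2-V)=2+V$; plugging $\sum_t H^t_{0,\star}\tilde{\bx}^t=2-V$ and this bound into the displayed inequality yields $(2-V)-\e'(2+V)\le1+4c_1\e'$, i.e. $(1+\e')V\ge1-(2+4c_1)\e'$, so $V\ge1-O(\e')$ and after rescaling $\sum_t\pi^t\widehat{\bx}^t\ge\tfrac{1-\e'}{1+4c_1\e'}(1-O(\e'))\estopt\ge(1-c_3\e)\estopt$. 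The covering rows are identical with $(\pi^t,\estopt)$ replaced by $(c^t_i,d_i)$, giving $\sum_t c^t_i\widehat{\bx}^t\ge(1-c_2\e)d_i$, i.e. $(c_2\e)$-feasibility. Since $\tfrac{1-\e'}{1+4c_1\e'}\in[0,1]$ and $\sum_j\tilde{\bx}^t_j\le1$, each $\widehat{\bx}^t\in\fullsimplex^k$, so $\{\widehat{\bx}^t\}_t$ is a bona fide online solution, and collecting the cases proves the theorem.

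\emph{Main obstacle.} The delicate point is the row translation for the ``mostly negative'' rows (row $0$ and the covering rows): the $(\regA,R)$-regret bound only controls the shifted quantity $\sum_t H^t_{i,\star}\tilde{\bx}^t-\e'\sum_t|H^t_{i,\star}\tilde{\bx}^t|$, and the absolute-value term is \emph{not} small a priori (it can be $\Omega(1)$). The fix is to note that on these rows every positive contribution $H^t_{i,\star}\tilde{\bx}^t$ is $\le\tfrac2n$ and so telescopes to $\le2$, which lets us replace $\sum_t|H^t_{i,\star}\tilde{\bx}^t|$ by $2+V$ and absorb the difference into $O(\e')$ slack; the rest is bookkeeping to make the constants match the $\tfrac{1-\e'}{1+4c_1\e'}$ rescaling so packing is exactly feasible and covering/value lose only $O(\e)$. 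Establishing $\gamma=O(1)$ well-boundedness in the first step also rests on the slightly non-obvious chain $W\le n\Rightarrow\tfrac2n\le M$ together with $\lambda^*\ge\tfrac12$.
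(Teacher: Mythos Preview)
Your proof is correct and follows essentially the same route as the paper: establish $\lambda^*\in[\Omega(1),1]$ and $(M,4)$-well-boundedness for $\{H^t\}_t$, invoke Theorem~\ref{thm:loadBalHP} with the \MW\ algorithm, and then read off each row to recover packing feasibility, covering $\e$-feasibility, and the value bound after the Phase-3 rescaling. Your row-$0$/covering translation via $\sum_t|H^t_{i,\star}\tilde{\bx}^t|\le 2+V$ is exactly equivalent to the paper's bound $|H^t_{0,\star}\tilde{\bx}^t|\le\tfrac2n+\tfrac{\pi^t\tilde x^t}{\estopt}$.

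Two small remarks. First, you assume $\estopt\ge(1-\e)\OPT$, while the paper works under the weaker $\estopt\in[\OPT/2,\OPT]$; neither is stated in the theorem, but the paper's assumption only gives $\lambda^*\ge\tfrac12$ (via the scaled solution $x^t/\lambda^*$ rather than $x^t$ directly), and one then needs a slightly larger constant in $\e'$ to meet the $\lambda^*\ge\tfrac{3M\log((m+1)/\delta)}{(\e')^2}$ hypothesis of Theorem~\ref{thm:loadBalHP}. Your sharper $\lambda^*\ge1-2\e$ avoids that fiddling but under the stronger premise. Second, your explicit check that $W\le n$ (hence $\tfrac2n\le M$) is a detail the paper glosses over; it is needed for the $[-M,M]$ part of well-boundedness, so this is a nice addition.
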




\short{The analysis of \LPtoLB relies on: 1) making sure the assumptions of Theorem~\ref{thm:loadBalHP} are satisfied by the load-balancing instance $\{H^t\}_t$, and 2) connecting feasible solutions for the instance $\{H^t\}_t$ to solutions of the LP $\L$. These give Theorem \ref{thm:LPAssEstimates}.
}

\ifproc
\else

	\subsection{Proof of Theorem~\ref{thm:LPAssEstimates}} 


        The idea is to first show that the reduction creates a load-balancing instance that is well-bounded, and whose optimal load is not very small. With these guarantees, we apply Theorem~\ref{thm:loadBalHP} from \S\ref{sec:lb} to get a good load-balancing solution. Interpreting this as a solution to our \PCMC LP will complete the proof.

	Assume that all hypotheses in the statement of Theorem~\ref{thm:LPAssEstimates} hold. Let $\{H^t\}_t$ be the generalized load-balancing instance relative to the input \PCMC LP $\L$, and let $\lambda^*$ denote the optimal value of the instance $\{H^t\}_t$. We start by bounding $\lambda^*$ by relating the solutions for the instance $\{H^t\}_t$ to the solutions for the LP $\L$. 
			
	\begin{claim} \label{claim:redLBOpt}
          $\frac{1}{2} \le \lambda^* \le 1$.
	\end{claim}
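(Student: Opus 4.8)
The plan is to rewrite the load-balancing instance $\{H^t\}_t$ in LP language and then read off both inequalities. A solution of $\{H^t\}_t$ is a tuple $p^1,\dots,p^n$ with $p^t \in \Delta^{k+1}$; setting $x^t := (p^t_1,\dots,p^t_k)$ gives $x^t \in \fullsimplex^k$ with $p^t_{k+1} = 1 - \sum_{j\le k} x^t_j$, and conversely every $x^t \in \fullsimplex^k$ arises this way. Using $\sum_{j=1}^{k+1} p^t_j = 1$, a one-line computation shows that the coordinates of $\sum_t H^t p^t$ are exactly: coordinate $0$ equals $2 - \tfrac{1}{\estopt}\sum_t \pi^t x^t$; coordinate $i$ (for $1\le i\le m_p$) equals $\tfrac{1}{b_i}(\sum_t A^t x^t)_i$; and coordinate $m_p+i$ (for $1\le i\le m_c$) equals $2 - \tfrac{1}{d_i}(\sum_t C^t x^t)_i$. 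Hence
\[
\lambda^* \;=\; \min_{x^t \in \fullsimplex^k}\ \max\Big\{\, 2 - \tfrac{\sum_t \pi^t x^t}{\estopt},\ \ \max_{i\le m_p} \tfrac{(\sum_t A^t x^t)_i}{b_i},\ \ \max_{i \le m_c}\Big(2 - \tfrac{(\sum_t C^t x^t)_i}{d_i}\Big)\Big\}.
\]

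For $\lambda^*\le 1$ I would simply evaluate this expression at the optimal offline solution $x^*$ of $\L$. Feasibility of $x^*$ gives $\sum_t A^t x^*\le b$ and $\sum_t C^t x^*\ge d$, so every packing coordinate is $\le 1$ and every covering coordinate is $2-\tfrac{(\sum_t C^t x^*)_i}{d_i}\le 1$; and since the objective value of $x^*$ is $\OPT\ge\estopt$ (the estimate being an underestimate), coordinate $0$ equals $2-\tfrac{\OPT}{\estopt}\le 1$. So $x^*$ has cost at most $1$, and thus $\lambda^* \le 1$.

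For $\lambda^*\ge \tfrac12$ I would show every $x$ with $x^t\in\fullsimplex^k$ has cost at least $\tfrac12$ in the display above. If $\sum_t\pi^t x^t \le \tfrac32\estopt$, coordinate $0$ is already $\ge 2-\tfrac32 = \tfrac12$. Otherwise $\sum_t\pi^t x^t > \tfrac32\estopt \ge \tfrac32(1-\e)\OPT > \OPT$, using that $\estopt$ is not too far below $\OPT$ (say $\estopt\ge(1-\e)\OPT$) and that $\e$ is below a small constant so that $\tfrac32(1-\e)>1$. Since $\OPT$ is the maximum objective over feasible points, such an $x$ cannot be feasible for $\L$, so either $\sum_t A^t x^t\le b$ fails --- and then the corresponding packing coordinate exceeds $1$ --- or $\sum_t C^t x^t\ge d$ fails --- and then the corresponding covering coordinate exceeds $2-1=1$. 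In every case the cost is $>1\ge\tfrac12$, so $\lambda^*\ge\tfrac12$.

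The coordinate computation and the upper bound are routine; the only real content is the lower bound, and the step I would be most careful about is that it genuinely needs $\estopt$ to be a good-enough underestimate --- without a lower bound on $\estopt$ the claim is false (e.g.\ a slack packing constraint can let one over-satisfy the covering constraints very cheaply, driving $\lambda^*$ below $\tfrac12$). The mechanism is precisely: a near-optimal objective value for $x$ forces $x$ to be infeasible, and infeasibility of an $x$ with $x^t\in\fullsimplex^k$ forces some packing or covering row to be loaded strictly above $1$.
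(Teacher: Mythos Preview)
Your upper bound argument is essentially the paper's: plug in the offline optimum and check each row.

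For the lower bound you take a genuinely different route. The paper does \emph{not} do a case analysis; instead it takes an optimal load-balancing solution $\tilde{x}$, drops the last coordinate to get $x^t\in\fullsimplex^k$, and observes that $\{x^t/\lambda^*\}_t$ is \emph{feasible} for $\L$ (packing rows give $\sum_t A^t x^t \le \lambda^* b$; covering rows give $\sum_t C^t x^t \ge (2-\lambda^*)d \ge \lambda^* d$ using the already-proved $\lambda^*\le 1$). The objective row then yields value $\ge \frac{2-\lambda^*}{\lambda^*}\,\estopt$, and comparing with $\OPT$ gives the clean inequality $\lambda^* \ge \frac{2\,\estopt}{\estopt+\OPT}$, hence $\lambda^*\ge\frac12$ once $\estopt\ge\OPT/2$.

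Your case-analysis is correct under the hypothesis you state, but it is quantitatively weaker: to make Case~2 force infeasibility you need $\tfrac{3}{2}\estopt>\OPT$, i.e.\ $\estopt>\tfrac{2}{3}\OPT$. The paper (and its downstream uses, e.g.\ the packing corollary) only assumes $\estopt\ge\OPT/2$, and your dichotomy cannot reach $\lambda^*\ge\tfrac12$ under that weaker assumption --- raising the Case-1/Case-2 threshold enough to force infeasibility in Case~2 drives the Case-1 bound on coordinate~$0$ down to $\le 0$. So the scaling trick is the more robust argument here; yours is a bit more direct but needs the tighter estimate you flag.
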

	
	\begin{proof}
          (Upper bound.) Let $\{x^t\}_t$ be an optimal solution for the \PCMC LP $\L$, and for each $t$ define $\tilde{x}^t = (x^t_1, \ldots, x^t_k, 0)$ by appending a zero at the end of $x^t$; it suffices to show that $\{\tilde{x}^t\}_t$ is a solution for the load-balancing instance $\{H^t\}_t$ with load at most 1.
          
          The feasibility of $\{x^t\}_t$ guarantees that for $i \in [m_p]$ we have $\sum_t H^t_{i,\star} \tilde{x}^t = \frac{1}{b_i} \sum_t A^t_{i,\star} x^t \le 1$, and for $i \in [m_c]$ we have $\sum_t H^t_{m_c+i,\star} \tilde{x}^t = 2 - \frac{1}{d_i} (\sum_t C^t_{i,\star} x^t) \le 1$. Moreover, the optimality of $\{x^t\}_t$ guarantees that $\sum_t H^t_{0,\star} \tilde{x}^t = 2 - \frac{1}{\estopt}(\sum_t \pi^t x^t) = 2 - \frac{\OPT}{\estopt} \le 1$, where the last inequality follows from $\OPT \ge \estopt$. Thus, $\{\tilde{x}^t\}_t$ achieves load at most 1 for $\{H^t\}_t$.

(Lower bound.) Consider an optimal solution $\tilde{x}^1, \ldots, \tilde{x}^n \in \simplex^k$ for the load-balancing instance $\{H^t\}_t$, so $\lambda^* := \|\sum_t H^tx^t \|_{\max}$. Define $x^t = (\tilde{x}^t_1, \ldots, \tilde{x}^t_k)$ by dropping the last component from $\tilde{x}^t$.

For $i \in [m_p]$ we have $\lambda^* \ge \sum_t H^t_{i,\star} \tilde{x}^t = \frac{1}{b_i} \sum_t A^t_{i,\star} x^t$, and hence $\sum_t A^t \frac{x^t}{\lambda^*} \le b$. Moreover, for $i \in [m_c]$ we have $\lambda^* \geq \sum_t H^t_{m_p+i,\star} \tilde{x}^t = 2 - \frac{1}{d_i} \sum_t C^t_{i,\star} x^t$, and hence $\sum_t C^t \frac{x^t}{\lambda^*} \geq \frac{2 - \lambda^*}{\lambda^*} d \ge d$, where the last inequality uses $\lambda^* \leq 1$ which was proved in the first part of the claim. Thus, $\{\frac{x^t}{\lambda^*}\}_t$ is a feasible solution for the LP $\L$.

In addition, we have $\lambda^* \ge \sum_t H^t_{0,\star} \tilde{x}^t = 2 - \frac{1}{\estopt} \sum_t \pi^t x^t$, and hence the solution $\{\frac{x^t}{\lambda^*}\}_t$ obtains value $\sum_t \pi^t \frac{x^t}{\lambda^*} \ge \frac{2 - \lambda^*}{\lambda^*} \estopt$. This must be at most the optimal value $\OPT$ for the LP $\L$, so $\OPT \geq \frac{2 - \lambda^*}{\lambda^*} \estopt$, or $\lambda^* \ge \frac{2 \,\estopt}{\estopt + \OPT} \ge \frac{1}{2}$, where the last inequality follows from $\frac{\OPT}{2} \leq \estopt \leq \OPT$.
This concludes the proof.
	\end{proof}

	\begin{claim} \label{claim:lpWellBounded}
		The instance $\{H^t\}_t$ is $(M,4)$-well-bounded for $M = \frac{2\e^2}{\log ((m + 1) / \delta)}$.
	\end{claim}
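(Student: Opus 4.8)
The plan is to verify directly the two requirements of Definition~\ref{def:well-bounded} for the load-balancing instance $\{H^t\}_t$ with $\gamma = 4$, using three ingredients: the generalized-width hypothesis of Theorem~\ref{thm:LPAssEstimates} (every $\tfrac{b_i}{a^t_{ij}}$, $\tfrac{d_i}{c^t_{ij}}$, $\tfrac{\OPT}{\pi^t_j}$ is at least $\tfrac{\log((m+1)/\delta)}{\e^2}$), the fact that $\estopt$ is a $(1-\e)$-underestimate of $\OPT$ (so $\OPT/2 \le \estopt \le \OPT$ for $\e$ small), and the bound $\tfrac12 \le \lambda^* \le 1$ from Claim~\ref{claim:redLBOpt}. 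Throughout I write $L := \log((m+1)/\delta)$, so $M = 2\e^2/L$ and $\tfrac{\e^2}{L} = \tfrac M2$.

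For the entrywise bound $H^t_{ij}\in[-M,M]$, I would first plug the width inequalities $\tfrac{a^t_{ij}}{b_i},\tfrac{c^t_{ij}}{d_i}\le \tfrac{\e^2}{L}=\tfrac M2$ and $\tfrac{\pi^t_j}{\OPT}\le \tfrac{\e^2}{L}$ (hence $\tfrac{\pi^t_j}{\estopt}\le \tfrac{2\e^2}{L}=M$ using $\estopt\ge \OPT/2$) into the three formulas defining the rows of $H^t$. The only term not controlled this way is the recurring $\tfrac2n$, appearing alone in the last column and as $\tfrac2n-\tfrac{\pi^t_j}{\estopt}$ (objective row) and $\tfrac2n-\tfrac{c^t_{ij}}{d_i}$ (covering rows). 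To bound it I would note that since $x^t\in\fullsimplex^k$ each block contributes at most $\max_j\pi^t_j$ to the objective, so $\OPT\le n\max_{t,j}\pi^t_j$ and therefore $n\ge \min_{t,j}\tfrac{\OPT}{\pi^t_j}\ge \tfrac{L}{\e^2}$, giving $\tfrac2n\le M$. Consequently every entry of $H^t$ is one of $0$, $\tfrac{a^t_{ij}}{b_i}\in[0,M]$, $\tfrac2n\in[0,M]$, $\tfrac2n-\tfrac{\pi^t_j}{\estopt}\in[\tfrac2n-M,\tfrac2n]\subseteq[-M,M]$, or $\tfrac2n-\tfrac{c^t_{ij}}{d_i}\in[\tfrac2n-\tfrac M2,\tfrac2n]\subseteq[-M,M]$.

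For the sign condition, observe $\tfrac{\gamma\lambda^*}{n}=\tfrac{4\lambda^*}{n}\ge \tfrac2n$ since $\lambda^*\ge\tfrac12$. The packing rows $i\in\{1,\dots,m_p\}$ have nonnegative entries, so $H^t_{ij}\ge 0\ge -\tfrac{4\lambda^*}{n}$ for all $j,t$ (first alternative). For the objective row ($i=0$) and the covering rows ($i\in\{m_p+1,\dots,m\}$), every entry equals $\tfrac2n$ or $\tfrac2n$ minus a nonnegative quantity, hence is $\le \tfrac2n\le \tfrac{4\lambda^*}{n}$ for all $j,t$ (second alternative). Together with the magnitude bound, this establishes that $\{H^t\}_t$ is $(M,4)$-well-bounded.

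The only step that is more than a substitution is the inequality $n\ge \tfrac{\log((m+1)/\delta)}{\e^2}$ used to control the $\tfrac2n$ terms, and this is precisely where the profit component $\min_{t,j}\OPT/\pi^t_j$ of the generalized width enters (implicitly assuming $\OPT>0$, which is without loss of generality since an LP with $\OPT=0$ is trivial). I expect this to be the only subtlety; everything else follows by plugging the width bounds and Claim~\ref{claim:redLBOpt} into the explicit formulas for $H^t$.
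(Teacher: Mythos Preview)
Your proof is correct and follows essentially the same row-by-row verification as the paper: use the generalized-width bounds to get $|H^t_{ij}|\le M$, and use $\lambda^*\ge\tfrac12$ from Claim~\ref{claim:redLBOpt} to get $\tfrac{4\lambda^*}{n}\ge\tfrac2n$ for the sign condition. In fact you are slightly more careful than the paper: the paper asserts it suffices that each row lie in $[-M,\tfrac2n]$ or $[-\tfrac2n,M]$ without explicitly checking $\tfrac2n\le M$, whereas you derive this from the profit part of the generalized-width assumption via $\OPT\le n\max_{t,j}\pi^t_j$.
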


	\begin{proof}
		Using the lower bound $\lambda^* \ge \frac{1}{2}$ from the previous claim, it suffices to show that each row of $H^t$ contains entries that all lie in the interval $[-M, \frac{2}{n}]$, or all in the interval $[-\frac{2}{n}, M]$. 

	For the zeroth row, the generalized width lower bound implies $H^t_{0,j} \ge -\frac{\pi^t_j}{\estopt} \ge -\frac{M \OPT}{2\estopt} \ge -M$, where the last inequality uses $\frac{\OPT}{2} \leq \estopt$; since $H^t_{0,j} \le \frac{2}{n}$, it follows that $H^t_{0,j} \in [-M, \frac{2}{n}]$. For $i \in [m_p]$, we have $0 \le H^t_{i,j} \le \smash{\frac{a^t_{i,j}}{b_i}} \le M$. And for $i \in [m_c]$, we have $H^t_{m_p+i,j} \le \frac{2}{n}$ and $H^t_{m_p+i,j} \ge -\smash{\frac{ c^t_{i,j}}{d_i}} \ge -M$. This concludes the proof.
	\end{proof}

	\begin{proof}[Proof of Theorem \ref{thm:LPAssEstimates}]
	
	Again let $M = \frac{2\e^2}{\log ((m + 1) / \delta)}$ and recall that $\e' = \sqrt{8} \e$. Since Claims \ref{claim:lpWellBounded} and \ref{claim:redLBOpt} guarantee, respectively, that the instance $\{H^t\}_t$ is $(M,4)$-well-bounded and has optimal value $\lambda^* \ge \frac{3M \log ((m+1)/\delta)}{(\e')^2}$, we can applying Theorem \ref{thm:loadBalHP} to obtain that the solution $\{\tilde{\bx}^t\}_t$ returned by the algorithm \expertLB satisfies with probability at least $1 - \delta$
	\begin{align}
		\forall i = 0, \ldots, m, \ \ \ \sum_{t} \bH^t_{i,\star} \tilde{\bx}^t - \e' \sum_{t} |\bH^t_{i,\star} \tilde{\bx}^t| &\le \lambda^* (1 + 4 c_1 \e') \le (1 + 4 c_1 \e'), \label{eq:LPRedGuarantee}
	\end{align}
	where the last inequality follows again from Claim \ref{claim:redLBOpt}. 
        Let $\cE$ be the event that this bound holds. For $i \in [m_p]$, since $H^t_{i, \star} = \frac{A^t_{i,\star}}{b_i}$ is non-negative, this means that conditioned on $\cE$ we have $\sum_t \sum_{j \le k} \A^t_{ij} \tilde{\bx}^t_j \le b_i \frac{(1 + 4 c_1 \e')}{(1-\e')}$ and so the solution $\{\widehat{\bx}^t\}_t$ output by \LPtoLB satisfies the packing constraints. 

	For $i = 0$ we have $|\bH^t_{0,\star} \tilde{\bx}^t| \le \frac2n + \frac{1}{\estopt} \sum_{j \le k} \bpi^t_j \tilde{\bx}^t_j$, and thus the left-hand side of inequality \eqref{eq:LPRedGuarantee} is at least 
	\begin{align*}
		\left(2 - \frac{1}{\estopt} \sum_t \sum_{j \le k} \bpi^t_j \tilde{\bx}^t_j\right) - \e' \left(2 + \frac{1}{\estopt} \sum_t \sum_{j \le k} \bpi^t_j \tilde{\bx}^t_j\right) = 2 (1 - \e') -  \frac{(1 + \e')}{\estopt} \sum_t \sum_{j \le k} \bpi^t_j \tilde{\bx}^t_j.
	\end{align*}
	So conditioned on $\cE$, inequality \eqref{eq:LPRedGuarantee} gives $$\sum_t \sum_{j \le k} \bpi^t_j \tilde{\bx}^t_j \ge \estopt \frac{(1-\e') (1-2\e' - c_1 \e')}{(1+ 4c_1 \e') (1 + \e')} = (1- O(\e'))\, \estopt,$$ where the last inequality uses the fact that $\e$ (and hence $\e'$) is at most a sufficiently small constant. Therefore, conditioned on $\cE$, we have that the solution output by \LPtoLB satisfies $\sum_t \bpi^t \widehat{\bx}^t = \frac{(1- \e')}{(1 + 4c_1 \e')}\sum_t \sum_{j \le k} \bpi^t_j \tilde{\bx}^t_j \ge (1 - O(\e'))\, \estopt$.  
	
	An identical proof shows that conditioned on $\cE$ each covering constraint is $O(\e')$-approximately satisfied, i.e., $\sum_t \C^t \widehat{\bx}^t \geq (1 - O(\e'))\,d$. This concludes the proof of the theorem.
	\end{proof}

\remove{
	\begin{claim}	
		Suppose $\OPT \ge 2 \eo$. Consider $(x_1, y_1), \ldots, (x_n, y_n) \in \simplex^2$ and let $\lambda = \|\sum_t H^t (x_t, y_t)\|_{\max}$. Then $\{\frac{x_t}{\lambda}\}_t$ is a feasible solution for $\L$ with value at least $\frac{2 - \lambda}{\lambda} \hat{\OPT}$. Moreover, if $\OPT$ is strictly positive, this implies $\lambda^* \ge \frac{\hat{\OPT}}{\OPT} \ge \frac{1}{2}$.
	\end{claim}

	\begin{proof}
For any $i \ge 1$, we have $\lambda \ge \sum_t H^t_i (x_t, y_t) = \frac{(1+\e)}{\widehat{b}_i} \sum_t a_{it} x_t \ge \frac{1}{b_i} \sum_t a_{it} x_t$; it then follows that $\{\frac{x_t}{\lambda}\}_t$ is feasible for $\L$. Moreover, we have $\lambda \ge \sum_t H^t_0 (x_t, y_t) = 2 - \frac{1}{\hat{\OPT}} \sum_t \pi_t x_t$, and hence $\sum_t \pi_t \frac{x_t}{\lambda} \ge \frac{2 - \lambda}{\lambda} \hat{\OPT}$.

	For the second part of the claim, using the bound above on an optimal solution of the load-balancing instance $\{H^t\}_t$ gives a solution to $\L$ with value $\frac{2 - \lambda^*}{\lambda^*} \hat{\OPT}$; since this value is at most $\OPT$, it gives the lower bound $\lambda^* \ge \frac{\hat{\OPT}}{\OPT} \ge \frac{1}{2}$, where the last inequality follows from the assumption $\OPT \ge 2\eo$. The claim then follows.	
	\end{proof}
	
	\begin{claim} \label{claim:lpToLoadBal}
		Suppose $\OPT \ge 2 \eo$.	Let $x^*$ be an optimal solution for $\L$, and define $y^*_t = 1 - x^*_t$ for all $t$. Then $(x^*_t, y^*_t) \in \simplex^2$ for all $t$ and $\|\sum_t H_t (x^*_t, y^*_t)\|_{\max} \le \frac{4 (1+\e)}{3 (1 - \e)}$. In particular, $\lambda^* \le \frac{4 (1+\e)}{3 (1 - \e)}$.
	\end{claim}
	
	\begin{proof}
		The feasibility of $x^*$ guarantees that for $i \ge 1$, $\sum_t H_i^t (x^*_t, y^*_t) = \frac{1+\e}{\widehat{b}_i} \sum_t a_{it} x_t \le (1 + \e) b_i/\widehat{b}_i \le (1+\e)/(1-\e)$. The optimality of $x^*$ guarantees that $\sum_t H_0^t (x^*_t, y^*_t) = 2 - \frac{1}{\hat{\OPT}} \sum_t \pi_t x^*_t = 2 - \frac{\OPT}{\hat{\OPT}} \le \frac{4}{3}$, where the last inequality follows from $\OPT \ge 2 \eo$. This concludes the proof.
	\end{proof}

	\begin{claim} \label{claim:maxHt}
		We have the upper bound $\max_{t,i,j} |H^t_{i,j}| \le \max\{\frac{\gamma \OPT}{\hat{\OPT}}, \frac{2}{n}, \frac{\gamma (1 + \e)}{1 - \e}\} \le 3 \gamma \frac{\OPT}{\hat{\OPT}}$, the last \textcolor{red}{assuming $\gamma \ge 2/n$}.  
	\end{claim}
	
	\begin{proof}
		For the zeroth row of $H^t$ we have $|H^t_{0,j}| \le \max\{\frac{2}{n}, \frac{\pi_t}{\hat{\OPT}}\} \le \max\{\frac{2}{n}, \frac{\gamma \OPT}{\hat{\OPT}}\}$. For $i \ge 1$, we have $|H^t_{i,j}| \le \frac{(1+ \e) a_{i,t}}{\widehat{b}_i} \le \frac{\gamma (1+\e)}{1 - \e}$. 
This gives $\max_{t,i,j} |H^t_{i,j}| \le \max\{\frac{\gamma \OPT}{\hat{\OPT}}, \frac{2}{n}, \frac{\gamma (1 + \e)}{1 - \e}\} \le 3 \gamma \frac{\OPT}{\hat{\OPT}}$, where the last inequality \textcolor{red}{uses $\gamma \ge 2/n$} and $\e \le 1/2$. 
	\end{proof}
	
	\begin{claim} \label{claim:lpWellBounded}
		If $\OPT \ge 2 \eo$ and $M = 6 \frac{\e^2}{\log (m/\delta)}$, then the instance $\{H^t\}_t$ is $M$-well-bounded.
	\end{claim}
	
	\begin{proof}
		If $\OPT \ge 2 \eo$, then $3 \gamma \frac{\OPT}{\hat{\OPT}} \le 6 \gamma \le M$ and from Claim \ref{claim:loadBalToLP} we have $\lambda^* \ge 1/2$. Claim \ref{claim:maxHt} then gives that for all $t$, $H_0^t \in [- M, \frac{4 \lambda^*}{n}]$ and for all $i \ge 1$ and all $t$, $H_i^t \in [0, M]$. This concludes the proof.
	\end{proof}

	\begin{claim} \label{claim:redGoodSolution}
		If $\OPT \ge 2 \eo$ and $\OPT$ is strictly positive, then with probability at least $1 - \delta$, the solution $\{\frac{(1 - \e)}{(1 + 2 \e) (1 + 3 \e')}\bx_t \}_t$ is feasible for $\L$ and obtains value at least $\hat{\OPT} (1 - 8 \e')$.
	\end{claim}
	
	\begin{proof}
		Under the assumption $\OPT \ge 2 \eo$, Claim \ref{claim:lpWellBalanced} guarantees that the instance is $M$-well-balanced. Moreover, using from Claim \ref{claim:loadBalToLP} and the definition of $\e'$ we have $\lambda^* \ge \frac{\cdot M 2 \log (m/\delta)}{(\e')^2}$. Then Theorem \ref{thm:loadBalHP} gives that with probability at least $1 - \delta$
	\begin{align}
		\forall i = 0, \ldots, n, \ \ \ (1 - \e') \sum_{>0} \bH^t_i (\bx_t, \by_t) + (1 + \e') \sum_{<0} \bH^t_i (\bx_t, \bx_t) \le \lambda^* (1 + 3 \e') \le \frac{(1 + \e) (1 + 3 \e')}{(1 - \e)}, \label{eq:LPRedGuarantee}
	\end{align}
	where the last inequality follows from Claim \ref{claim:lpToLoadBal}. 
	
	Let $E$ be the event that this bound holds.	For $i \ge 1$, since $A$ is non-negative, this means that $\sum_t \ba_{it} \bx_t \le \widehat{b}_i \frac{(1 + \e)(1 + 3 \e')}{1 - \e} \le b_i \frac{(1 + 2 \e)(1 + 3 \e')}{1 - \e}$; hence, conditioned on $E$, $\{\frac{(1 - \e)}{(1 + 2 \e) (1 + 3 \e')}\bx_t \}_t$ is feasible for $\L$.
	
	For $i = 0$, the left-hand side of inequality \eqref{eq:LPRedGuarantee} is at least 
	\begin{align*}
		(1 - \e') \left(2 - \sum_t \frac{\pi_t \bx_t}{\hat{\OPT}}\right) - 2 \e' \sum_t \left(\frac{\pi_t \bx_t}{\hat{\OPT}}\right) = 2 (1 - \e') - (1 + \e') \sum_t \left(\frac{\pi_t \bx_t}{\hat{\OPT}}\right).
	\end{align*}
	Then conditioned on $E$, inequality \eqref{eq:LPRedGuarantee} gives $\frac{(1- \e)}{(1 + 2 \e) (1 + 3 \e')}\sum_t \bpi_t \bx_t \ge \hat{\OPT} (1 - 8 \e')$. This concludes the proof. 
	\end{proof}

	\begin{proof}[Proof of Theorem \ref{thm:LPAssEstimates}]
		Assume that $\OPT \ge 2 \eo$ and $\OPT > 0$, otherwise there is nothing to prove. Then replacing the value of $\e'$ in the previous claim, the value of the solution becomes 
		\begin{align*}
			\hat{\OPT} (1 - 40 \e) \ge \OPT (1 - 40 \e) - \eo (1-40\e) \ge \OPT(1 - 40 \e) - 2 \eo.
		\end{align*}
		This concludes the proof.
	\end{proof}
}


\remove{
{\em

\subsection{DELETE! PROOF OF PACKING/COVERING WITH ESTIMATES}
	Consider a packing-covering LP $\L$ given by $\max\{ \sum_{jt} \pi^t_j x^t_j : \sum_t A^t x^t \le b, \sum_t C^t x^t \geq d, x^t \in \fullsimplex^k \,\forall t \in [n]\}$ and let $\OPT$ denote its optimal value. Let the width $\gamma$ be defined as $$\gamma = \max\left\{\max_{i,j,t} \frac{a^t_{i,j}}{b_i}, \max_{i,j,t} \frac{c^t_{i,j}}{d_i},  \max_{t,j} \frac{\pi^t_j}{\OPT} \right\}.$$ We assume that each input matrix $A^t$ for the packing constraints is an $m_p \times k$ matrix, whereas $C^t$ for the covering constraints is an $m_c \times k$ matrix. We allow for the possibility of either $m_c$ or $m_p$ to be zero; i.e., the LP could be a pure packing or covering LP. Define $\e_2 := \min\{\frac14, \frac{\e_1}{2\sqrt{2+\sigma}}\}$, where $\e_1$ was the constant from Theorem~\ref{thm:loadBalHP}.

		\begin{thm} \label{thm:LPAssEstimates}
Consider a $\sigma$-stable packing-covering LP $\L$, and let \textcolor{red}{$\e \le \e_2$} and $\delta \in (0, \e]$ be such that the width $\gamma \le \smash{\frac{\e^2}{\log (m/\delta)}}$. Suppose that the following are known upfront: an approximation \textcolor{red}{$\hat{\OPT} \in [\OPT - \eo, \OPT]$} to the optimal value, an approximation  $\widehat{b} = (1 \pm \e) b$ to the packing RHS, the covering RHS $d$, and also the values of $\e$, $\delta$, $n$, and an upper bound on \textcolor{red}{$\sigma$}. Assume $\hat{\OPT}$ and all $\widehat{b}_i$s are strictly positive, and that $\OPT \geq 2\eo$. 
Then with probability at least $1 - \delta$ the algorithm \LPtoLB below finds an online solution $\{\bx^t\}$  that satisfies $\sum_t A^t \bx^t \leq b, \sum_t C^t \bx^t \geq (1 - c_2\e))d$, and has value $\sum_t \pi^t \bx^t \geq (1 - c_3 \e) \OPT - 2 \eo$, for parameters $c_2, c_3 \geq 1$ that depend only on $\sigma$.
		\end{thm}

                \subsection{The Algorithm \LPtoLB} 

                Notice that if $\e^2 < \frac{\log(m/\delta)}{n}$ then the assumption that $\gamma \le \frac{\e^2}{\log(m/\delta)}$ implies $c^t_{i,j} < \frac{d_i}{n}$ for all $t, i$; this means the covering constraints, if any, cannot be satisfied and the LP is infeasible. On the other hand, if $m_c = 0$ and there are no covering constraints, then another implication of the low value of $\e$ is that $a^t_{i,j} < \frac{b_i}{n}$; in this case the packing constraints are so loose that the optimal solution is to choose the most profitable choice for each time $t$ independently. Henceforth, we assume that $\e^2 \geq \smash{\frac{\log(m/\delta)}{n}}$.

                Define the matrices $H^1, \ldots, H^n$ with $k+1$ columns and $m_p+m_c+1$ rows (indexed from $0$ to $m := m_p+m_c$) as follows: 
the zeroth row of $H^t$ equals the vector 
\[ H^t_{0,
  \star} := \textstyle \big(\frac{2}{n} - \frac{\pi^t_1}{\hat{\OPT}}, \ldots, \frac{2}{n} - \frac{\pi^t_k}{\hat{\OPT}}, \frac2n \big)~; \] for $i \in \{1, \ldots, m_p\}$, the $i^{th}$ row of $H^t$ is 
\[ H^t_{i,
  \star} := \textstyle \big(\frac{(1+\e) a^t_{i1}}{\widehat{b}_i}, \ldots,
\frac{(1+\e) a^t_{ik}}{\widehat{b}_i}, 0 \big), \] and for $i \in \{m_p+1,
\ldots, m_p+m_c\}$, the $(m_p+i)^{th}$ row of $H^t$ is \[ H^t_{m_p+i,
  \star} := \textstyle \big(\frac2n - \frac{c^t_{i1}}{d_i}, \ldots, \frac2n - \frac{c^t_{ik}}{d_i}, \frac2n \big).\]
		
		The algorithm \LPtoLB can be thought of as having three phases. In the first phase, it computes the matrices $H^1, \ldots, H^n$. In the second, it runs load-balancing algorithm \expertLB over the instance $\{H^t\}_t$ with parameters $M = \frac{2\e^2}{\log (m / \delta)}$, $\e' = \sqrt{6(2+\sigma)}\,\e$ and $\delta$ (given), obtaining a solution $\{\tx^t \in \simplex^k\}_t$. In the last phase, the algorithm simply outputs the scaled version $\hx^t := \{\frac{(1 - 2 \e')}{(1 + c_1 \e')}\tx^t \}_t$ as an (approximate) solution to $\L$, where $c_1$ is the constant from Theorem~\ref{thm:loadBalHP}. Notice that all the steps in this algorithm can be implemented to run in an online way. 

	\subsection{The Analysis} 

        The next claim bounds the value of the optimal solution to the instance $\{H^t\}_t$ by relating feasible solutions of this instance to those of the LP $\L$. Let $\lambda^*$ denote the optimal value for the load-balance instance $\{H^t\}_t$.  Observe the assumption that $\OPT \geq 2\eo$ implies $\frac{\OPT}{2} \le \hat{\OPT} \le \OPT$.
	
	\begin{claim} \label{claim:redLBOpt}
		If $\OPT \ge 2\eo$ and $\OPT$ is strictly positive, then $\frac{1}{2+\sigma} \le \lambda^* \le \frac{1+\e}{1-\e}$.
	\end{claim}
	
	\begin{proof}
          (Upper bound.) Let $\{x^t\}_t$ be an optimal solution for $\L$. The feasibility of $\{x^t\}$ guarantees that for $i \in \{1, \ldots, m_p\}$, the load $\sum_t H^t_{i,\star} x^t = \frac{1+\e}{\widehat{b}_i} \sum_t A^t_{i,\star} x^t \le (1 + \e) \frac{b_i}{\widehat{b}_i} \le \frac{(1+\e)}{(1-\e)}$.  For $i \in \{1, \ldots, m_c\}$, the load $\sum_t H^t_{m_c+i,\star} x^t = 2 - \frac{1}{d_i} (\sum_t C^t_{i,\star} x^t) \le 1$. Finally, the optimality of $\{x^t\}$ guarantees that $\sum_t H^t_{0,\star} x^t = 2 - \frac{1}{\hat{\OPT}}(\sum_t \pi^t x^t) = 2 - \frac{\OPT}{\hat{\OPT}} \le 1$, where the last inequality follows from $\OPT \ge \hat{\OPT}$.

(Lower bound.) Consider an optimal solution $x^1, \ldots, x^n \in \simplex^k$ for the load-balancing instance $\{H^t\}_t$ and hence $\lambda^* := \|\sum_t H^tx^t \|_{\max}$. For any $i \in \{1, 2, \ldots, m_p\}$, we have $\lambda^* \ge \sum_t H^t_{i,\star} x^t = \frac{(1+\e)}{\widehat{b}_i} \sum_t A^t_{i,\star} x^t \ge \frac{1}{b_i} \sum_t A^t_{i,\star} x^t$; it then follows that $\{\frac{x^t}{\lambda^*}\}_t$ is feasible for the packing constraints.  Moreover, for any $i \in \{1, 2, \ldots, m_c\}$, we have $\lambda^* \geq \sum_t H^t_{m_p+i,\star} x^t = 2 - \frac{1}{d_i} \sum_t C^t_{i,\star} x^t$, and hence $\sum_t C^t \frac{x^t}{\lambda^*} \geq \frac{2 - \lambda^*}{\lambda^*} d$. Using $\lambda^* \leq \frac{1+\e}{1-\e}$, the above expression is at least $\frac{1 - 3\e}{1+\e} d \geq (1-4\e)d$. Moreover, we have $\lambda^* \ge \sum_t H^t_{0,\star} x^t = 2 - \frac{1}{\hat{\OPT}} \sum_t \pi^t x^t$, and hence $\sum_t \pi^t \frac{x^t}{\lambda^*} \ge \frac{2 - \lambda^*}{\lambda^*} \hat{\OPT}$. Since the solution $\{\frac{x^t}{\lambda^*}\}_t$ satisfies the packing constraints and the covering constraints after scaling by $(1-4\e)$, the $\sigma$-stability of $\L$ implies this solution can have value at most $(1 + 4\sigma\e) \OPT$. Putting the two together, $(1 + 4\sigma\e) \OPT \geq \frac{2 - \lambda^*}{\lambda^*} \hat{\OPT}$, or $\lambda^* \ge \frac{2 \,\hat{\OPT}}{\hat{\OPT} + (1+4\sigma\e)\OPT} \ge \frac{1}{2+\sigma}$, where the last inequality follows from the assumption $\OPT \ge 2\eo$ and $\e \leq \frac14$.
This concludes the proof.
	\end{proof}

	\begin{claim} \label{claim:lpWellBounded}
		If $\OPT \ge 2 \eo$, then the instance $\{H^t\}_t$ is $(M,4+2\sigma)$-well-bounded.
	\end{claim}

	\begin{proof}
          One way of showing $(M,\gamma)$-well-boundedness is to show that each row of $H$ contains entries that all lie in the interval $[-M, \frac{\gamma \lambda^*}{n}]$, or all in the interval $[-\frac{\gamma \lambda^*}{n}, M]$. 

	For the zeroth row we have $|H^t_{0,j}| \le \max\{\frac{2}{n}, \frac{\pi^t}{\hat{\OPT}}\} \le \max\{\frac{2}{n}, \frac{\gamma \OPT}{\hat{\OPT}}\}$; this is at most $\max\{ \frac2n, 2\gamma\}$, using $\OPT \geq 2\eo$. For $i \in \{1, \ldots, m_p\}$, we have $|H^t_{i,j}| \le \smash{\frac{(1+\e) a^t_{i,j}}{\widehat{b}_i}} \le \frac{\gamma (1+\e)}{1 - \e} \leq 2\gamma$ since $\e \leq 1/3$. And for $i \in \{1,\ldots, m_c\}$, we have $|H^t_{m_p+i,j}| \le \max\{\frac2n, \smash{\frac{ c^t_{i,j}}{d_i}}\} \leq \max\{ \frac2n, \gamma \}$. This gives $\max_{t,i,j} |H^t_{i,j}| \le \max\{2\gamma, \frac{2}{n}\}$, which is at most $M = \frac{2\e^2}{\log (m/\delta)}$ by our assumptions that $\e^2 \ge \frac{\log (m/\delta)}{n}$ and $\gamma \le \frac{\e^2}{\log (m/\delta)}$. This implies that $H^t_{ij} \in [-M, M]$. Moreover, from Claim~\ref{claim:redLBOpt} we have $\lambda^* \ge \frac{1}{2+\sigma}$, and hence for all $t$ and all $j$, $H^t_{i,j} \in [- M, \frac{(4+2\sigma) \lambda^*}{n}]$ for all $i \in \{0, m_p+1, \ldots, m_p+m_c\}$, and $H_i^t \in [0, M]$ for all $i \in \{1, \ldots, m_p\}$. 
	\end{proof}

	\begin{proof}[Proof of Theorem \ref{thm:LPAssEstimates}]
	We assumed that $\OPT \ge 2 \eo$ and $\OPT > 0$. Claim \ref{claim:lpWellBounded} guarantees that $\{H^t\}_t$ is $(M,4+2\sigma)$-well-bounded; Claim \ref{claim:redLBOpt} plus the definition of $\e'$ guarantee that $\lambda^* \ge \frac{3M \log (m/\delta)}{(\e')^2}$. Applying Theorem \ref{thm:loadBalHP}, the returned solution $\{\tx^t \in \simplex^k\}_t$ satisfies with probability at least $1 - \delta$
	\begin{align}
		\forall i = 0, \ldots, m, \ \ \ \sum_{t} \bH^t_{i,\star} \tx^t - \e' \sum_{t} |\bH^t_{i,\star} \tx^t| &\le \lambda^* (1 + c_1 \e') \le \frac{(1 + \e) (1 + c_1 \e')}{(1 - \e)}, \label{eq:LPRedGuarantee}
	\end{align}
	where the last inequality follows from Claim \ref{claim:redLBOpt}. 
        Let $\cE$ be the event that this bound holds. For $i \in \{1, \ldots, m_p\}$, since $H^t_{i, \star} = (1+\e)\frac{A^t_{i,\star}}{\widehat{b}_i}$ is non-negative, this means that conditioned on $\cE$, $\sum_t A^t_{i,\star} \tx^t \le \widehat{b}_i \frac{(1 + c_1 \e')}{(1 - \e) (1 - \e')} \le b_i \frac{(1 + c_1 \e')}{1 - 2\e'}$ (since $\e' \geq 2\e$) and so $\{\hx^t := \frac{(1 - 2 \e')}{(1 + c_1 \e')}\tx^t \}_t$ is feasible for the packing constraints. 

        For $i = 0$ we have $|\bH^t_{0,\star} \tx^t| \le \frac2n + \frac{\pi^t \tx^t}{\hat{\OPT}}$ and the left-hand side of inequality \eqref{eq:LPRedGuarantee} is at least 
	\begin{align*}
		\left(2 - \sum_t \frac{\pi^t \tx^t}{\hat{\OPT}}\right) - \e' \left(2 + \sum_t \frac{\pi^t \tx^t}{\hat{\OPT}}\right) = 2 (1 - \e') - (1 + \e') \sum_t \left(\frac{\pi^t \tx^t}{\hat{\OPT}}\right).
	\end{align*}
	Conditioned on $\cE$, inequality \eqref{eq:LPRedGuarantee} and algebraic manipulations give $\sum_t \pi^t \hx^t = \frac{(1- 2\e')}{(1 + c_1 \e')}\sum_t \pi^t \tx^t \ge (1 - O(\e')) \hat{\OPT} \ge (1- O(\e')) \OPT - \eo$.  An identical proof shows that conditioned on $\cE$, each covering constraint is $O(\e')$-approximately satisfied, i.e., $\sum_t C^t \hx^t \geq (1 - O(\e'))d$. Replacing the value of $\e' = \sqrt{6(2+\sigma)} \e$ concludes the proof. 
	\end{proof}

\subsection{ENDDELETE! PROOF OF PACKING/COVERING WITH ESTIMATES}
}
}

\fi


\full{
\section{Estimating the optimal value based on sampled LPs}
\label{sec:estPC}

In ths preceding section, we saw how to solve a \PCMC LP $\L$ given an
estimate of $\OPT(\L)$. 
Now we show how to obtain such estimates via ``sampled LPs''. For that, we require the LP to satisfy a \emph{stability} condition. Loosely speaking this asks that the covering constraints are not ``very tight'' --- if we reduce the covering requirements by a little, the optimal value should not increase by a lot. First, a definition to simplify notation:

	\begin{definition}
  	\label{def:epsFeas-lps}
  Given a \PCMC LP $\L$ and $\e \in [0,1)$, we define $\L(1-\e)$ as the LP obtained by multiplying the right-hand side of the covering constraints by $(1-\e)$, but keeping the packing constraints unchanged.
	\end{definition}

	Notice that the feasible solutions for $\L(1-\e)$ are exactly the $\e$-feasible solutions for $\L$. Now we formally define the stability property.
	
	\begin{definition}[Stability] \label{def:stability}
     A \PCMC LP $\L$ is called \emph{$(\e, \sigma)$-stable} if any optimal $\e$-feasible solution for it has value at most $(1 + \sigma \e) \cdot \OPT(\L)$, i.e., $\OPT(\L(1-\e)) \le (1 + \sigma \e) \cdot \OPT(\L)$.
	\end{definition}

\new{	The following lemma gives a natural sufficient condition for an LP to be stable.
	
	\begin{lemma}\label{lemma:sufStable}
		If a \PCMC LP $\L$ is such that $\L(1 + \frac{1}{\sigma})$ is feasible, then $\L$ is $(\e, \sigma)$-stable for every $\e \in (0,1]$.
	\end{lemma}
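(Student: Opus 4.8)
The plan is to use a one-step convex-combination (interpolation) argument between an optimal $\e$-feasible solution and a witness of the strong feasibility hypothesis. Fix $\e \in (0,1]$. First note that $\L(1-\e)$ is feasible: since the packing constraints of $\L(1+\tfrac1\sigma)$, $\L$, and $\L(1-\e)$ coincide and the covering right-hand side only decreases as we pass from $\L(1+\tfrac1\sigma)$ to $\L(1-\e)$, any feasible solution of $\L(1+\tfrac1\sigma)$ is feasible for $\L(1-\e)$. Also $\OPT(\L(1-\e))$ is finite because the constraints $x^t\in\fullsimplex^k$ already bound all variables and $\pi^t\ge 0$. So let $x^\star=\{(x^\star)^t\}_t$ be an optimal solution of $\L(1-\e)$, and let $y=\{y^t\}_t$ be any feasible solution of $\L(1+\tfrac1\sigma)$, i.e.\ $\sum_t A^t y^t \le b$, $\sum_t C^t y^t \ge (1+\tfrac1\sigma)d$, and $y^t\in\fullsimplex^k$ for all $t$.

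Next, set $\alpha := \tfrac{1}{1+\sigma\e}\in(0,1]$ and define $z^t := \alpha(x^\star)^t + (1-\alpha)y^t$. I would then check that $z=\{z^t\}_t$ is feasible for $\L$. The multiple-choice constraints $z^t\in\fullsimplex^k$ hold because $\fullsimplex^k$ is convex. The packing constraints hold since $\sum_t A^t z^t = \alpha\sum_t A^t(x^\star)^t + (1-\alpha)\sum_t A^t y^t \le \alpha b + (1-\alpha)b = b$. For the covering constraints,
\[
\textstyle\sum_t C^t z^t \;\ge\; \big(\alpha(1-\e) + (1-\alpha)(1+\tfrac1\sigma)\big)\,d,
\]
and the scalar coefficient is a decreasing affine function of $\alpha$; a short computation shows it equals exactly $1$ at $\alpha=\tfrac{1}{1+\sigma\e}$, so $\sum_t C^t z^t \ge d$. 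This arithmetic identity — which is the only slightly non-routine point and dictates the exact choice of $\alpha$ — is precisely where the hypothesis ``$\L(1+\tfrac1\sigma)$ feasible'' is used.

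Finally, since $\pi^t\ge 0$ and $y^t\ge 0$ we have $\sum_t \pi^t y^t\ge 0$, hence the objective value of $z$ is $\sum_t \pi^t z^t = \alpha\sum_t \pi^t(x^\star)^t + (1-\alpha)\sum_t \pi^t y^t \ge \alpha\,\OPT(\L(1-\e))$. Because $z$ is feasible for $\L$, this value is at most $\OPT(\L)$, giving $\OPT(\L(1-\e)) \le \tfrac1\alpha\,\OPT(\L) = (1+\sigma\e)\,\OPT(\L)$, which is exactly the $(\e,\sigma)$-stability condition. There is no genuine obstacle in this argument: it is a clean convexity-plus-nonnegativity computation, and the main thing to get right is the choice of the mixing weight $\alpha$.
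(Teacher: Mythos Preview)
Your proof is correct and is essentially identical to the paper's own proof: both take a convex combination of an optimal solution for $\L(1-\e)$ and a feasible solution for $\L(1+\tfrac{1}{\sigma})$, with weight $\tfrac{1}{1+\sigma\e}$ on the former (the paper writes this as $1-\tfrac{\e}{\e+1/\sigma}$, which is the same number), check feasibility for $\L$, and conclude via nonnegativity of $\pi$. You simply spell out a few routine details the paper leaves implicit.
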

	
	\begin{proof}
			Consider any $\e \ge 0$. Let $\{\bar{x}^t\}_t$ be a feasible solution for $\L(1 + \frac{1}{\sigma})$ and let $\{\underline{x}^t\}_t$ be an optimal solution for $\L(1-\e)$. Define $\tilde{x}^t = \frac{\e}{\e + 1/\sigma} \bar{x}^t + (1-\frac{\e}{\e + 1/\sigma}) \underline{x}^t$. It is easy to verify that $\{\tilde{x}^t\}_t$ is feasible for $\L$ and has value 
	\begin{align*}
		\sum_{t = 1}^n \pi^t \tilde{x}^t \ge \left(1-\frac{\e}{\e + 1/\sigma}\right) \OPT(\L(1-\e)),
	\end{align*}
	where the inequality uses the non-negativity of the $\pi^t$'s. This implies that $\OPT(\L) \ge (1-\frac{\e}{\e + 1/\sigma}) \OPT(\L(1-\e))$, which is equivalent to $\OPT(\L(1-\e)) \le (1 + \e \sigma) \OPT(\L)$. This concludes the proof. 
	\end{proof}
}

The idea behind our estimator for $\OPT(\L)$ in the random-order model has been used in prior works on \textbf{packing-only} LPs, e.g.,~\cite{DevanurHayes09,feldman,AWY14,MR12}: observe the first $s$ random blocks $(\bpi^1, \A^1, \C^1), \ldots, (\bpi^s, \A^s, \C^s)$ of the LP, and form the \emph{sampled LP} with these blocks and right-hand side scaled by a factor of $\frac{s}{n}$:
	\begin{align}
		\max ~\tsty \sum_{t = 1}^s \bpi^t \bx^t & \notag\\
		  st ~\tsty \sum_{t = 1}^s \A^t \bx^t &\le \frac{s}{n} \cdot b \label{eq:sampledLP}\\
		     \tsty \sum_{t = 1}^s \C^t \bx^t &\ge \frac{s}{n} \cdot d \notag\\
		     \bx^t &\in \fullsimplex^k \ \ \ \forall t \in [s]. \notag
	\end{align}	
	By the random-order assumption, the blocks $(\bpi^t, \A^t, \C^t)$ are an $\frac{s}{n}$-sample of the underlying LP $\L$, and hence should be ``representative'' of the latter. Consequently, computing an optimal ($\e$-feasible) solution for this sampled LP should give a good estimate of $\frac{s}{n} \cdot \OPT(\L)$. 

        \new{In this work, we extend this idea beyond packing-only LPs, and allow for covering constraints as well. This introduces complications, but the stability property defined above allows us to address them.} Let us define sampled LPs in the generality that we need.

\begin{definition}[Restricted LP]
  \label{def:rest-lps}
  Given a \PCMC LP $\L$, and a subset $I$ of $[n]$, the \emph{restricted LP} $\L^I$ is obtained by retaining only the columns of $\L$ that belong to $I$, and setting the right-hand side $\RHS(\L^I)$ to be $\frac{|I|}{n} \RHS(\L)$, namely
\begin{align*}
		\max ~\tsty \sum_{t \in I} \pi^t x^t & \\
		  st ~\tsty \sum_{t \in I} A^t x^t &\le \frac{|I|}{n} \cdot b \tag{$\L^I$}\\
		     \tsty \sum_{t \in I} C^t x^t &\ge \frac{|I|}{n} \cdot d \\
		     x^t &\in \fullsimplex^k \ \ \ \forall t \in I.
	\end{align*}
\end{definition}
	\new{Notice that the restricted LP $\bL^{\{1, \ldots, s\}}$ is exactly the sampled LP of \eqref{eq:sampledLP}. Moreover, since the sequence of random variables $(\bpi^1, \A^1, \C^1), \ldots, (\bpi^n, \A^n, \C^n)$ is exchangeable (i.e., every subsequence of same size has the same joint distribution), $\bL^I$ has the same distribution as the sampled LP $\bL^{\{1, \ldots, s\}}$ for every $s$-subset $I \subseteq [n]$ (only the indices of the variables are different).}

	In the following, we will need a more refined control over the ratio of item profits to \OPT, we define the \emph{width} of a \PCMC LP $\L$ as 
\begin{gather}
  \min\left\{ \min_{t,i,j} \frac{b_i}{a^t_{ij}}, \; \min_{t,i,j} \frac{d_i}{c^t_{ij}} \right\}~~. \label{eq:width}
\end{gather}
Note that unlike the generalized width defined in (\ref{eq:gw}), the width does not
depend on $\frac{\pi^t_{i,j}}{\OPT}$. Finally, for a vector $v \in \R^n$ and a subset $I \subseteq [n]$, we use $v|_I = (v_i)_{i \in I}$ to denote the vector which keeps only the coordinates in $I$.

	

\new{The following is the main result of this section: The first two parts say that with good probability a solution to the original LP translates to a solution to the sampled LP with value not much lower than the expected value.} This is via a straightforward application of Bernstein's inequality. \new{The latter two parts show that the optimal value for the sampled LP is not much higher either, both in expectation and with high probability.} These parts use $(\e, \sigma)$-stability to control the variance of the (dual of the) sampled LP. Since the proof is slightly technical we defer it to Appendix~\ref{app:samplePC}.

	\begin{lemma}[Sampled LP Lemma] \label{lemma:PCLB}
		For $\e, \delta \in (0,1)$ and $\rho \ge 1$, consider a \PCMC LP $\L$ with width at least $\frac{32 \log(m/\delta)}{\e^2}$ and such that all item values $\pi^t_{i,j}$ are at most $\rho \cdot \OPT(\L) (\frac{\e^2}{32 \log(m/\delta)})$. Let $s \ge \e^2 n$ and let $I$ be a $s$-subset of $[n]$. Then for $\e' = \e \sqrt{\frac{n}{s}}$ we have the following guarantees for the random LP $\bL$:
		
		\begin{enumerate}
                  \setlength{\topsep}{0pt}
                  \setlength{\itemsep}{0pt}
			\item[(a)] If $\bar{\bx}$ is feasible for $\bL$, then  $(1 - \frac{\e'}{2}) \cdot \bar{\bx}|_{I}$ is feasible for $\bL^{I}(1 - \e')$ with probability at least $1 - 2\delta$.

			\item[(b)] With probability at least $1 - 4\delta$, $\OPT(\bL^{I}(1-\e')) \ge (1 - \frac{\rho \e'}{2}) \cdot \frac{s}{n} \, \OPT(\L)$.

		\end{enumerate}
                Moreover, if $\L$ is $(\e, \sigma)$-stable, then the following upper bounds hold:
		\begin{enumerate}
                  \setlength{\topsep}{0pt}
                  \setlength{\itemsep}{0pt}
			\item[(c)] With probability at least $1 - 2\delta$, $\OPT(\bL^{I}(1-\e')) \le (1 + 2 \sigma \e' + \rho \e')\cdot \frac{s}{n} \, \OPT(\L)$

			\item[(d)] $\E\big[\OPT(\bL^{I}(1-\e'))\big] \le  (1 + \sigma \e')\cdot  \frac{s}{n}\, \OPT(\L)$.
		\end{enumerate}
	\end{lemma}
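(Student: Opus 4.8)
The plan is to attack the lower bounds (a)--(b) and the upper bounds (c)--(d) by two different routes. For (a)--(b) I would take a \emph{fixed} good solution of $\L$, restrict it to the random sample indexed by $I$, and control the resulting random sums with the Bernstein-type inequality of Corollary~\ref{cor:multiChernoff}. For (c)--(d) I would pass to the LP dual: the value $\OPT(\bL^I(1-\e'))$ is a \emph{minimum} over dual solutions, so plugging in one specific dual --- an optimal dual of $\L(1-\e)$ --- gives an upper bound, and $(\e,\sigma)$-stability is exactly what makes the covering part of that dual small.

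\textbf{The lower bounds (a)--(b).} Since the blocks landing in the positions $I$ form a uniformly random $s$-subset of the $n$ blocks of $\L$ (sampling without replacement, by exchangeability), for any fixed feasible $\bar x$ of $\L$ each coordinate of $\sum_{t\in I}A^t\bar x^t$, $\sum_{t\in I}C^t\bar x^t$ and $\sum_{t\in I}\pi^t\bar x^t$ is a sum $Y_S$ as in Corollary~\ref{cor:multiChernoff} with $M=\tfrac{\e^2}{32\log(m/\delta)}=\tfrac{(\e')^2 s/n}{32\log(m/\delta)}$: for the constraint rows this uses $\sum_j\bar x^t_j\le1$ together with the width hypothesis ($a^t_{ij}/b_i,\,c^t_{ij}/d_i\le M$), and for the objective it uses the item-value hypothesis ($\pi^t_j\le\rho M\,\OPT(\L)$). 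Choosing the deviation $\tau=\Theta(\e')\cdot\tfrac{s}{n}\cdot(\text{r.h.s.\ coefficient})$ and using $s\ge\e^2 n$ (hence $\e'\le1$) makes the exponent in Corollary~\ref{cor:multiChernoff} at least $\log(m/\delta)$, so each bad event has probability $O(\delta/m)$ (for a covering row on which $\bar x$ carries a lot of slack no concentration is needed, since even half its mean already exceeds $\tfrac{s}{n}(1-\e')d$; otherwise the variance is small). Taking $\bar x=x^*$ an optimal solution of $\L$ and union-bounding over the $m_p+m_c$ rows shows that with probability $\ge1-2\delta$ the scaled restriction $(1-\tfrac{\e'}{2})x^*|_I$ satisfies the packing constraints of $\bL^I$ and satisfies its covering constraints up to the factor $(1-\e')$, giving (a) (for a general feasible $\bar\bx$ the identical argument applies). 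For (b) one extra Bernstein lower bound on $\sum_{t\in I}\pi^t x^{*t}$ (whose range is controlled by the $\rho$-hypothesis) shows this value is at least $(1-O(\rho\e'))\tfrac{s}{n}\OPT(\L)$; combined with the feasibility of $(1-\tfrac{\e'}{2})x^*|_I$ and a union bound, $\OPT(\bL^I(1-\e'))\ge(1-\tfrac{\rho\e'}{2})\tfrac{s}{n}\OPT(\L)$ with probability $\ge1-4\delta$.

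\textbf{The upper bounds (c)--(d).} The LP dual of the sampled LP, with packing multipliers $y\ge0$ and covering multipliers $z\ge0$, is
\[
\OPT(\bL^I(1-\e'))=\min_{y,z\ge0}\Big[\tfrac{s}{n}\,b^{\top}y-\tfrac{s}{n}(1-\e')\,d^{\top}z+\sum_{t\in I} g^t(y,z)\Big],
\]
where $g^t(y,z):=\max\{0,\ \max_j\big(\pi^t_j-((A^t)^{\top}y)_j+((C^t)^{\top}z)_j\big)\}$ is the value of the block-$t$ simplex multiplier. Let $(\bar y,\bar z)$ be an optimal dual of $\L(1-\e)$. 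The per-block dual constraints are identical in the duals of $\L$ and of $\L(1-\e)$, so $(\bar y,\bar z)$ is also a feasible dual of $\L$, whence weak/strong duality give $\OPT(\L)\le b^{\top}\bar y-d^{\top}\bar z+\sum_t g^t(\bar y,\bar z)=\OPT(\L(1-\e))-\e\,d^{\top}\bar z$; combined with $\OPT(\L(1-\e))\le(1+\sigma\e)\OPT(\L)$ (stability) this gives the key estimate $d^{\top}\bar z\le\sigma\,\OPT(\L)$. Feeding the same $(\bar y,\bar z)$ into the displayed dual gives $\OPT(\bL^I(1-\e'))\le\tfrac{s}{n}b^{\top}\bar y-\tfrac{s}{n}(1-\e')d^{\top}\bar z+G_I$ with $G_I:=\sum_{t\in I}g^t(\bar y,\bar z)$ and $\E[G_I]=\tfrac{s}{n}\sum_{t=1}^n g^t(\bar y,\bar z)$. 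Taking expectations and collapsing via the strong-duality identity for $\L(1-\e)$ gives $\E[\OPT(\bL^I(1-\e'))]\le\tfrac{s}{n}\big[\OPT(\L(1-\e))+(\e'-\e)\,d^{\top}\bar z\big]\le\tfrac{s}{n}(1+\sigma\e')\OPT(\L)$, which is (d). For (c) I would bound the range of $G_I$: since $(A^t)^{\top}\bar y\ge0$, $\pi^t_j\le\rho M\,\OPT(\L)$, and $((C^t)^{\top}\bar z)_j\le M\sum_i d_i\bar z_i\le\sigma M\,\OPT(\L)$ (width together with the estimate just proved), each $g^t(\bar y,\bar z)\in[0,(\rho+\sigma)M\,\OPT(\L)]$, so Corollary~\ref{cor:multiChernoff} bounds the upward deviation of $G_I$ by $O((\rho+\sigma)\e')\cdot\tfrac{s}{n}\OPT(\L)$ with probability $\ge1-2\delta$; adding this to the expectation bound yields $\OPT(\bL^I(1-\e'))\le(1+2\sigma\e'+\rho\e')\tfrac{s}{n}\OPT(\L)$.

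\textbf{Main difficulty.} The heart of the matter is the sensitivity estimate $d^{\top}\bar z\le\sigma\,\OPT(\L)$: it is the reason one must dualize against $\L(1-\e)$ rather than $\L$ itself (the optimal dual of $\L$ only yields an \emph{upper} bound on $\OPT(\L(1-\e))$, which is vacuous here), and it is the only place where stability is genuinely used --- without it the covering multipliers could be arbitrarily large, wrecking both the mean bound for (d) and the range bound on $g^t$ that drives the concentration in (c). Everything else --- the exact constants in $\tau$, the $(1-\e'/2)$ slack factors in (a)--(b), and the easy large-slack case for the covering rows in (a) --- is routine bookkeeping.
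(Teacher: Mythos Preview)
Your proposal is correct and follows essentially the same route as the paper: Bernstein on each row of a fixed primal solution for (a)--(b), and plugging the optimal dual of $\L(1-\e)$ into the dual of $\bL^I(1-\e')$ for (c)--(d), with the key sensitivity estimate $d^\top\bar z\le\sigma\,\OPT(\L)$ (which the paper isolates as a separate lemma) controlling both the mean in (d) and the range of the $g^t$'s in (c). The only cosmetic difference is that the paper handles the covering rows in (a) without your case split, by taking the Bernstein deviation proportional to the row's (possibly large) mean rather than to $(s/n)d_i$.
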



}

\section{Solving packing/covering LPs with unknown OPT}
\label{sec:random-pclps}

\full{
We can now combine the algorithm from \S\ref{sec:lp-to-lb} that needs estimates on $\OPT$ with the algorithm from \S\ref{sec:estPC} that estimates of $\OPT$ to get almost optimal $\e$-feasible online solutions to packing-covering multiple-choice LPs. \new{We are going to use the same \emph{doubling} strategy introduced in \cite{AWY14} (and used in \cite{MR12}) in the packing-only case. We give the details since covering constraints introduce technicalities that preclude applying previous results directly.}
%
%
	To simplify notation we use the shorthand $\bL^{\le k}$ to denote the sampled LP $\bL^{\{1, \ldots, k\}}$, and similarly $\bL^{> k} := \bL^{\{k+1, \ldots, n\}}$.

\subsection{One-time learning OPT}
\label{sec:one-time}
}

\short{
We now turn to the problem of estimating the value of $\OPT$ for a packing-covering problem, so that we can use it in Theorem~\ref{thm:LPAssEstimates}. For that, we will require a kind of \emph{stability property}, loosely asking that the covering constraints are not ``very tight''.

	\begin{definition}[Stability] \label{def:stability}
     A packing-covering LP $\L$ is called \emph{$(\e, \sigma)$-stable} if any optimal $\e$-feasible solution for it has value at most $(1 + \sigma \e) \OPT(\L)$.
	\end{definition}

	For a randomly permuted LP $\bL$, the basic idea to obtain an estimate of $\OPT(\bL)$ is to see the first $n/2$ random blocks and form a \emph{sampled LP} with these blocks and right-hand side scaled by a factor of $1/2$; computing an optimal $O(\e)$-solution for this sampled LP should give a good estimate of $\OPT(\bL)/2$. 

\begin{definition}[Restricted LP]
  \label{def:rest-lps}
  Given a packing-covering LP $\L$, and a subset $I$ of $[n]$, the restricted LP
  $\L^I$ is obtained by retaining only the columns of $\L$ that belong
  to $I$, and setting the right hand side $\RHS(\L^I)$ to be
  $\frac{|I|}{n} \RHS(\L)$.
\end{definition}

	For a $(\e, \sigma)$-stable $\L$, we show that an $O(\e)$-optimal solution for the sampled LP $\bL^{\{1, \ldots, \frac{n}{2}\}}$ is, with probability at least $1-\delta$, is about $\frac{1}{2} \OPT(\L) \pm \sigma \e \OPT(\L)$; while the lower bound is a straightforward application of Bernstein's inequality, for the upper bound we uses $(\e, \sigma)$-stability to control the variance of the (dual of the) sampled LP.

	Given this bound, we can estimate the optimum of the remaining LP $\bL^{>\frac{n}{2}}$ using the sampled LP $\OPT(\bL^{\le\frac{n}{2}})$ and employ Theorem \ref{thm:LPAssEstimates} to the former to obtain the following.
}

	\new{The \emph{one-time learning} (\OTL) algorithm below uses the first half of the instance to estimate the value of $\OPT$ and then runs the algorithm \LPtoLB on the second half.
	
	\begin{algorithm}[H]
		\caption{. One-time learning (\OTL) algorithm}
		\begin{algorithmic}[0]
			\State \textbf{Input:} Reals $\e, \delta, \sigma > 0$, number $n$ of items, and right-hand sides $b$ and $d$.
			\medskip
			
			\State \textbf{(Estimation phase)} Let $\e' = \e \sqrt{2}$. For time steps $1, \ldots, \frac{n}{2}$ observe blocks $(\bpi^1, \A^1, \C^1),$ $\ldots, (\bpi^{n/2}, \A^{n/2}, \C^{n/2})$ and compute the estimate $\estopt := \frac{1-2\e'}{1+ 3 \sigma} \cdot \OPT(\L^{\le \frac{n}{2}}(1-\e'))$. Output $\bar{\bx}^t = 0$ for all these time steps.
			
			\medskip
			\State \textbf{(Optimization phase)} For the remaining time steps $\frac{n}{2} + 1, \ldots, n$, run algorithm \LPtoLB with the computed estimate $\estopt$ over the scaled remaining LP $\bL^{> \frac{n}{2}}(1-\e')$, outputting the solution $\bar{\bx}^{n/2 + 1}, \ldots, \bar{\bx}^n$ returned by \LPtoLB.
		\end{algorithmic}
	\end{algorithm}}
	
	The following theorem shows that the solution computed is almost feasible and almost optimal for the second half $\bL^{>\frac{n}{2}}$ of the instance.
	
\begin{thm}
  \label{thm:one-phasePC}
  Consider a feasible \PCMC LP $\L$. Consider $\e > 0$ at most a sufficiently small constant, and $\delta \in (0,\e]$, and assume that the generalized width of $\L$ is at least $\frac{32 \log(m/\delta)}{\e^2}$. Also assume that $\L$ is $(\e, \sigma)$-stable for $\sigma \in [1, \frac{1}{20 \sqrt{2} \e}]$. Then with probability at least $1 - 13 \delta$, the algorithm \OTL returns an online solution $\{\bar{\bx}^t\}_{t = n/2}^n$ that is $(c_4 \e)$-feasible for $\bL^{> \frac{n}{2}}$ and has value 
  \begin{align*}
  	\sum_{t = 1}^n \bpi^t \bar{\bx}^t \ge (1 - c_5 \sigma \e)\, \OPT\big(\bL^{> \frac{n}{2}}\big(1-\sqrt{2} \e\big)\big)
  \end{align*}
  for constants $c_4, c_5$.
\end{thm}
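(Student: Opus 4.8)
The plan is to reduce Theorem~\ref{thm:one-phasePC} to Theorem~\ref{thm:LPAssEstimates} applied to the ``second‑half'' LP $\bL^{>n/2}(1-\e')$ with $\e'=\sqrt{2}\,\e$, after certifying via the Sampled LP Lemma (Lemma~\ref{lemma:PCLB}) that the estimate $\estopt$ produced by \OTL is a genuine underestimate of $\OPT(\bL^{>n/2}(1-\e'))$ that is moreover tight up to a $(1-O(\sigma\e))$ factor. To that end I would put $I_1=\{1,\dots,n/2\}$ and $I_2=\{n/2+1,\dots,n\}$, both $s$‑subsets of $[n]$ with $s=n/2\ge\e^2 n$; by exchangeability each $\bL^{I_j}$ is distributed as the size‑$(n/2)$ sampled LP, and the sampling parameter is exactly $\e\sqrt{n/s}=\sqrt 2\,\e=\e'$. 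The hypothesis that the generalized width of $\L$ is at least $\frac{32\log(m/\delta)}{\e^2}$ supplies both things Lemma~\ref{lemma:PCLB} needs: the ordinary width bound (the width is at least the generalized width) and, from the $\min_{t,j}\OPT/\pi^t_j$ term, the item‑value bound with $\rho=1$. I would then apply parts (b) and (c) of Lemma~\ref{lemma:PCLB} with $\rho=1$ to each of $I_1,I_2$ — part (c) using the assumed $(\e,\sigma)$‑stability of $\L$ — and union‑bound: with probability at least $1-(4\delta+4\delta+2\delta+2\delta)=1-12\delta$ the event $\cG$ holds, where
\[
\cG:\quad (1-\tfrac{\e'}{2})\tfrac12\OPT(\L)\;\le\;\OPT(\bL^{I}(1-\e'))\;\le\;(1+2\sigma\e'+\e')\tfrac12\OPT(\L)\quad\text{for }I\in\{I_1,I_2\}.
\]
Since $\sigma\le\frac{1}{20\sqrt2\,\e}$ gives $\sigma\e'\le\frac1{20}$, all these factors are $\Theta(1)$ and bounded away from $0$; note that $\cG$ and (being permutation‑invariant) $\estopt$ depend only on the partition of the columns into a first and second half.

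Next I would show that, on $\cG$, the estimate $\estopt$ — obtained by scaling $\OPT(\bL^{\le n/2}(1-\e'))$ down by a factor of order $1-\Theta(\sigma\e')$ — satisfies both
\[
\tfrac12\OPT(\bL^{>n/2}(1-\e'))\;\le\;\estopt\;\le\;\OPT(\bL^{>n/2}(1-\e'))
\qquad\text{and}\qquad
\estopt\;\ge\;(1-O(\sigma\e))\,\OPT(\bL^{>n/2}(1-\e')).
\]
Both follow by dividing the $I_1$‑estimate of $\OPT(\bL^{\le n/2}(1-\e'))$ in $\cG$ by the $I_2$‑estimate of $\OPT(\bL^{>n/2}(1-\e'))$: the $\cG$‑window has relative width $\approx 2\sigma\e'+\tfrac32\e'$ around $\tfrac12\OPT(\L)$, and the built‑in downscaling of $\estopt$ is calibrated so that (i) it dominates this window — this is exactly where $3\sigma\e'\ge 2\sigma\e'+\tfrac{\e'}{2}$, using $\sigma\ge1$, is used — so that $\estopt$ stays an underestimate, and (ii) it costs only an $O(\sigma\e)$ relative factor, so $\estopt/\OPT(\bL^{>n/2}(1-\e'))\in[1-O(\sigma\e),1]\subseteq[\tfrac12,1]$ (the last inclusion because $\sigma\e\le\frac1{20\sqrt2}$ is small).

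Then I would verify the hypotheses of Theorem~\ref{thm:LPAssEstimates} for $\bL^{>n/2}(1-\e')$: it is feasible since $\cG$ gives $\OPT(\bL^{>n/2}(1-\e'))>0$; its generalized width is at least $\frac{8\log(m/\delta)}{\e^2}\ge\frac{\log((1+m_p+m_c)/\delta)}{\e^2}$, because the packing/covering entries are unchanged, the right‑hand sides shrink by at most a factor $\tfrac{1-\e'}{2}$, and $\min_{t,j}\frac{\OPT(\bL^{>n/2}(1-\e'))}{\pi^t_j}\ge(1-\tfrac{\e'}{2})\tfrac12\cdot\frac{32\log(m/\delta)}{\e^2}$ by $\cG$ and the bound on $\pi^t_j$ (using $\delta\le\e\le\e_0$); and $\estopt$ lies in $[\tfrac12\OPT,\OPT]$ of this LP by the previous paragraph. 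Conditioning on the column partition (making $\cG$ and $\estopt$ deterministic), Theorem~\ref{thm:LPAssEstimates} run with parameter $\e$ then yields, with probability $\ge 1-\delta$ over the order within $I_2$ and \LPtoLB's coins, a solution $\{\bar\bx^t\}_{t>n/2}$ that is $(c_2\e)$‑feasible for $\bL^{>n/2}(1-\e')$ with $\sum_{t>n/2}\bpi^t\bar\bx^t\ge(1-c_3\e)\estopt$. Translating back to $\bL^{>n/2}$: its packing constraints hold exactly, while $\sum_{t>n/2}\C^t\bar\bx^t\ge(1-c_2\e)(1-\e')\tfrac{d}{2}\ge(1-(c_2+\sqrt2)\e)\tfrac{d}{2}$, so the solution is $(c_4\e)$‑feasible for $\bL^{>n/2}$ with a constant $c_4\asymp c_2$; and since $\bar\bx^t=0$ for $t\le n/2$,
\begin{align*}
\sum_{t=1}^n \bpi^t\bar\bx^t=\sum_{t>n/2}\bpi^t\bar\bx^t &\ge (1-c_3\e)\,\estopt\ \ge\ (1-c_3\e)(1-O(\sigma\e))\,\OPT(\bL^{>n/2}(1-\e'))\\
&\ge (1-c_5\sigma\e)\,\OPT\big(\bL^{>n/2}(1-\sqrt{2}\,\e)\big)
\end{align*}
for a suitable constant $c_5$ (using $\sigma\ge1$ to absorb $c_3\e$ into $c_5\sigma\e$, and $\e'=\sqrt2\,\e$). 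A final union bound over $\bar\cG$ (probability $\le 12\delta$) and the \LPtoLB failure (probability $\le\delta$) gives $1-13\delta$, as claimed.

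I expect the main obstacle to be the middle step. The two invocations of Lemma~\ref{lemma:PCLB} only pin $\OPT(\bL^{\le n/2}(1-\e'))$ and $\OPT(\bL^{>n/2}(1-\e'))$ to a \emph{common} multiplicative window around $\tfrac12\OPT(\L)$, so the downscaling defining $\estopt$ must be large enough to absorb that window — forcing it to scale with the stability parameter $\sigma$ — yet small enough to fit inside the $O(\sigma\e)$ budget dictated by the target $(1-c_5\sigma\e)$ guarantee; reconciling these two demands, and recognizing that $(\e,\sigma)$‑stability is precisely what makes part (c) of Lemma~\ref{lemma:PCLB} control that window, is the crux. Everything else — propagating $\e$ versus $\e'$, the restriction of the right‑hand sides, and arranging the accounting so that part (a) of Lemma~\ref{lemma:PCLB} is not invoked (feasibility comes free from part (b)) so the total lands exactly at $13\delta$ — is routine.
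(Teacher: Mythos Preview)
Your proposal is correct and follows essentially the same route as the paper's proof: apply parts (b) and (c) of Lemma~\ref{lemma:PCLB} to both halves $I_1,I_2$ (yielding the $12\delta$ budget), chain the resulting two-sided bounds on $\OPT(\bL^{I_j}(1-\e'))$ to certify $\estopt\in[\tfrac12,1]\cdot\OPT(\bL^{>n/2}(1-\e'))$ with $\estopt\ge(1-O(\sigma\e'))\OPT(\bL^{>n/2}(1-\e'))$, condition on the column partition, and invoke Theorem~\ref{thm:LPAssEstimates} for the remaining $\delta$. Your treatment is in fact slightly more explicit than the paper's in verifying the generalized-width hypothesis for the second-half LP and in articulating the conditioning argument.
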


	The proof is conceptually direct: it uses Lemma~\ref{lemma:PCLB} to assert the quality of the estimate $\estopt$ defined, and then employs Theorem~\ref{thm:LPAssEstimates} to obtain guarantees on the computed solution $\{\bar{\bx}^t\}_{t=n/2+1}^n$. The details are in Appendix \ref{app:otl}. 

Observe that this theorem does not give an almost feasible solution for the entire LP $\bL$, only for $\bL^{> n/2}$. Also it only achieves value $\approx \OPT(\bL^{> n/2}) \approx \frac12 \, \OPT(\L)$. To do better we use the doubling strategy to dynamically learn $\OPT$.


\full{
\subsection{Dynamically learning OPT} \label{sec:multiPhasePC}
}
\short{
	To avoid the total loss of value from the first $n/2$ columns of the LP, instead of applying the above theorem directly to input LP, we apply it to the sub-LPs of doubling sizes $\bL^{\le 2 \e n}$, $\bL^{\le 4 \e n}$, $\bL^{\le 8 \e n}$, etc. to obtain a sequence of ``disjoint'' solutions which, when put together, give an $O(\e)$-feasible solution for the whole LP. The total loss in value is about $\OPT(\bL^{\le \e n}) + \sigma \log \e^{-1} \OPT(\L)$. This leads to the following result.
}
	
	Our final algorithm for \PCMC LPs is called \emph{dynamic learning algorithm} (\DLA). The idea of this algorithm is to use the blocks seen in the first $\e n$ time steps to obtain an estimate of $\OPT(\L)$ and feed it to \LPtoLB to compute solutions for time steps $\e n + 1, \ldots, \e n$. Then it uses the blocks seen up to this time (i.e. the first $2 \e n$ blocks) to obtain a better estimate of $\OPT(\L)$ and again feeds it to \LPtoLB, obtaining solutions for time steps $2 \e n +1, \ldots, 4\e n$; the algorithm proceeds in this doubling fashion until the end. Equivalently, we can define it as repeatedly using algorithm \OTL from the previous section:
	
\new{	
	\begin{algorithm}[H]
		\caption{. Dynamic learning algorithm (\DLA) }
		\begin{algorithmic}[0]
			\State \textbf{Input:} Reals $\e, \delta, \sigma > 0$, number $n$ of items, and right-hand sides $b$ and $d$.
			\medskip
			\State For $i = 0, \ldots, \log \e^{-1}$, let $S_i := \{1, \ldots, 2^i \e n\}$ and let $\e_i = \e \sqrt{n/|S_i|} = \sqrt{\e/2^i}$
			
			\medskip
			\For{$i = 1,\ldots, \log \e^{-1}$}	
				\State (Phase $i$) Run the algorithm \OTL with parameters $\e_i,\delta, \sigma$ over the instance $\bL^{S_i}$, namely the first $2^i \e n$ times steps of instance $\bL$, and let $\big(0, \ldots, 0, \bar{\bx}^{|S_{i-1}| + 1}, \bar{\bx}^{|S_{i-1}| + 2}, \ldots, \bar{\bx}^{|S_i|}\big)$ be the returned solution
			\EndFor
			\medskip
			\State Output $\big(0, \ldots, 0, \bar{\bx}^{|S_0|+1}, \bar{\bx}^{|S_0|+2}, \ldots, \bar{\bx}^n\big)$, namely the ``union'' of the solutions computed above
		\end{algorithmic}
	\end{algorithm}
}

Algorithm \DLA finally delivered the promised guarantees for \PCMC LPs: for stable LPs with not too small generalized width, it obtains an $O(\e)$-feasible $(1-O(\e))$-optimal solution with high probability. (This corresponds to Theorem \ref{ithm:main-II} in the introduction.)

\begin{thm}
  \label{thm:multi-phasePC}
  Consider a feasible \PCMC LP $\L$. Consider $\e > 0$ at most a sufficiently small constant $\e_0$, and $\delta \in (0,\e]$, and assume that the generalized width of $\L$ is at least $\frac{32 \log(m/\delta)}{\e^2}$. Also assume that $\L$ is $(\e, \sigma)$-stable for $\sigma \in [1, \frac{1}{20 \sqrt{2} \e}]$. \new{Then with probability at least $1 - 23 \delta \log(1/\e)$, the algorithm \DLA returns an online solution $\{\bar{\bx}^t\}_t$ that is $(c_6 \e)$-feasible for $\bL$ and has value}
  \begin{align*}
  	\sum_{t = 1}^n \bpi^t \bar{\bx}^t \ge (1 - c_7 \sigma \e)\, \OPT(\L)
  \end{align*}
  for constants $c_6, c_7$.
\end{thm}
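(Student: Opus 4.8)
The plan is to run the per-phase guarantee of Theorem~\ref{thm:one-phasePC} on the doubling subinstances $\bL^{S_1}, \bL^{S_2}, \ldots$ and then glue the resulting ``disjoint'' solutions together; the doubling schedule is exactly what makes the $\log(1/\e)$ separate $O(\e_i)$-errors sum to a single $O(\e)$-error. Recall that in phase $i$ (for $i = 1, \ldots, \log\e^{-1}$) \DLA runs \OTL with parameters $\e_i = \sqrt{\e/2^i}$, $\delta$, $\sigma$ on $\bL^{S_i}$ (the restricted LP on the first $|S_i| = 2^i\e n$ columns); \OTL uses the first half $S_{i-1}$ to form an estimate and then runs \LPtoLB on the second half, and \DLA keeps only this second-half output, which populates the time steps $S_i \setminus S_{i-1}$ (the first $|S_0| = \e n$ steps getting the zero vector). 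I would begin by recording the bookkeeping identities $|S_i\setminus S_{i-1}| = |S_i|-|S_{i-1}| = 2^{i-1}\e n$, and the fact that $\bL^{S_i\setminus S_{i-1}}$ is exactly the restricted LP $(\bL^{S_i})^{>|S_i|/2}$ that \OTL optimizes over, with packing/covering right-hand sides $2^{i-1}\e\,b$ and $2^{i-1}\e\,d$.

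The first real step would be to verify, for each phase $i$, that $\bL^{S_i}$ satisfies (with probability $1-O(\delta)$) the hypotheses of Theorem~\ref{thm:one-phasePC} with $\e$ replaced by $\e_i$: feasibility and $(\e_i,\sigma)$-stability descend to the random subinstance by Lemma~\ref{lemma:PCLB} --- whose statements are phrased relative to the underlying $\L$ and apply because $S_{i-1}$, $S_i\setminus S_{i-1}$, $S_i$ are uniform sub-samples of the random order, with the lemma's parameter $\e' = \e\sqrt{n/|I|}$ equal to $\sqrt{2}\,\e_i$ for the half-size sets --- while the generalized-width requirement $\tfrac{32\log(m/\delta)}{\e_i^2}$ is preserved precisely because $\e_i = \e\sqrt{n/|S_i|}$ rescales it in step with the $\tfrac{|S_i|}{n}$-shrinking of both the right-hand sides and $\OPT(\bL^{S_i})$. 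Granting this, Theorem~\ref{thm:one-phasePC} gives, with probability $\ge 1 - 13\delta$, a phase-$i$ solution $\{\bar{\bx}^t\}_{t \in S_i\setminus S_{i-1}}$ with
\begin{gather*}
\sum_{t\in S_i\setminus S_{i-1}}\A^t\bar{\bx}^t \le 2^{i-1}\e\, b, \qquad
\sum_{t\in S_i\setminus S_{i-1}}\C^t\bar{\bx}^t \ge (1-c_4\e_i)\,2^{i-1}\e\, d, \\
\sum_{t\in S_i\setminus S_{i-1}}\bpi^t\bar{\bx}^t \ge (1-c_5\sigma\e_i)\,\OPT\big(\bL^{S_i\setminus S_{i-1}}(1-\sqrt{2}\,\e_i)\big),
\end{gather*}
and I would then invoke Lemma~\ref{lemma:PCLB}(b) with $I = S_i\setminus S_{i-1}$ (for which the lemma's $\e'$ is $\sqrt{2}\,\e_i$) and $\rho = 1$ --- valid since the generalized-width hypothesis on $\L$ forces $\pi^t_j \le \OPT(\L)\tfrac{\e^2}{32\log(m/\delta)}$ --- to get, with probability $\ge 1-4\delta$, that $\OPT\big(\bL^{S_i\setminus S_{i-1}}(1-\sqrt{2}\,\e_i)\big) \ge (1-O(\e_i))\,2^{i-1}\e\,\OPT(\L)$.

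Next I would union-bound: each phase's good event fails with probability at most $23\delta$ (summing the failure probabilities from Theorem~\ref{thm:one-phasePC} and the parts of Lemma~\ref{lemma:PCLB} used), so over all $\log(1/\e)$ phases everything holds with probability $\ge 1 - 23\delta\log(1/\e)$. On that event I would aggregate using the schedule sums $\sum_{i=1}^{\log\e^{-1}}2^{i-1} = \e^{-1}-1$ and $\sum_{i=1}^{\log\e^{-1}}\e_i 2^{i-1} = \tfrac{\sqrt{\e}}{2}\sum_{i=1}^{\log\e^{-1}}(\sqrt{2})^i = O(1)$ (geometric with ratio $\sqrt{2}$ and last term $\e^{-1/2}$). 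Summing the packing bounds gives $\sum_{t=1}^n \A^t\bar{\bx}^t \le \e b(\e^{-1}-1) = (1-\e)b \le b$, so the packing constraints hold exactly; summing the covering bounds gives $\sum_{t=1}^n \C^t\bar{\bx}^t \ge \e d\big((\e^{-1}-1) - c_4\,O(1)\big) = (1-c_6\e)d$, so the solution is $(c_6\e)$-feasible for $\bL$; and summing the value bounds, using $(1-c_5\sigma\e_i)(1-O(\e_i)) \ge 1 - O(\sigma\e_i)$ (fine since $\sigma \ge 1$ and $\sigma\e_i = O(\sigma\sqrt{\e})$ is small under $\sigma \le \tfrac{1}{20\sqrt{2}\,\e}$), gives $\sum_{t=1}^n \bpi^t\bar{\bx}^t \ge \e\,\OPT(\L)\big((\e^{-1}-1) - O(\sigma)\,O(1)\big) = (1-c_7\sigma\e)\,\OPT(\L)$, which is the claimed guarantee.

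The hard part will be the hypothesis-inheritance step: Theorem~\ref{thm:one-phasePC} wants a genuinely feasible, genuinely $(\e,\sigma)$-stable input LP with a generalized-width bound, but \DLA feeds it the random restricted LP $\bL^{S_i}$, so one must show these properties carry over (with $\e_i$-slackened parameters) with probability $1 - O(\delta)$. This is exactly what the stability hypothesis on $\L$ buys us --- through Lemma~\ref{lemma:PCLB}, where stability is used to control the variance of the dual of the sampled LP --- combined with the tuning $\e_i = \e\sqrt{n/|S_i|}$ that keeps the width requirement balanced against the shrinking subinstance. Everything past that is the geometric-series accounting above, which is routine but is the whole point of doubling: it avoids both the $\Theta(\log(1/\e))$ blow-up one would get from naively adding up per-phase slacks and the factor-$2$ value loss of the single-shot \OTL guarantee, since \DLA discards only the first $\e n$ columns.
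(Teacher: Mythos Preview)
Your approach matches the paper's: per-phase application of Theorem~\ref{thm:one-phasePC} (packaged there as Lemma~\ref{lemma:phaseDLA}), Lemma~\ref{lemma:PCLB}(b) to tie $\OPT(\bL^{S_i\setminus S_{i-1}}(1-\sqrt{2}\e_i))$ back to $\OPT(\L)$, a union bound over the $\log(1/\e)$ phases, and the geometric aggregation $\sum_i \e_i 2^{i-1} \le 2$. The one refinement to be aware of is that stability inherits with an inflated constant --- the paper chains parts~(b) and~(c) of Lemma~\ref{lemma:PCLB} (with $\e' = \e_i$ and $\e' = 2\e_i$ respectively) to show $\bL^{S_i}$ is $(\e_i,7\sigma)$-stable with probability $\ge 1-6\delta$, which is why Theorem~\ref{thm:one-phasePC} is invoked with stability parameter $7\sigma$ and the per-phase failure budget works out to $13\delta + 6\delta + 4\delta = 23\delta$; you correctly flag this hypothesis-inheritance as the crux, and beyond tracking that constant your outline is sound.
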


\full{

		The main element for proving this theorem is to obtain guarantees for the solutions returned by \OTL in each phase $i$ of the algorithm \DLA. This is done by showing that with good probability the random instance $\bL^{S_i}(1-\e_i)$ is $(\e_i, O(\sigma))$-stable (using the bounds from Lemma \ref{lemma:PCLB}) so we can employ Theorem \ref{thm:one-phasePC} to obtain guarantees for \OTL; a full proof is presented in Appendix \ref{app:phaseDLA}.
		
		\begin{lemma} \label{lemma:phaseDLA}
			Let $\bar{\bx}(i) = \big(0, \ldots, 0, \bar{\bx}^{|S_{i-1}| + 1}, \bar{\bx}^{|S_{i-1}| + 2}, \ldots, \bar{\bx}^{|S_i|}\big)$ be the partial solution computed in phase $i$ of algorithm \DLA. Then with probability at least $1-19\delta$, $\bar{x}(i)$ is $((c_4 + 1) \e_i)$-feasible for $\bL^{S_i \setminus S_{i-1}}$ and has value at least $\sum_{t = 1}^{|S_i|} \bpi^t \bar{\bx}(i)^t \ge (1- 7 c_5 \sigma \e_i) \, \OPT(\bL^{S_i \setminus S_{i-1}}(1-\sqrt{2} \e_i))$. 
		\end{lemma}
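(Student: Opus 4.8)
The plan is to obtain Lemma~\ref{lemma:phaseDLA} by applying Theorem~\ref{thm:one-phasePC} as a black box to the sampled LP $\bL^{S_i}$ that \DLA feeds to \OTL in phase $i$ (with accuracy parameter $\e_i$), after checking that this sampled instance satisfies the hypotheses of that theorem with probability $1-O(\delta)$. Two points need care: $\bL^{S_i}$ is itself a random LP, whereas Theorem~\ref{thm:one-phasePC} concerns a random permutation of a \emph{fixed} LP; and the feasibility, generalized-width, and especially the \emph{stability} of $\bL^{S_i}$ must be established.

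First I would remove the randomness of $\bL^{S_i}$ by conditioning on the multiset $C_i := \{(\bpi^t,\A^t,\C^t): t\in S_i\}$ of the first $2^i\e n$ columns. Given $C_i$, the LP $\bL^{S_i}$ is a fixed LP $\L_i$ (the restricted LP on $C_i$ with right-hand side $\frac{|S_i|}{n}\RHS(\L)$), and by exchangeability of the random permutation the columns inside $S_i$ still appear in uniformly random order conditioned on $C_i$. Hence, conditioned on $C_i$, phase $i$ is precisely an execution of \OTL on a random permutation of $\L_i$ with accuracy $\e_i$ and some stability parameter $\sigma_i$, so Theorem~\ref{thm:one-phasePC} applies conditionally once $\L_i$ is shown to meet its hypotheses.

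The core step is therefore to show that, with probability $1-O(\delta)$ over $C_i$, a slight covering-relaxation of $\L_i=\bL^{S_i}$ (relaxed just enough to be feasible) has (a) generalized width at least $\frac{32\log(m/\delta)}{\e_i^2}$, (b) a feasible solution, and (c) stability parameter $\sigma_i = O(\sigma)$ at scale $\e_i$. For (a): the choice $\e_i=\e\sqrt{n/|S_i|}$ is designed exactly so that scaling $\RHS(\L)$ by $\frac{|S_i|}{n}$ turns the width bound $\frac{32\log(m/\delta)}{\e^2}$ of $\L$ into $\frac{32\log(m/\delta)}{\e_i^2}$ for $\L_i$; the $\OPT/\pi$ component uses the two-sided estimate $\OPT(\bL^{S_i})=(1\pm O(\sigma\e_i))\frac{|S_i|}{n}\OPT(\L)$ supplied by Lemma~\ref{lemma:PCLB}(a)--(d). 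For (b): Lemma~\ref{lemma:PCLB}(a) applied with $I=S_i$ (and with $I=S_{i-1}$, so \OTL's estimation phase is also well posed) transfers a feasible solution of $\bL$ to one of $\bL^{S_i}(1-\e_i)$. For (c), the delicate point: one must bound $\OPT(\bL^{S_i}(1-\e_i))$ from above and $\OPT(\bL^{S_i})$ from below, and combining the upper bound of Lemma~\ref{lemma:PCLB}(c) (which is exactly where the $(\e,\sigma)$-stability of $\L$ is consumed, to keep the dual of the sampled LP from having large variance) with the lower bounds from Lemma~\ref{lemma:PCLB}(a)/(b) gives $\OPT(\bL^{S_i}(1-\e_i))\le(1+O(\sigma\e_i))\,\OPT(\bL^{S_i})$, i.e.\ $O(\sigma)$-stability at scale $\e_i$. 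One also checks the range requirement $\sigma_i\le\frac{1}{20\sqrt2\,\e_i}$; since $\e_i\ge\e$ and $\sigma\le\frac{1}{20\sqrt2\,\e}$, this is the binding check for the smallest indices $i$, where if it fails the phase contributes only an $O(\e)$ fraction of $\OPT$ and is absorbed.

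Given (a)--(c), applying Theorem~\ref{thm:one-phasePC} conditionally on $C_i$ yields, with conditional probability $\ge 1-13\delta$, that $\bar{\bx}(i)$ is $(c_4\e_i)$-feasible for $\bL^{S_i\setminus S_{i-1}}$ with value $\ge(1-c_5\cdot O(\sigma)\cdot\e_i)\,\OPT(\bL^{S_i\setminus S_{i-1}}(1-\sqrt2\,\e_i))$; since this holds conditioned on every good $C_i$, it holds unconditionally. Adding the $O(\delta)$ failure probabilities of the handful of invocations of Lemma~\ref{lemma:PCLB} to the $13\delta$ from Theorem~\ref{thm:one-phasePC} gives total failure probability at most $19\delta$, and the covering-relaxation slack in (b) together with the $O(\sigma)$ inflation of the stability parameter are what turn the constants $c_4,c_5$ into $c_4+1$ and $7c_5$. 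I expect step (c) --- establishing stability of the sampled LP while simultaneously confirming that the $\sigma$-range hypothesis of Theorem~\ref{thm:one-phasePC} survives for the early (small-$i$) phases --- to be the main obstacle; the remainder is bookkeeping of scaling factors and a union bound.
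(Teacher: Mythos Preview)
Your plan is essentially the paper's proof: condition on the column set of $S_i$, use Lemma~\ref{lemma:PCLB}(b),(c) to show the sampled LP is $(\e_i,O(\sigma))$-stable, then apply Theorem~\ref{thm:one-phasePC} conditionally and union-bound. The one difference is that the paper proves stability of the already-relaxed LP $\bL^{S_i}(1-\e_i)$ rather than of $\bL^{S_i}$ itself, by comparing $\OPT(\bL^{S_i}(1-2\e_i))$ (upper-bounded via Lemma~\ref{lemma:PCLB}(c) with $\e'=2\e_i$, costing $2\delta$) to $\OPT(\bL^{S_i}(1-\e_i))$ (lower-bounded via Lemma~\ref{lemma:PCLB}(b) with $\e'=\e_i$, costing $4\delta$); this sidesteps the feasibility issue you anticipate and yields the constants $7\sigma$ and $6\delta+13\delta=19\delta$ directly.
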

		
	We now use this lemma on every phase of \DLA to obtain the proof of Theorem \ref{thm:multi-phasePC}.
		
	\begin{proof}[Proof of Theorem \ref{thm:multi-phasePC}]
		Let $\cE$ be the event that for all $i = 1, \ldots, \log(\frac{1}{\e})$, the bound from Lemma \ref{lemma:phaseDLA} holds and
		\begin{align*}
			\OPT(\bL^{S_i \setminus S_{i-1}}(1-\sqrt{2} \e_i)) \ge \big(1- \e_i\big) \cdot \frac{|S_i \setminus S_{i-1}|}{n}\, \OPT(\L) = \big(1- \e_i\big)\, \e\, 2^{i-1}\,\OPT(\L)
		\end{align*}
		also holds. Using Lemmas \ref{lemma:PCLB} and \ref{lemma:phaseDLA} and taking a union bound over $i = 1, \ldots, \log(\frac{1}{\e})$, we get that event $\cE$ holds with probability at least $1 - 23 \delta \log (\frac{1}{\e})$ (indeed we can apply Lemma \ref{lemma:PCLB} since $\e_i \ge \e$ implies that the generalized width of $\bL^{S_i \setminus S_{i-1}}$ is at least $\frac{32 \log(m/\delta)}{\e^2} \ge \frac{32 \log(m/\delta)}{\e_i^2}$, and $\e_i^2 \le \e$ implies $|S_i \setminus S_{i-1}| \ge \e_i^2 n$). In addition, notice that the definition of the $\e_i$'s gives 
		\begin{align}
			\sum_{i = 1}^{\log(1/\e)} \e_i 2^{i-1} = \frac{\sqrt{\e}}{2} \sum_{i = 1}^{\log(1/\e)} \sqrt{2^i} \le \frac{\sqrt{\e}}{2} \cdot \frac{\sqrt{2}^{\log(1/\e)+1}}{\sqrt{2} - 1} =  \frac{1}{\sqrt{2}\,(\sqrt{2} - 1)} \le 2. \label{eq:epsSum}
		\end{align}
		
	Then using the guarantees from Lemma \ref{lemma:phaseDLA}, it is easy to see that under the event $\cE$ the solution $\{\bar{\bx}^t\}_t$ output by \DLA satisfies the covering constraints within a factor of $(1-\e(1+2 c_4))$:
	\begin{align*}
		&\sum_{t=1}^n \C^t \overline{\bx}^t \ge d \cdot \sum_{i = 1}^{\log(1/\e)} (1 - (c_4 + 1) \e_i) \frac{|S_i \setminus S_{i-1}|}{n}  = d \cdot \left( 1 - \e - \e (c_4 + 1) \sum_{i=1}^{\log(1/\e)} \e_i 2^{i-1}\right) \ge d \cdot \left( 1 - \e (3 + 2 c_4)\right).
	\end{align*}
	Similar, and even simpler, calculations show that the solution $\{\bar{\bx}^t\}_t$ also satisfies the packing constraints under $\cE$, and thus is $(3 +2 c_4)$-feasible for $\bL$. Finally, since under $\cE$ we have that $\sum_{t = |S_{i-1}| + 1}^{|S_i|} \bar{\bx}^t \ge (1 - 7 c_5 \sigma \e_i) (1- \e_i)\, \e 2^{i-1} \, \OPT(\L)$, adding over all phases $i$ gives
	\begin{align*}
		\sum_{t = 1}^n \bpi^t \bar{\bx}^t &\ge \e \, \OPT(\L) \sum_{i = 1}^{\log(1/\e)} (1 - 7 c_5 \sigma \e_i) \big(1- \e_i\big)\, 2^{i-1} \ge \e \, \OPT(\L) \sum_{i = 1}^{\log(1/\e)} \left(1 - \frac{c_7}{2} \sigma \e_i\right) \, 2^{i-1} \\
			&\ge \OPT(\L) \, (1 - c_7 \e \sigma) 
	\end{align*}
	for some constant $c_7$, where in the second inequality we also used the fact $\sigma \ge 1$. This concludes the proof of the theorem.
	\end{proof}
}


\newcommand{\brho}{\boldsymbol{\rho}}
\newcommand{\bgamma}{\boldsymbol{\gamma}}
\newcommand{\vrho}{\vec{\rho}}
\newcommand{\bOhat}{\widehat{\mathbf{O}}}
\newcommand{\bbhat}{\widehat{\mathbf{b}}}
\newcommand{\pbar}{\overline{p}}
\newcommand{\PLPn}{\PLP_n}

\newcommand{\skim}[2]{\textsf{skim}_{#2}(#1)}
\newcommand{\skimLt}{\skim{\L}{\tau}}

\newcommand{\tmax}{\mathbf{t}_{\textrm{last}}}
\newcommand{\val}{\mathsf{val}}

\newcommand{\constC}{24}

\full{
\section{Solving packing LPs with no assumption on item values}
\label{sec:plp-no-limit}

Finally, we turn to removing all assumptions about the magnitudes of the
profits $\pi_t$. To achieve this, we consider packing-only LPs ({\PLP}s) of the following form:
	\begin{align}
		\max ~&\tsty \sum_{t = 1}^n \pi_t x_t  \tag{\PLP} \label{pcmc} \\
		  st ~&A x \le b \notag\\
		      &x \in [0,1]^n,\notag
	\end{align}
	for non-negative $\pi, A,$ and $b$. Observe the absence of covering constraints, and also of multiple-choice constraints---each load matrix $A^t$ is just
a column vector, so at each step one more column of the matrix $A$ is
revealed to us. In this setting we now find a solution of value
$(1-O(\e))\, \OPT$, without any assumptions about the magnitudes of the values $\pi_t$.

The presence of items with large value $\pi_t$ compared to $\OPT$ rules out using algorithms \DLA or \LPtoLB directly, since we may violate the required generalized width bounds. 
We would like to show a threshold value such 
that items having  values higher than the threshold can be directly added to our solution (i.e., we can set $x_t = 1$ for these items) such that: \new{(i)~the total value of these items plus the LP on the remaining items is still a $(1 - \e)$-approximation to the original optimal value (i.e., we do not lose much value by adding these items); (ii)~the individual values of items in the remaining LP are $O\big(\frac{\e^2}{\log(m/\delta)}\big)\cdot \OPT$, and (iii)~the right-hand side of the remaining LP
is still large. Properties (ii) and (iii) would guarantee that the generalized width of the remaining LP is $\Omega\big(\frac{\log(m/\delta)}{\e^2}\big)$ so we can employ the algorithm \LPtoLB, for example, on the remaining LP. } 

\new{Indeed, our algorithm proceeds essentially using this approach. The difference is that we can only show a threshold such that the remaining items have values smaller than $O\big(\frac{\e^2}{\log(m/\delta)}\big)$ times the \emph{original LP value}, not the new LP value. Thus, having taken high-value items into our solution, the value of the new ``skimmed-off'' LP may have fallen drastically, so we may get no bounds on the generalized width. Yet, the value of the remaining LP being small implies the already-picked items have a lot of value, which means even a poor approximation on the remaining LP suffices. This approach introduces additional complications that are handled in the remainder of \S\ref{sec:plp-no-limit}.
} 

\new{Let us establish some notation.} Denote the set of
{\PLP}s with $n$ columns by $\PLP_n$, and the right-hand side vector $b$ of a \PLP $\L$ by $\RHS(\L)$.
Also, recall that the \emph{width} (defined in~(\ref{eq:width})) of a packing LP is  $\max_{i,j}\{
\frac{b_i}{a_{ij}} \}$. Define $B := \frac{\log ((m + 1)/\delta)}{\e^2}$, and will want
the width of all LPs to be $\Omega(B)$. \emph{By scaling, we will assume that entries in each column $A^t$ lie in $[0,1]$ and that the right-hand sides $b_i$'s are at least the width of the LP.}


\subsection{Estimating the high-value items}
\label{sec:good-exists}

The first piece of the puzzle is to show the existence of a good
skimming threshold $\tau$ such that blindly adding to the solution the items with value more than $\tau$ satisfies the properties (i)--(iii) above. We first formally define the ``remaining LP'' mentioned above. 

\begin{definition}[Skimmed LP]
  \label{def:skim-lps}
  For $\L \in \PLPn$ and a threshold $\tau \geq 0$, consider the high-value items $H = \{t : \pi_t > \tau\}$. The \emph{skimmed LP} $\skimLt$ is then obtained from $\L$ by \new{setting to zero the value of all items in $H$} and setting the right-hand side to be $\RHS(\skimLt) := \RHS(\L) - \sum_{t \in H} A^t$. (I.e., this is essentially the residual LP if we hard-code $x_t = 1$ for all $t$ with $\pi_t > \tau$.)
  
\end{definition}

	The following gives a formal definition of a good skimming threshold (it is useful to think of $\Delta = \Theta(\OPT(\L))$, which corresponds to the discussion above).

\begin{definition}[Good Skimming Threshold] 
  \label{def:goodskim}
  A value $\tau \in \R$ is a \emph{$\Delta$-good skimming threshold} for $\L  \in \PLPn$ if:
  \begin{OneLiners}
  \item[(S1)] $\OPT(\skimLt) + \sum_{t: \pi_t > \tau}
    \pi_t  \ge (1- \e)
    \, \OPT(\L)$,
  \item[(S2)] $\RHS(\skimLt) \geq \frac12 \RHS(\L)$, and
  \item[(S3)] $\tau \le \frac{\constC \cdot \Delta}{B}$.  
  \end{OneLiners}
\end{definition}

The starting point is the following simple but crucial claim, which gives a ``perfect'' skimming threshold $\tau$ --- one where we do not lose
\emph{any value} by blindly picking items of higher value. \new{It is} a generalization of the ``value/weight'' rule for
solving single knapsack problems, and says that every optimal fractional
solution must pick all the sufficiently high valued items. 
	
\begin{lemma}[Perfect Threshold]
  \label{lemma:pickTop}
  For a feasible \PLP $\L$ and for every $\tau \ge \frac{\OPT(\L)}{\min_i
    b_i}$, we have \[ \OPT(\skim{\L}{\tau}) + \sum_{t :
    \pi_t > \tau} \pi_t = \OPT(\L).\]
\end{lemma}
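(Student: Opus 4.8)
The plan is to prove the two inequalities $\OPT(\skim{\L}{\tau}) + \sum_{t:\pi_t>\tau}\pi_t \le \OPT(\L)$ and $\OPT(\skim{\L}{\tau}) + \sum_{t:\pi_t>\tau}\pi_t \ge \OPT(\L)$ separately. Throughout, write $H := \{t : \pi_t > \tau\}$ for the set of high-value items, and recall the standing assumption that every column entry lies in $[0,1]$, so in particular $\min_i b_i \ge 1 > 0$ (if some $b_i$ vanished then $\tau = \infty$, $H = \emptyset$, and there is nothing to prove). The crux is the structural claim that \emph{every} optimal solution $x^*$ of $\L$ satisfies $x^*_t = 1$ for all $t \in H$; granting this, both inequalities are routine bookkeeping.

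The upper bound is direct. Let $x'$ be optimal for $\skim{\L}{\tau}$, which we may take to satisfy $x'_t = 0$ for $t \in H$ (those variables have zero coefficient in the skimmed objective and only consume budget). Define $\tilde x$ by $\tilde x_t = x'_t$ for $t \notin H$ and $\tilde x_t = 1$ for $t \in H$. Since $\RHS(\skim{\L}{\tau}) = b - \sum_{t\in H}A^t$ and $x'$ is feasible for $\skim{\L}{\tau}$ and vanishes on $H$, we get $\sum_t A^t \tilde x_t = \sum_{t\notin H} A^t x'_t + \sum_{t\in H}A^t \le \RHS(\skim{\L}{\tau}) + \sum_{t\in H}A^t = b$, so $\tilde x$ is feasible for $\L$ with value $\sum_{t\notin H}\pi_t x'_t + \sum_{t\in H}\pi_t = \OPT(\skim{\L}{\tau}) + \sum_{t\in H}\pi_t \le \OPT(\L)$. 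For the lower bound I would invoke the claim: let $x^*$ be optimal for $\L$, so $x^*_t = 1$ on $H$, hence $\sum_{t\in H}A^t = \sum_{t\in H}A^t x^*_t \le \sum_t A^t x^*_t \le b$, whence $\RHS(\skim{\L}{\tau}) = b - \sum_{t\in H}A^t \ge 0$ and the skimmed LP is feasible. Let $\hat x$ agree with $x^*$ off $H$ and be $0$ on $H$; then for each row $i$, $\sum_t a_{it}\hat x_t = \sum_{t\notin H}a_{it}x^*_t = \sum_t a_{it}x^*_t - \sum_{t\in H}a_{it} \le b_i - \sum_{t\in H}a_{it}$, which is exactly the $i$-th component of $\RHS(\skim{\L}{\tau})$, so $\hat x$ is feasible for $\skim{\L}{\tau}$ with value $\sum_{t\notin H}\pi_t x^*_t = \OPT(\L) - \sum_{t\in H}\pi_t$. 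Thus $\OPT(\skim{\L}{\tau}) \ge \OPT(\L) - \sum_{t\in H}\pi_t$, and together with the upper bound this finishes the lemma.

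The remaining — and only nontrivial — step is the claim, which I would prove by a ``scale down, then pump up'' exchange argument. Suppose toward a contradiction that $x^*$ is optimal for $\L$ but $x^*_{t_0} < 1$ for some $t_0$ with $\pi_{t_0} > \tau \ge \OPT(\L)/\min_i b_i$. Put $\eta := \min\{1 - x^*_{t_0},\ \min_i b_i\} > 0$, $\eta' := \eta/\min_i b_i \in (0,1]$, and set $x' := (1-\eta')\,x^* + \eta\, e_{t_0}$ with $e_{t_0}$ the $t_0$-th unit vector. Then $x' \in [0,1]^n$: for $t \ne t_0$, $x'_t = (1-\eta')x^*_t \in [0, x^*_t]$, and $x'_{t_0} = (1-\eta')x^*_{t_0} + \eta \le x^*_{t_0} + \eta \le 1$ by the choice of $\eta$. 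For each row $i$, using $a_{it_0} \le 1$, $\sum_t a_{it}x^*_t \le b_i$, and $b_i \ge \min_i b_i$,
\[
\sum_t a_{it}x'_t \;=\; (1-\eta')\sum_t a_{it}x^*_t + \eta\, a_{it_0} \;\le\; (1-\eta')\,b_i + \eta \;=\; b_i + \eta\Bigl(1 - \tfrac{b_i}{\min_i b_i}\Bigr) \;\le\; b_i ,
\]
so $x'$ is feasible for $\L$; and
\[
\sum_t \pi_t x'_t \;=\; (1-\eta')\,\OPT(\L) + \eta\,\pi_{t_0} \;=\; \OPT(\L) + \eta\Bigl(\pi_{t_0} - \tfrac{\OPT(\L)}{\min_i b_i}\Bigr) \;>\; \OPT(\L),
\]
since $\eta > 0$ and $\pi_{t_0} > \OPT(\L)/\min_i b_i$. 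This contradicts the optimality of $x^*$, proving the claim.

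I expect the calibration of $\eta'$ to be the only delicate point: it must be chosen large enough that the budget $\eta' b_i$ released by shrinking $x^*$ dominates the additional load $\eta\,a_{it_0}\le\eta$ in \emph{every} row simultaneously — this is exactly where the normalization $a_{it}\le 1$ (and hence $\min_i b_i$, rather than the width) enters — yet small enough that $x'$ stays in the unit box. Once that balance is struck, the strict improvement in the objective is forced mechanically by the hypothesis $\pi_{t_0} > \tau \ge \OPT(\L)/\min_i b_i$, and the two set-inclusion arguments above are elementary.
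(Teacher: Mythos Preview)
Your proof is correct and complete. The two bookkeeping steps (upper and lower bound) are exactly what is needed, and your primal exchange argument for the structural claim is clean and airtight. One small wording slip: you write ``so in particular $\min_i b_i \ge 1$''; the normalization $a_{it}\in[0,1]$ does not by itself give that, but you only actually use $\min_i b_i > 0$, which follows from the paper's standing width assumption.

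The paper takes a genuinely different route to the same structural claim: it uses LP duality and complementary slackness. Taking an optimal primal--dual pair $(x^*,(p^*,y^*))$, dual feasibility forces $y^*_t>0$ whenever $\pi_t>\ip{p^*}{A^t}$, and complementary slackness then yields $x^*_t=1$; since $A^t\le\ones$ we have $\ip{p^*}{A^t}\le\|p^*\|_1$, and strong duality gives $\|p^*\|_1\le \OPT(\L)/\min_i b_i\le\tau$. Your approach is more elementary---no duality, just a direct scale-down-and-pump-up improvement---and also establishes the slightly stronger fact that \emph{every} optimal primal solution is integral on $H$, which the duality route gives too but less transparently. The duality argument, on the other hand, reveals the ``right'' threshold $\|p^*\|_1$ as an intrinsic quantity and makes the connection to the value/weight knapsack rule more explicit.
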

		
\begin{proof}
  The dual of $\L$ is given by 
  \begin{align*}
    \D = \min \left\{ \sum_{i \in [m]} b_i p_i + \sum_{t \in [n]} y_t\ \mid \ 
      \ip{p}{A^t} + y_t \geq \pi_t \;\ \forall t \in [n],
      \ p, y \geq 0 \right\},
  \end{align*}
  where the variable $y_t$ is relative to the primal constraint $x_t \le 1$.
		
  Let $(x^*, (p^*, y^*))$ be an optimal primal-dual pair for $\L$ and
  $\D$. 
  We claim that
  $x^*$ selects all items with $\pi_t > \|p^*\|_1$. To see this,  
  observe that $\pi_t > \ip{p^*}{A^t}$ means $y^*_t > 0$ and hence $x^*_t = 1$ by
  complementary slackness. But since the entries
  of $A^t$ are upper bounded by 1, we have $\ip{p^*}{A^t} 
  \le \|p^*\|_1$. Thus, $\pi_t > \|p^*\|_1
  \ge \ip{p^*}{A^t}$ implies $x^*_t = 1$, giving the claim. 
  
  This
  means that 
  \[ \OPT\big(\skim{\L}{\|p^*\|_1}\big) + \sum_{t : \pi_t > \| p^* \|_1} \pi_t =
  \OPT(\L).\] Now it suffices to show that $\|p^* \|_1 \leq
  \frac{\OPT(\L)}{\min_i b_i}$.  This is easy from strong duality:
    $\OPT(\L) = \OPT(\D) \geq \sum_{i \in [m]} b_i p_i^* \geq (\min_i b_i) \cdot
    \| p^* \|_1$, 
  which completes the proof.
\end{proof}

	\new{Lemma ensures $\frac{\OPT(\L)}{\min_i
    b_i}$ satisfies criteria~(S1) and~(S3) for being a $\Theta(\OPT(\L))$-good skimming threshold notice. But if we assume that $\L$ has width at least $B$, the slightly higher threshold $\tau = \frac{2 \OPT(\L)}{\min_i b_i} \le \frac{2 \OPT(\L)}{B}$ satisfies (S2) as well. Indeed, at most $\frac{B}{2}$ items with value more than $\tau$ (else picking $\frac{B}{2} +1$ of them would satisfy the constraints of the LP $\L$ and give value strictly greater than $\OPT(\L)$), and thus $\RHS(\skim{\L}{\tau}) \ge \RHS(\L) - \frac{B}{2}\,\ones \ge \frac{1}{2} \RHS(\L)$. Therefore, this threshold $\tau$ is indeed a ``perfect'' $\Theta(\OPT(\L))$-good skimming threshold for $\L$.
	
  Hence we want to estimate $\tau$ in order to pick up the high-value items in an online fashion, i.e., we seem to need to estimate $\OPT(\L)$ with good probability (in the presence of unboundedly large item values $\pi_t$). And hence we seem to be back at square one. However, recall that this is only if we want a perfect threshold. By excluding a small number of items with the largest values $\pi_t$, we now show how to find a good-enough skimming threshold.} To make this precise, for $K \in \mathbb{N}$ let $\L_{<K}$ denote the LP obtained by zeroing out the values $\pi_t$ of the $K$ highest valued items (without changing the right hand side).\footnote{We assume items have distinct values for simplicity; one can break ties consistently and get the same result.}


	\new{First we show that the optimal value of this modified LP $\L_{<K}$ can be (over)estimated by using the optimal value of the sampled LP $\bL^{\le \frac{n}{4}}$.}

\begin{lemma} 
  \label{lemma:optMinusTop} 
  For $\L \in \PLPn$ with width at least $16B$ and $\e \leq \frac{1}{32}$, and $K = 128 \e B$,
  \[ \Pr\bigg[ \tsty \OPT(\bL^{\le \frac{n}{4}}) \ge \frac{\OPT(\L_{<K})}{8}  \bigg] \geq 1 -
  6\delta. \]
\end{lemma}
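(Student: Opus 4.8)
The plan is to reduce to exhibiting a good solution for a random \emph{restricted} LP, and then to split into two cases according to how large the largest surviving profit of $\L_{<K}$ is. Write $H$ for the set of the $K$ highest-valued items of $\L$ (so $\L_{<K}$ is $\L$ with the profits on $H$ set to zero), abbreviate $V := \OPT(\L_{<K})$ and $\lambda := \log((m+1)/\delta)$ (so $B = \lambda/\e^2$ and $K = 128\e B$), and let $\pi^{\max} := \max_{t \notin H}\pi_t$ be the largest profit among the non-zeroed items. We may assume $V > 0$ (otherwise there is nothing to prove) and $n \ge K$. By Definition~\ref{def:rest-lps} and exchangeability, $\bL^{\le n/4}$ has the same distribution as $\L^I$ for a uniformly random $(n/4)$-subset $I \subseteq [n]$, an LP with right-hand side $\tfrac14 b$; hence it suffices to produce, with probability at least $1 - 6\delta$, a feasible solution of $\L^I$ of value at least $V/8$. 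The case split is on whether $\pi^{\max} \le \theta$ or $\pi^{\max} > \theta$, where $\theta := V/(512\,\lambda)$.

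\emph{Case $\pi^{\max} \le \theta$.} Let $x^*$ be an optimal solution of $\L_{<K}$; without loss of generality $x^*_t = 0$ for $t\in H$ (this only relaxes the packing constraints and loses no value), so $\pi_t x^*_t \le \pi^{\max}\le\theta$ for all $t$. I will take the restriction $x^*|_I$, scale it by $c := (1+\sqrt2\,\e)^{-1}$, and show $c\,x^*|_I$ is feasible for $\L^I$ with value $\ge V/8$. For the value: $\sum_{t\in I}\pi_t x^*_t$ has mean $\tfrac14\sum_t\pi_t x^*_t = V/4$ and summands in $[0,\theta]$, so Corollary~\ref{cor:multiChernoff} (with $M=\theta$, $s=n/4$, deviation $V/16$) and the choice of $\theta$ give $\sum_{t\in I}\pi_t x^*_t \ge \tfrac{3V}{16}$ with probability $\ge 1-2e^{-\lambda}\ge 1-2\delta$. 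For feasibility: fix a row $i$; $\sum_{t\in I}A^t_i x^*_t$ has mean $\le \tfrac14 b_i$ and summands in $[0,1]$, so Corollary~\ref{cor:multiChernoff} with deviation $\tfrac{\e b_i}{2\sqrt2}$ and the width bound $b_i\ge 16B$ shows this sum exceeds $\tfrac14 b_i(1+\sqrt2\,\e)$ with probability at most $2e^{-\lambda}\le\tfrac{2\delta}{m+1}$; a union bound over the $m$ rows gives, with probability $\ge 1-2\delta$, that every row of $\L^I$ is satisfied by $c\,x^*|_I$. On the intersection (probability $\ge 1-4\delta$), $c\,x^*|_I$ is feasible with value $(1+\sqrt2\,\e)^{-1}\tfrac{3V}{16}\ge \tfrac V8$, using $\e\le\tfrac1{32}$.

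\emph{Case $\pi^{\max} > \theta$.} Then every item of $H$ has profit $> \pi^{\max} > \theta$ in the original LP $\L$ (and these profits survive in $\bL^{\le n/4}$, which is built from $\L$, not from $\L_{<K}$). Since $I$ is a uniformly random $(n/4)$-subset and $H$ is a fixed $K$-set, Corollary~\ref{cor:multiChernoff} (with $M=1$, $s=n/4$, deviation $K/8$) together with $K = 128\e B \ge 128\lambda$ gives $|I\cap H| \ge K/8$ with probability $\ge 1-2e^{-K/128}\ge 1-2\delta$. On that event, setting $x_t=1$ for $t\in I\cap H$ and $x_t=0$ otherwise gives a solution of $\L^I$: it is feasible because $\sum_{t\in I\cap H}A^t_i \le |I\cap H| \le K \le 4B \le \tfrac14 b_i$ for each $i$ (using $\e\le\tfrac1{32}$ and $b_i\ge 16B$), and its value is $\sum_{t\in I\cap H}\pi_t > \tfrac K8\,\theta = \tfrac{16\lambda}{\e}\cdot\tfrac{V}{512\lambda} = \tfrac{V}{32\e} \ge V \ge \tfrac V8$.

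Since which case occurs is determined by the (deterministic) data of $\L$, the lemma follows by taking the bound for the relevant case; the failure probability is at most $4\delta\le 6\delta$. The one point that needs care—and the real content of the choice $K = 128\e B$—is calibrating $\theta$: it must be small enough that the bounded-difference tail bound on the restricted objective in the first case has exponent $\Omega(\lambda)$ even after leaving the slack needed to rescale $x^*$ back into feasibility, yet large enough that the $\gtrsim K$ sampled high-profit items in the second case already sum to $\Omega(V)$. The displayed $\theta = V/(512\lambda)$ threads this needle because $K\theta = \Theta(V/\e)\gg V$ while $\theta\lambda/V$ is a small absolute constant.
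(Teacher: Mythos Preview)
Your proof is correct, but it takes a somewhat different route from the paper's. The paper establishes \emph{two} lower bounds on $\OPT(\bL^{\le n/4})$ that hold simultaneously with probability $\ge 1-6\delta$: one from the Sampled LP Lemma~\ref{lemma:PCLB}(b) applied to $\L_{<K}$, yielding $\OPT(\bL^{\le n/4}) \ge \tfrac14 \OPT(\L_{<K}) - \tfrac{KM}{16}$ (where $M$ is the $K$-th largest profit), and one from picking the sampled top-$K$ items, yielding $\OPT(\bL^{\le n/4}) \ge \tfrac{KM}{16}$; \emph{averaging} these two inequalities cancels the $KM$ term and gives the claim directly, with no case analysis. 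You instead do a deterministic case split on whether $\pi^{\max}$ is below or above a calibrated threshold~$\theta$: in the first case you essentially re-derive the content of Lemma~\ref{lemma:PCLB}(a)--(b) from scratch via Bernstein, and in the second you use the high-value items. Both arguments rest on the same two mechanisms (concentration of a restricted optimal solution, and counting high-profit sampled items), but the paper's averaging trick is slicker and reuses the Sampled LP Lemma as a black box, while your version is more self-contained and makes the role of the choice $K = 128\e B$ (via the threshold~$\theta$) more explicit.
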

		
\begin{proof}
  Let
  $M$ be the value of the $K^{th}$ largest value of $\pi$. We use Lemma~\ref{lemma:PCLB}(b)
on $\L_{<K}$ (using the fact that all item values in $\L_{<K}$ are at most $M$ and that since there are no covering constraints $\L_{<K}(1-\e') = \L_{<K}$) to get with probability at
  least $1-4\delta$
  %
  \begin{align}
    \OPT(\bL^{\le \frac{n}{4}}) \ge \OPT((\bL_{<K})^{\le \frac{n}{4}}) &\ge \frac{\OPT(\L_{<K})}{4} - 8 M B \e = \frac{\OPT(\L_{<K})}{4} - \frac{K M}{16}.
    \label{eq:sampleMinusTop}
  \end{align}
  Moreover, we claim that $\OPT(\bL^{\le \frac{n}{4}}) \ge (KM)/16$ with probability at least $1-2\delta$. For that, let $\bT$ be the indices of the top $K = 128\e B$ values of $\bpi$ and consider the solution $\widehat{\bx}$ that
  sets for all $t \in [n/4]$, $\widehat{\bx}_t = 1$ if $t \in \bT$ and $\widehat{\bx}_t = 0$ otherwise. We know that $\RHS(\bL^{\le \frac{n}{4}}) = \frac14
  \RHS(\L) \geq 4 B\ones$;
  moreover, this solution $\widehat{\bx}$ can (at most) pick all $K = 128 \e B \leq
  4 B$ elements of $\bT$ (since $\e \leq 1/32$), so $\widehat{\bx}$ is
  feasible for $\bL^{\le \frac{n}{4}}$. 
  The expected number of items picked by $\widehat{\bx}$ is $\E[|\bT \cap [n/4]|] = K/4$, and the Bernstein's inequality Corollary~\ref{cor:multiChernoff} implies $\Pr(|\bT \cap [n/4]| < K/16) \le 2\exp(-
  \e B) \le 2 \frac{\delta}{m}$. Each item with $\widehat{\bx}_t = 1$ has value at least $M$, so with probability at least $1 - 2 \delta$
  \begin{gather}
    \OPT(\bL^{\le\frac{n}{4}}) \geq \sum_{t \in \bT \cap [n/4]} \bpi_t \widehat{\bx}_t \geq \frac{K M}{16}. \label{eq:SampleMinusTopTwo}
  \end{gather}
  Taking a union bound we have that with probability at least $1-6\delta$ inequalities \eqref{eq:sampleMinusTop}
  and~(\ref{eq:SampleMinusTopTwo}) hold simultaneously. When this holds we can take the average of these inequalities, and so with probability at least $1-6\delta$ $\OPT(\bL^{\le \frac{n}{4}}) \ge \frac{\OPT(\L_{<K})}{8}$, which concludes the proof. 
\end{proof}

	\new{Next, the optimal value of the modified LP $\L_{<K}$ can be used to obtain a good skimming threshold. 
	
	\begin{lemma} \label{lemma:thresholdMinusTop} 
  Consider a feasible \PLP $\L$  with width at least $B$. Suppose $\e \le \frac{1}{3}$ and $K$ is at most $\frac{\e}{2}$ times the width of $\L$. Then any $\tau \ge \frac{3 \OPT(\L_{<K})}{B}$ is a 
$\frac{\tau B}{\constC}$-good skimming threshold for $\L$.		
	\end{lemma}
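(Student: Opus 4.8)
Write $w$ for the width of $\L$, so after the stated scaling all entries of the columns $A^t$ lie in $[0,1]$ and $b_i\ge w\ge B$ for every $i$. Put $H:=\{t:\pi_t>\tau\}$, let $T$ be the set of the $K$ highest-valued items, and let $R:=H\setminus T$. Taking $\Delta:=\frac{\tau B}{\constC}$, condition (S3) is immediate since $\frac{\constC\,\Delta}{B}=\tau$, so only (S2) and (S1) require work. For (S2) I would bound $|H|\le|H\cap T|+|R|\le K+|R|$. Any $\min\{|R|,w\}$ items of $R$ constitute a feasible solution of $\L_{<K}$ (their columns have entries in $[0,1]$ and each keeps its original value $\pi_t>\tau$ in $\L_{<K}$), so $\OPT(\L_{<K})>\min\{|R|,w\}\cdot\tau\ge\min\{|R|,w\}\cdot\frac{3\OPT(\L_{<K})}{B}$; as $w\ge B$ this forces $|R|<B/3$ (and $R=\emptyset$ in the degenerate case $\OPT(\L_{<K})=0$). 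Using $K\le\frac{\e}{2}w\le\frac{w}{6}$ and $B\le w$ we get $|H|<\frac{w}{6}+\frac{w}{3}=\frac{w}{2}$, hence $\RHS(\skimLt)_i=b_i-\sum_{t\in H}A^t_i\ge b_i-|H|>b_i-\frac{w}{2}\ge\frac{b_i}{2}$, which is (S2).

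For (S1) I would combine the perfect-threshold Lemma~\ref{lemma:pickTop} with a split on the dual. Let $(x^*,(p^*,y^*))$ be an optimal primal--dual pair for $\L$; the argument in the proof of Lemma~\ref{lemma:pickTop} shows $x^*_t=1$ whenever $\pi_t>\|p^*\|_1$, and strong duality gives $\|p^*\|_1\le\OPT(\L)/\min_i b_i\le\OPT(\L)/w$. \textbf{If $\tau\ge\|p^*\|_1$}, then $x^*_t=1$ for all $t\in H$, so the restriction of $x^*$ to $[n]\setminus H$ uses exactly the right-hand side $\RHS(\L)-\sum_{t\in H}A^t=\RHS(\skimLt)$, hence is feasible for $\skimLt$, and $\OPT(\skimLt)+\sum_{t\in H}\pi_t\ge\sum_{t\notin H}\pi_t x^*_t+\sum_{t\in H}\pi_t x^*_t=\OPT(\L)$; so (S1) holds with equality.

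\textbf{The crux is the case $\tau<\|p^*\|_1$.} Here $\OPT(\L_{<K})\le\frac{\tau B}{3}<\frac{B}{3w}\OPT(\L)\le\tfrac13\OPT(\L)$, so the top $K$ items carry most of the value. Since $\tau\ge\frac{3\OPT(\L_{<K})}{B}\ge\frac{\OPT(\L_{<K})}{\min_i b_i}$, Lemma~\ref{lemma:pickTop} applied to the feasible LP $\L_{<K}$ gives $\OPT(\skim{\L_{<K}}{\tau})+\sum_{t\in R}\pi_t=\OPT(\L_{<K})$ (the items of $\L_{<K}$ exceeding $\tau$ are exactly $R$). Combining this with the elementary bound $\OPT(\L)\le\OPT(\L_{<K})+\sum_{t\in T}\pi_t$ (restrict $x^*$ to $[n]\setminus T$) and with $\sum_{t\in T\setminus H}\pi_t\le K\tau$ yields $\OPT(\skim{\L_{<K}}{\tau})+\sum_{t\in H}\pi_t\ge\OPT(\L)-K\tau$. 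Finally I would transfer an optimal solution of $\skim{\L_{<K}}{\tau}$ to $\skimLt$ by scaling it down by $(1-\tfrac{3\e}{4})$: this absorbs the additional right-hand-side deficit $\sum_{t\in H\cap T}A^t$ (entries $\le K$) against $\RHS(\skim{\L_{<K}}{\tau})_i\ge b_i-|R|>\tfrac23 b_i$, and since $\skim{\L_{<K}}{\tau}$ zeros out the superset $T\cup H$ of the items $H$ zeroed by $\skimLt$, the objective of $\skimLt$ dominates that of $\skim{\L_{<K}}{\tau}$ coordinatewise, so $\OPT(\skimLt)\ge(1-\tfrac{3\e}{4})\OPT(\skim{\L_{<K}}{\tau})$. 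Hence $\OPT(\skimLt)+\sum_{t\in H}\pi_t\ge\OPT(\L)-K\tau-\tfrac{3\e}{4}\OPT(\L_{<K})\ge(1-\tfrac{\e}{2}-\tfrac{\e}{4})\OPT(\L)\ge(1-\e)\OPT(\L)$, using $K\tau<\frac{\e}{2}\OPT(\L)$ (from $K\le\frac{\e}{2}w$ and $\tau<\OPT(\L)/w$) and $\OPT(\L_{<K})<\tfrac13\OPT(\L)$.

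I expect the book-keeping in this last case to be the main obstacle: one must track carefully how the right-hand sides and objectives of the four LPs $\L$, $\L_{<K}$, $\skim{\L_{<K}}{\tau}$ and $\skimLt$ relate, and set up the $(1-\tfrac{3\e}{4})$-scaling conversion so that all the constant losses (the $K\tau$ term and the scaling of the residual LP) sum to at most $\e\cdot\OPT(\L)$; the hypotheses $\e\le\frac13$ and $K\le\frac{\e}{2}w$ are used precisely to make this accounting close.
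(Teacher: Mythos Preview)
Your proof is correct, and the core structure matches the paper: both establish (S2) by bounding $|R|=|H\setminus T|<B/3$ via a feasibility argument in $\L_{<K}$, and for (S1) both apply Lemma~\ref{lemma:pickTop} to $\L_{<K}$ (rather than to $\L$) and then transfer a solution from $\skim{\L_{<K}}{\tau}$ to $\skimLt$ by scaling.

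The one genuine difference is in how the items of $T\setminus H$ are handled in (S1). You split on whether $\tau\ge\|p^*\|_1$: in the hard case $\tau<\|p^*\|_1$ you simply discard the value $\sum_{t\in T\setminus H}\pi_t\le K\tau$ and use the case hypothesis $\tau<\OPT(\L)/w$ to bound this loss by $\tfrac{\e}{2}\OPT(\L)$. The paper avoids the case split entirely by \emph{incorporating} these items into the $\skimLt$ solution: it takes the scaled optimum of $\skim{\L_{<K}}{\tau}$ on $\overline{T}$ and additionally sets $\widehat{x}_t=1$ for $t\in T\setminus H$. These items have nonzero value in $\skimLt$ (only $H$ is zeroed there), so their full value is retained, and the extra occupancy of at most $K$ per coordinate is absorbed by the same scaling slack. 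This gives directly $\OPT(\skimLt)\ge(1-\e)\OPT(\skim{\L_{<K}}{\tau})+\sum_{t\in T\setminus H}\pi_t$, which combined with the inequality $\OPT(\skim{\L_{<K}}{\tau})+\sum_{t\in H}\pi_t+\sum_{t\in T\setminus H}\pi_t\ge\OPT(\L)$ (your inequality before discarding the $K\tau$ term) finishes (S1) with no need to bound $K\tau$ or to know anything about $\tau$ beyond $\tau\ge\frac{3\OPT(\L_{<K})}{B}$. So your case split is a correct but unnecessary detour; the paper's trick of picking up $T\setminus H$ at value $1$ in $\skimLt$ makes the argument uniform in $\tau$.
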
}
	
	\begin{proof}
		We first show that condition (1) of a good skimming threshold holds, namely $\OPT(\skim{\L}{\tau}) + \sum_{t : \pi_t > \tau} \pi_t \ge (1-\e)\, \OPT(\L)$. Let $H = \{t : \pi_t > \tau\}$ be the high-value items, and let $T \subseteq [n]$ denote the index set of the $K$ items with largest value $\pi_t$ (the ``top-value'' items).
		
		 Since the width of $\L_{<K}$ is the same as that of $\L$ (which is at least $B$), we have that $\tau \ge \frac{3 \OPT(\L_{<K})}{\min_i \RHS(\L_{<K})_i}$ and thus from Lemma \ref{lemma:pickTop} we have 
		\begin{align*}
			\OPT(\skim{\L_{<K}}{\tau}) + \sum_{t \notin T, t \in H} \pi_t = \OPT(\L_{<k}).
		\end{align*}
		Adding all the top valued items $T$ to both sides gives 
		\begin{align}
			\OPT(\skim{\L_{<K}}{\tau}) + \sum_{t \in H} \pi_t + \sum_{t \in T, t \notin H} \pi_t = \OPT(\L_{<k}) + \sum_{t \in T} \pi_t \ge \OPT(\L). \label{eq:thresholdMinusTop1}
		\end{align}		
		To connect the left-hand side of this expression with $\OPT(\skim{\L}{\tau})$ we claim that 
		\begin{align}
			\OPT(\skim{\L}{\tau}) \ge (1-\e)\, \OPT(\skim{\L_{<K}}{\tau}) + \sum_{t \in T, t \notin H} \pi_t. \label{eq:thresholdMinusTop2}
		\end{align}
		To see this, consider an optimal solution $x^*$ for $\skim{\L_{<K}}{\tau}$ (which only accrues value from items in $\overline{H}$), and construct the vector $\widehat{x}$ by setting $\widehat{x}_t = (1-\e) x^*_t$ for $t \in \overline{H} \cap \overline{T}$, $\widehat{x}_t = 1$ for $t \in \overline{H} \cap T$, and $\widehat{x}_t = 0$ otherwise. Notice that $\widehat{x}$ is feasible for $\skim{\L}{\tau}$: First, $\RHS(\skim{\L}{\tau}) \ge \RHS(\skim{\L_{<K}}{\tau}) - K \ones$, since only the $K$ items in $T$ can be additionally skimmed in $\L$ compared to $\L_{<k}$. Thus, the usage of the solution $\widehat{x}$ is
	\begin{align*}
		 \sum_{t : t \le \tau} A^t \widehat{x}_t &\le (1-\e) \sum_{t : t \le \tau} A^t x^*_t + K \ones \le (1-\e)\, \RHS(\skim{\L_{<K}}{\tau}) + K \ones\\
		 	&\le (1-\e)\,\RHS(\skim{\L}{\tau})  + 2 K \ones \le \RHS(\skim{\L}{\tau}),
	\end{align*}
	where the inequality $2K \ones \le \e\, \RHS(\skim{\L}{\tau})$ used in the last step follows from the fact that $\RHS(\skim{\L}{\tau}) \ge \frac{1}{2}\,\RHS(\L)$ (condition (2) proved below) and  $\RHS(\L) \ge \frac{2 K}{\e} \ones$ (assumption). Therefore, the solution $\widehat{x}$ is feasible for $\skim{\L}{\tau}$. Since it has value equal to the right-hand side of inequality \eqref{eq:thresholdMinusTop2}, it proves the validity of this inequality.
	
	Putting inequalities \eqref{eq:thresholdMinusTop1} and \ref{eq:thresholdMinusTop2} together and using the non-negativity of the $\pi_t$'s gives $\OPT(\skim{\L}{\tau}) + \sum_{t : \pi_t > \tau} \pi_t \ge (1-\e) \OPT(\L)$, and thus condition (1) holds. 
		
	Condition (3) in the definition of good skimming threshold trivially holds, so to conclude the proof we show condition (2), namely $\RHS(\skim{\L}{\tau}) \ge \frac{1}{2} \RHS(\L)$. As argued before, there are at most $\frac{B}{3}$ items in $H \cap \overline{T}$, otherwise we could pick $\frac{B}{3} + 1$ items in this set to obtain a feasible solution for $\L_{<K}$ (since $A^t \le \ones$ for all $t$) with value more than $\OPT(\L_{<K})$, a contradiction. Therefore, the right-hand side of $\skim{\L}{\tau}$ equals
		\begin{align*}
			\RHS(\L) - \sum_{t \in H} A^t \ge \RHS(\L) - \left(|H \cap \bar{T}| + |T|\right)\cdot\ones \ge \RHS(\L) - \left(\frac{B}{3} + K \right)\cdot \ones.
		\end{align*}
		Since the width of $\L$ is at least $B$ and at least $\frac{2 K}{\e}$, this inequality gives $\skim{\L}{\tau} \ge \RHS(\L) (1 - \frac{1}{3} - \frac{\e}{2}) \ge \frac{1}{2} \RHS(\L)$, where the last inequality uses $\e \le \frac{1}{3}$. This concludes the proof of the lemma.
	\end{proof}	

	Therefore, putting Lemmas \ref{lemma:optMinusTop} and \ref{lemma:thresholdMinusTop} together, we get that with probability at least $1-6\delta$ the value $\btau := \frac{\constC}{B} \, \OPT(\bL^{\le \frac{n}{4}})$ is an $\OPT(\L)$-good skimming threshold for $\L$; in fact, as will be more useful, it is an $\OPT(\L)$-good skimming threshold for $\bL^{>\frac{n}{4}}$ and $\bL^{>\frac{n}{2}}$ (this simply uses the additional fact that $\OPT(\L_{<K})$ is at least $\OPT((\bL^{>\frac{n}{4}})_{<K})$ and $\OPT((\bL^{>\frac{n}{2}})_{<K})$).
	
\begin{lemma}[Finding Good Skimming Threshold]
  \label{lemma:goodThreshold} 
  Consider $\L \in \PLPn$ with width at least $512 B$, $\e \leq \frac{1}{32}$, and $\delta > 0$. Define $\btau := \frac{\constC}{B}
  \OPT(\bL^{\le \frac{n}{4}})$. Then $\btau$ is an $\OPT(\L)$-good skimming threshold for $\bL^{>\frac{n}{4}}$ and $\bL^{>\frac{n}{2}}$ with probability at least $1-6\delta$.
\end{lemma}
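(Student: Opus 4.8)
The plan is to glue together Lemma~\ref{lemma:optMinusTop} and Lemma~\ref{lemma:thresholdMinusTop} using two easy monotonicity observations. Put $K := 128\e B$. Since $\L$ has width at least $512B \ge 16B$ and $\e \le \frac{1}{32}$, Lemma~\ref{lemma:optMinusTop} applies and gives that with probability at least $1-6\delta$ we have $\OPT(\bL^{\le \frac{n}{4}}) \ge \frac18\,\OPT(\L_{<K})$; all of the remaining steps are deterministic once we condition on this event, so the claimed probability bound is $1-6\delta$. On this event,
\[
  \btau \;=\; \frac{\constC}{B}\,\OPT\big(\bL^{\le \frac{n}{4}}\big) \;\ge\; \frac{\constC}{8B}\,\OPT(\L_{<K}) \;=\; \frac{3\,\OPT(\L_{<K})}{B}.
\]

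Next I would record that for every index subset $I \subseteq [n]$ one has $\OPT((\bL^{I})_{<K}) \le \OPT(\L_{<K})$: a feasible solution of $(\bL^{I})_{<K}$ is supported on $I$ and has usage at most $\frac{|I|}{n}b \le b$, so extending it by zeros yields a feasible solution of $\L$; and each column of $I$ that contributes to the objective of $(\bL^{I})_{<K}$ is, by definition, not among the $K$ largest values \emph{within $I$}, hence a fortiori not among the $K$ largest values of all of $[n]$, so it still contributes with the same coefficient in $\L_{<K}$. Applying this with $I = \{\frac{n}{4}+1,\dots,n\}$ and $I = \{\frac{n}{2}+1,\dots,n\}$ and combining with the previous display, $\btau \ge \frac{3\,\OPT((\bL^{>\frac{n}{4}})_{<K})}{B}$ and $\btau \ge \frac{3\,\OPT((\bL^{>\frac{n}{2}})_{<K})}{B}$. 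The LPs $\bL^{>\frac{n}{4}}$ and $\bL^{>\frac{n}{2}}$ are feasible (every \PLP is, via $x = 0$), their right-hand sides are $\frac34 b$ and $\frac12 b$, so their widths are $\frac34$ and $\frac12$ times that of $\L$, hence at least $384B \ge B$ and $256B \ge B$ respectively, and in both cases $K = 128\e B \le \frac{\e}{2}\cdot(\text{width})$ (this is where width $\ge 512B$ is used, and it is essentially tight for $\bL^{>\frac{n}{2}}$), while $\e \le \frac13$. So Lemma~\ref{lemma:thresholdMinusTop} shows that $\btau$ is a $\frac{\btau B}{\constC}$-good skimming threshold for both $\bL^{>\frac{n}{4}}$ and $\bL^{>\frac{n}{2}}$.

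It remains to upgrade ``$\frac{\btau B}{\constC}$-good'' to ``$\OPT(\L)$-good''. By definition $\frac{\btau B}{\constC} = \OPT(\bL^{\le \frac{n}{4}}) \le \OPT(\L)$, the inequality because a feasible solution of $\bL^{\le \frac{n}{4}}$ (right-hand side $\frac14 b \le b$) extends by zeros to a feasible solution of $\L$ of the same value. Finally, being a $\Delta$-good skimming threshold only becomes easier as $\Delta$ grows: conditions (S1), (S2) of Definition~\ref{def:goodskim} do not involve $\Delta$, and (S3) reads $\tau \le \frac{\constC\,\Delta}{B}$, which is preserved when $\Delta$ is enlarged. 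Hence $\btau$ is an $\OPT(\L)$-good skimming threshold for both $\bL^{>\frac{n}{4}}$ and $\bL^{>\frac{n}{2}}$, as desired. The only point that needs any care is the restriction-monotonicity $\OPT((\bL^{I})_{<K}) \le \OPT(\L_{<K})$ — one must check that the ``top $K$'' set of the sub-index-set is contained in the ``top $K$'' set of $[n]$ so that no objective value is lost in the extension — but this is immediate; everything else is bookkeeping with the constants $512B$, $384B$, $256B$, and $K = 128\e B$.
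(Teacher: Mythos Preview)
Your proof is correct and follows the same route as the paper: combine Lemma~\ref{lemma:optMinusTop} with the monotonicity $\OPT((\bL^{I})_{<K}) \le \OPT(\L_{<K})$, then apply Lemma~\ref{lemma:thresholdMinusTop} to $\bL^{>n/4}$ and $\bL^{>n/2}$, and finally use that $\Delta$-goodness is monotone in $\Delta$ together with $\OPT(\bL^{\le n/4}) \le \OPT(\L)$. One small slip: in your closing parenthetical you write that the top-$K$ set of the sub-index-set is contained in the top-$K$ set of $[n]$, but the containment actually needed (and the one your main argument correctly establishes) is the reverse, namely $(\text{top}_K[n]) \cap I \subseteq \text{top}_K(I)$; equivalently, ``not in top $K$ of $I$'' implies ``not in top $K$ of $[n]$'', exactly as you argued in the body.
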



\subsection{Estimating $\OPT$ and right-hand side of a skimmed LP}
\label{sec:est-rhs-opt}

	\new{In order to solve the skimmed LP $\skimLt$ using the algorithm \LPtoLB, we need to obtain estimates for the optimal value of this LP, as well as its right-hand side (which is now random). Crucially, since $\skimLt$ has only ``small-value'' items (and we assume $\L$ has large width), these quantities can be estimated by looking at that sampled LP $\bL^{\le \frac{n}{3}}$ just like we did in \S\ref{sec:estPC} for packing-covering LPs; a full proof is deferred to Appendix \ref{app:estRhsOpt}.}
	

\begin{lemma}
  \label{lem:est-rhs}
  Let $\e > 0$ be at most a sufficiently small constant. Given $\L \in \PLPn$ with width at least $40B$, let $\tau$ be an $\OPT(\L)$-good skimming threshold for $\L$.  Then with probability at least $1 - 4\delta$,
  \begin{align*}
    \RHS\big(\skim{\bL^{\le \frac{n}{3}}}{\tau}\big) \in (1 \pm 3 \e)\,\frac{1}{2}\, \RHS\big(\skim{\bL^{> \frac{n}{3}}}{\tau}\big)
  \end{align*}
  and with probability at least $1-16\delta$,
  \begin{align*}
    \OPT\big(\skim{\bL^{\le \frac{n}{3}}}{\tau}\big) \in (1 \pm 3\e)\,\frac{1}{2} \,\OPT\big(\skim{\bL^{> \frac{n}{2}}}{\tau}\big) \pm 33 \e B \tau.
  \end{align*}
\end{lemma}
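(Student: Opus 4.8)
The plan is to treat the two claims separately, each as an application of Bernstein's inequality for sampling without replacement (Corollary~\ref{cor:multiChernoff}) and, for the $\OPT$ bound, of the Sampled LP Lemma~\ref{lemma:PCLB}. The key preliminary observation is that the high-value set $H:=\{t:\pi_t>\tau\}$ and the vector $\Sigma:=\sum_{t\in H}A^t$ are \emph{deterministic} (they depend only on $\L$ and the fixed $\tau$), that property (S2) gives $\Sigma\le\tfrac12 b$ coordinatewise, and that property (S3) gives $B\tau\le\constC\,\OPT(\L)$. For the right-hand-side claim, write $\mathbf{Z}:=\sum_{t\le n/3,\ \bpi^t>\tau}\A^t$, a sum over a uniformly random $(n/3)$-subset of the deterministic collection $\{A^t\cdot\ones[\pi^t>\tau]\}_{t\in[n]}$, whose coordinates lie in $[0,1]$ and have mean $\tfrac13\Sigma_i\le\tfrac16 b_i$. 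Since $\RHS(\skim{\bL^{\le n/3}}{\tau})=\tfrac13 b-\mathbf{Z}$ and $\RHS(\skim{\bL^{> n/3}}{\tau})=\tfrac23 b-(\Sigma-\mathbf{Z})$, one checks $\tfrac12\RHS(\skim{\bL^{>n/3}}{\tau})_i-\RHS(\skim{\bL^{\le n/3}}{\tau})_i=\tfrac32(\mathbf{Z}_i-\tfrac13\Sigma_i)$. Applying Corollary~\ref{cor:multiChernoff} to $\mathbf{Z}_i$ with deviation $\Theta(\e b_i)$ and using the width bound $b_i\ge 40B=40\log((m+1)/\delta)/\e^2$ to push the Bernstein exponent past $\log(m/\delta)$ gives $|\mathbf{Z}_i-\tfrac13\Sigma_i|=O(\e b_i)$ with probability $\ge 1-\delta/m$; a union bound over the $\le m_p$ packing rows, together with the fact that on this event $\tfrac12\RHS(\skim{\bL^{>n/3}}{\tau})_i\ge\tfrac16 b_i$, turns the $O(\e b_i)$ slack into the claimed $(1\pm 3\e)$ multiplicative bound.

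For the $\OPT$ claim, let $\L':=\skim{\L}{\tau}$, a packing-only LP with $\RHS(\L')=b-\Sigma\ge\tfrac12 b$ (so its width is still $\Omega(B)$) and with all item values at most $\tau$. The random LPs $\skim{\bL^{\le n/3}}{\tau}$ and $\skim{\bL^{>n/2}}{\tau}$ have exactly the same columns as the restricted LPs $(\bL')^{\le n/3}$ and $(\bL')^{>n/2}$ and differ from them only in their right-hand sides, by $\tfrac13\Sigma-\mathbf{Z}$ and $\tfrac12\Sigma-(\text{the analogous sum over }t>n/2)$ respectively; by the same coordinatewise Bernstein estimate as above, these discrepancies are at most an $O(\e)$ fraction of $\RHS((\bL')^{\le n/3})$ and $\RHS((\bL')^{>n/2})$. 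Since rescaling the right-hand side of a packing LP by $(1\pm O(\e))$ rescales its optimum by $(1\pm O(\e))$, this reduces the problem to comparing $\OPT((\bL')^{\le n/3})$ with $\tfrac12\OPT((\bL')^{>n/2})$, i.e.\ to a statement purely about sampled LPs of the fixed LP $\L'$.

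Now I apply Lemma~\ref{lemma:PCLB} to $\L'$ --- which, being packing-only, is vacuously stable, so parts (c)--(d) apply with $\sigma=0$ --- once with sample size $n/3$ (parts (b) and (c)) and once with sample size $n/2$ (parts (b) and (c)), choosing $\rho:=\Theta\!\big(B\tau/\OPT(\L')\big)$ so that the hypothesis ``item values $\le\rho\,\OPT(\L')\cdot\tfrac{\e^2}{32\log((m+1)/\delta)}$'' is exactly ``item values $\le\tau$''. Each invocation pins $\OPT((\bL')^{I})$ to $\tfrac{|I|}{n}\OPT(\L')\cdot(1\pm\rho\e')$ with $\e'=\e\sqrt{n/|I|}=O(\e)$, and because $\rho$ only ever appears multiplied by $\OPT(\L')$ the product $\rho\e'\cdot\tfrac{|I|}{n}\OPT(\L')$ is uniformly $O(\e B\tau)$; thus each sampled optimum equals the corresponding fraction of $\OPT(\L')$ up to an \emph{additive} $O(\e B\tau)$. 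Combining the two estimates with the right-hand-side correction from the previous paragraph and collecting constants yields $\OPT(\skim{\bL^{\le n/3}}{\tau})\in(1\pm 3\e)\,\tfrac12\,\OPT(\skim{\bL^{>n/2}}{\tau})\pm 33\e B\tau$.

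The main obstacle is precisely this conversion of the ``large item value'' correction into the additive $O(\e B\tau)$ term: when a large share of $\OPT(\L)$ sits in high-value items, $\OPT(\L')$ is small, $\rho$ blows up, and the multiplicative guarantee of Lemma~\ref{lemma:PCLB} degrades, so one must carefully track that the only way $\rho$ enters is as $\rho\cdot\tfrac{|I|}{n}\OPT(\L')=\Theta(B\tau)$; in the degenerate regime $\OPT(\L')=O(\e B\tau)$ one simply observes that every quantity in the inequality is $O(\e B\tau)$ and the claim is immediate. The remaining work is routine bookkeeping of failure probabilities: the right-hand-side part costs one (union-bounded) Bernstein event, i.e.\ at most $2\delta$ (the statement allows $4\delta$); the $\OPT$ part costs two such events for the two relevant windows plus the four applications of Lemma~\ref{lemma:PCLB} (each part costing up to $4\delta$), summing to at most $16\delta$. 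Crucially no step unions over time steps, so no spurious $\log n$ factor is introduced.
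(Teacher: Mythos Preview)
Your approach is essentially the paper's: Bernstein (Corollary~\ref{cor:multiChernoff}) on the high-value mass for the $\RHS$ claim, and Lemma~\ref{lemma:PCLB} applied to the skimmed LP with $\rho=\Theta(B\tau/\OPT(\skim{\L}{\tau}))$ so that the $\rho\e'$-term becomes an additive $O(\e B\tau)$ for the $\OPT$ claim. The paper packages it slightly differently --- it proves an auxiliary lemma showing that for \emph{any} random $pn$-subset $\bK$ with $p\in[\tfrac14,\tfrac34]$ one has $\RHS(\skim{\L^{\bK}}{\tau})\in(1\pm\e)\,p\,\RHS(\skim{\L}{\tau})$ and $\OPT(\skim{\L^{\bK}}{\tau})\in(1\pm\e)\,p\,\OPT(\skim{\L}{\tau})\pm O(\e B\tau)$, then instantiates this once for $\bK=\{1,\dots,n/3\}$ and once for its complement and chains --- whereas you compare the two windows directly; and the paper absorbs the random $\RHS$ by sandwiching with deterministic LPs $\L_\pm:=\skim{\L}{\tau}((1\pm\e)\,\RHS(\skim{\L}{\tau}))$ before invoking Lemma~\ref{lemma:PCLB}, whereas you correct the $\RHS$ after. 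These are equivalent.

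One concrete issue: the $\OPT$ part of the statement as printed compares $\bL^{\le n/3}$ with $\bL^{>n/2}$, but the paper's own proof actually runs the argument with $\bL^{>n/3}$ (the complement of $\bL^{\le n/3}$), and only then does the factor $\tfrac12=\tfrac{n/3}{2n/3}$ emerge. Your plan takes the printed $n/2$ at face value and applies Lemma~\ref{lemma:PCLB} with sample sizes $n/3$ and $n/2$; carrying that arithmetic through gives $\OPT((\bL')^{\le n/3})\approx\tfrac13\OPT(\L')$ versus $\tfrac12\OPT(\L')$ for the other side, i.e.\ a ratio of $\tfrac23$, not $\tfrac12$. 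This is a typo in the statement rather than a flaw in your method; once you use the complement window $>n/3$ (as the paper's proof does), your argument goes through unchanged.
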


\new{Next, we show we can replace this fixed good skimming threshold $\tau$ by the estimated one $\btau$ given by Lemma~\ref{lemma:goodThreshold}. For that, we condition that the set of items that appear on times $1, \ldots, \frac{n}{4}$ is such that $\btau$ is a $\OPT(\L)$-good skimming threshold for $\bL^{>\frac{n}{4}}$, and then apply Lemma \ref{lem:est-rhs} to the LP $\bL^{>\frac{n}{4}}$; taking an average over all such conditionings then gives the following.}

\begin{cor}
  \label{cor:est-lp-params}
    Let $\e > 0$ be at most a sufficiently small constant, and let $\L \in \PLPn$ have width at least $512 B$. Define $\btau := \frac{\constC}{B} \OPT(\bL^{\le \frac{n}{4}})$ as in Lemma~\ref{lemma:goodThreshold}. Then with probability at least $1 - 14\delta$ all of the following hold simultaneously:
	\begin{gather*}
		\btau \textrm{ is a $\OPT(\L)$-good skimming threshold for $\bL^{>\frac{n}{2}}$}\\
    \RHS\big(\skim{\bL^{(\frac{n}{4},\frac{n}{2}]}}{\btau}\big) \in (1 \pm 3 \e)\,\frac{1}{2}\, \RHS\big(\skim{\bL^{> \frac{n}{2}}}{\btau}\big)\\    		
    \OPT\big(\skim{\bL^{(\frac{n}{4},\frac{n}{2}]}}{\btau}\big) \in (1 \pm 3\e)\,\frac{1}{2} \,\OPT\big(\skim{\bL^{> \frac{n}{2}}}{\btau}\big) \pm 33 \e B \btau.
  \end{gather*}
\end{cor}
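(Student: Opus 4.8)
The plan is to derive the corollary from Lemma~\ref{lemma:goodThreshold} and Lemma~\ref{lem:est-rhs} by a conditioning argument that decouples the estimation of the threshold $\btau$ from the estimation of the skimmed LP's parameters. First I would expose the first quarter of the instance, i.e.\ condition on the (unordered) set $S$ of items occupying time steps $1,\dots,\frac n4$. Because a restricted LP does not depend on the order of its columns, $\OPT(\bL^{\le \frac n4})$ — and hence $\btau=\frac{\constC}{B}\,\OPT(\bL^{\le \frac n4})$ — is a deterministic function of $S$; and by exchangeability, conditioned on $S$ the instance $\bL^{>\frac n4}$ is a uniformly random permutation of the fixed packing LP on the columns $[n]\setminus S$. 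Viewed as an instance with $\frac{3n}{4}$ columns, its $\frac13$-point and $\frac23$-point are exactly the original $\frac n2$-point, so its ``front block'' is $\bL^{(\frac{n}{4},\frac{n}{2}]}$ and its ``last two-thirds'' is $\bL^{>\frac n2}$; moreover its width is at least $\frac34\cdot 512B\ge 40B$, since dropping columns and scaling the right-hand side by $\frac34$ scales the width by $\frac34$.

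Next, by Lemma~\ref{lemma:goodThreshold} (applicable since $\L$ has width $\ge 512B$ and $\e\le\frac1{32}$), the event $\cG$ that $\btau$ is an $\OPT(\L)$-good skimming threshold for both $\bL^{>\frac n4}$ and $\bL^{>\frac n2}$ holds with probability at least $1-6\delta$, and the part of $\cG$ concerning $\bL^{>\frac n4}$ is $\sigma(S)$-measurable. On a realization of $S$ for which that part of $\cG$ holds, I would invoke Lemma~\ref{lem:est-rhs} with the (now fixed) LP $\bL^{>\frac n4}$ and the fixed threshold $\btau$: this gives, with conditional probability at least $1-4\delta$, the right-hand-side estimate, and with conditional probability at least $1-16\delta$, the optimum estimate. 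Unwinding the split of $\bL^{>\frac n4}$ into its front block (which is $\bL^{(\frac{n}{4},\frac{n}{2}]}$) and the relevant tail block (which is $\bL^{>\frac n2}$, or a deeper sub-instance thereof) turns these into the two displayed relations between $\skim{\bL^{(\frac{n}{4},\frac{n}{2}]}}{\btau}$ and $\skim{\bL^{>\frac n2}}{\btau}$. The first displayed conclusion — that $\btau$ is an $\OPT(\L)$-good skimming threshold for $\bL^{>\frac n2}$ — is the remaining half of $\cG$. Finally I would integrate over $S$ and take a union bound over the failure events, observing that the optimum-estimate event of Lemma~\ref{lem:est-rhs} subsumes its right-hand-side event and that the concentration events inside Lemma~\ref{lemma:goodThreshold} act on the disjoint column block $S$, which yields the claimed probability $1-14\delta$.

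The step I expect to be the main obstacle is making the conditioning watertight and checking that the two lemmas compose cleanly. One must verify (i) that $\btau$ is genuinely $\sigma(S)$-measurable and that, after conditioning on $S$, $\bL^{>\frac n4}$ is still a uniform random permutation, so that Lemma~\ref{lem:est-rhs} — a statement about randomly permuted LPs — applies to it verbatim; and (ii) that the hypothesis of Lemma~\ref{lem:est-rhs} is met, namely that an $\OPT(\L)$-good skimming threshold for $\bL^{>\frac n4}$ is good enough even though the lemma is phrased for an $\OPT(\bL^{>\frac n4})$-good threshold. Point (ii) is harmless because the only conditions of Definition~\ref{def:goodskim} that actually enter Lemma~\ref{lem:est-rhs} are (S1) and (S2) (its error term being already written in terms of $\tau$ rather than of $\OPT$), and those conditions do not involve the $\Delta$ appearing in (S3). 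A secondary nuisance is tracking the constants and the exact $\delta$-bookkeeping through the two invoked lemmas, which I would keep loose on a first pass.
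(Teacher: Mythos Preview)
Your proposal is correct and follows essentially the same approach as the paper: condition on the set $S$ of items in $\{1,\dots,\frac n4\}$ so that $\btau$ becomes deterministic, use Lemma~\ref{lemma:goodThreshold} to assert that $\btau$ is an $\OPT(\L)$-good skimming threshold for $\bL^{>\frac n4}$ (and $\bL^{>\frac n2}$), then apply Lemma~\ref{lem:est-rhs} to the randomly-permuted LP $\bL^{>\frac n4}$, whose first-third/last-two-thirds split lands exactly at the original $\frac n2$-point; finally average over the conditioning. Your observations about $\sigma(S)$-measurability, about the width of $\bL^{>\frac n4}$ being $\ge \frac34\cdot 512B \ge 40B$, and about only properties~(S1)--(S2) of Definition~\ref{def:goodskim} being used in Lemma~\ref{lem:est-rhs} (so that ``$\OPT(\L)$-good'' suffices in place of ``$\OPT(\bL^{>\frac n4})$-good'') are all on point.
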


	\new{Finally, since we only have an \textbf{estimate} of the right-hand side of $\skimLt$, we need to ensure that the algorithm \LPtoLB can handle this. Thankfully this is the case: The following corollary of Theorem~\ref{thm:LPAssEstimates} follows directly from the fact that if we reduce the right-hand side of a \PLP by a factor of $(1-6\e)$, the optimal value reduces by at most a factor of $(1-6\e)$ (since we can scale an original optimal solution by $(1-6\e)$ and obtain at least this much value).}
	
		\begin{cor}[of Theorem \ref{thm:LPAssEstimates}]  \label{cor:LPAssPacking}
     Consider a feasible \PLP $\L$. Consider $\e > 0$ at most a sufficiently small constant $\e_0$, and $\delta \in (0, \e]$, and assume that the generalized width of $\L$ is at least $\frac{2 \log ((1+m)/\delta)}{\e^2}$. Suppose the algorithm \LPtoLB is given an estimate $\estopt \in [\frac{\OPT}{2}, \OPT]$ and an approximation $\widehat{b} \in [(1 - 6\e) b, b]$ to the packing right-hand side.
     Then \LPtoLB finds an online solution $\{\bx^t\}_t$ that with probability at least $1- \delta$ is feasible for $\bL$ and has value $\sum_t \bpi^t \bx^t \geq (1 - c_3 \e - 6\e) \,\estopt$.
		\end{cor}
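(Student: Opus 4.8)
The plan is to reduce this to Theorem~\ref{thm:LPAssEstimates} applied to the packing LP $\widetilde{\L}$ obtained from $\L$ by replacing its right-hand side $b$ with the given approximation $\widehat{b}$ (keeping $\pi$ and $A$ unchanged): running \LPtoLB with input right-hand side $\widehat{b}$ is precisely running it on $\widetilde{\L}$, which is feasible (it is packing-only, so $x=0$ works). Two things then have to be checked — that $\widetilde{\L}$ satisfies the hypotheses of Theorem~\ref{thm:LPAssEstimates}, and that the guarantee it produces (stated relative to the estimate fed to the algorithm) translates into the bound claimed relative to $\estopt$. Feasibility for $\bL$ is immediate: since $\widehat{b}\le b$, any solution feasible for $\widetilde{\L}$ is feasible for $\bL$, and because $\L$ has no covering constraints the ``$(c_2\e)$-feasible'' output of Theorem~\ref{thm:LPAssEstimates} for $\widetilde{\L}$ is simply a solution with $\sum_t\A^t\bx^t\le\widehat{b}\le b$, i.e.\ genuinely feasible for $\bL$.

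First I would pin down $\OPT(\widetilde{\L})$. Clearly $\OPT(\widetilde{\L})\le\OPT(\L)$, since the feasible region only shrank. Conversely, if $x^*$ is optimal for $\L$ then $(1-6\e)x^*$ satisfies $\sum_t\A^t(1-6\e)x^*\le(1-6\e)b\le\widehat{b}$, so it is feasible for $\widetilde{\L}$ and, by non-negativity of $\pi$, has value $(1-6\e)\OPT(\L)$; hence $\OPT(\widetilde{\L})\ge(1-6\e)\OPT(\L)$. Consequently the generalized width of $\widetilde{\L}$, which equals $\min\{\min_{i,j,t}\widehat{b}_i/a^t_{ij},\ \min_{t,j}\OPT(\widetilde{\L})/\pi^t_j\}$ (the covering term is vacuous), is at least $(1-6\e)$ times the generalized width of $\L$, hence at least $(1-6\e)\cdot\frac{2\log((1+m)/\delta)}{\e^2}\ge\frac{\log((1+m)/\delta)}{\e^2}$ once $\e$ is below a small constant — exactly the width hypothesis of Theorem~\ref{thm:LPAssEstimates} (with $m_p=m$, $m_c=0$); and $\delta\in(0,\e]$ as required. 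Moreover $\estopt\in[\OPT(\L)/2,\OPT(\L)]$ combined with the two bounds above gives $\estopt\ge\OPT(\widetilde{\L})/2$ and $\estopt\le\OPT(\widetilde{\L})/(1-6\e)$, so $\estopt$ is a valid estimate for $\widetilde{\L}$ up to a $(1-6\e)$ slack on the upper side.

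Finally I would invoke Theorem~\ref{thm:LPAssEstimates} on $\widetilde{\L}$ with estimate $\estopt':=(1-6\e)\estopt$ (so that $\estopt'\le\OPT(\widetilde{\L})$ cleanly), concluding that with probability at least $1-\delta$ \LPtoLB returns a solution feasible for $\bL$ of value at least $(1-c_3\e)\estopt'=(1-c_3\e)(1-6\e)\estopt\ge(1-c_3\e-6\e)\estopt$, which is the claim. The one place the argument is not a verbatim application is that replacing $\estopt$ by $(1-6\e)\estopt$ weakens the condition ``$\estopt\ge\OPT/2$'' (used in Claim~\ref{claim:redLBOpt} to force $\lambda^*\ge\tfrac12$) to ``$\estopt'\ge\tfrac{1-6\e}{2}\OPT(\widetilde{\L})$'', and symmetrically, if one instead feeds $\estopt$ unscaled, Claim~\ref{claim:redLBOpt} only yields $\lambda^*\le1+6\e$ rather than $\lambda^*\le1$; in either case the optimal load $\lambda^*$ of the induced load-balancing instance is perturbed only by a $(1\pm O(\e))$ factor, which is absorbed by the $O(\e)$ slack already carried through the proof of Theorem~\ref{thm:LPAssEstimates} (at worst enlarging $c_3$ slightly). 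I expect checking this robustness of the proof of Theorem~\ref{thm:LPAssEstimates} to be the only — and rather minor — obstacle.
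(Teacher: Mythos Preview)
Your proposal is correct and follows essentially the same approach as the paper: the paper's entire proof is the one-line remark that reducing the right-hand side of a \PLP by a factor $(1-6\e)$ reduces the optimal value by at most a factor $(1-6\e)$ (via scaling an optimal solution), and then appeals to Theorem~\ref{thm:LPAssEstimates}. You carry out exactly this, just with more care about the width and $\estopt$ hypotheses; your closing caveat about the $(1\pm O(\e))$ perturbation of $\lambda^*$ is a level of detail the paper does not even mention.
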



\subsection{Modified one-time learning algorithm}
	
	Now we present the \emph{modified one-time learning} (\mOTL) algorithm, which at a high-level does the natural thing: 1) uses the first $\frac{n}{2}$ time steps (i.e. $\bL^{\le\frac{n}{2}}$) to estimate the required parameters (good skimming threshold $\btau$, optimal value and right-hand side of $\skim{\bL^{> \frac{n}{2}}}{\btau}$); 2) in time steps $> \frac{n}{2}$, picks all high-value items $\bpi_t > \btau$ and feeds the remaining LP $\skim{\bL^{\frac{n}{2}}}{\btau}$ to algorithm \LPtoLB. \new{Here is a formal description of the algorithm \mOTL, which has one small additional feature to simplify its analysis: we truncate the solution obtained for $\skim{\bL^{> \frac{n}{2}}}{\btau}$ to ensure it is always feasible for estimated right-hand side $\bbhat$.

	\begin{algorithm}[H]
		\caption{. Modified one-time learning (\mOTL) algorithm}
		\begin{algorithmic}[0]
			\State \textbf{Input:} Reals $\e, \delta, \sigma > 0$, number $n$ of items, and right-hand side $b$.
			\medskip
			
			\State \textbf{(Estimation)} For time steps $1, \ldots, \frac{n}{2}$ observe blocks $(\bpi^1, \A^1, \C^1),$ $\ldots, (\bpi^{n/2}, \A^{n/2}, \C^{n/2})$ and compute the estimates 
			\begin{align*}
				\btau &:= \frac{\constC}{B}\, \OPT(\bL^{\le\frac{n}{4}})\\
				\bOhat &:= \frac{2}{1+3\e}\, \left[ \OPT(\skim{\bL^{(\frac{n}{4}, \frac{n}{2}]}}{\btau}) - 33 \e B \btau\right] & \textrm{(estimate of $\OPT(\skim{\bL^{> \frac{n}{2}}}{\btau})$)}\\
				\bbhat &:= \frac{2}{1+3\e}\, \RHS(\skim{\bL^{(\frac{n}{4}, \frac{n}{2}]}}{\btau}) & \textrm{(estimate of $\RHS(\skim{\bL^{> \frac{n}{2}}}{\btau})$)}
			\end{align*}
			
			\medskip
			\State \textbf{(Approximate solution for $\skim{\bL^{> \frac{n}{2}}}{\btau}$)} Define $\e' = \e \, \max \Big\{1, \sqrt{2 B \btau/\bOhat}\Big\}$. Run algorithm \LPtoLB with the computed estimates $\bOhat$ and $\bbhat$, and parameters $\e', \delta$, over the skimmed remaining LP $\skim{\bL^{> \frac{n}{2}}}{\btau}$, letting $\tilde{\bx}_{\frac{n}{2} + 1}, \ldots, \tilde{\bx}_n$ be the returned solution. Let $\tmax$ be the largest index such that $\sum_{t = \frac{n}{2} + 1}^{\tmax} \A^t (1-6\e) \tilde{\bx}_t \le \bbhat$.
			
			\bigskip
			\State \textbf{(Output above solution + high-value items, truncated)} Let $\bH = \{t > \frac{n}{2} : \bpi_t > \btau\}$ be the high-value items within the time periods $\frac{n}{2} + 1, \ldots, n$. Output the solution $\bar{\bx}_{\frac{n}{2} + 1}, \ldots, \bar{\bx}_n$ defined by 
			\begin{align*}
				\bar{\bx}_t = \left\{ 
					\begin{array}{ll}
						1 & \textrm{if } t \in \bH\\
						(1-6\e)\,\tilde{\bx}_t & \textrm{if } t \notin \bH, t \in \big[\frac{n}{2} + 1, \tmax\big]\\
						0 & \textrm{if } t \notin \bH, t > \tmax.
					\end{array}
				\right.
			\end{align*}
		\end{algorithmic}
	\end{algorithm}}

	The next theorem states the guarantees that we have for algorithm \mOTL.

\begin{thm}
  \label{thm:one-phase}
  Consider $\e > 0$ smaller than a sufficiently small constant, and $\delta \in (0,\e]$. Consider a feasible LP $\L \in \PLPn$ with width at least $512 B$. Then with probability at least $1- 15\delta$, the algorithm \mOTL returns a solution $\bar{\bx}_{\frac{n}{2} + 1}, \ldots, \bar{\bx}_n$ that is feasible for the LP $\bL^{> \frac{n}{2}}$ and has value at least
  \[ \OPT(\bL^{> \frac{n}{2}}) - c_6 \e\, \OPT(\L) \] for some constant $c_6$.
\end{thm}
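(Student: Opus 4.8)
The plan is to condition throughout on the good event $\cE_1$ of Corollary~\ref{cor:est-lp-params} (probability at least $1-14\delta$), under which $\btau$ is an $\OPT(\L)$-good skimming threshold for $\bL^{> \frac{n}{2}}$ and, after a one-line computation from the two containments in that corollary, $\bbhat \in [(1-6\e)\,\RHS(\skim{\bL^{> \frac{n}{2}}}{\btau}),\, \RHS(\skim{\bL^{> \frac{n}{2}}}{\btau})]$, $\bOhat \le \OPT(\skim{\bL^{> \frac{n}{2}}}{\btau})$, and $\bOhat \ge (1-6\e)\OPT(\skim{\bL^{> \frac{n}{2}}}{\btau}) - 132\,\e B\btau$. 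I will use repeatedly that $\btau$ being $\OPT(\L)$-good gives, via property~(S3), $B\btau \le \constC\cdot\OPT(\L)$. A useful structural remark is that $\btau$, $\bOhat$, $\bbhat$, $\e'$, the set $\bH$ of high-value items among times $>\tfrac n2$, the offline LP underlying $\skim{\bL^{> \frac{n}{2}}}{\btau}$, and $\cE_1$ itself are all functions of the ordered sequence of items arriving in times $1,\dots,\tfrac n2$; conditioning on that prefix therefore leaves $\skim{\bL^{> \frac{n}{2}}}{\btau}$ a genuine random-order instance, so Corollary~\ref{cor:LPAssPacking} may legitimately be applied to it.

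Feasibility for $\bL^{> \frac{n}{2}}$ is deterministic once $\cE_1$ holds: property~(S2) for $\bL^{> \frac{n}{2}}$ gives $\sum_{t\in\bH}\A^t = \RHS(\bL^{> \frac{n}{2}}) - \RHS(\skim{\bL^{> \frac{n}{2}}}{\btau}) \le \tfrac12\RHS(\bL^{> \frac{n}{2}})$, while the truncation at $\tmax$ forces $\sum_{\frac n2 < t\le\tmax}\A^t(1-6\e)\tilde\bx_t \le \bbhat \le \RHS(\skim{\bL^{> \frac{n}{2}}}{\btau}) = \RHS(\bL^{> \frac{n}{2}}) - \sum_{t\in\bH}\A^t$; adding the two parts of the output keeps the usage within $\RHS(\bL^{> \frac{n}{2}})$.

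For the value bound I would split on whether $\OPT(\skim{\bL^{> \frac{n}{2}}}{\btau})$ is small or large compared to $B\btau$. If $\OPT(\skim{\bL^{> \frac{n}{2}}}{\btau}) \le C_1\e B\btau$ for a suitable absolute constant $C_1$ (this also covers the degenerate sub-case $\bOhat\le 0$, where the algorithm omits the \LPtoLB part), then the output value is at least $\sum_{t\in\bH}\bpi_t$, which by property~(S1) for $\bL^{> \frac{n}{2}}$ is at least $(1-\e)\OPT(\bL^{> \frac{n}{2}}) - \OPT(\skim{\bL^{> \frac{n}{2}}}{\btau}) \ge (1-\e)\OPT(\bL^{> \frac{n}{2}}) - C_1\constC\,\e\,\OPT(\L)$; since $\OPT(\bL^{> \frac{n}{2}})\le\OPT(\L)$ (extend a solution by zeros), this is $\OPT(\bL^{> \frac{n}{2}}) - O(\e)\OPT(\L)$ and \LPtoLB is not needed. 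Otherwise, when $\OPT(\skim{\bL^{> \frac{n}{2}}}{\btau}) > C_1\e B\btau$, the displayed bounds on $\bOhat$ give $\bOhat \in [\tfrac12\OPT(\skim{\bL^{> \frac{n}{2}}}{\btau}),\OPT(\skim{\bL^{> \frac{n}{2}}}{\btau})]$, the resulting $\e'$ is below the small constant $\e_0$ of Theorem~\ref{thm:LPAssEstimates} (as $\bOhat=\Omega(\e B\btau)$ makes $\e'=O(\sqrt\e)$), and the generalized width of $\skim{\bL^{> \frac{n}{2}}}{\btau}$ — at least $\min\{128B,\ \OPT(\skim{\bL^{> \frac{n}{2}}}{\btau})/\btau\}$ using (S2), the scaling $\A^t\in[0,1]$, and $\pi_t\le\btau$ off $\bH$ — meets the requirement $\tfrac{2\log((1+m)/\delta)}{(\e')^2}$ of Corollary~\ref{cor:LPAssPacking} with parameter $\e'$. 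Hence with conditional probability $\ge 1-\delta$ that corollary returns $\tilde\bx$ feasible for $\skim{\bL^{> \frac{n}{2}}}{\btau}$ of value $\ge (1-c_3\e'-6\e')\bOhat$, feasibility forces $\tmax=n$, and the output value equals $\sum_{t\in\bH}\bpi_t + (1-6\e)(1-c_3\e'-6\e')\bOhat$.

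The remaining arithmetic is the crux, and it is where the odd-looking choice $\e'=\e\max\{1,\sqrt{2B\btau/\bOhat}\}$ earns its keep: although $\e'$ can be as large as $\Theta(\sqrt\e)$, the error term $\e'\bOhat$ is at most $\e\sqrt{2B\btau\cdot\bOhat}\le\e\sqrt{2\constC}\,\OPT(\L)$ because $\bOhat\le\OPT(\L)$ and $B\btau\le\constC\OPT(\L)$; so $(1-6\e)(1-c_3\e'-6\e')\bOhat \ge \bOhat - O(\e)\OPT(\L) \ge \OPT(\skim{\bL^{> \frac{n}{2}}}{\btau}) - O(\e)\OPT(\L)$ (using the lower bound on $\bOhat$ and $B\btau\le\constC\OPT(\L)$ once more), and combining with (S1) for $\bL^{> \frac{n}{2}}$ yields output value $\ge(1-\e)\OPT(\bL^{> \frac{n}{2}}) - O(\e)\OPT(\L) \ge \OPT(\bL^{> \frac{n}{2}}) - c_6\,\e\,\OPT(\L)$. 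A union bound over the failure of $\cE_1$ ($\le 14\delta$) and the conditional failure of \LPtoLB in the large case ($\le\delta$) gives the $1-15\delta$ probability. The main obstacle is exactly this accounting: forcing the total slack down to $O(\e)\OPT(\L)$ rather than $O(\sqrt\e)\OPT(\L)$, which requires both the case split (so a noisy $\bOhat$ is trusted only when it is provably $\Omega(\e B\btau)$) and the $\e'$-rescaling that converts a weak multiplicative guarantee on a possibly tiny skimmed LP into an $O(\e)\OPT(\L)$ additive guarantee.
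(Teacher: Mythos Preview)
Your proof is correct and follows essentially the same approach as the paper: condition on the event of Corollary~\ref{cor:est-lp-params}, argue feasibility via the truncation plus $\bbhat \le \RHS(\skim{\bL^{> n/2}}{\btau})$, then do a two-case analysis on the size of $\OPT(\skim{\bL^{> n/2}}{\btau})$ and invoke Corollary~\ref{cor:LPAssPacking} with parameter $\e'$ in the large case. The only cosmetic difference is your case-split threshold $C_1 \e B\btau$ versus the paper's $\e C\,\OPT(\L)$; under property~(S3) these are equivalent up to constants, and your choice is arguably cleaner since it makes the verification of $\bOhat \ge \tfrac12\OPT(\skim{\bL^{> n/2}}{\btau})$ and $\e' = O(\sqrt{\e})$ immediate from the lower bound on $\bOhat$.
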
	

	\new{Here is some intuition for the proof. Recall that the main point of the skimming operation is to try to control the \emph{generalized} width of the skimmed LP $\skim{\bL^{>\frac{n}{2}}}{\btau}$; let us see what we have accomplished in this direction. Suppose (via Corollary \ref{cor:est-lp-params}) that the estimate $\btau$ computed is an $\OPT(\L)$-good skimming threshold for $\bL^{>\frac{n}{2}}$. Property (2) of a good skimming threshold implies that the \emph{width} of $\skim{\bL^{>\frac{n}{2}}}{\btau}$ is the desired $\Omega(B) = \Omega\big(\frac{\log (m/\delta)}{\e^2}\big)$ (if we assume that the width of $\L$ is $\Omega(B)$). However, for the item values, notice that the ratio of
	 $\OPT(\skim{\bL^{>\frac{n}{2}}}{\btau})$ divided by the largest item value in $\skim{\bL^{>\frac{n}{2}}}{\btau}$ is at least 
	 $\frac{\OPT(\skim{\bL^{>\frac{n}{2}}}{\btau})}{\btau} = \Omega(B)\, \frac{\OPT(\skim{\bL^{>\frac{n}{2}}}{\btau})}{\OPT(\L)} = \Omega\big(\frac{\log(m/\delta)}{\e^2}\big) \frac{\OPT(\skim{\bL^{>\frac{n}{2}}}{\btau})}{\OPT(\L)}$.
	 If the skimmed LP has large optimal value (namely $\Omega(\OPT(\L))$), then the skimmed LP has generalized width $\Omega\big(\frac{\log(m/\delta)}{\e^2}\big)$ and we can guarantee that \LPtoLB returns an $(1-O(\e))$-approximation for it; since $\btau$ is a good skimming threshold, this $(1-O(\e))$-approximation together with the high-value items guarantee that algorithm \mOTL obtains value $\approx \OPT(\bL^{>\frac{n}{2}})$.	 	 
	 
	 However, if the skimmed LP has a small optimal value, the generalized width is \emph{not} within the desired $\Omega\big(\frac{\log(m/\delta)}{\e^2}\big)$ range and we lose the guarantee for algorithm \LPtoLB.
	   But the generalized width of $\skim{\bL^{>\frac{n}{2}}}{\btau}$ is always in the range $\Omega\big(\frac{\log(m/\delta)}{(\e'')^2}\big)$ for $\e'' = \Theta(\e) \sqrt{\frac{\OPT(\L)}{\OPT(\skim{\bL^{>\frac{n}{2}}}{\btau})}}$, and thus we can get a $(1-O(\e''))$-approximation (notice the similarity of $\e''$ and the $\e'$ used in the algorithm). Moreover, when $\e'' \gg \e$ this means that the high-value items of $\bL^{>\frac{n}{2}}$ have a large contribution to the optimal value $\OPT(\L)$; since these items are picked by our algorithm, our solution should again have large overall value.}



\begin{proof}[Proof of Theorem \ref{thm:one-phase}]
	Let $\sigma$ be a realization of $\bL^{\le \frac{n}{2}}$ where the guarantees from Corollary \ref{cor:est-lp-params} hold (the probability of obtaining such realization is at least $1-14\delta$). In this scenario, the solution $\bar{\bx}$ output by \mOTL is feasible for $\bL^{> \frac{n}{2}}$: we have $\bbhat \le \RHS(\skim{\bL^{> \frac{n}{2}}}{\btau})$ and thus the choice of stopping time $\tmax$ makes $((1-6\e) \tilde{\bx}_{\frac{n}{2} + 1}, \ldots, (1-6\e) \tilde{\bx}_{\tmax}, 0, \ldots, 0)$ feasible for $\RHS(\skim{\bL^{> \frac{n}{2}}}{\btau})$; the definition of skimmed LP then guarantees that the output solution $\bar{\bx}$ (which consists of these $\tilde{\bx}_t$'s plus high-value items) fits the budgets of $\bL^{> \frac{n}{2}}$, thus we get feasibility.
	
	As for the value $\val$ of the solution under scenario $\sigma$, we have two different cases depending on the value of $\OPT(\skim{\bL^{> \frac{n}{2}}}{\btau})$; let $C$ be a sufficiently large constant ($C = \frac{132 \times 48 \times 2}{1 - 9\e}$ suffices).
	
	\paragraph{Case 1:} \new{$\OPT(\skim{\bL^{> \frac{n}{2}}}{\btau}) < \e C \OPT(\L)$}. Since $\btau$ is a $\OPT(\L)$-good skimming threshold for $\bL^{>\frac{n}{2}}$, the high-value items already give value
	\begin{align*}
		\sum_{t \in \bH} \bpi_t &\ge (1-\e) \OPT(\bL^{>\frac{n}{2}}) - \OPT(\skim{\bL^{> \frac{n}{2}}}{\btau}) \ge
		 \OPT(\bL^{>\frac{n}{2}}) - \e (C + 1) \OPT(\L),
	\end{align*}
	and thus $\val \ge \OPT(\bL^{>\frac{n}{2}}) - \e (C + 1) \OPT(\L)$.
	
	\paragraph{Case 2:} \new{$\OPT(\skim{\bL^{> \frac{n}{2}}}{\btau}) \ge \e C \OPT(\L)$}. We claim that the requirements of Corollary \ref{cor:LPAssPacking} are fulfilled (with approximation parameter $\e'$), so we can apply it to obtain guarantees for the solution $\tilde{\bx}$ for $\skim{\bL^{>\frac{n}{2}}}{\btau}$; most conditions can be easily checked, we argue about the less trivial ones. First, since Corollary \ref{cor:est-lp-params} holds for scenario $\sigma$, $\bOhat \le \OPT\big(\skim{\bL^{>\frac{n}{2}}}{\btau} \big)$; then $\e' \ge \e \sqrt{2 B \btau / \OPT\big(\skim{\bL^{>\frac{n}{2}}}{\btau} \big)}$ and reorganizing this expression gives that the LP $\skim{\bL^{>\frac{n}{2}}}{\btau}$ has generalized width at least $\frac{2 \log((m+1)/\delta)}{(\e')^2}$. Second, the estimate $\bOhat$ is at least $\frac{\OPT(\skim{\bL^{>\frac{n}{2}}}{\btau})}{2}$:
	\begin{align*}
		\bOhat \ge \frac{1-3\e}{1+3\e}\,\OPT(\skim{\bL^{> \frac{n}{2}}}{\btau}) - O(\e B\btau) \ge \frac{1-3\e}{1+3\e}\,\OPT(\skim{\bL^{> \frac{n}{2}}}{\btau}) - O(\e \OPT(\L)) \ge \frac{1}{2} \,\OPT(\skim{\bL^{> \frac{n}{2}}}{\btau}), 
	\end{align*}
	where the last inequality uses the assumption that $\OPT(\skim{\bL^{> \frac{n}{2}}}{\btau}) \ge \e C \OPT(\L)$ for a sufficiently large $C$ and that $\e$ is at most a sufficiently small constant. Finally, using this fact plus the bound $\btau \le \frac{\constC \OPT(\L)}{B}$, we have $\e' \le \e_0$:
	\begin{align*}
		\e' = \e \max\left\{1, \sqrt{2 B \btau/\bOhat} \right\} \le \e \max\left\{1, \sqrt{\frac{4 \times \constC \OPT(\L)}{\OPT(\skim{\bL^{>\frac{n}{2}}}{\btau})}} \right\} \le \max\left\{\e, O(\sqrt{\e}) \right\} \le \e_0,
	\end{align*}
	where the second inequality follows from our assumption on Case 2, and the last inequality follows from the fact $\e$ is a sufficiently small constant.
	
	Thus, we can apply Corollary \ref{cor:LPAssPacking} to $\bL^{>\frac{n}{2}}$ and get that, conditioned on $\sigma$, with probability at least $1-\delta$ the solution $\tilde{\bx}_{\frac{n}{2} + 1}, \ldots, \tilde{\bx}_n$ is feasible for $\bL^{>\frac{n}{2}}$ and has value 
	\begin{align*}
		\sum_{t > \frac{n}{2} : t \notin \bH} \bpi_t \tilde{\bx}_t &\ge (1-c_3 \e' - 6\e')\, \bOhat \ge \bOhat - \e (c_3 + 6) \max\Big\{\bOhat, \sqrt{2 B \btau \bOhat}\Big\} \ge \bOhat - O(\e)\, \OPT(\L)\\
		&\ge \OPT(\skim{\bL^{>\frac{n}{2}}}{\btau}) - O(\e)\,\OPT(\L),
	\end{align*}	
	where the second inequality because $\bOhat \le \OPT(\L)$ (a consequence of an upper bound on $\bOhat$ proved above) and the fact $\btau$ is $\OPT(\L)$-good skimming threshold, and the last inequality uses the above upper bound on $\bOhat$. Also, the feasibility of this solution also implies (together with the conditioning on $\sigma$) that $$\sum_{t > \frac{n}{2}} \A^t \tilde{\bx}_t \le \RHS(\skim{\bL^{>\frac{n}{2}}}{\btau}) \le \frac{1}{1-6\e}\, \bbhat,$$ and hence the stopping time $\tmax$ is set to be equal to $n$. Thus, the value $\val$ obtained by the algorithm uses the full solution $\tilde{\bx}_{\frac{n}{2} + 1}, \ldots, \tilde{\bx}_n$, and again using the fact $\btau$ is a good skimming threshold we get 
	\begin{align*}
		\val &= \sum_{t > \frac{n}{2} : t \notin \bH} \bpi_t \tilde{x}_t + \sum_{t \in \bH} \bpi_t \ge \OPT(\bL^{>\frac{n}{2}}) - O(\e)\,\OPT(\L).
	\end{align*}
	
	Since this all happened with probability at least $1-\delta$ conditioned on a scenario $\sigma$ where Corollary \ref{cor:est-lp-params} holds, taking an expectation over all such scenarios gives that the above guarantee for $\val$ holds with overall probability at least $1-15\delta$. This concludes the proof of the theorem.
	\end{proof}
	

	\subsection{Final algorithm: Modified Dynamic Learning}
	
	\new{The final algorithm to solve packing LPs, \emph{modified dynamic learning} (\mDLA), is exactly as the dynamic learning algorithm from \S\ref{sec:multiPhasePC} with the only change that it uses the modified One-time learning algorithm \mOTL instead of \OTL.}
		
	The following theorem shows that this algorithm \mDLA has the desired guarantees. 


\begin{thm}
  \label{thm:PLP-main}
  
   Consider $\e > 0$ smaller than a sufficiently small constant, and $\delta \in (0,\e]$. Consider a feasible LP $\L \in \PLPn$ with width at least $1024 \, \frac{\log ((m+1)/\delta)}{\e^2}$. 
   \new{Then there is an event $\cE$ that holds with probability at least $1- c_7\,\delta \log \frac{1}{\e}$ such that under $\cE$ the solution $\{\bar{\bx}_t\}_t$ returned by algorithm \mDLA is feasible for the LP $\bL$ and has expected value conditioned on $\cE$ at least $(1- c_8\delta \log \e^{-1})\, \OPT(\L)$, for some constants $c_7, c_8$. In particular, the truncated solution $\bar{\bx}_1, \bar{\bx}_2, \ldots, \bar{\bx}_{\tmax'}, 0, \ldots, 0$, where $\tmax'$ is the largest index that makes this solution feasible for $\bL$, has expected value at least $(1-c_9 \delta \log e^{-1}) \,\OPT(\L)$ for some constant $c_9$.}
\end{thm}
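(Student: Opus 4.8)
The plan is to adapt the doubling analysis that proves Theorem~\ref{thm:multi-phasePC}, replacing the subroutine \OTL (with guarantee Theorem~\ref{thm:one-phasePC}) by \mOTL (with guarantee Theorem~\ref{thm:one-phase}). Recall that \mDLA runs in phases $i = 1,\ldots,\log\e^{-1}$: phase $i$ runs \mOTL with accuracy $\e_i = \sqrt{\e/2^i}$ on the prefix instance $\bL^{S_i}$, where $S_i = \{1,\ldots,2^i\e n\}$, and commits the vectors it returns for the time steps in $S_i\setminus S_{i-1}$ (since $|S_{i-1}| = |S_i|/2$, this is exactly the ``second half'' of $\bL^{S_i}$, which is where \mOTL produces its solution); the final output is the concatenation over phases, with zeros on $S_0$. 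Because \mOTL is only guaranteed to be near-optimal \emph{in expectation} (unbounded profits $\pi_t$ rule out a high-probability value bound), the conclusion is an in-expectation statement, with feasibility carried by a good event $\cE$, just as in the theorem.

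For the per-phase step I would condition on the random set $\cI_i \subseteq [n]$ of columns landing in the first $|S_i|$ positions; then $\bL^{S_i}$ is a random-order instance of the deterministic \PLP $\L^{\cI_i}$, which is feasible (the all-zero solution works, there being no covering constraints) and has width at least $\tfrac{|S_i|}{n}\cdot 1024\,\tfrac{\log((m+1)/\delta)}{\e^2} = 1024\cdot 2^i\,\tfrac{\log((m+1)/\delta)}{\e} \ge 512\,\tfrac{\log((m+1)/\delta)}{\e_i^2}$, which meets the hypothesis of Theorem~\ref{thm:one-phase} at accuracy $\e_i$. That theorem then gives that, with probability at least $1-15\delta$, the phase-$i$ solution $\bar\bx(i)$ is feasible for $\bL^{S_i\setminus S_{i-1}}$ and has value at least $\OPT(\bL^{S_i\setminus S_{i-1}}) - c_6\e_i\OPT(\bL^{S_i})$. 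In parallel I would lower bound $\E[\OPT(\bL^{S_i\setminus S_{i-1}})]$: for an optimal solution $x^*$ of $\L$, Corollary~\ref{cor:multiChernoff} (using that the width exceeds $\log(m/\delta)/\e_i^2$) shows $(1-\e_i)\,x^*|_{S_i\setminus S_{i-1}}$ is feasible for $\bL^{S_i\setminus S_{i-1}}$ except with probability at most $\delta$, so $\E[\OPT(\bL^{S_i\setminus S_{i-1}})] \ge (1-\e_i)\,2^{i-1}\e\,\OPT(\L) - O(\delta)\,\OPT(\L)$, using the almost-sure bound $\OPT(\bL^{S_i\setminus S_{i-1}})\le\OPT(\L)$ to absorb the bad event. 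Let $\cE$ be the intersection, over all $\log\e^{-1}$ phases, of the good events of Theorem~\ref{thm:one-phase}; a union bound gives $\Pr(\cE)\ge 1-c_7\delta\log\e^{-1}$, and under $\cE$ the packing budgets telescope, $\sum_{i\ge 1}\sum_{t\in S_i\setminus S_{i-1}}A^t\bar\bx_t \le \sum_{i=1}^{\log\e^{-1}} 2^{i-1}\e\,b = (1-\e)\,b \le b$ (recall $\bar\bx_t = 0$ on $S_0$), so the concatenated solution is feasible for $\bL$.

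It then remains to lower bound $\E[\val\mid\cE]$, where $\val = \sum_i\val_i$ and $\val_i$ is the value collected in phase $i$; under $\cE$ we have $\val_i \ge \OPT(\bL^{S_i\setminus S_{i-1}}) - c_6\e_i\OPT(\bL^{S_i})$. The crucial point, and what makes the accumulated errors summable, is that plugging an optimal \emph{dual} solution of $\L$ into the dual of the restricted LP yields $\E[\OPT(\bL^{S_i})] \le \tfrac{|S_i|}{n}\OPT(\L) = 2^i\e\,\OPT(\L)$ with \emph{no} assumption on item values, so $\E\big[c_6\sum_i\e_i\OPT(\bL^{S_i})\big] \le c_6\sum_i\e_i\,2^i\e\,\OPT(\L) = O(\e)\,\OPT(\L)$ (the geometric sum is $O(1)$); the same dual argument also gives the almost-sure bound $\sum_i\OPT(\bL^{S_i\setminus S_{i-1}})\le\OPT(\L)$, which together with the lower bound on $\E[\OPT(\bL^{S_i\setminus S_{i-1}})]$ above gives $\E\big[\big(\sum_i\OPT(\bL^{S_i\setminus S_{i-1}})\big)\mathbf 1_\cE\big]\ge (1-O(\e)-O(\delta\log\e^{-1}))\,\OPT(\L)$. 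Combining, $\E[\val\mid\cE] \ge (1-O(\e+\delta\log\e^{-1}))\,\OPT(\L)$, which matches the stated $(1-c_8\delta\log\e^{-1})\OPT(\L)$ up to the (unavoidable, from discarding $S_0$) $O(\e)$ slack. For the last claim, the truncated solution $\bar\bx_1,\ldots,\bar\bx_{\tmax'},0,\ldots,0$ is feasible for $\bL$ by construction, has nonnegative value, and equals the full solution on $\cE$, so its unconditional expected value is at least $\Pr(\cE)\cdot\E[\val\mid\cE]\ge(1-c_9\delta\log\e^{-1})\,\OPT(\L)$ (absorbing the $O(\e)$ term, e.g.\ when $\delta=\Theta(\e)$). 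I expect the main obstacle to be exactly this error control: the unbounded profits force \mOTL's per-phase guarantee to be additive and only in expectation, and it is the dual-feasibility bound on $\E[\OPT(\bL^{S_i})]$ --- rather than the crude $\OPT(\bL^{S_i})\le\OPT(\L)$, which would lose $\Theta(\sqrt\e)$ --- that keeps the summed loss $O(\e)$; the remaining care lies in the interplay between the skimmed residual LPs, the high-value items picked up per phase, and the conditioning on $\cE$, after which the rest follows by routine bookkeeping with Theorem~\ref{thm:one-phase}, Corollary~\ref{cor:multiChernoff}, and Lemma~\ref{lemma:PCLB}.
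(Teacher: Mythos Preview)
Your proposal is correct and follows essentially the same route as the paper's proof: define $\cE$ as the intersection of the per-phase good events from Theorem~\ref{thm:one-phase}, telescope the packing budgets for feasibility, bound $\E[\OPT(\bL^{S_i})]\le\tfrac{|S_i|}{n}\OPT(\L)$ via the dual (the paper cites Lemma~\ref{lemma:PCLB}(d), which is exactly this argument), lower-bound $\sum_i\OPT(\bL^{S_i\setminus S_{i-1}})$ by showing a scaled restriction of an optimal $x^*$ is feasible with high probability (the paper cites Lemma~\ref{lemma:PCLB}(a), your direct use of Corollary~\ref{cor:multiChernoff} is equivalent), and pass from unconditioned to $\cE$-conditioned using the almost-sure bound $\sum_i\OPT(\bL^{S_i\setminus S_{i-1}})\le\OPT(\L)$ (the paper justifies this by primal concatenation rather than the dual, but either works). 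Your observation about the residual $O(\e)$ slack is also accurate: the paper's own final bound is $(1-2\e-17\delta\log\e^{-1}-4c_6\e)\,\OPT(\L)$, so the theorem statement is implicitly absorbing the $O(\e)$ term.
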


	The proof of this theorem is similar to the proof of the guarantee for the algorithm $\DLA$ and the changes required ---which arise from the fact that we now do not have good bounds on the ratios $\pi_t/\OPT(\L)$--- are indicated in Appendix \ref{app:mDLA}.

}

\short{\noindent\textbf{Note:} \emph{Most proofs, as well as a section proving the
  Informal Theorem~\ref{ithm:packII}, are presented in the final version of the paper.
}}

\ifproc
 \bibliographystyle{splncs03}
 \bibliography{online-lp-short}
\else
 \bibliographystyle{alphaurl}
 {\small \bibliography{online-lp}}

 \appendix

 	\section{Maximal Bernstein for Sampling Without Replacement} \label{app:maximalBernstein}

%
		
		We state the Levy-type maximal inequality for exchangeable random variables obtained in \cite{pruss}. 
		
		\begin{thm}[Theorem 1 of \cite{pruss}, with $\gamma = 2$]\label{thm:pruss}
			Let $Y_1, Y_2, \ldots, Y_n$ be an exchangeable sequence of random variables taking values in a Banach space. Then for every $\lambda \ge 0$ and every $k \le n/2$, we have 
			\begin{align*}
				\Pr\Bigg(\max_{j \le k} \bigg\|\sum_{i \le j} Y_i \bigg\| > \lambda \Bigg) \le 15 \Pr\Bigg(\bigg\|\sum_{i \le k} Y_i\bigg\| > \frac{\lambda}{24} \Bigg).
			\end{align*}
		\end{thm}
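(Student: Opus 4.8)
The plan is to reduce first to a statement about sampling without replacement from a fixed population, and then to run a ``first‑passage + reflection'' argument adapted to the lack of independence. By exchangeability, conditioning on the random multiset $\{Y_1,\dots,Y_n\}$ of realized values turns $(Y_1,\dots,Y_n)$ into a uniformly random arrangement of fixed Banach‑space vectors $v_1,\dots,v_n$, so it suffices to prove the inequality in this conditioned world and average at the end. Thus I would fix $v_1,\dots,v_n$, let $\pi$ be a uniform random permutation of $[n]$, set $Y_i=v_{\pi(i)}$ and $S_j=\sum_{i\le j}Y_i$, and introduce the first‑passage time $T:=\min\{j\ge 1:\|S_j\|>\lambda\}$ (with $\min\emptyset=\infty$), so that the left‑hand side is exactly $\Pr(T\le k)$. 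The goal is the ``footprint'' estimate
\[ \Pr\big(\|S_k\|>\tfrac{\lambda}{24}\ \big|\ T\le k\big)\ \ge\ \tfrac{1}{15}, \]
which gives the theorem immediately (since $\Pr(\|S_k\|>\lambda/24)\ge\Pr(\|S_k\|>\lambda/24,\,T\le k)$). After a standard truncation step that separates out population values of norm larger than $\lambda$ (there are few of them and each lands somewhere, so they are handled by a union bound), one may assume the increments are not pathologically large.

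The crux is that the classical route fails: for independent symmetric summands one conditions on $\{T=j\}$ and $\mathcal F_j$, reflects the increment $S_k-S_j$ (which is independent of the past and symmetric) to get a vector equidistributed with $S_k$ such that at least one of the two has norm $>\lambda$, and equidistribution converts this into probability $\ge\frac12$. Here the increments are negatively correlated with the past (no replacement) and not symmetric, so neither ingredient is available. The fix exploits the slack $k\le n/2$ to manufacture a fresh, statistically identical copy of the offending increment. Work conditionally on $\mathcal F_T$ and on the multiset $R$ of the $n-T$ unseen values --- given which $(Y_{T+1},\dots,Y_n)$ is a uniform arrangement of $R$ --- and note that $2k-T\le 2k\le n$, so both blocks
\[ B_1 := Y_{T+1}+\dots+Y_k \qquad\text{and}\qquad B_2 := Y_{k+1}+\dots+Y_{2k-T} \]
are well defined, of common size $k-T$, and exchangeable with one another under this conditioning; indeed $S_k=S_T+B_1$ and $S_T+B_2$ are conditionally identically distributed (both equal $S_T$ plus the sum of a uniform random $(k-T)$‑subset of $R$), so $\|S_T+B_2\|$ has the same conditional law as $\|S_k\|$. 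The triangle inequality in the Banach space then gives $\|S_T+B_1\|+\|S_T+B_2\|\ge\|2S_T+B_1+B_2\|\ge 2\|S_T\|-\|B_1+B_2\|>2\lambda-\|B_1+B_2\|$, so whenever $\|B_1+B_2\|$ is bounded away from $2\lambda$, at least one of $\|S_T+B_1\|,\|S_T+B_2\|$ exceeds a fixed multiple of $\lambda$; the equidistribution with $\|S_k\|$ upgrades this to a constant lower bound on $\Pr(\|S_k\|>\lambda/24\mid\mathcal F_T,R)$, and averaging over $\mathcal F_T$, over $R$, and over the population multiset finishes the proof, with the factors $\tfrac12$ from the two‑way splits and the constants shed at the triangle‑inequality steps accounting for the final $15$ and $24$. (The degenerate case $T=k$, where $B_1,B_2$ are empty, is trivial since then $\|S_k\|=\|S_T\|>\lambda$.)

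The main obstacle --- and what genuinely distinguishes this from the independent case --- is the regime where the post‑passage increment $B_1$ has norm close to $\|S_T\|\approx\lambda$ and nearly cancels $S_T$, so $S_k$ is small even though a prefix exceeded $\lambda$ (equivalently, the event $\|B_1+B_2\|\approx 2\lambda$ above). No first‑moment or concentration estimate rescues this: over a finite population in a general Banach space, $B_1$ can place mass close to $1$ near $-S_T$ while still having a far‑flung conditional mean. The way I would absorb this case is a bounded‑depth recursion --- $B_1$ is itself a consecutive partial sum, hence carries its own nested first‑passage structure, and re‑running the fresh‑block step on a sub‑block reduces it, after finitely many rounds, to a non‑cancelling configuration --- where the blocks used at successive levels have total size geometrically controlled by $\approx 2k$, which is exactly why the hypothesis $k\le n/2$ is what is needed to keep enough fresh indices available throughout. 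Making the bookkeeping through this recursion yield a single universal pair of constants is the delicate part. An alternative worth pursuing, mirroring Montgomery–Smith's treatment of the i.i.d.\ maximal inequality, is to first symmetrize against an independent copy of the sequence (this preserves finite exchangeability), prove the estimate for symmetric exchangeable sequences where a clean reflection‑via‑fresh‑block step is available, and then desymmetrize using medians; the resulting constants are comparable.
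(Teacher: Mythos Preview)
The paper does not prove this theorem at all: it is quoted verbatim as ``Theorem~1 of \cite{pruss}, with $\gamma=2$'' and used as a black box in the proof of Lemma~\ref{lemma:maximalBernstein}. There is therefore no proof in the paper to compare your proposal against.

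That said, a brief comment on your sketch. The reduction to a fixed population via conditioning on the realized multiset is correct, and the fresh-block device exploiting $k\le n/2$ is indeed the right structural idea (and is essentially what Pruss uses). However, your write-up is explicitly incomplete: you flag the cancellation regime $\|B_1+B_2\|\approx 2\lambda$ as ``the main obstacle'' and propose a bounded-depth recursion without carrying it out, conceding that ``making the bookkeeping \ldots\ yield a single universal pair of constants is the delicate part.'' As written this is a plausible plan rather than a proof, and the recursion as you describe it has a real issue: at each level you need fresh indices of size comparable to the current block, and if the cancellation persists through several rounds the total index budget can exceed the $n-T$ available, so the claim that ``total size is geometrically controlled by $\approx 2k$'' needs justification you have not supplied. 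If you want to complete this, consult Pruss's original argument directly; the paper under review simply imports the result.
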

		
		\begin{proof}[Proof of Lemma \ref{lemma:maximalBernstein}]
			The sequence $(\sum_{i \le j} X_i - j \mu)_{i = 1}^n$ is exchangeable, so applying the theorem above and then Theorem \ref{thm:conc-hyp} we have 
			\begin{align*}
				&\Pr\Bigg(\max_{j \le k} \bigg|\sum_{i \le j} X_i - j \mu\bigg| > \alpha \Bigg) \le 15 \Pr\Bigg(\bigg|\sum_{i \le k} X_i - k \mu\bigg| > \frac{\alpha}{24} \Bigg) \le 30 \exp \left(- \frac{(\alpha/24)^2}{2 k \sigma^2 + (\alpha/24)} \right).
			\end{align*} 
		\end{proof}

\section{Useful Properties of $(\e, \sigma)$-Stability} 
\label{app:stability}

		For any $\e'$ and subset $I \subseteq [n]$, the dual of $\L^I(1 - \e')$ is 
		\begin{align}
			 \D^I(1 - \e') := \left\{
			 \begin{array}{rl}
			 		\min &\frac{k}{n} \ip{\alpha}{b} - (1 - \e')\frac{k}{n} \ip{\beta}{d} + \sum_{t \in I} \gamma_t \\
			 		& \ip{A^t_{\star, j}}{\alpha} - \ip{C^t_{\star, j}}{\beta} + \gamma_t \ge \pi^t_j ~~~~\forall j, ~~\forall t \in I\\
			 		&\alpha, \beta, \gamma \ge 0
			 \end{array}
			 \right. \label{eq:dualPC}
		\end{align}
	where again we use $A^t_{\star, j}$ to denote the $j$th column of $A^t$, and similarly for $C^t$.
	
	The next essentially follows from the fact that $\eta \mapsto \OPT(\L(1 - \eta))$ is concave, and that its derivative is obtained by looking at an optimal dual solution. 
	
%

	\begin{lemma} \label{lemma:basicStab}
		Let $\L$ be an $(\e, \sigma)$-stable packing/covering LP. Then:
		
		\begin{enumerate}
			\item For every $\e' \ge \e$, $\L$ is $(\e', \sigma)$-stable
						
			\item If $(\tilde{\alpha}, \tilde{\beta}, \tilde{\gamma})$ is an optimal solution for the dual of $\L(1 - \e)$, then $\ip{\bar{\beta}}{d} \le \sigma \cdot \OPT(\L)$.
		\end{enumerate}
	\end{lemma}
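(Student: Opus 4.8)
The plan is to prove the two parts separately, both via LP duality, following the hint that $\eta \mapsto \OPT(\L(1-\eta))$ is concave with ``subgradient'' given by an optimal dual solution; throughout I will assume $\e > 0$, which is the case of interest.

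For part~1, I would set $f(\eta) := \OPT(\L(1-\eta))$ for $\eta \in [0,1)$ and note that in $\L(1-\eta)$ only the covering right-hand side varies with $\eta$, and it varies linearly ($d \mapsto (1-\eta)d$). Since the optimal value of a maximization LP is a concave function of its right-hand side vector, restricting that statement to this line shows $f$ is concave on $[0,1)$; moreover relaxing covering constraints cannot decrease the optimum, so $f$ is non-decreasing and $f(0) = \OPT(\L)$. Writing $g(\eta) := f(\eta) - f(0)$, the function $g$ is concave with $g(0) = 0$, and $(\e,\sigma)$-stability says exactly $g(\e) \le \sigma\e\, f(0)$. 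For any $\e' \ge \e$ with $\e' < 1$, expressing $\e = \tfrac{\e}{\e'}\,\e' + (1 - \tfrac{\e}{\e'})\cdot 0$ and using concavity gives $g(\e) \ge \tfrac{\e}{\e'}\,g(\e')$, hence $g(\e') \le \tfrac{\e'}{\e}\,g(\e) \le \sigma\e'\, f(0)$, i.e. $\OPT(\L(1-\e')) \le (1+\sigma\e')\OPT(\L)$, which is $(\e',\sigma)$-stability.

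For part~2, I would take $(\tilde\alpha,\tilde\beta,\tilde\gamma)$ to be an optimal solution of the dual \eqref{eq:dualPC} of $\L(1-\e)$ (with $I=[n]$, so the prefactor $\tfrac{|I|}{n}$ is $1$). The key observation is that the feasible region of that dual --- the constraints $\ip{A^t_{\star,j}}{\alpha} - \ip{C^t_{\star,j}}{\beta} + \gamma_t \ge \pi^t_j$ for all $j,t$, together with $\alpha,\beta,\gamma\ge 0$ --- does not depend on the scaling of the covering right-hand side, so the same triple $(\tilde\alpha,\tilde\beta,\tilde\gamma)$ is feasible for the dual $\D$ of $\L$ itself. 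Weak duality for $\L$ then gives $\OPT(\L) \le \ip{\tilde\alpha}{b} - \ip{\tilde\beta}{d} + \sum_t \tilde\gamma_t$, while strong duality for $\L(1-\e)$ gives $\OPT(\L(1-\e)) = \ip{\tilde\alpha}{b} - (1-\e)\ip{\tilde\beta}{d} + \sum_t \tilde\gamma_t$; subtracting the former inequality from the latter equality yields $\OPT(\L(1-\e)) - \OPT(\L) \ge \e\,\ip{\tilde\beta}{d}$. Applying $(\e,\sigma)$-stability to bound the left side by $\sigma\e\,\OPT(\L)$ and dividing by $\e$ then gives $\ip{\tilde\beta}{d} \le \sigma\,\OPT(\L)$.

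I expect this to be essentially routine; the only delicate points are (i) invoking concavity of the LP optimum in the right-hand side, by restricting the standard multivariate fact to the segment $\eta \mapsto (1-\eta)d$, and (ii) the sign bookkeeping in the dual, especially the observation that the dual's feasible region is insensitive to the covering-right-hand-side scaling, so that an optimal dual of $\L(1-\e)$ doubles as a feasible dual of $\L$. I do not anticipate a genuine obstacle.
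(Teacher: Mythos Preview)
Your proof is correct and essentially follows the paper's approach. Part~2 is identical to the paper: take the optimal dual at $\L(1-\e)$, observe it is feasible for the dual of $\L$ (since only the objective depends on the covering right-hand-side scaling), and subtract to extract $\e\ip{\tilde\beta}{d}$, then apply stability. For Part~1 you argue directly via concavity of $\eta\mapsto\OPT(\L(1-\eta))$ and the convex-combination trick $g(\e)\ge\tfrac{\e}{\e'}g(\e')$, whereas the paper instead reuses the same dual observation (with $\eta=\e-\e'$) together with the bound $\ip{\tilde\beta}{d}\le\sigma\OPT(\L)$ from Part~2 to get $\OPT(\L(1-\e'))\le\OPT(\L(1-\e))+(\e'-\e)\sigma\OPT(\L)$; both routes are short and equivalent in spirit, yours being slightly more self-contained for Part~1 and the paper's being a single unified computation.
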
			
	
	\begin{proof}
		Take an optimal solution $(\tilde{\alpha}, \tilde{\beta}, \tilde{\gamma})$ for $\D(1 - \e)$. Notice that for every $\eta$, $(\bar{\alpha}, \bar{\beta}, \bar{\gamma})$ is feasible for $\D(1 - \e + \eta)$ and hence $\OPT(\D(1 - \e + \eta)) \le \ip{\bar{\alpha}}{b} - (1 - \e + \eta) \ip{\bar{\beta}}{d} + \sum_t \bar{\gamma} = \OPT(\D(1 - \e)) - \eta \ip{\bar{\beta}}{d}$. Using Strong Duality, this gives $\OPT(\L(1 - \e)) \ge \OPT(\L(1 - \e + \eta)) + \eta \ip{\bar{\beta}}{d}$.
		
		Setting $\eta = \e$, $(\e, \sigma)$-stability gives that $\ip{\bar{\beta}}{d} \le \sigma \cdot \OPT(\L)$, proving Part 2. Also, setting $\eta = \e - \e'$ we get $\OPT(\L(1 - \e')) \le \OPT(\L(1 - \e)) + (\e' - \e) \sigma \cdot \OPT(\L)$; again using $(\e,\sigma)$-stability of $\L$ we get $\OPT(\L(1 - \e')) \le (1 + \e' \sigma) \OPT(\L)$ for $\e' \ge \e$, concluding the proof of Part 1.
	\end{proof}		


\section{Proof of Lemma \ref{lemma:PCLB}} \label{app:samplePC}

        Since the LP has width at least $\frac{32 \log(m/\delta)}{\e^2}$, by scaling the constraints we assume without loss of generality that its right-hand sides are all exactly equal to $\vartheta:= \frac{32 \log(m/\delta)}{\e^2}$ and the matrices $A^t, C^t$ on the left-hand side have entries in $[0,1]$.
		
		\paragraph{Part (a).} Consider a feasible solution $\bar{\bx}$ for $\bL$. Since for all $i$, $\sum_t \A^t_i \bar{\bx}^t \le \vartheta$, from Bernstein's inequality (Corollary \ref{cor:multiChernoff}) we have 
		\begin{gather*}
			\Pr\left(\sum_{t \in I} \A^t_i \bar{\bx}^t \ge \frac{s}{n}\vartheta + \tau \vartheta\right) \le 2 \exp \left(- \min \left\{\frac{\tau^2 \vartheta^2}{8 s\vartheta/n}, \frac{\tau \vartheta}{2} \right\}\right) \le 2 \exp \left(- \frac{\log(m/\delta)}{\e^2} \min \left\{\frac{4 \tau^2 n}{s}, 16 \tau \right\}\right);
		\end{gather*}
	setting $\tau = (\e/2) \sqrt{s/n}$ makes the right-hand side at most $2\delta/m$ (this uses the fact $\e \le 1$ and $s \ge \e^2 n$). Then taking a union bound over all $i \in [m_p]$, we have $\sum_{t \in I} \A^t \bar{\bx}^t \le (1 + \frac{\e'}{2}) \frac{s}{n}\vartheta \ones$ with probability at least $1 - 2 \delta (m_p/m)$.
	
	Similarly, fix $i \in \{m_p + 1, \ldots, m\}$ and let $\mu = \E[\sum_{t \in I} \C^t_i \bar{\bx}^t]$; then using the fact $\mu \ge (s/n) \vartheta$, 
	\begin{gather*}
		\Pr\left(\sum_{t \in I} \C^t_i \bar{\bx}^t \le \mu - \tau \mu \frac{n}{s} \right) \le 2 \exp \left(- \min \left\{\frac{\tau^2 \mu (n/s)^2}{8}, \frac{\tau \mu (n/s)}{2} \right\}\right) \le 2 \exp \left(- \frac{\log(m/\delta)}{\e^2} \min \left\{ \frac{4 \tau^2 n}{s}, 16 \tau \right\}\right).
	\end{gather*}	
	Setting $\tau = (\e/2) \sqrt{s/n}$ and taking a union bound over all such $i$'s, we get $\sum_{t \in I} \C^t \bar{\bx}^t \ge (1 - \frac{\e'}{2}) \frac{s}{n}\vartheta \ones$ with probability at least $1 - 2 \delta (m_c/m)$. 
	
	Then, using the fact that $(1 - \frac{\e'}{2}) (1 + \frac{\e'}{2}) \le 1$, with probability at least $1 - 2 \delta$ we have that $(1 - \frac{\e'}{2}) \bar{\bx}|_{I}$ is $(1 - \frac{\e'}{2})^2$-feasible for $\bL^{I}$, and since $(1 - \frac{\e'}{2})^2 \ge (1 - \e')$, Part (a) follows. 

	
	\paragraph{Part (b).} Let $\bar{\bx}$ be an optimal solution for $\L$. Then again by Bernstein's inequality (Corollary \ref{cor:multiChernoff})
	\begin{gather*}
		\Pr\left(\sum_{t \in I} \bpi^t \bar{\bx}^t \le \frac{s}{n} \OPT(\L) - \tau \OPT(\L)\right) \le 2 \exp \left(- \min \left\{\frac{\tau^2 \OPT(\L)^2 \vartheta}{8 \rho (s/n) \OPT(\L)^2 }, \frac{\tau \OPT(\L) \vartheta}{2 \rho (s/n) \OPT(\L)} \right\}\right);
	\end{gather*}
	setting $\tau = (\e \rho/2) \sqrt{(s/n)}$ makes the right-hand side at most $2 \delta$ (this uses the fact $\rho \ge 1$).
	
	Then using Part (a), with probability at least $1 - 4\delta$ we have $\bar{\bx}|_{I}$ feasible for $\bL^{I}(1 - \e')$ and with value at least $(1 - \frac{\rho \e'}{2}) \frac{s}{n} \OPT(\bL)$; this lower bounds the value of $\OPT(\bL^{I}(1-\e'))$ and Part (b) follows.
	




	\medskip

	Now for the arguments about the upper bounds. Here we use the $(\e, \sigma)$-stability of $\L$. We consider the (random) dual $\bD^I$ of $\bL^I$ (recall from \eqref{eq:dualPC} the expression for the dual). 
	
	\paragraph{Part (d).} Let $(\bar{\balpha}, \bar{\bbeta}, \bar{\bgamma})$ be an optimal solution for $\bD(1 - \e)$. Notice that in every scenario this solution is feasible for $\bD^{I}(1 - \e')$, so taking expectations and using Strong Duality $\E[\bD^{I}(1 - \e')] \le \frac{s}{n} \OPT(\L(1-\e)) + (\e' - \e) \frac{s}{n} \ip{\bar{\bbeta}}{\vartheta \ones}$. Using $(\e, \sigma)$-stability of $\bL$ we have $\OPT(\L(1 - \e)) \le (1 + \sigma\e) \OPT(\L)$ and using Lemma \ref{lemma:basicStab} $\ip{\bar{\bbeta}}{\vartheta \ones} \le \sigma \cdot \OPT(\L)$. Since $\e' \ge \e$, this gives $\E[\bD^{I}(1 - \e')] \le \frac{s}{n} (1 + \sigma\e') \OPT(\L)$, and Strong Duality then gives Part (e) of the lemma.
		
	\paragraph{Part (c).} Again let $(\bar{\balpha}, \bar{\bbeta}, \bar{\bgamma})$ be an optimal solution for $\bD(1 - \e)$. Notice that the above argument gives that in every scenario $\OPT(\L^{I}(1 - \e')) \le \frac{s}{n} \OPT(\L(1-\e)) + (\e' - \e) \sigma \cdot \OPT(\L) + (\sum_{t \in I} \bar{\bgamma}_t - \E[\sum_{t \in I} \bar{\bgamma}_t]) \le \frac{s}{n} (1 + \sigma \e') \OPT(\L) + (\sum_{t \in I} \bar{\bgamma}_t - \E[\sum_{t \in I} \bar{\bgamma}_t])$. Thus it suffices to show that with probability at least $1 - \delta$, $\sum_{t \in I} \bar{\bgamma}_t \le \E[\sum_{t \in I} \bar{\bgamma}_t] + \e' \frac{s}{n} (\rho + \sigma) \OPT(\L)$.
		
		For that, we need to control how large $\bar{\bgamma}$ can be. By the optimality of $\bar{\bgamma}$, notice that in each scenario  and for all $t$ we have $\bar{\bgamma}_t \le \max_j \big\{\bpi^t_j, \bip{\bar{\bbeta}}{C^t_{\star,j}}\big\}$, or else we could reduce $\bar{\bgamma}_t$ and obtain a feasible solution for $\bD(1 - \e)$ with strictly smaller cost. Again from Lemma \ref{lemma:basicStab} we have $\sum_i \bar{\bbeta}_i \le \sigma \cdot \OPT(\L)/\vartheta$, and since the entries of $\C^t$ are at most 1, we get $\bip{\bar{\bbeta}}{\C^t_{\star,j}} \le \sigma \cdot \OPT(\L)/\vartheta$. Using our assumption that $\bpi^t_j \le \rho \OPT(\L)/\vartheta$, we get $\bar{\bgamma}_t \le (\sigma + \rho) \OPT(\L)/\vartheta$.
		
		In addition, we claim that $\E[\sum_{t \in I} \bar{\bgamma}_t] \le \frac{s}{n} (1 + \sigma) \OPT(\L)$. Since $\OPT(\L(1-\e)) = \OPT(\bL(1-\e)) = \ip{\bar{\balpha}}{\vartheta \ones} - (1 - \e) \ip{\bar{\bbeta}}{\vartheta \ones} + \sum_t \bar{\bgamma}_t$, reorganizing we get $\sum_t \bar{\bgamma}_t \le \OPT(\L(1 - \e)) + (1 - \e) \ip{\bar{\bbeta}}{\vartheta \ones}.$ Using $(\e, \sigma)$-stability of $\L$ and again the bound from Lemma \ref{lemma:basicStab}, this gives $\sum_t \bar{\bgamma}_t \le (1 + \sigma) \OPT(\L)$, and the claim follows. 
		
		Then using Bernstein's inequality (Corollary \ref{cor:multiChernoff}), we get (letting $\mu = \E\left[\sum_{t \in I} \bar{\bgamma}_t\right]$)
	\begin{align*}
		&\Pr\left(\sum_{t \in I} \bar{\bgamma}_t \ge \mu + \e \tau \OPT(\bL) \right) \le 2 \exp \left(- \min \left\{\frac{\e^2 \tau^2 \OPT(\L)^2  \vartheta}{8 \mu (\rho + \sigma) \OPT(\L)}, \frac{\e \tau \OPT(\L) \vartheta}{2 (\rho + \sigma) \OPT(\L)} \right\}\right) \\
		&\le 2 \exp \left(- \frac{\log(1/\delta)}{(\rho + \sigma)} \min \left\{\frac{4 \tau^2}{(s/n) (1+\sigma)}, \frac{16 \tau}{\e} \right\}\right);
	\end{align*}
	setting $\tau = (\rho + \sigma) \sqrt{(s/n)}$ makes the right-hand side at most $2 \delta$ (this uses the fact that $\rho \ge 1$ and $s \ge \e^2 n$). Since $\e' \frac{s}{n} (\rho + \sigma) = \e \tau$, this concludes the proof of the lemma.

\section{Proof of Theorem \ref{thm:one-phasePC}} \label{app:otl}

	First we verify that with good probability we have $\estopt \in [\frac{1}{2}\, \OPT(\bL^{>\frac{n}{2}}(1-\e')), \OPT(\bL^{>\frac{n}{2}}(1-\e'))]$ so we can employ Theorem~\ref{thm:LPAssEstimates} to obtain guarantees for the solution $\{\bar{\bx}^t\}_{t=n/2+1}^n$ returned by \LPtoLB. 

	Taking a union bound over Parts (b) and (c) of Lemma~\ref{lemma:PCLB} and using the fact that $\sigma \ge 1$, we have that
\[ \Pr \left[ \left(1 - \frac{\e'}{2}\right) \le \frac{\OPT(\bL^{\le \frac{n}{2}}(1 - \e'))}{\frac12 \OPT(\bL)} \le  (1 + 3 \e' \sigma) \right] \geq 1 - 6\delta, \]
and the same holds for $\OPT(\bL^{>\frac{n}{2}}(1-\e'))$; let $\cE$ be the event that these bounds hold for both, so we have $\Pr(\cE) \ge 1 - 12\delta$. We can chain the bounds for $\OPT(\bL^{\le \frac{n}{2}}(1-\e'))$ and $\OPT(\bL^{>\frac{n}{2}}(1-\e'))$ to get that conditioned on $\cE$
	\begin{align*}
		\frac{(1-\frac{\e'}{2})^2}{(1+3 \e' \sigma)^2} \cdot \OPT(\bL^{>\frac{n}{2}}(1-\e')) \le \estopt \le \OPT(\bL^{>\frac{n}{2}}(1-\e')).
	\end{align*}
	Now using the fact that $\epsilon \in (0,1)$ and $\sigma \ge 1$, it is easy to verify that we can lower bound the first factor as $\frac{(1-\frac{\e'}{2})^2}{(1+3 \e' \sigma)^2} \ge 1 - 10\e' \sigma$. In addition, since $\sigma \le \frac{1}{20 \e'}$, this in particular implies $\estopt \ge \frac{1}{2} \OPT(\bL^{>\frac{n}{2}}(1-\e'))$.

	Then under the event $\cE$ the requirements of the Theorem~\ref{thm:LPAssEstimates} holds and thus taking a union bound over the event $\cE$ and this theorem gives that with probability at least $1-13\delta$ the solution $\{\bar{\bx}^t\}_{t=n/2+1}^n$ is $c_2 \e$-feasible for $\bL^{> \frac{n}{2}}(1-\e')$ (and thus is $(c_2 \e + \e')$-feasible for $\bL^{> \frac{n}{2}}$) and has value 
	\begin{align*}
	\sum_{t = n/2 + 1}^n \bpi^t \bar{\bx}^t &\ge (1-c_3 \e)\, \estopt \ge (1-c_3 \e) (1-10\e' \sigma)\, \OPT(\bL^{> \frac{n}{2}}(1-\e'))\\
	 	&\ge (1-c_5 \e \sigma)\, \OPT(\bL^{> \frac{n}{2}}(1-\e')),
	\end{align*}
	for some constant $c_5$, where the last inequality uses the fact $\sigma \ge 1$, which gives the desired guarantees. (Actually, in more detail notice that the event $\cE$ only depends on what \emph{set} of blocks $(\pi^t, A^t, C^t)$ of the underlying LP $\L$ comprise the blocks in the random sampled LP $\bL^{> \frac{n}{2}}(1-\e')$ (this also determines the blocks that comprise the sampled LP $\bL^{\le \frac{n}{2}}(1-\e')$). Thus, we can first condition that the set of blocks in $\bL^{>\frac{n}{2}}(1-\e')$ is a specific one that satisfies $\cE$, and since we do not fix their \emph{order} they still appear in random order in the conditioned $\bL^{>\frac{n}{2}}(1-\e')$ and we can apply Theorem~\ref{thm:LPAssEstimates} to this conditioned $\bL^{>\frac{n}{2}}(1-\e')$; ranging over all such conditioning then gives the result.) 


\section{Proof of Lemma \ref{lemma:phaseDLA}} \label{app:phaseDLA}

	First we claim that with probability at least $1-6\delta$, the random LP $\bL^{S_i}(1-\e_i)$ is $(\e_i, 7\sigma)$-stable. Notice that we can employ Lemma \ref{lemma:PCLB}(c) to $\L$ with parameter $\e'$ set to $2\e_i$ (in particular, $2 \e_i \sqrt{|S_i|/n} \ge \e$ plus Lemma \ref{lemma:basicStab} implies that $\L$ is $(2 \e_i \sqrt{|S_i|/n}, \sigma)$-stable). Thus, Lemma \ref{lemma:PCLB}(c) guarantees that with probability at least $1 - 2\delta$, $\OPT(\bL^{S_i}(1 - 2 \e_i)) \le (1 + 6 \sigma \e_i) \frac{|S_i|}{n} \OPT(\L)$. Moreover, applying Lemma \ref{lemma:PCLB}(b) to $\L$ with $\e' = \e_i$ we get that with probability at least $1 - 4\delta$, $\OPT(\bL^{S_i}(1 - \e_i)) \ge (1 - \frac{\e_i}{2}) \frac{|S_i|}{n} \OPT(\L)$. Putting these inequalities together, we get that with probability at least $1-6\delta$
	\begin{align*}
		\OPT(\bL^{S_i}(1 - 2\e_i)) \le \frac{(1 + 6 \sigma \e_i)}{(1 - \frac{\e_i}{2})} \OPT(\bL^{S_i}(1 - \e_i)) \le (1 + 7 \sigma \e_i) \OPT(\bL^{S_i}(1 - \e_i)),
	\end{align*}
	where the last inequality uses $\sigma \ge 1$ and that $\e_i \le \sqrt{\e}$ is at most a sufficiently small constant (for example $\sqrt{\e} \le \frac{1}{7}$ suffices). This proves the claim.

	Second, notice that whenever the claim above holds (i.e. $\bL^{S_i}(1-\e_i)$ is $(\e_i, 7\sigma)$-stable) we can apply Theorem \ref{thm:one-phasePC} with parameters $(\e_i, \delta, 7\sigma)$ to obtain guarantees for the solution $(0, \ldots, 0, \bar{\bx}^{|S_{i-1}|+1}, \ldots, \bar{\bx}^{|S_i|})$. So employing the claim above, Theorem \ref{thm:one-phasePC} and union bound, we get that with probability at least $1-19\delta$ this solution is $(c_4 \e_i)$-feasible for $\bL^{S_i \setminus S_{i-1}}(1-\e_i)$ (and thus $((c_4 + 1) \e_i)$-feasible for $\bL^{S_i \setminus S_{i-1}}$) and has value $\sum_{t = |S_{i-1}|+1}^{|S_i|} \bpi \bar{\bx}^t \ge (1-c_5 7 \sigma \e_i) \OPT(\bL^{S_i \setminus S_{i-1}}(1-\sqrt{2} \e_i))$. This concludes the proof of the lemma. (As noted at the end of Theorem \ref{thm:one-phasePC}, formally we should condition on the \emph{set} of blocks of $\bL^{S_i}$ being such that the claim above holds and then apply Theorem \ref{thm:one-phasePC} on the conditioned LP (which still has its blocks coming in random order), and then take an average over all such conditionings.)
 \section{Proof of Lemma \ref{lem:est-rhs}} \label{app:estRhsOpt}

	The following will be a more convenient way of seeing the random order process: the random restricted LP $\bL^{I}$ (i.e. columns are in random order and we are focusing on the ones indexed by $I$) has the same distribution as the LP $\L^{\bI}$ for $\bI$ a uniformly chosen $|I|$-subset of $[n]$ (i.e. sampling $|I|$-columns from $\L$ without replacement).
	
	The following lemma is the main part of the argument.

\begin{lemma}
  \label{lem:est-rhsPre}
  Let $\e \leq 1/20$. Given $\L \in \PLPn$ with width at least $40 B$, let $\tau$ be an $\OPT(\L)$-good skimming threshold for $\L$. Let $\bK$ be a random $pn$-subset of $[n]$ for $p \in [\frac14, \frac34]$. Then with probability at least $1 - 2\delta$,
  \begin{align*}
    \RHS(\skim{\L^\bK}{\tau}) \in (1 \pm \e)\,p\, \RHS(\skimLt)
  \end{align*}
  and with probability at least $1-8\delta$,
  \begin{align*}
    \OPT(\skim{\L^{\bK}}{\tau}) \in (1 \pm \e) p \,\OPT(\skim{\L}{\tau}) \pm \e 64 p B \tau.
  \end{align*}
\end{lemma}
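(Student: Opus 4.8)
The plan is to run a sampling-without-replacement concentration argument (via Corollary~\ref{cor:multiChernoff}) on each ingredient of the skimmed LP, exactly in the spirit of the proof of Lemma~\ref{lemma:PCLB} in Appendix~\ref{app:samplePC}. By scaling, assume $a_{it}\in[0,1]$ and $b_i\ge 40B$ for every $i$. Write $H=\{t:\pi_t>\tau\}$ and $s_i=\sum_{t\in H}a_{it}$, so $\RHS(\skimLt)_i=r_i:=b_i-s_i$; the only property of the good skimming threshold I would use is~(S2), which gives $r_i\ge b_i/2\ge 20B$ and hence $s_i\le r_i$. Since $\RHS(\skim{\L^\bK}{\tau})_i = pb_i-\sum_{t\in H\cap\bK}a_{it} = pr_i-\bigl(\sum_{t\in H\cap\bK}a_{it}-ps_i\bigr)$, and $\sum_{t\in H\cap\bK}a_{it}$ is a sum over a uniformly random $pn$-subset (without replacement) of the values $(a_{it}\mathbf{1}[t\in H])_t\in[0,1]$ with mean $ps_i$, Corollary~\ref{cor:multiChernoff} --- used in its un-simplified form $\tfrac{(\tau')^2}{M(4s\mu+\tau')}$ and bounding $4ps_i+\e pr_i\le(4+\e)pr_i$ via $s_i\le r_i$ --- bounds the deviation by $\e pr_i$ except with probability $2\exp(-\tfrac{\e^2 pr_i}{4+\e})\le 2\exp(-\e^2 B)=\tfrac{2\delta}{m+1}$, where the middle step uses $p\ge\tfrac14$, $r_i\ge 20B$, $\e\le\tfrac1{20}$. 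A union bound over the $m$ rows gives $\RHS(\skim{\L^\bK}{\tau})\in(1\pm\e)\,p\,\RHS(\skimLt)$ with probability $\ge 1-2\delta$; call this event $\cG$.

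For the upper bound on $\OPT(\skim{\L^\bK}{\tau})$ I would use LP duality. The dual of $\skimLt$ is $\min\{\sum_i r_i p_i+\sum_{t\in\overline H}y_t:\ \ip{p}{A^t}+y_t\ge\pi_t\ \forall t\in\overline H,\ p,y\ge 0\}$ (the zeroed-out $H$-variables contribute only the free constraints $\ip{p}{A^t}+y_t\ge0$); let $(\bar p,\bar y)$ be optimal, so $\bar y_t=\max(0,\pi_t-\ip{\bar p}{A^t})\le\pi_t\le\tau$ for all $t\in\overline H$ --- here the value bound $\tau$ is exactly what skimming buys us. Then $(\bar p,\ \bar y|_{\bK\setminus H})$ is feasible for the dual of $\skim{\L^\bK}{\tau}$, so $\OPT(\skim{\L^\bK}{\tau})\le\sum_i\RHS(\skim{\L^\bK}{\tau})_i\,\bar p_i+\sum_{t\in\bK\setminus H}\bar y_t$. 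On $\cG$ the first sum is $\le(1+\e)p\sum_i r_i\bar p_i$; the second is a sum over a random $pn$-subset of values in $[0,\tau]$ with mean $p\sum_{t\in\overline H}\bar y_t$, so Corollary~\ref{cor:multiChernoff} with $M=\tau$ (splitting into cases according to whether $\sum_{t\in\overline H}\bar y_t$ is below or above $B\tau$, to pick between the quadratic and linear branches of the exponent) bounds it by $(1+\e)p\sum_{t\in\overline H}\bar y_t+32\e pB\tau$ with probability $\ge1-2\delta$. Summing and using weak duality gives $\OPT(\skim{\L^\bK}{\tau})\le(1+\e)\,p\,\OPT(\skimLt)+64\e\,pB\tau$.

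For the lower bound, let $x^*$ be optimal for $\skimLt$; since the $H$-items carry no value we may take $x^*_t=0$ for $t\in H$. For each row, $\sum_{t\in\bK}a_{it}x^*_t$ is a random-subset sum with mean $p\cdot(\text{usage of }x^*)\le pr_i$, so the same per-row Bernstein bound as above (values in $[0,1]$, $r_i\ge20B$) gives $\sum_{t\in\bK}a_{it}x^*_t\le(1+\e)pr_i$ for all $i$ with probability $\ge1-2\delta$; combined with $\cG$ this makes $\tfrac{1-\e}{1+\e}\,x^*|_\bK$ feasible for $\skim{\L^\bK}{\tau}$. Its value is $\tfrac{1-\e}{1+\e}\sum_{t\in\bK}\pi_tx^*_t$, and $\sum_{t\in\bK}\pi_tx^*_t$ is a random-subset sum of values bounded by $\tau$ (as $\pi_t\le\tau$ off $H$) with mean $p\,\OPT(\skimLt)$; Corollary~\ref{cor:multiChernoff} with $M=\tau$ (again by cases) gives $\sum_{t\in\bK}\pi_tx^*_t\ge p\,\OPT(\skimLt)-\e p\,\OPT(\skimLt)-32\e pB\tau$ with probability $\ge1-2\delta$. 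Hence $\OPT(\skim{\L^\bK}{\tau})\ge(1-O(\e))\,p\,\OPT(\skimLt)-64\e\,pB\tau$. The rescaling by $\tfrac{1-\e}{1+\e}\ge1-2\e$ is the only place the constant in front of $\e$ is not exactly $1$: sharpening it to the stated $(1-\e)$ is routine constant-tracking (and the consolidated downstream statement, Lemma~\ref{lem:est-rhs}, only asks for $(1\pm3\e)$ anyway). Intersecting the four concentration events (two of which are shared with the $\RHS$ statement, each costing $2\delta$) yields overall probability $\ge1-8\delta$.

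The step I expect to be the main obstacle is the dual argument for the upper bound: one must set up the dual of the \emph{skimmed} LP correctly, argue feasibility of the restricted dual, and --- crucially --- feed in the $\RHS$ estimate from the first part, because $\RHS(\skim{\L^\bK}{\tau})$ is \emph{not} $p\cdot\RHS(\skimLt)$ on the nose but differs from it by exactly the fluctuation controlled there; the $\pm\,64\e pB\tau$ slack is the residue of the bounded-by-$\tau$ concentration and plays the role of the $\rho$-dependent term in Lemma~\ref{lemma:PCLB}(c). A tempting shortcut --- observe that $\skim{\L^\bK}{\tau}$ is a $(1\pm\e)$-right-hand-side perturbation of the restricted LP $(\skimLt)^\bK$ and then quote Lemma~\ref{lemma:PCLB}(b),(c) on $\skimLt$ (packing-only, hence $(\e,\sigma)$-stable for every $\sigma$, with $\rho$ set so that $\rho\,\OPT(\skimLt)\,\e^2/(32\log(m/\delta))=\tau$) --- is morally what is going on, but $\skimLt$ has width only $\ge20B$ whereas Lemma~\ref{lemma:PCLB} requires $\ge32B$, so one has to rerun the Bernstein estimates with the sharper constants afforded by $\e\le1/20$ and width $\ge40B$ rather than invoke the lemma verbatim.
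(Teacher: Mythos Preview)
Your argument is correct. For the right-hand side estimate it coincides with the paper's proof essentially line for line: write $\RHS(\skim{\L^\bK}{\tau})_i = p\,\RHS(\L)_i - \sum_{t\in H\cap\bK} A^t_i$, bound the fluctuation of the random sum via Corollary~\ref{cor:multiChernoff} using that (S2) gives $\sum_{t\in H}A^t_i\le r_i$, and union bound over rows.

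For the optimal-value estimate the paper takes a different, more modular route. Rather than running a direct primal/dual concentration argument as you do, the paper observes that $\skim{\L^{\bK}}{\tau}$ is the restriction to $\bK$ of the LP $\skim{\L}{\tau}$ with its right-hand side replaced by the random $\tfrac{1}{p}\,b^{\bK}_L$; it then replaces this random right-hand side by the deterministic envelopes $(1\pm\e)b_L$ (valid on $\cG$) and invokes the Sampled LP Lemma~\ref{lemma:PCLB}(b),(c) on the resulting packing LPs $\L_\pm := \skim{\L}{\tau}\big((1\pm\e)b_L\big)$ with $\rho=\tfrac{32\tau B}{\OPT(\L_\pm)}$, which yields exactly the additive $\pm\,64\e p B\tau$ term. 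So the ``tempting shortcut'' you describe in your final paragraph is in fact what the paper does. Your worry about the width prerequisite of Lemma~\ref{lemma:PCLB} is legitimate: $\L_\pm$ has width only about $20B$ while the lemma as stated asks for $32B$, so strictly speaking the paper is relying on slack in the constants of that lemma. Your self-contained argument sidesteps this by redoing the Bernstein estimates with the parameters at hand ($\e\le 1/20$, $r_i\ge 20B$, $p\ge 1/4$); the price is reproving a special case of Lemma~\ref{lemma:PCLB}, but the gain is cleaner constant tracking and no dependency on the exact hypotheses of that lemma. Your upper bound via the dual $(\bar p,\bar y)$ of $\skimLt$ is a nice variant: the paper's black-box call implicitly performs the same dual restriction inside the proof of Lemma~\ref{lemma:PCLB}(c), but you make it explicit and feed in the $\RHS$ estimate directly, which is arguably more transparent. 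The $(1-O(\e))$ versus $(1-\e)$ slack you note in the lower bound is harmless for the downstream use in Lemma~\ref{lem:est-rhs}.
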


\begin{proof}
  Let $H = \{t : \pi_t > \tau\}$ be the set of
  high-value items. If $A^t$ is the $t^{th}$ column of $A$, we get that 
  \begin{align*}
    b_L &:= \RHS(\skimLt) ~~=~~ \textstyle \RHS(\L) - \sum_{t \in H}
    A^t~~; \\
    b^{\bK}_L &:= \RHS(\skim{\L^\bK}{\tau}) ~~=~~ \textstyle \RHS(\L^\bK) - \sum_{t \in
      \bK: \pi_t > \tau} A^t ~~=~~ p\, \RHS(\L) - \sum_{t
      \in H \cap \bK} A^t.
  \end{align*}
  Since $\bK$ is a random $pn$-subset of the columns, the expectation
  of the second line is precisely $p$ times the first line. To obtain concentration, fix a coordinate $i$. 
The second property of a good skimming threshold implies the high-value items occupy at most half of the budget: $\sum_{t \in H} A^t_i \le \frac{\RHS(\L)_i}{2} \le (b_L)_i$. This gives $\E\big[\sum_{t \in H \cap \bK} A^t_i\big] = p \,\sum_{t \in H} A^t_i \le p\, (b_L)_i$ and Bernstein's inequality Corollary \ref{cor:multiChernoff} gives that $\sum_{t \in H \cap \bK} A^t_i = p\,\sum_{t \in H} A^t_i \pm \e\,p\, (b_L)_i$ with probability at least
  \begin{align*}
  	1 - 2\exp\left\{ - \frac{(\e\,p\, (b_L)_i)^2}{(4 + \e) p\,(b_L)_i} \right\} \geq 1
  - 2\exp\left\{ - \frac{\e^2 p\,(b_L)_i}{5} \right\} \geq 1 - \frac{2\delta}{m},
  \end{align*}
 where the last inequality uses the width lower bound $(b_L)_i \ge \frac{1}{2} \, \RHS(\L)_i \ge 20 B$ (recall we have used the normalization $A_i^t \in [0,1]$).
 A union bound over all coordinates $i$ then gives that $b^{\bK}_L \in p\, b_L (1 \pm \e)$ with probability at least $1 - 2 \delta$, giving the first part of the lemma.
%

  For the estimate of the optimal value, let $L := [n] \setminus H$ be the low-value items. Let $\skim{\L}{\tau}\big(\frac{1}{p} \cdot b^{\bK}_L\big)$ be the LP $\skim{\L}{\tau}$ with the right-hand side replaced by $\frac{1}{p} \cdot b^{\bK}_L$ and notice that
  \begin{align}
  	\textstyle  \skim{\L^\bK}{\tau} = \max \big\{ \sum_{t \in L \cap \bK}
  \pi_t x_t \mid \sum_{t \in \bK} A^t x_t \leq b^\bK_L \big\} = \big(\skim{\L}{\tau}\big(\frac{1}{p} \cdot b^{\bK}_L\big)\big)^{\bK}. \label{eq:paramPacking1}
  \end{align} 
  The idea is to apply Lemma \ref{lemma:PCLB} to bound $\skim{\L}{\tau}\big(\frac{1}{p} \cdot b^{\bK}_L\big)$, but unfortunately the right-hand side of this LP is random; but since from the previous part $\frac{1}{p} \cdot b^{\bK}_L \in b_L (1 \pm \e)$ with good probability, we replace the right-hand side with the upper bound $(1+\e) b_L$ (resp. lower bound $(1-\e) b_L$) to upper (resp. lower) bound the optimal value. So define $\L_+ = \skim{\L}{\tau}\big(b_L (1+\e)\big)$ and employ Lemma \ref{lemma:PCLB}(c) to $\L_+$ with $\rho = \frac{32 \tau B}{\OPT(\L_+)}$ to get that with probability at least $1-2\delta$ 
  \begin{align}
  	\OPT(\L_+^{\bK}) \le \left(1 + \frac{\e 64 \tau B}{\OPT(\L_+)}\right) p \,\OPT(\L_+) \le (1 + \e) p\, \OPT(\skim{\L}{\tau}) + \e 64 p B \tau, \label{eq:paramPacking2}
  \end{align}
  where again $\OPT(\L_+) \le (1+\e)\, \OPT(\skim{\L}{\tau})$ since an optimal solution to $\L_+$ can be scaled by $\frac{1}{1+\e}$ to give a feasible solution to $\skim{\L}{\tau}$. Moreover, from the first part of this lemma, with probability at least $1-2\delta$ we have $\frac{1}{p} \cdot b_L^{\bK} \le b_L (1 + \e)$, in which case $\OPT(\skim{\L}{\tau}(\frac{1}{p} \cdot b_L^{\bK})^{\bK}) \le \OPT(\L_+^{\bK})$. Thus, together with inequalities \eqref{eq:paramPacking1} and \eqref{eq:paramPacking2}, a union bound gives that with probability at least $1-4\delta$  %
  \begin{align*}
  	\OPT(\skim{\L^{\bK}}{\tau}) \le  (1 + \e) p\, \OPT(\skim{\L}{\tau}) + \e 64 p B \tau.	
  \end{align*}
  
  Using the same reasoning, the lower bound 
  \begin{align*}
  	\OPT(\skim{\L^{\bK}}{\tau}) \ge (1 - \e) p \,\OPT(\skim{\L}{\tau}) - \e 64 p B \tau
  \end{align*}
  holds with probability at least $1-4\delta$. Taking a union bound over these upper and lower bounds then concludes the proof of the lemma. 
\end{proof}


\begin{proof}[Proof of Lemma \ref{lem:est-rhs}]
	Let $\bI$ be a random $\frac{n}{3}$-subset of $[n]$ and let $\bJ = [n] \setminus \bI$, and recall $\skim{\bL^{\le \frac{n}{3}}}{\tau}$ has the same distribution as $\skim{\L^{\bI}}{\tau}$ and $\skim{\bL^{> \frac{n}{3}}}{\tau}$ has the same distribution as $\skim{\L^{\bJ}}{\tau}$. Thus, employing Lemma \ref{lem:est-rhsPre} with $\bK = \bI$ and then $\bK = \bJ$ and employing a union bound, we get that with probability at least $1-4\delta$
	\begin{align*}
    \RHS\big(\skim{\bL^{\le \frac{n}{3}}}{\tau}\big) \in \left[\frac{1-\e}{1+\e} , \frac{1+\e}{1-\e} \right] \frac{1}{2}\, \RHS\big(\skim{\bL^{> \frac{n}{3}}}{\tau}\big) \subseteq (1 \pm 3 \e) \,\frac{1}{2}\, \RHS\big(\skim{\bL^{> \frac{n}{3}}}{\tau}\big),
	\end{align*}
	where the last inequality uses $\e \le \frac{1}{3}$.

	The bounds for $\OPT\big(\skim{\bL^{\le \frac{n}{3}}}{\tau}\big)$ follow similarly. This concludes the proof. 
\end{proof}
 \section{Proof of Theorem \ref{thm:PLP-main}} \label{app:mDLA}
	
	We prove only the first part of the theorem since the second directly follows from it. 
Let $\cE$ be the event that the the guarantee of Theorem \ref{thm:one-phase} holds for all $i = 1, \ldots, \log \frac{1}{\e}$. Since each such guarantee holds with probability at least $1-15\delta$, a union bound gives that $\cE$ holds with probability at least $1 - 15 \delta \log \frac{1}{\e}$.	As in the proof of Theorem \ref{thm:multi-phasePC}, it is easy to see that under $\cE$ the returned solution $\{\bar{\bx}_t\}_t$ is feasible for $\bL$. 

Now we analyze the value $\val$ of the solution. To simplify the notation, let $\OPT(I) = \OPT(\bL^I)$ for any subset $I \subseteq [n]$.
Again, conditioned on $\cE$ we can put together the guarantees from Theorem \ref{thm:one-phase} for each phase of \mDLA to get (recall in phase $i$ we use the approximation parameter $\e_i = \e \sqrt{n/|S_i|}$)
	\begin{align}
		\E[\val \smid \cE] \ge \sum_{i = 1}^{\log \e^{-1}} \E\big[\OPT(S_i \setminus S_{i-1}) - c_6 \e_i \OPT(S_i) \smid \cE\big]. \label{eq:mDLA1}
	\end{align}	
	
	First we analyze the unconditioned value of the right-hand side of this expression. 
	
	\begin{lemma} \label{lemma:proofMDLA}
		$\sum_{i = 1}^{\log \e^{-1}} \E[\OPT(S_i \setminus S_{i-1})] - c_6 \sum_{i = 1}^{\log \e^{-1}} \e_i\, \E[\OPT(S_i)] \ge (1 - 2\e - 2 \delta \log \e^{-1} - 4c_6 \e)\, \OPT(\L)$.
	\end{lemma}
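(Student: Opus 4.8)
The plan is to bound the two summands separately: first a lower bound
$\sum_{i=1}^{\log\e^{-1}} \E[\OPT(S_i\setminus S_{i-1})] \ge (1-2\e-2\delta\log\e^{-1})\,\OPT(\L)$, then an upper bound
$c_6\sum_{i=1}^{\log\e^{-1}} \e_i\,\E[\OPT(S_i)] \le 4c_6\,\e\,\OPT(\L)$, and subtract. Throughout I use $|S_i|=2^i\e n$, so $|S_i\setminus S_{i-1}|=2^{i-1}\e n$ for $i\ge 1$ and $\sum_{i=1}^{\log\e^{-1}}\frac{|S_i\setminus S_{i-1}|}{n}=1-\e$, together with the elementary geometric estimate $\sum_{i=1}^{\log\e^{-1}} 2^{i/2} \le (2+\sqrt2)\,\e^{-1/2}$.

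For the lower bound, fix a deterministic optimal solution $x^*$ of $\L$ and let $\bx^*$ be its reindexing through the random permutation, so $\bx^*$ is feasible for $\bL$ with $\sum_t\bpi^t\bx^{*t}=\OPT(\L)$, and since the position of each original column is uniform, $\E[\bpi^t\bx^{*t}]=\tfrac1n\OPT(\L)$ for every $t$; in particular $\E\big[\sum_{t\in S_i\setminus S_{i-1}}\bpi^t\bx^{*t}\big]=\frac{|S_i\setminus S_{i-1}|}{n}\OPT(\L)$ while $Z_i:=\sum_{t\in S_i\setminus S_{i-1}}\bpi^t\bx^{*t}\in[0,\OPT(\L)]$ pointwise. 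Apply Lemma~\ref{lemma:PCLB}(a) to $\L$ with $I=S_i\setminus S_{i-1}$ and internal parameter $\e/\sqrt2$, which produces scaling parameter $\e'=\e_i$; the width hypothesis $\mathrm{width}(\L)\ge 1024\,\frac{\log((m+1)/\delta)}{\e^2}$ and $|S_i\setminus S_{i-1}|\ge\e n\ge\frac{\e^2}{2}n$ verify the hypotheses (part~(a) uses no bound on item values, and for packing-only LPs $\bL^I(1-\e')=\bL^I$). Hence with probability $\ge 1-2\delta$ the vector $(1-\e_i/2)\,\bx^*|_{S_i\setminus S_{i-1}}$ is feasible for $\bL^{S_i\setminus S_{i-1}}$, giving $\OPT(S_i\setminus S_{i-1})\ge (1-\e_i/2)\,Z_i$ on that event and $\ge 0$ otherwise; since $\E[\mathbf 1_{\mathrm{feas}}Z_i]=\E[Z_i]-\E[\mathbf 1_{\neg\mathrm{feas}}Z_i]\ge\E[Z_i]-2\delta\,\OPT(\L)$, we obtain
\[
\E[\OPT(S_i\setminus S_{i-1})] \;\ge\; \Big(\tfrac{|S_i\setminus S_{i-1}|}{n}-\tfrac{\e_i}{2}\tfrac{|S_i\setminus S_{i-1}|}{n}-2\delta\Big)\,\OPT(\L).
\]
Summing, and using $\tfrac12\sum_i\e_i\tfrac{|S_i\setminus S_{i-1}|}{n}=\tfrac{\e\sqrt\e}{4}\sum_i 2^{i/2}\le\e$ and $\sum_i 2\delta=2\delta\log\e^{-1}$, yields the claimed lower bound $(1-2\e-2\delta\log\e^{-1})\OPT(\L)$.

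For the upper bound, the crux is the averaging inequality $\E[\OPT(\bL^{S_i})]\le\frac{|S_i|}{n}\OPT(\L)$. Since $n/|S_i|=2^{\log\e^{-1}-i}$ is a positive integer, partition the $n$ positions into $r:=n/|S_i|$ disjoint blocks of size $|S_i|$; by exchangeability each block's restricted LP has the law of $\bL^{S_i}$, and concatenating their optimal solutions (disjoint supports, right-hand sides $\frac{|S_i|}{n}b$ summing to exactly $b$) gives a solution feasible for $\bL$, so $\sum_{j\le r}\OPT(\bL^{I_j})\le\OPT(\bL)=\OPT(\L)$; taking expectations gives $r\,\E[\OPT(\bL^{S_i})]\le\OPT(\L)$ (if $\e^{-1}$ or $n$ is not of the right form one rounds, at a lower-order cost, as usual). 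Therefore $c_6\sum_i\e_i\,\E[\OPT(S_i)]\le c_6\,\OPT(\L)\sum_i\e_i\tfrac{|S_i|}{n}=c_6\,\OPT(\L)\,\e\sqrt\e\sum_i 2^{i/2}\le(2+\sqrt2)c_6\,\e\,\OPT(\L)\le 4c_6\,\e\,\OPT(\L)$. Subtracting the two bounds gives $(1-2\e-2\delta\log\e^{-1}-4c_6\e)\,\OPT(\L)$, as required.

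I expect the main obstacle to be this upper-bound step: the naive bound $\OPT(S_i)\le\OPT(\L)$ only produces an $O(\sqrt\e)$ error term (too weak for the stated lemma), so one genuinely needs the block-averaging inequality $\E[\OPT(\bL^{S_i})]\le\frac{|S_i|}{n}\OPT(\L)$ and the divisibility $n/|S_i|\in\mathbb Z$ (or a careful rounding). A secondary delicacy is that in the lower bound the feasibility event coming from Lemma~\ref{lemma:PCLB}(a) is correlated with the value $Z_i$, which forces the detour through $\E[\mathbf 1_{\mathrm{feas}}Z_i]\ge\E[Z_i]-2\delta\,\OPT(\L)$ instead of a naive conditioning argument; the rest is the two geometric-series estimates noted above.
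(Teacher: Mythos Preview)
Your proof is correct and follows essentially the same two-step strategy as the paper: lower-bound $\sum_i\E[\OPT(S_i\setminus S_{i-1})]$ by restricting an optimal solution of $\bL$ and invoking Lemma~\ref{lemma:PCLB}(a), then upper-bound $\sum_i\e_i\,\E[\OPT(S_i)]$ via $\E[\OPT(\bL^{S_i})]\le\frac{|S_i|}{n}\OPT(\L)$ and the geometric sum~\eqref{eq:epsSum}. Two cosmetic differences: the paper handles the lower bound with a single global good event $\cG=\bigcap_i\{\text{feasibility in phase }i\}$ rather than your per-phase accounting (both yield the same $2\delta\log\e^{-1}$ loss), and for the upper bound the paper simply cites Lemma~\ref{lemma:PCLB}(d) with $\sigma=0$ (packing-only LPs being trivially stable) whereas you give a self-contained primal block-averaging argument; your route is slightly more elementary and avoids the duality used to prove Lemma~\ref{lemma:PCLB}(d).
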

	
	\begin{proof}
		The last term in the left-hand side is at most $4 c_6 \e\, \OPT(\L)$: Lemma \ref{lemma:PCLB}(d) gives $\E[\OPT(S_i)] \le \frac{|S_i|}{n} \OPT(\L)$ (since our LP without covering constraints is trivially $(\e', 0)$-stable), so we get
	\begin{align*}
		 c_6 \sum_{i = 1}^{\log \e^{-1}} \e_i\, \E[\OPT(S_i)] \le c_6 \sum_{i = 1}^{\log \e^{-1}} \e_i\, \frac{|S_i|}{n} \, \OPT(\L) = c_6 \sum_{i = 1}^{\log \e^{-1}} \e_i\, \e\, 2^i\, \OPT(\L) \le 4 c_6 \e\, \OPT(\L),
	\end{align*}
	where the last inequality follows from the bound $\sum_{i=1}^{\log \e^{-1}} \e_i 2^{i-1} \le 2$ from equation \eqref{eq:epsSum}.
	
	To show that the first term in the left-hand side is approximately $\OPT(\L)$ is a little more complicated, since without good bounds on the ratios $\frac{\pi_t}{\OPT(\L)}$ we do not have bounds showing that $\OPT(S_i \setminus S_{i-1}) \approx \frac{|S_i| \setminus |S_{i-1}|}{n}\, \OPT(\L)$ even in expectation; we need to look at all the terms $\OPT(S_i \setminus S_{i-1})$ simultaneously.

	Let $\bx^*$ be a optimal solution for $\bL$ and let $\bx^*(i)$ be the solution scaled down by $(1 - \frac{\e_i}{2})$ and restricted to the $2^{i-1} \e
n$ columns in $S_i \setminus S_{i-1}$. From Lemma \ref{lemma:PCLB}(a) we have that $\bx^*(i)$ is feasible for $\bL^{S_i \setminus S_{i-1}}$ with probability at least $1 - 2\delta$. Let $\mathcal{G}$ be the good event that for all $i$ we have $\bx^*(i)$ feasible for $\bL^{S_i \setminus S_{i-1}}$, which holds with probability at least $1 - 2 (\log \e^{-1}) \delta$ by a union bound. Then for every scenario in $\mathcal{G}$ we have $\OPT(S_i \setminus S_{i-1}) \ge (1 - \frac{\e_i}{2}) \sum_{t \in S_i \setminus S_{i-1}} \bpi_t \bx^*_t$, so adding over all $i \ge 1$ gives 
\[ \sum_{i \geq 1} \OPT(S_i \setminus S_{i-1}) \geq \OPT(\L) - \sum_{i \geq 1} \frac{\e_i}{2}
\sum_{t \in S_i \setminus S_{i-1}} \bpi_t \bx^*_t - \sum_{t \in S_0} \bpi_t \bx^*_t. \]
Then letting $\ones_{\cG}$ be the indicator of the event $\cG$ and using the non-negativity of the objective function, we get that
	\begin{align*}
		\E\bigg[ \sum_{i \geq 1} \OPT(S_i \setminus S_{i-1}) \bigg] &\ge \E\bigg[\ones_{\cG} \cdot \sum_{i \geq 1} \OPT(S_i \setminus S_{i-1}) \bigg] \ge \E[\ones_{\cG} \cdot \OPT(\L)] - \sum_{i \geq 1} \frac{\e_i}{2} \sum_{t \in S_i \setminus S_{i-1}} \E[\bpi_t \bx^*_t] - \sum_{t \in S_0} \E[\bpi_t \bx^*_t].
	\end{align*}
	Clearly $\E[\ones_{\cG} \cdot \OPT(\L)] = \Pr(\cG) \cdot \OPT(\L) \ge (1 - 2 \delta \log \e^{-1})\, \OPT(\L)$. Moreover, by the random order model we have that for all $t$, $\E[\bpi_t \bx^*_t] = \frac{\OPT(\L)}{n}$. Thus, the above bound becomes 
	\begin{align*}
		\E\bigg[ \sum_{i \geq 1} \OPT(S_i \setminus S_{i-1}) \bigg] \ge (1 - 2\delta \log \e^{-1})\, \OPT(\L) - \frac{\OPT(\L)}{n} \sum_{i \ge 1} \frac{\e_i \, \e\, 2^{i-1} n}{2} - \e \OPT(\L) \ge (1 - 2\e - 2\delta \log \e^{-1})\, \OPT(\L),
	\end{align*}
	where again we have used equation \eqref{eq:epsSum} in the last step. This concludes the proof of the lemma. 
\end{proof}

	To handle the \emph{conditioned} right-hand side of \eqref{eq:mDLA1} we observe the following: with probability 1 we have $\OPT(S_0) + \sum_{i=1}^{\log \e^{-1}} \OPT(S_i \setminus S_{i-1}) \le \OPT(\L)$, since concatenating the optimal solutions for $\bL^{S_0}$, $\bL^{S_1 \setminus S_0}$, \ldots, gives a feasible solution for $\bL$. For convenience, let $\val_i := \OPT(S_i \setminus S_{i-1}) - c_6 \e_i \OPT(S_i)$. Then using the decomposition $\E[\sum_{i \ge 1} \val_i] = \E[\sum_{i \ge 1} \val_i \smid \cE] \Pr(\cE) + \E[\sum_{i \ge 1} \val_i \smid \bar{\cE}] \Pr(\bar{\cE})$, inequality \eqref{eq:mDLA1} gives
	\begin{align*}
		\E[\val \smid \cE] &\ge \sum_{i \ge 1} \E[\val_i \smid \cE] \ge \sum_{i \ge 1} \E[\val_i] - \Pr(\bar{\cE}) \sum_{i \ge 1} \E[\val_i \smid \bar{\cE}] \\
		&\ge \sum_{i \ge 1} \E[\val_i] - \Pr(\bar{\cE}) \Bigg[\E[\OPT(S_0) \smid \bar{\cE}] + \sum_{i \ge 1} \E[\OPT(S_i \setminus S_{i-1}) \smid \bar{\cE}] \Bigg] \\
		& \ge \sum_{i \ge 1} \E[\val_i] - \Pr(\bar{\cE})\, \OPT(\L).
	\end{align*}
	Lower bounding the first term using Lemma \ref{lemma:proofMDLA} and recalling that $\Pr(\bar{\cE}) \le 15\delta \log \e^{-1}$, gives $\E[\val \smid \cE] \ge (1-2\e-17\delta\log \e^{-1} - 4 c_6\e) \,\OPT(\L)$. This concludes the proof of Theorem \ref{thm:PLP-main}.
\fi

\end{document}